\newif\iflong
\newif\ifshort
\theoremstyle{plain}
\newtheorem{result}{Result}
\newtheorem{fact}[theorem]{Fact}
\newcounter{ourcase}
\newenvironment{ourcase}[1][]
	{\stepcounter{ourcase}%
	\medbreak\noindent{\textsf{Case~\arabic{ourcase}%
	\def\tmp{#1}\ifx\tmp\empty.\else~(#1).\fi
	}}}
	{\medbreak}
\g@addto@macro\bfseries{\boldmath}
\newcommand{\bigoh}{\ensuremath{{\mathcal O}}}
\newcommand{\cF}{\ensuremath{\mathcal F}}
\newcommand{\Nat}{\mathbb{N}}
\newcommand{\tw}{\operatorname{tw}}
\newcommand{\bw}{\operatorname{bw}}
\newcommand{\td}{\operatorname{td}}
\newcommand{\dist}{\operatorname{dist}}
\newcommand{\decT}{\mathfrak{B}}
\newcommand\decf{\mathcal{L}}
\newcommand{\ph}{partner-hereditary}
\newcommand{\si}{size-identifiable}
\newcommand{\primal}{\emph{\texttt{primal}}\xspace}
\newcommand{\Fbw}{\operatorname{\mathcal{F}-branchwidth}}
\newcommand{\Fsbw}{\mathcal{F^*}\operatorname{-bw}}
\newcommand{\FbwSh}{\operatorname{\mathcal{F}-bw}}
\newcommand{\mimw}{\imatchwidth}
\newcommand{\FFF}{\mathcal{F}}
\newcommand\calQ{\mathcal{Q}}
\newcommand\defeq{:=}
\newcommand\card[1]{|#1|}
\newcommand\Fstar{\ensuremath{\FFF^*}}
\newcommand\Fstarval{\Fstar\operatorname{-value}}
\newcommand\decS{\mathfrak{C}}
\newcommand\Fmatch{\ensuremath{\FFF_{\imatch}}\xspace}
\newcommand\Fchain{\ensuremath{\FFF_{\ichain}}\xspace}
\newcommand\Fantimatch{\ensuremath{\FFF_{\iantimatch}}\xspace}
\author{Eduard Eiben}{Department of Computer Science, Royal Holloway, University of London, Egham, United Kingdom}{eduard.eiben@rhul.ac.uk}{https://orcid.org/0000-0003-2628-3435}{}
\author{Robert Ganian}{Algorithms and Complexity Group, TU Wien, Vienna, Austria}{rganian@ac.tuwien.ac.at}{https://orcid.org/0000-0002-7762-8045}{Robert Ganian acknowledges support by the Austrian Science Fund (FWF, projects P31336 and Y1329).}
\author{Thekla Hamm}{Algorithms and Complexity Group, TU Wien, Vienna, Austria}{thamm@ac.tuwien.ac.at}{https://orcid.org/0000-0002-4595-9982}{Thekla Hamm acknowledges support by the Austrian Science Fund (FWF, projects P31336, Y1329 and W1255-N23).}
\author{Lars Jaffke}{Department of Informatics, University of Bergen, Bergen, Norway}{lars.jaffke@uib.no}{}{Lars Jaffke was supported by the Norwegian Research Council, via project 274526.}
\author{O-joung Kwon}{Department of Mathematics, Incheon National University, Incheon, South Korea\\ Discrete Mathematics Group, Institute for Basic Science, Daejeon, South Korea}%
{ojoungkwon@gmail.com}{}{O-joung Kwon was supported by the National Research Foundation of Korea (NRF) grant funded by the Ministry of Education (No. NRF- 2018R1D1A1B07050294), and by the Institute for Basic Science (IBS-R029-C1).}
\authorrunning{E.~Eiben, R.~Ganian, T.~Hamm, L.~Jaffke, O.~Kwon}
\keywords{branchwidth, parameterized algorithms, mim-width, treewidth, clique-width}
\title{A Unifying Framework for Characterizing and Computing Width Measures}
\title{A Unifying Framework for Characterizing and Computing Width Measures
}
\newcommand{\phsymb}{\emph{\texttt{ph}}\xspace}
\newcommand{\sisymb}{\emph{\texttt{si}}\xspace}
\newcommand{\calO}{\mathcal{O}}
\newcommand\Fsbwprob{\textsc{$\FFF^*$-Branchwidth}\xspace}
\begin{document}
\maketitle
\begin{abstract}
Algorithms for computing or approximating optimal decompositions for decompositional parameters such as treewidth or clique-width have so far traditionally been tailored to specific width parameters. Moreover, for mim-width, no efficient algorithms for computing good decompositions were known, even under highly restrictive parameterizations. In this work we identify $\mathcal{F}$-branchwidth as a class of generic decompositional parameters that can capture mim-width, treewidth, clique-width as well as other measures. We show that while there is an infinite number of $\mathcal{F}$-branchwidth parameters, only a handful of these are asymptotically distinct. We then develop fixed-parameter and kernelization algorithms (under several structural parameterizations) that can compute \emph{every possible} $\mathcal{F}$-branchwidth, providing a unifying framework that can efficiently obtain near-optimal tree-decompositions, $k$-expressions, as well as optimal mim-width decompositions.
\end{abstract}

	\section{Introduction}
	
Over the past twenty years, the study of decompositional graph parameters and their algorithmic applications has become a focal point of research in theoretical computer science. \emph{Treewidth}~\cite{RobertsonSeymour86} is, naturally, the most prominent example of such a parameter, but is far from the only parameter of interest: graph classes of unbounded treewidth may still have bounded \emph{clique-width}~\cite{CourcelleMakowskyRotics00}, and graph classes of unbounded clique-width may still have bounded \emph{mim-width}~\cite{vatshelle2012new,JaffkeKT20}. A common feature of such parameters is that they are all defined using some notion of a ``decomposition'' which is also of importance in various algorithmic applications, although the precise definitions of such decompositions vary from parameter to parameter.

Given their importance, it is not surprising that finding efficient algorithms for computing suitable decompositions has become a prominent research direction in the field. While computing optimal decompositions has proven to be a challenging task, in many scenarios it is sufficient to simply obtain an approximately-optimal decomposition instead---i.e., we ask for an algorithm which, given a graph of ``width'' at most $k$, outputs a decomposition witnessing that the ``width'' is at most $f(k)$ for some function $f$. Such approximately-optimal decompositions are often obtained by finding an asymptotically equivalent reformulation of the width parameter\iflong\footnote{Two width parameters are asymptotically equivalent if on every graph class, they are either both bounded or both unbounded.}\fi. Indeed, there has been a series of fixed-parameter algorithms for computing approximately-optimal tree decompositions~\cite{BodlaenderDDFLP16}, and the efficient computation of approximately-optimal decompositions for clique-width was a long-standing open problem that has culminated in the introduction of rank-width~\cite{OumSeymour06}, a parameter which is asymptotically equivalent to clique-width. On the other hand, up to now approximately-optimal mim-width decompositions could only be computed in a few highly restrictive settings~\cite{HogemoTV19,SaetherV16}.

While the algorithmic applications of treewidth, clique-width and mim-width have a number of unifying characteristics and there already exist algorithmic frameworks capable of exploiting multiple of these measures when the respective decompositions are provided~\cite{BergougnouxK19},
to the best of our knowledge no unifying framework for the task of actually computing the respective decompositions was known. Indeed, while all three measures admit asymptotically-equivalent characterizations via the notion of \emph{branchwidth}~\cite{OumSeymour06,JeongST18}, \iflong which is a generic template for defining width parameters based on a notion of \emph{cut-functions}, \fi 
the cut-functions employed in these previously known formulations were fundamentally different.

\smallskip
\noindent \textbf{Contribution.}\quad 
We introduce the notion of $\mathcal{F}$-branchwidth, which is a restriction of branchwidth to cut functions defined via an infinite class $\mathcal{F}$ of obstructions; more precisely, $\mathcal{F}$ can be any class of bipartite graphs with a basic property which we call \emph{partner-hereditarity} (\texttt{\emph{ph}}). 
Intuitively, a class $\mathcal{F}$ of bipartite graphs is \texttt{\emph{ph}} if every graph in $\mathcal{F}$ has its vertex set partitioned into pairs of vertices (one on each side), and each subgraph induced on a subset of the pairs is also in $\mathcal{F}$.
As our first result, we show that treewidth, clique-width, and mim-width are all asymptotically equivalent to $\mathcal{F}$-branchwidth for suitable choices of $\mathcal{F}$.

Taken on its own, this result is not surprising: partner-hereditarity is a weak restriction that simply allows graph classes to serve as obstructions in cut-functions, there are infinitely many \texttt{\emph{ph}} graph classes, and some happen to characterize these well-studied decompositional parameters. For our second result, we consider an additional useful property of our obstruction classes: $\mathcal{F}$ is \emph{size-identifiable} (\texttt{\emph{si}}) if there exists a unique graph in $\mathcal{F}$ of each order. We show:

\begin{result}
\label{res:si}
There exist precisely six \texttt{si} \texttt{ph} classes. Moreover, for each \texttt{ph} class $\mathcal{F}$, $\mathcal{F}$-branchwidth is asymptotically equivalent to $\mathcal{F}^\downarrow$-branchwidth, where $\mathcal{F}^\downarrow$ is the restriction of $\mathcal{F}$ to the \texttt{si} \texttt{ph} classes contained in $\mathcal{F}$.
\end{result}

The six \texttt{\emph{si}} \texttt{\emph{ph}} classes comprise the classes of \emph{complete bipartite}, \emph{matching} and \emph{chain} graphs along with their bipartite complements.
Essentially, Result~\ref{res:si} means that all decompositional parameters which can be defined using \texttt{\emph{ph}} obstruction classes can be characterized by a combination of these \texttt{\emph{si}} classes, with treewidth, clique-width and mim-width being three of these. 
While some other combinations of obstruction classes do not give rise to interesting parameters (e.g., it is easy to verify that the parameter obtained by excluding the \emph{complete bipartite} class and its complement class is only bounded for bounded-size graphs), other combinations of obstruction classes lead to completely new structural graph parameters which might have unique properties and have, to the best of our knowledge, never been studied (e.g., excluding the \emph{anti-matching} class or the \emph{chain} class).

What is more, we show that this unification result can be strengthened even further when the aim is to compute approximately-optimal decompositions for $\mathcal{F}$-branchwidth instead of characterizing the widths themselves. We identify three ``\texttt{\emph{primal}}'' \texttt{\emph{si ph}} classes---the classes of matching, chain, and bipartite complements of matchings---and show:

\begin{result}
\label{res:primal}
For a \texttt{ph} class $\mathcal{F}$, let $\mathcal{F}^*$ be the restriction of $\mathcal{F}$ to the \texttt{primal} \texttt{ph} classes contained in $\mathcal{F}$. Then every optimal $\mathcal{F}^*$-branch decomposition is also an approximately-optimal $\mathcal{F}$-branch decomposition.
\end{result}

Result~\ref{res:primal} allows us to reduce the computation of approximately-optimal $\mathcal{F}$-branch decompositions for any $\mathcal{F}$ to combinations of just three \texttt{\emph{primal}} \texttt{\emph{ph}} classes. For instance, treewidth and clique-width have different characterizations in terms of \texttt{\emph{si ph}} classes, but computing decompositions for both parameters corresponds to a single case in our framework. Computing decompositions for mim-width then corresponds to another case.

Having reduced the computation of this wide range of decompositional parameters to $\mathcal{F}^*$-\textsc{Branchwidth}, i.e., the problem of computing an optimal $\mathcal{F}^*$-branch decomposition for some union $\mathcal{F}^*$ of \texttt{\emph{primal ph}} classes, the second part of our paper is dedicated to solving this problem. Obtaining an algorithm for $\mathcal{F}^*$-\textsc{Branchwidth} when parameterized by the target width itself is a problem that not only unifies the approximation of treewidth and clique-width, but also the long-standing open problem of computing mim-width (which is known to be $\W[1]$-hard~\cite{SaetherV16} and for which even \XP-tractability is open). But while we cannot hope for fixed-parameter tractability under this parameterization, we make substantial progress by considering other parameters:
our contribution includes three novel fixed-parameter algorithms for $\mathcal{F}^*$-\textsc{Branchwidth} that not only give a unified platform for computing multiple width parameters, but also push the boundaries of tractability specifically for the notoriously difficult problem of computing mim-width~\cite{HogemoTV19,SaetherV16,Yamazaki18}.

\begin{result}
\label{res:algo}
	Let $\cF^*$ be the union of some \texttt{primal} \texttt{ph} families.
$\mathcal{F}^*$-\textsc{Branchwidth:}
\begin{enumerate}
\item is fixed-parameter tractable when parameterized by treewidth and the maximum degree,
\item is fixed-parameter tractable when parameterized by treedepth, and
\item admits a linear kernel when parameterized by the feedback edge set number.
\end{enumerate}
\end{result}

\smallskip
\noindent \textbf{Organization of the Paper.} In Section~\ref{sec:fbw} we introduce $\mathcal{F}$-branchwidth.
Section~\ref{sec:overview} serves as an overview section in which we explain our technical contributions on a high-level.
Firstly, this includes substantiating $\mathcal{F}$-branchwidth as a unifying parameter by comparing it to previously studied parameters and identifying six \emph{size identifiable} graph classes as a foundation that covers the whole breadth of $\mathcal{F}$-branchwidth.
The full proofs and formal details for this are given in Section~\ref{sec:comparison}.
Secondly, Section~\ref{sec:overview} also includes an outline of our three unifying algorithms and their correctness proofs.
The full descriptions and proofs are then provided in Sections~\ref{sec:tw}, \ref{sec:td} and~\ref{sec:fes}.

\ifshort
\smallskip
\noindent
\emph{Statements where proofs or details are provided in the full version are marked with $\star$.}
\fi

	\section{Preliminaries}
	\label{sec:prelims}
	For an integer $i$, we let $[i]=\{1,2,\dots,i\}$.
We refer to the handbook by Diestel~\cite{Diestel12} for
standard graph terminology. We also refer to the standard books for a basic overview of parameterized complexity theory~\cite{CyganFKLMPPS15,DowneyFellows13}. Two graph parameters $a, b$ are \emph{asymptotically equivalent} if for every graph class $\mathcal{G}$, either both $a$ and $b$ are bounded by some constant, or neither is. \(\Delta(G)\) denotes the maximum vertex degree in \(G\). The \emph{length} of a path refers to the number of edges on that path.
\ifshort We denote by \(R(n_1, n_2,\ldots, n_q)\) the (multicolor) Ramsey number for \(q\) colors and clique size \(n_i\) for the color \(i\in[q]\)~\cite{graham1990ramsey}. ($\star$)
\fi

\ifshort
We assume familiarity with the graph parameters \emph{treedepth}~\cite{NesetrilOssonademendez12} and \emph{treewidth}~\cite{RobertsonSeymour86}, including the notion of \emph{nice tree-decompositions}~\cite{DowneyFellows13}. ($\star$)
A set $Q\subseteq E(G)$ of edges in a graph $G$ is called a \emph{feedback edge set} if $G-Q$ is acyclic. The \emph{feedback edge set number} of $G$ is the minimum size of a feedback edge set of $G$; this has been used to obtain fixed-parameter algorithms and/or polynomial kernels in a number of difficult problems~\cite{BetzlerBNU12,GanianOrdyniak21}.

\emph{Typical sequences} will be an important tool used in Section~\ref{sec:tw}; intuitively, these allow us to store a bounded-size snapshot of a sequence of bounded-size non-negative integers that preserves the extremes of the sequence. We refer readers interested in the technical details of that section to Bodlaender and Kloks' original work~\cite{BodlaenderKloks96} for the necessary definitions. ($\star$)
\fi

\iflong
\subparagraph*{Ramsey Theory.} Given an edge-colored graph \(G\) a subset of vertices \(C\subseteq V(G)\) is a \emph{monochromatic clique} of color \(c\) in \(G\) if the induced subgraph \(G[C]\) of \(G\) is complete and every edge in \(G[C]\) is colored by color \(c\). 

\begin{fact}[Ramsey's Theorem, see, e.g.,~\cite{graham1990ramsey}]\label{fact:ramsey}
	Let \(n_1, n_2, \ldots, n_q\in\Nat\). There exists an integer \(R\) such that for every edge-colored complete graph \(G\) on \(R\) vertices with \(q\) colors \(1,\ldots,q\) there exists \(i\in[q]\) such that \(G\) contains a monochromatic clique of color \(i\) and size \(n_i\). 
\end{fact}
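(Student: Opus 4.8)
The plan is to prove the statement by first establishing the two-color case and then bootstrapping to arbitrarily many colors by induction on the number $q$ of colors; since we only need finiteness, any valid upper bound on the smallest admissible $R$ suffices, and I will write $R(n_1,\ldots,n_q)$ for such a bound. The base cases are immediate: if $q=1$ we may take $R(n_1)=n_1$, and if some $n_i\le 1$ the statement holds trivially, since a single vertex (or the empty graph, when $n_i=0$) already is a monochromatic clique of color $i$. So from now on I assume $q\ge 2$ and $n_i\ge 2$ for all $i$.

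For two colors I would run the classical Erd\H{o}s--Szekeres pigeonhole argument to obtain the recurrence $R(s,t)\le R(s-1,t)+R(s,t-1)$, which together with the base cases $R(1,t)=R(s,1)=1$ yields finiteness by induction on $s+t$ (and in fact the bound $R(s,t)\le\binom{s+t-2}{s-1}$). Concretely, in a complete graph on $R(s-1,t)+R(s,t-1)$ vertices with edges colored $1$ or $2$, fix a vertex $v$; by a degree count, either at least $R(s-1,t)$ of its neighbors are joined to $v$ by a color-$1$ edge, or at least $R(s,t-1)$ are joined to $v$ by a color-$2$ edge. In the first case, applying the induction hypothesis inside that neighbor set gives either a color-$1$ clique of size $s-1$, which together with $v$ has size $s$, or a color-$2$ clique of size $t$; the second case is symmetric.

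To pass from $q-1$ to $q$ colors I would merge the last two colors: given a $q$-coloring, recolor every edge of color $q-1$ or $q$ by a single new color $\ast$, obtaining a $(q-1)$-coloring with colors $1,\ldots,q-2,\ast$. By the induction hypothesis, a complete graph on $R(n_1,\ldots,n_{q-2},R(n_{q-1},n_q))$ vertices contains, under this $(q-1)$-coloring, either a color-$i$ clique of size $n_i$ for some $i\le q-2$---and then we are done---or a color-$\ast$ clique on $R(n_{q-1},n_q)$ vertices. In the latter case all edges of this clique were originally colored $q-1$ or $q$, so the two-color case yields a color-$(q-1)$ clique of size $n_{q-1}$ or a color-$q$ clique of size $n_q$. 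This gives the recursion $R(n_1,\ldots,n_q)\le R(n_1,\ldots,n_{q-2},R(n_{q-1},n_q))$, and finiteness for all $q$ follows by induction.

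There is no genuine obstacle here, as this is a textbook result; the only thing to get right is the bookkeeping of the nested inductions and the degenerate base cases ($n_i\in\{0,1\}$, $q=1$). The bound on $R$ produced by this argument grows extremely fast (tower-type in $q$), but since the statement only asserts the existence of \emph{some} finite $R$, this is irrelevant for our purposes.
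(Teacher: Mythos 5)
Your proof is correct, and since the paper states this as a cited Fact without giving its own proof, there is nothing to diverge from: your argument is the standard Erd\H{o}s--Szekeres two-color recurrence combined with the color-merging induction, which is exactly what the paper implicitly relies on when it records the bounds \(R(n_1,n_2)\le\binom{n_1+n_2-1}{n_1-1}\) and \(R(n_1,\ldots,n_q)\le R(n_1,\ldots,n_{q-2},R(n_{q-1},n_q))\) immediately after the statement.
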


We denote by \(R(n_1, n_2,\ldots, n_q)\) the minimum integer that satisfies the above fact and call such a number (multicolor) Ramsey Number. We note the following well known bounds on \(R(n_1, n_2,\ldots, n_q)\).

\begin{fact}[Erd{\H{o}}s and Szekeres~\cite{ErdosSzekeres1935}]
	\(R(n_1, n_2)\le \binom{n_1+n_2-1}{n_1-1}\le 2^{n_1+n_2}\).
\end{fact}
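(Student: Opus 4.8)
The plan is to follow the classical Erd\H{o}s--Szekeres argument: first establish the recurrence \(R(n_1,n_2) \le R(n_1-1,n_2) + R(n_1,n_2-1)\) for \(n_1,n_2 \ge 2\), and then unwind it by induction on \(n_1+n_2\) using Pascal's identity.

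For the recurrence, I would take an arbitrary \(2\)-edge-coloring of the complete graph \(G\) on \(N = R(n_1-1,n_2) + R(n_1,n_2-1)\) vertices, fix any vertex \(v\), and partition \(V(G)\setminus\{v\}\) into the set \(A_1\) of neighbors joined to \(v\) by a color-\(1\) edge and the set \(A_2\) of neighbors joined to \(v\) by a color-\(2\) edge. Since \(\card{A_1}+\card{A_2} = N-1\), the pigeonhole principle forces either \(\card{A_1}\ge R(n_1-1,n_2)\) or \(\card{A_2}\ge R(n_1,n_2-1)\). In the first case \(G[A_1]\) either already contains a monochromatic color-\(2\) clique of size \(n_2\) (and we are done), or it contains a monochromatic color-\(1\) clique \(C\) of size \(n_1-1\); in the latter subcase \(C\cup\{v\}\) is a monochromatic color-\(1\) clique of size \(n_1\), because \(v\) is joined by a color-\(1\) edge to every vertex of \(A_1\supseteq C\). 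The second case is symmetric. The base cases are \(R(1,n_2) = R(n_1,1) = 1\), since a single vertex is a monochromatic clique of size \(1\) of either color.

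For the closed form, I would prove the slightly stronger statement \(R(n_1,n_2)\le \binom{n_1+n_2-2}{n_1-1}\) by induction on \(n_1+n_2\): the base cases \(n_1=1\) or \(n_2=1\) are immediate since \(\binom{n_1+n_2-2}{n_1-1}\ge 1\), and for the inductive step the recurrence together with the induction hypothesis gives \(R(n_1,n_2)\le \binom{n_1+n_2-3}{n_1-2} + \binom{n_1+n_2-3}{n_1-1} = \binom{n_1+n_2-2}{n_1-1}\) by Pascal's identity. Since \(\binom{n_1+n_2-2}{n_1-1} \le \binom{n_1+n_2-1}{n_1-1}\) and \(\binom{m}{k}\le 2^m\) for all \(0\le k\le m\), this yields \(R(n_1,n_2)\le \binom{n_1+n_2-1}{n_1-1}\le 2^{n_1+n_2-1}\le 2^{n_1+n_2}\), as claimed.

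There is no substantive obstacle here; the argument is entirely standard. The only two points requiring care are (i) phrasing the pigeonhole split and the ``\(v\) plus a smaller monochromatic clique'' step precisely, and (ii) checking that the indexing convention the paper adopts for \(R(n_1,n_2)\) (and hence the exact value of the base cases) lines up with the bound, noting that the stated estimate \(\binom{n_1+n_2-1}{n_1-1}\) is already a touch looser than the textbook \(\binom{n_1+n_2-2}{n_1-1}\) and so comfortably absorbs any off-by-one in the convention.
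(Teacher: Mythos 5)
Your argument is correct and is exactly the classical Erd\H{o}s--Szekeres proof (neighborhood recurrence plus Pascal's identity), which is the argument behind the reference the paper cites; the paper itself states this as a known fact without proof. Nothing further is needed.
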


\begin{fact}
	\(R(n_1, n_2,\ldots, n_q)\le R(n_1, n_2,\ldots,n_{q-2}, R(n_{q-1},n_q))\).
\end{fact}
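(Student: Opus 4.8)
The plan is to use the standard color-merging argument, which trades a \(q\)-coloring for a \((q-1)\)-coloring. Write \(N \defeq R(n_1,\ldots,n_{q-2},R(n_{q-1},n_q))\); by the definition of \(R(n_1,\ldots,n_q)\) it suffices to show that every edge-coloring \(c\) of the complete graph on \(N\) vertices with colors \(1,\ldots,q\) contains, for some \(i\in[q]\), a monochromatic clique of color \(i\) on \(n_i\) vertices. Given such a \(c\), I would first pass to the \((q-1)\)-coloring \(c'\) of the same graph obtained by keeping the colors \(1,\ldots,q-2\) unchanged and recoloring every edge of color \(q-1\) or \(q\) to a common color (say \(q-1\)), i.e.\ identifying the last two colors.

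Applying the definition of \(N=R(n_1,\ldots,n_{q-2},R(n_{q-1},n_q))\) to \(c'\) then gives one of two outcomes. In the first, there are \(i\in[q-2]\) and a vertex set \(C\) with \(\card{C}=n_i\) on which \(c'\) is constantly \(i\); since \(c\) and \(c'\) agree on the colors \(1,\ldots,q-2\), the clique on \(C\) is also monochromatic of color \(i\) under \(c\), and we are done. In the second, there is a vertex set \(C\) with \(\card{C}=R(n_{q-1},n_q)\) on which \(c'\) is constantly \(q-1\); by construction this means \(c\) uses only the colors \(q-1\) and \(q\) on the edges inside \(C\). Restricting \(c\) to \(C\) is thus a \(2\)-coloring of a complete graph on \(R(n_{q-1},n_q)\) vertices, so by the definition of \(R(n_{q-1},n_q)\) there is a sub-clique that is monochromatic of color \(q-1\) on \(n_{q-1}\) vertices or of color \(q\) on \(n_q\) vertices --- in either case exactly the monochromatic clique under \(c\) that we wanted.

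I do not expect a genuine obstacle: this is essentially a one-line folklore observation, and the only two points deserving a sentence are the ``monotonicity'' facts invoked above --- that a \(c'\)-monochromatic clique in an unchanged color is still \(c\)-monochromatic of that color, and that a \(c'\)-monochromatic clique in the merged color restricts \(c\) to exactly the two merged colors --- both of which are immediate from the definition of \(c'\). (If desired, iterating this inequality together with the Erd\H{o}s--Szekeres two-color bound from the previous fact re-derives a finite upper bound on every multicolor Ramsey number \(R(n_1,\ldots,n_q)\), which is how this fact gets used later.)
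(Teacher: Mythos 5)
Your color-merging argument is correct and complete: merging colors $q-1$ and $q$, invoking the definition of $R(n_1,\ldots,n_{q-2},R(n_{q-1},n_q))$, and then applying $R(n_{q-1},n_q)$ inside the merged-color clique is exactly the standard proof of this inequality, and the two ``monotonicity'' observations you flag are indeed the only points needing mention. Note that the paper itself states this fact without proof, as a well-known bound used only to guarantee finiteness of the multicolor Ramsey numbers appearing in Lemma~\ref{lem:ramseyPartners}, so there is no paper proof to compare against; your write-up supplies the folklore argument correctly.
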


\smallskip
\noindent \textbf{Treewidth.}\quad
A \emph{nice tree-decomposition}~$\mathcal{T}$ of a graph $G=(V,E)$ is a pair 
$(T,\chi)$, where $T$ is a tree (whose vertices we call \emph{nodes}) rooted at a node $r$ and $\chi$ is a function that assigns each node $t$ a set $\chi(t) \subseteq V$ such that the following holds: 
\begin{itemize}[noitemsep]
	\item For every $uv \in E$ there is a node
	$t$ such that $u,v\in \chi(t)$.
	\item For every vertex $v \in V$,
	the set of nodes $t$ satisfying $v\in \chi(t)$ forms a subtree of~$T$.
	\item $|\chi(\ell)|=1$ for every leaf $\ell$ of $T$ and $|\chi(r)|=0$.
	\item There are only three kinds of non-leaf nodes in $T$:
	\begin{itemize}[noitemsep,label=]
        \item \textbf{Introduce node:} a node $t$ with exactly
          one child $t'$ such that $\chi(t)=\chi(t')\cup
          \{v\}$ for some vertex $v\not\in \chi(t')$.
        \item \textbf{Forget node:} a node $t$ with exactly
          one child $t'$ such that $\chi(t)=\chi(t')\setminus
          \{v\}$ for some vertex $v\in \chi(t')$.
        \item \textbf{Join node:} a node $t$ with two children $t_1$,
          $t_2$ such that $\chi(t)=\chi(t_1)=\chi(t_2)$.
	\end{itemize}
\end{itemize}

The \emph{width} of a nice tree-decomposition $(T,\chi)$ is the size of a largest set $\chi(t)$ minus~$1$, and the \emph{treewidth} of the graph $G$,
denoted $\tw(G)$, is the minimum width of a nice tree-decomposition of~$G$.
Efficient fixed-parameter algorithms are known for computing a nice tree-decomposition of near-optimal width~\cite{BodlaenderDDFLP16,Kloks94}.

\begin{proposition}[\cite{BodlaenderDDFLP16}]\label{fact:findtw}%
	There exists an algorithm which, given an $n$-vertex graph $G$ and an integer~$k$, in time $2^{\bigoh(k)}\cdot n$ either outputs a tree-decomposition of $G$ of width at most $5k+4$ and $\bigoh(n)$ nodes, or determines that $\tw(G)>k$.
\end{proposition}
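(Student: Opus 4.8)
This result is from \cite{BodlaenderDDFLP16}; I sketch the underlying algorithm, whose conceptual core is the recursive balanced‑separator scheme also presented in \cite{CyganFKLMPPS15}. The plan is a recursion on subproblems $(H,W)$ with $W\subseteq V(H)$ and $|W|\le 3k+4$, each of which should output either a tree‑decomposition of $H$ of width $\bigoh(k)$ some bag of which contains $W$, or the correct verdict $\tw(H)>k$; applying this to $(G,\emptyset)$ then proves the proposition.

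For the recursive step I would proceed as follows. If $|V(H)|\le 4k+4$, return the single‑bag decomposition $\{V(H)\}$. Otherwise, extend $W$ to a set $S\supseteq W$ with $|S|=3k+4$ and call a separator subroutine (described below) that either finds $X\subseteq V(H)$ with $|X|\le k+1$ such that every component of $H-X$ meets $S$ in at most $\frac{1}{2}|S|$ vertices, or reports that no such $X$ exists. The negative answer is sound: if $\tw(H)\le k$, then a tree‑decomposition of width $\le k$ has a bag whose deletion leaves the graph balanced with respect to $S$ (the classical balanced‑separator property of tree‑decompositions), and that bag would be a valid choice of $X$. Given $X$, recurse on $(H[C\cup X],\,(W\cap C)\cup X)$ for every component $C$ of $H-X$. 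Since $W\subseteq S$ and each component meets $S$ in at most $\frac{1}{2}|S|$ vertices, we get $|(W\cap C)\cup X|\le \frac{1}{2}|S|+|X|\le 3k+4$, so the size invariant is preserved; and since $|S\setminus X|\ge 2k+3>\frac{1}{2}|S|$ the set $S$ cannot fit into a single component, so $H-X$ has at least two components and hence the recursive graphs are strictly smaller than $H$.

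Gluing and correctness are then routine. Let $\mathcal{T}_C$ be the decomposition returned for $H[C\cup X]$ with a designated bag $B_C\supseteq(W\cap C)\cup X$. Introduce a fresh bag $B^\star:=W\cup X$ and make it adjacent to every $B_C$. This is a tree‑decomposition of $H$: every edge of $H$ lies inside some $H[C\cup X]$ because $X$ separates the components $C$; for a vertex of a single $C$ the bags containing it form a subtree of $\mathcal{T}_C$, possibly extended through $B^\star$; and for a vertex of $X$ they form $B^\star$ together with one subtree of each $\mathcal{T}_C$ reached through $B_C$. Its width is at most the maximum of $|B^\star|-1\le|W|+|X|-1=\bigoh(k)$ and the widths of the $\mathcal{T}_C$, hence $\bigoh(k)$ by induction, and $W\subseteq B^\star$ as required. (A careful version of this scheme already yields width $4k+4$; the precise bound $5k+4$ in the statement comes from the optimized algorithm of \cite{BodlaenderDDFLP16}.)

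The step I expect to be hardest is the separator subroutine itself together with attaining the running time $2^{\bigoh(k)}\cdot n$ and the exact factor $5k+4$. To find $X$, I would enumerate all ordered partitions $S=S_A\uplus S_B\uplus S_X$ with $|S_X|\le k+1$ and $|S_A|,|S_B|\le\frac{1}{2}|S|$ — there are $3^{|S|}=2^{\bigoh(k)}$ of them — and for each test, via a unit‑capacity vertex max‑flow (Menger's theorem on the vertex‑split graph), whether $S_A$ and $S_B$ can be separated in $H-S_X$ by at most $k+1-|S_X|$ vertices. If some partition succeeds, then $X:=S_X$ together with the resulting cut has the required balance, since each component of $H-X$ contains vertices of at most one of $S_A$, $S_B$ and therefore meets $S$ in at most $\max\{|S_A|,|S_B|\}\le\frac{1}{2}|S|$ vertices. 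Each flow needs only $k+2$ augmenting paths and runs in $\bigoh(k(n+m))$ time, so a single recursion node costs $2^{\bigoh(k)}\cdot(n+m)$, and a naive charge over the recursion already gives a polynomial‑time $\bigoh(k)$‑approximation. Pushing the total work down to $2^{\bigoh(k)}\cdot n$ — in particular bounding how much is added to the subproblems so that the cumulative instance size stays $\bigoh(n)$ — and calibrating the constants to reach width exactly $5k+4$ is the genuine technical content of \cite{BodlaenderDDFLP16}, which I would cite for these final steps.
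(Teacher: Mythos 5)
First, a point of comparison: the paper does not prove this statement at all --- it is imported verbatim as a cited result from \cite{BodlaenderDDFLP16}, so there is no in-paper argument to measure yours against, and deferring the $2^{\bigoh(k)}\cdot n$ running time and the exact constant $5k+4$ to that reference, as you do, is precisely the paper's own treatment. Your sketch of the recursive balanced-separator scheme (recursion on pairs $(H,W)$ with $|W|\le 3k+4$, gluing via a bag $W\cup X$) is the standard textbook route and the recursion/gluing part is fine, but the separator subroutine as you state it has a genuine gap.

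You enumerate partitions $S=S_A\uplus S_B\uplus S_X$ with $|S_A|,|S_B|\le\frac{1}{2}|S|$ and argue the negative answer is sound because a width-$k$ tree-decomposition has a bag $B$ with $|B|\le k+1$ such that every component of $H-B$ contains at most $\frac{1}{2}|S|$ vertices of $S$. That bag would indeed be a valid $X$, but your subroutine never searches over candidate sets $X$ directly --- it only searches over partitions of $S$ --- and the existence of the balanced bag does not imply the existence of a partition with \emph{both} sides of size at most $\frac{1}{2}|S|$ that $B$ separates. For instance, if $B\cap S=\emptyset$ and the components of $H-B$ carry $S$-weights $0.4|S|$, $0.4|S|$, $0.2|S|$, every grouping of the components into two sides puts more than $\frac{1}{2}|S|$ on one side, so all of your $3^{|S|}$ guesses can fail the cardinality test even though $\tw(H)\le k$, and the algorithm would wrongly report $\tw(H)>k$. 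The standard repair is to relax the requirement to $|S_A|,|S_B|\le\frac{2}{3}|S|$: components each of weight at most half the total can always be grouped into two sides of weight at most two thirds, so completeness is restored, and the invariant survives since each component of $H-X$ then meets $S$ in at most $\frac{2}{3}(3k+4)$ vertices and $\frac{2}{3}(3k+4)+(k+1)\le 3k+4$, while the glued bag $W\cup X$ has size at most $4k+5$, i.e.\ width $4k+4$ --- the classical bound, with the sharper $5k+4$ and the linear-in-$n$, single-exponential-in-$k$ running time being exactly the content one must take from \cite{BodlaenderDDFLP16}.
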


\smallskip
\noindent \textbf{Treedepth and Feedback Edge Set.}\quad
	A \emph{rooted forest} is a forest in which every component has  a specified node called a \emph{root}. The \emph{closure} of a rooted forest $T$ is the graph on $V(T)$ in which two vertices are adjacent if and only if one is a descendant of the other in $T$.
The \emph{height} of a rooted forest is the number of vertices
in  a longest root-to-leaf path in $T$.
The \emph{treedepth} of a graph  $G$, denoted by $\td(G)$, is
the minimum height of a rooted forest whose closure contains $G$ 
as a subgraph.

\begin{fact}[\cite{NesetrilOssonademendez12}]\label{fact:compute-td}
  Given an $n$-vertex graph $G$ and a constant $w$, it is possible to
  decide whether $G$ has treedepth at most $w$, and if so, to compute an
  optimal treedepth decomposition of $G$ in time $\bigoh(n)$.
\end{fact}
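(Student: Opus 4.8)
The plan is to reduce, in linear time, to model checking on a graph of bounded treewidth. The starting observation is that bounded treedepth forbids long paths: a path on $n$ vertices has treedepth $\lceil \log_2(n+1)\rceil$, so if $\td(G) \le w$ then $G$ has no path on $2^w$ vertices and, being a subgraph of the closure of a rooted forest of height at most $w$, has at most $(w-1)n$ edges. Hence I would first reject immediately if $G$ has more than $(w-1)n$ edges, so that from now on $|E(G)| = \bigoh(n)$, and then run a depth-first search from an arbitrary vertex in each connected component of $G$. If the DFS stack ever holds $2^w$ vertices, the root-to-top path of the DFS forest is a path of $G$ on $2^w$ vertices, which certifies $\td(G) > w$ and the algorithm rejects. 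This phase runs in time $\bigoh(n + |E(G)|) = \bigoh(n)$.

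Otherwise the DFS forest $F$ has height at most $c := 2^w - 1$, a constant depending only on $w$. Using the standard fact that in an undirected graph every non-tree edge of a DFS forest joins a vertex to one of its ancestors, $G$ is a subgraph of the closure of $F$; thus $F$ already witnesses $\td(G) \le c$, and in particular $\tw(G) \le c - 1$. Moreover a tree-decomposition of $G$ of width $\bigoh(1)$ can be read off from $F$ directly, in linear time: take $F$ with an extra root merging its components into a single tree, and let the bag of a node $t$ consist of $t$ together with its ancestors.

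With a tree-decomposition of constant width available, I would finish using Courcelle's theorem. For each fixed $j$, the predicate ``$\td(G) \le j$'' is expressible in monadic second-order logic: one unfolds $j$ times the recursion stating that $\td(G[X]) \le i+1$ iff every connected component $C$ of $G[X]$ contains a vertex $v$ with $\td(G[C\setminus\{v\}]) \le i$, relying on the fact that connectedness of an induced subgraph is itself MSO-definable; since $j \le c$ is bounded by a constant, the resulting formula has constant size. Applying the optimization/constructive form of Courcelle's theorem to our bounded-width tree-decomposition then computes, in linear time, the least $j \le c$ with $\td(G) \le j$ together with a witnessing rooted forest of height exactly $j$; the algorithm rejects if this value exceeds $w$, and outputs the forest otherwise. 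One could equivalently replace the logic black box by an explicit bottom-up dynamic program over the tree-decomposition, whose states record, for each way of arranging the already-processed vertices into a rooted forest of height at most $c$ compatible with the current bag, the best achievable height; the number of states is bounded in terms of $c$, so this is also linear. The only genuinely delicate point is extracting an \emph{optimal} decomposition rather than merely the treedepth value, which is handled by the constructive variant of Courcelle's theorem, or equivalently by carrying backpointers through the dynamic program at no extra asymptotic cost.
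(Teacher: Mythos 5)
This statement is quoted in the paper as a known fact from the cited reference (Ne\v{s}et\v{r}il and Ossona de Mendez); the paper gives no proof of its own. Your argument is essentially the standard proof from that source: a DFS either exposes a path on $2^w$ vertices (certifying $\td(G)>w$, after the easy edge-count pruning) or yields a depth-$\bigoh(1)$ DFS forest whose closure contains $G$, hence a constant-width tree-decomposition, after which one decides $\td(G)\le j$ for each fixed $j$ by Courcelle's theorem or an explicit dynamic program, and it is correct. The only delicate point, which you flag yourself, is producing the \emph{optimal} decomposition rather than the value: the witnessing rooted forest is not a substructure of $G$, so the plain ``optimization'' form of Courcelle does not directly hand it to you; your fallback of an explicit bottom-up dynamic program over the tree-decomposition (states recording the trace of the partial elimination forest on the current bag, with backpointers) is the clean way to get the constructive statement and keeps the running time linear for constant $w$.
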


A set $Q\subseteq E(G)$ of edges in a graph $G$ is called a \emph{feedback edge set} if $G-Q$ is acyclic. The \emph{feedback edge set number} of $G$ is the minimum size of a feedback edge set of $G$ and can be computed efficiently.

\begin{fact}
\label{fact:fescomp}
A minimum feedback edge set of a graph $G$ can be obtained by deleting the edges of minimum spanning trees of all connected components of $G$, and hence can be computed in time $\bigoh(|E(G)|+|V(G)|)$.
\end{fact}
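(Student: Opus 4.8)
The plan is to identify a minimum feedback edge set with the complement of a spanning forest and to pin down its size exactly. Write $c$ for the number of connected components of $G$, and let $F$ be a \emph{spanning forest} of $G$, i.e.\ the union of one spanning tree per component (in an unweighted graph every spanning tree of a given component has the same number of edges, so ``minimum spanning tree'' may here be read simply as ``spanning tree''). Then $F$ has exactly $|V(G)|-c$ edges, and the graph obtained from $G$ by keeping only the edges of $F$ is precisely $F$, which is acyclic; hence $Q \defeq E(G)\setminus E(F)$ is a feedback edge set, of size $|E(G)|-|V(G)|+c$, and it is obtained exactly as described in the statement.

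To see that this $Q$ is minimum, I would take an arbitrary feedback edge set $Q'$. By definition $G-Q'$ is a forest, so it has at most $|V(G)|-c(G-Q')$ edges; since deleting edges never merges components, $c(G-Q')\ge c$, and therefore $|E(G)|-|Q'| = |E(G-Q')| \le |V(G)|-c$, i.e.\ $|Q'| \ge |E(G)|-|V(G)|+c = |Q|$. Thus the complement of a spanning forest is a minimum feedback edge set.

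For the running-time bound I would invoke that a spanning forest of $G$, together with the partition of $V(G)$ into connected components, is produced by a single breadth-first or depth-first traversal of $G$ in time $\bigoh(|E(G)|+|V(G)|)$; forming $E(G)\setminus E(F)$ then costs one further pass over the edge list, staying within the same asymptotic bound. There is no genuine obstacle here: the only step meriting a moment's attention is the monotonicity $c(G-Q')\ge c$ of the component count under edge deletions used in the lower bound — and, if one additionally wanted the weighted variant in which the deleted edges carry weights, replacing ``spanning forest'' by ``maximum-weight spanning forest'' and invoking the standard exchange argument for spanning trees.
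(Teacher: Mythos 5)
Your proof is correct: the counting argument (the complement of a spanning forest has size $|E(G)|-|V(G)|+c$, and any feedback edge set has at least this size because the residual graph is a forest with at most $|V(G)|-c$ edges) together with a BFS/DFS traversal is precisely the standard justification of this statement. The paper states it as a folklore fact without proof, so there is nothing to diverge from; your write-up, including the observation that ``minimum spanning tree'' may be read as ``spanning tree'' in the unweighted setting, fills that gap correctly.
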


Both measures have been used to obtain fixed-parameter algorithms and/or polynomial kernels in a number of applications~\cite{BetzlerBNU12,GutinJW16,GanianO18,GanianOrdyniak21}.
\fi

\smallskip
\noindent \textbf{Branchwidth and Branch Decompositions.}\quad
While branchwidth and branch decompositions have also been defined over general ground sets, here we introduce the definition that matches the most common usage scenario on graphs.
A set function $f:2^M\rightarrow \mathbb{Z}$ over a ground set $M$ is called \emph{symmetric}
if $f(X)=f(M\setminus X)$ for all $X\subseteq M$.
A tree is \emph{subcubic} if all its nodes have degree at most~$3$.

A \emph{branch decomposition} of a graph $G$ is a subcubic tree $\decT$ whose leaves are mapped to the vertices of $G$ via a bijective mapping $\decf$. 
For an edge $e$ of $\decT$, the connected components of $\decT - e$ induce a bipartition $(X,Y)$ of the set of leaves of $\decT$, and through $\decf$ also of $V(G)$. For a symmetric function $f:2^{V(G)}\rightarrow \mathbb{Z}$, the \emph{width} of $e$ is $f(X)$, the width of the branch decomposition $\decT$ is the maximum width over all edges of $\decT$, and the branchwidth of $G$ is the minimum width of a branch decomposition of $G$. $f$ is often called the \emph{cut-function}, and we use $G[X,Y]$ to denote the bipartite graph obtained from $G$ by removing all edges that do not have precisely one endpoint in $X$ and precisely one endpoint in \(Y\).

\newcommand{\lcac}[2]{\operatorname{restr}_{#1}(#2)}
\ifshort
	For a tree \(T\) and \(A \subseteq V(T)\) we define the \emph{restricted tree} of \(T\) w.r.t.\ \(A\) as the tree
	\(\lcac{T}{A}\) that arises from the minimal subtree of $T$ containing all vertices in $A$ by recursively contracting all edges $uv$ where $u$ has degree $2$ and is not in $A$.
	Note also that every edge \(xy \in E(\lcac{T}{A})\) can be understood as the contraction of the path \(P\) between \(x\) and \(y\) in \(T\).
	We also say that the edge \(xy\) of the restricted tree with respect to \(X\) of \(T\) \emph{corresponds} to \(P\).
\fi

\iflong
Well-known applications of this definition include mim-width~\cite{vatshelle2012new,JaffkeKT20}, where the cut-function is the size of a maximum induced matching in $G[X,Y]$. Another example is maximum-matching width, which is asymptotically equivalent to treewidth~\cite{JeongST18} and where the cut-function is the size of a maximum (but not necessarily induced) matching in $G[X,Y]$. Rank-width is another well-known example of branchwidth; there, the cut-function is the rank of the bipartite adjacency matrix induced by $(X,Y)$. Rank-width is asymptotically equivalent to clique-width~\cite{OumSeymour06}.

	In Sections~\ref{sec:tw} and \ref{sec:td}, it will be useful to consider the \emph{restrictions} of (the tree structure of) a branch decomposition \((\decT,\decf)\).
	For a tree \(T\) and a set \(A \subseteq V(T)\) we define the \emph{restricted tree} of \(T\) with respect to \(A\) as the tree
	\(\lcac{T}{A}\) that arises from the minimal subtree of 
	\(T\) containing all vertices in $A$ by recursively contracting all edges $uv$ where $u$ has degree $2$ and is not in $A$.
	
	Note that for a branch decomposition \((\decT,\decf)\) the restricted tree with respect to subset of leaves \(A \subseteq \decf(V(G))\) of a branch decomposition of \(G\) can be seen as a branch decomposition of \(G[A]\).
	Conversely a branch decomposition \((\decT_1,\decf_1)\) of \(G\) \emph{extends} a branch decomposition \((\decT_2,\decf_2)\) of \(G[A]\), if \(\lcac{\decT_1}{A} = \decT_2\) and \(\decf_1(a) = \decf_2(a)\) for \(a \in A\).\\
	Note also that every edge \(xy \in E(\lcac{\decT}{A})\) can be understood as the contraction of the path \(P\) between \(x\) and \(y\) in \(\decT\).
	We also say that the edge \(xy\) of the restricted tree with respect to \(X\) of \(\decT\) \emph{corresponds} to \(P\).
	
To streamline the presentation in some sections (such as Section~\ref{sec:tw}), we suppress the function $\decf$ of a branch decomposition \(\decT\) by assuming that the leaves of branch decompositions are actually the leaves of $G$.
	\fi

\iflong
\smallskip
\noindent \textbf{Typical Sequences.}\quad
Typical sequences introduced in \cite{BodlaenderKloks96} will be an important tool in the \FPT\ algorithm for computing \(\mathcal{F}^*\)-branchwidth parameterized by treewidth and vertex degree.
We provide a brief overview over important definitions and results.

\begin{definition}[\cite{BodlaenderKloks96}]
	Let \(s = (s_1, \dotsc, s_\ell)\) be a sequence of natural numbers (including \(0\)) of length \(\ell\).  The \emph{typical sequence} \(\tau(s)\) of \(s\) is obtained from \(s\) by an iterative exhaustive application of the following two operations:
	\iflong 
	\begin{enumerate}
		\item Removing consecutive repetitions:
		If there is an index \(i \in [\ell - 1]\) such that \(s_i = s_{i + 1}\), we change the sequence \(s\) from \((s_1, \dotsc, s_\ell)\) to \((s_1, \dotsc, s_i, s_{i + 2}, \dotsc, s\ell)\).
		\item Typical  operation:
		If there are \(i,j \in [\ell]\)  such  that \(j - i \geq 2\) and for  all \(i \leq k \leq j\), \(s_i \leq s_k \leq s_j\), or for all \(i \leq k \leq j\), \(s_i \geq s_k \geq s_j\) then we change the sequence \(s\) from \(s_1, \dotsc, s_\ell\) to \(s_1, \dotsc, s_i, s_j, \dotsc, s_\ell\), i.e.\ we remove all entries of \(s\) (strictly) between index \(i\) and \(j\).
	\end{enumerate}
	\fi 
	\ifshort 
		\emph{1) Removing consecutive repetitions}:
		If there is an index \(i \in [\ell - 1]\) such that \(s_i = s_{i + 1}\), we change the sequence \(s\) from \((s_1, \dotsc, s_\ell)\) to \((s_1, \dotsc, s_i, s_{i + 2}, \dotsc, s\ell)\). \emph{2) Typical  operation}:
		If there are \(i,j \in [\ell]\)  such  that \(j - i \geq 2\) and for  all \(i \leq k \leq j\), \(s_i \leq s_k \leq s_j\), or for all \(i \leq k \leq j\), \(s_i \geq s_k \geq s_j\) then we change the sequence \(s\) from \(s_1, \dotsc, s_\ell\) to \(s_1, \dotsc, s_i, s_j, \dotsc, s_\ell\), i.e.\ we remove all entries of \(s\) (strictly) between index \(i\) and \(j\).
	\fi
	We say a sequence of natural numbers is \emph{typical} whenever it is identical to its typical sequence.
\end{definition}
While it is maybe not immediately obvious from the definition, the typical sequence of a sequence of natural numbers is always unique.
As we use typical sequences to index table entries or \emph{records} in a dynamic program in Section~\ref{sec:tw} and also branch on ways to modify them, the following combinatorial bounds will be useful.
\ifshort 
\begin{fact}[\cite{BJT20, BodlaenderKloks96, TSB05a}]
	\label{fact:typ-bounds+comp} 
		The length of the typical sequence of a sequence with entries in \([k]\) is at most \(2k + 1\).
		The number of typical sequences with entries in \([k]\) is at most \(\frac{8}{3}2^{2k}\).
		Moreover, the typical sequence of a sequence of natural numbers of length \(\ell\) can be computed in time in \(\mathcal{O}(\ell)\) and the interleaving of two typical sequences with entries in \([k]\) can be computed in time in \(\mathcal{O}(k^42^{4k(2k + 1)})\).
\end{fact}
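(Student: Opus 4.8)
The plan is to read all four claims off a single structural normal form for typical sequences, which I would establish first: a sequence \(t=(t_1,\dots,t_m)\) over \([k]\) equals its own typical sequence if and only if it is \emph{strictly alternating} (that is, \(t_i\neq t_{i+1}\) for all \(i\), and every interior entry is a strict local maximum or a strict local minimum) and contains no \emph{collapsible block}, where a block \(t_i,\dots,t_j\) with \(j\ge i+2\) is collapsible if every entry \(t_{i+1},\dots,t_{j-1}\) lies in the closed interval spanned by \(t_i\) and \(t_j\). One direction is immediate — a monotone run of length three is a collapsible block, and an equal consecutive pair is removed by the first reduction — and for the other direction one checks that neither reduction operation applies to such a \(t\). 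Along the way one also gets that \(\tau(s)\) is well defined (e.g.\ by confluence of the two reductions), which the statement implicitly relies on.

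For the length bound (part~1) I would study the sequence of peaks \(P_1,P_2,\dots\) and valleys \(V_1,V_2,\dots\) of a typical \(t\). Applying the no‑collapse property to blocks of the form ``valley, peak, valley, peak'' and to their reverses forces the peaks and valleys to evolve monotonically and in opposite directions — peaks moving down while valleys move up (``inward''), or peaks up while valleys down (``outward'') — so that \(t\) has a spiral shape with essentially one change of direction; a careful accounting of how much vertical room inside \([k]\) each step of this spiral consumes then bounds the number of extrema, and hence \(m\), by \(2k+1\) (one recovers the classical tight value when entries are allowed in \(\{0,\dots,k\}\)). Making this accounting precise enough to land exactly on \(2k+1\) rather than on an \(\mathcal{O}(k)\) bound with a worse constant is the delicate point, and the step I expect to be the main obstacle; here I would follow \cite{BodlaenderKloks96,BJT20}.

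The counting bound (part~2) rests on the same structure: by the previous paragraph a typical sequence over \([k]\) is determined by its first entry, the index at which the inward/outward regime turns around, and the lists of peak and valley values, which are monotone within each regime. This is bounded combinatorial data — in effect a pair of monotone walks inside a box of height \(k\) and length \(\mathcal{O}(k)\) — and counting it, with the leading constant extracted via a short recurrence on \(k\), yields the claimed bound \(\tfrac{8}{3}2^{2k}\); I would again defer the precise constant to \cite{BJT20,BodlaenderKloks96}.

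Parts~3 and~4 are algorithmic and follow the normal form directly. To compute \(\tau(s)\) in time \(\mathcal{O}(\ell)\): scan \(s\) from left to right, maintaining on a stack the typical sequence of the prefix seen so far; on reading a new value \(x\), first discard it if it equals the top, then repeatedly pop the top entry \(t_j\) as long as \(t_j\) lies in the interval spanned by its predecessor on the stack and \(x\) — this realises exactly the collapsible‑block reductions triggered by appending \(x\) — and finally push \(x\). Correctness is immediate from the normal form, and since every entry of \(s\) is pushed and popped at most once the running time is \(\mathcal{O}(\ell)\). For the interleaving in part~4 — the set of all typical sequences obtainable as an entrywise sum of equal‑length copies of \(r\) and \(s\) formed by repeating entries — I would use a dynamic program whose state records a position in \(r\), a position in \(s\), and the typical sequence of the partial sum built so far; by parts~1--2 the last component ranges over at most \(\mathcal{O}(2^{2k})\) typical sequences over \([2k]\) of length \(\mathcal{O}(k)\), the two positions over \(\mathcal{O}(k)\) values each, and each transition appends one value and re‑normalises in time \(\mathcal{O}(k)\) by part~3. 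Multiplying these (generously) gives the stated bound \(\mathcal{O}(k^4 2^{4k(2k+1)})\); the precise bookkeeping is as in \cite{BJT20,BodlaenderKloks96,TSB05a}.
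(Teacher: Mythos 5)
This statement is not proved in the paper at all: it is imported verbatim as a fact, with the proofs delegated to Bodlaender--Kloks, to Bodlaender--Jaffke--Telle (BJT20) and to Thilikos--Serna--Bodlaender (TSB05a); the paper even tells the reader to consult those works for the technical details. So your attempt to reprove it is by construction a different route, and it only stands if the self-contained parts are correct. The sharpest concrete problem is your linear-time algorithm for part~3. Take \(s=(1,4,2,5)\). Every proper prefix is typical, and when \(5\) arrives the top of your stack is \(2\) with predecessor \(4\); since \(2\notin[4,5]\) your pop rule never fires and the algorithm returns \((1,4,2,5)\). But the typical operation with \(i=1\), \(j=4\) applies (all entries lie in \([1,5]\)), so \(\tau(s)=(1,5)\). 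The failure mode is structural: the collapsible block created by the new element can have its left endpoint buried several positions below the top while the top-versus-immediate-predecessor test fails, so ``correctness is immediate from the normal form'' does not hold; a correct linear-time procedure must compare the new value against deeper stack elements (e.g.\ via maintained running minima/maxima along the stack, as in the cited BJT20 algorithm), not just against the element directly beneath the top.

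A second, softer gap concerns parts~1 and~2. The structural claim that peaks and valleys of a typical sequence ``evolve monotonically and in opposite directions'' is not literally true: \((3,5,2,6,1,4)\) is typical (no consecutive repetitions, and no block has all interior entries inside the interval spanned by its endpoints), yet its peak values are \(5,6,4\). The genuine structure is weaker (roughly, amplitudes of consecutive differences increase and then decrease), and you yourself identify the accounting that yields exactly \(2k+1\) and \(\tfrac{8}{3}2^{2k}\) as the delicate point and defer it, together with the interleaving time bound of part~4, to the very references the paper cites. At that point the proposal does not deliver an independent proof of the quantitative claims; it reduces to the paper's own treatment, namely quoting the literature, with the added (and currently incorrect) stack algorithm as its only new ingredient. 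The dynamic-programming outline for part~4 is conceptually fine (the typical sequence of a partial sum is a sufficient state, and your generous bound dominates the true cost), but it too relies on a correct part~3 as a subroutine.
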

\fi
\iflong
\begin{fact}[\cite{BodlaenderKloks96}]
	\label{fact:typ-bounds} 
	\begin{itemize}
		\item The length of the typical sequence of a sequence with entries in \(\{0\} \cup [k]\) is at most \(2k + 1\).
		\item The number of typical sequences with entries in \(\{0\} \cup [k]\) is at most \(\frac{8}{3}2^{2k}\).
	\end{itemize}
\end{fact}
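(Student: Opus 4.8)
The whole argument rests on understanding the shape of a typical sequence. First I would record that a sequence \(s=(s_1,\dots,s_\ell)\) is typical exactly when (i) no two consecutive entries coincide and (ii) there is no index pair \(i<j\) with \(j\ge i+2\) such that every entry \(s_i,\dots,s_j\) lies in \([\min(s_i,s_j),\max(s_i,s_j)]\). Taking \(j=i+2\) in (ii) forces every entry other than \(s_1\) and \(s_\ell\) to be a strict local extremum, so \(s\) alternately rises and falls (a ``zigzag''); and since (i)--(ii) are conditions on contiguous subsegments, every contiguous subsequence of a typical sequence is again typical. Write \(a(s)=\max s-\min s\); since the entries lie in \(\{0\}\cup[k]\) we have \(a(s)\le k\), so both bounds follow once we control typical sequences in terms of \(a(s)\).

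For the length bound I would prove the sharper inequality \(|s|\le 2a(s)+1\). The engine is the following claim: \emph{a typical sequence \(T\) whose minimum is attained only at one of its endpoints (equivalently, by reversal, whose maximum is) satisfies \(|T|\le a(T)+1\).} I would prove it by induction on \(|T|\); assume the minimum sits uniquely at position \(1\). Using that no segment \([1,j]\) with \(j\ge 3\) may be sandwiched, one gets that \(\max T\) occurs at position \(2\) and nowhere else, hence \(T=(\min T,\max T,T')\) where \(T'\) is typical, has \(a(T')\le a(T)-2\), and again has its minimum uniquely at an endpoint (this last point again from forbidding sandwiched segments starting at position \(2\)); the induction hypothesis gives \(|T|=2+|T'|\le a(T)+1\). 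Granting the claim, take an arbitrary typical \(s\) (the constant case \(a(s)=0\) being trivial) and let \(W\) be the set of positions carrying the value \(\min s\) or \(\max s\). A far-apart occurrence of \(\min s\) and \(\max s\) would make the segment between them sandwiched; a short case analysis of the zigzag shows that neither extreme occurs three times and that they cannot both occur twice; hence \(W\) is a block of at most three consecutive positions. Removing \(W\) leaves a prefix \(P\) and a suffix \(Q\), both typical, both avoiding \(\min s\) and \(\max s\) (so \(a(P),a(Q)\le a(s)-2\)), and — since no segment reaching from \(P\) or \(Q\) into \(W\) may be sandwiched — each of \(P,Q\) has its unique minimum or maximum at the endpoint adjacent to \(W\). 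The claim then gives \(|P|,|Q|\le a(s)-1\), so \(|s|=|P|+|W|+|Q|\le 2a(s)+1\le 2k+1\). The crux, I expect, is noticing this endpoint-extremality: without it one only gets a bound linear in \(k\) but with a worse constant.

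For the counting bound I would iterate the same decomposition. A typical sequence over \(\{0\}\cup[k]\) is pinned down by: its extreme values, i.e.\ \(a(s)\) together with one of \(k+1-a(s)\) offsets; the shape of \(W\), a \(\{\min s,\max s\}\)-word of length \(2\) or \(3\) (so \(\bigoh(1)\) choices); and the pair \((P,Q)\), each a typical sequence with a prescribed extreme at a designated endpoint over a value interval of \(a(s)-1\) integers. Unfolding the claim above, such a one-endpoint-extreme sequence is built by peeling off its (unique) extreme pair level by level, so it is described by two nested strictly monotone ``level'' sequences, hence by an \(\bigoh(1)\)-factor times a choice of two subsets of its value range. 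Feeding this into a recurrence for the number \(c(k)\) of typical sequences, the contributions sum to a geometric series in powers of \(4\), and evaluating it yields \(c(k)\le \frac{8}{3}2^{2k}\). The delicate step is again to make the one-endpoint-extreme structure quantitative enough that each peeling level inflates the count only by an \(\bigoh(4^i)\) factor, rather than by something like \(c(i)^2\).
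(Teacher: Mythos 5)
First, a remark on the comparison you were asked to withstand: the paper does not prove this statement at all — Fact~\ref{fact:typ-bounds} is imported verbatim from Bodlaender and Kloks — so your argument has to stand on its own. Your proof of the length bound does stand: the key claim (a typical sequence whose minimum is attained uniquely at an endpoint has length at most \(a+1\)) is correct, its induction step works exactly via the sandwich segments \([1,j]\) and \([2,j]\) you invoke, and the decomposition of an arbitrary typical \(s\) into \(P\), a block \(W\) of at most three consecutive positions carrying all occurrences of \(\min s\) and \(\max s\), and \(Q\), with each of \(P,Q\) forced to have the relevant extreme uniquely at the endpoint adjacent to \(W\) (which extreme depends on whether the adjacent entry of \(W\) is \(\max s\) or \(\min s\)), is sound and yields \(|s|\le |P|+|W|+|Q|\le 2a(s)+1\le 2k+1\).

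The counting half, however, has a genuine quantitative gap as written. You encode a one-endpoint-extreme sequence by ``two subsets of its value range'', i.e.\ up to a constant \(4^{a-1}\) choices; since \(P\) and \(Q\) are chosen independently, the pair then contributes on the order of \(16^{a-1}\), and \(\sum_a (k+1-a)\cdot \mathcal{O}(16^{a-1})=\Theta(16^{k})\) — this is not a geometric series in powers of \(4\) and cannot yield \(\frac{8}{3}2^{2k}\), so the bookkeeping you describe does not close. The missing observation is already implicit in your peeling: the levels interleave as \(m_1<m_2<\dots\) and \(M_1>M_2>\dots\) with every \(m_i\) below every \(M_j\) and with the number of \(m\)'s equal to, or exactly one more than, the number of \(M\)'s; hence the pair of level sets is determined by its union, a single subset of the \(a-1\) intermediate values. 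So each of \(P\) and \(Q\) admits at most \(2^{a-1}\) possibilities (empty included), and only injectivity of this encoding is needed (you never have to check that every nested alternating sequence is typical). With the four possible \(W\)-patterns \((\min,\max)\), \((\max,\min)\), \((\min,\max,\min)\), \((\max,\min,\max)\), the total count is at most \((k+1)+\sum_{a=1}^{k}(k+1-a)\cdot 4\cdot 4^{a-1}\le (k+1)+\tfrac{16}{9}\,4^{k}\), which is at most \(\frac{8}{3}2^{2k}\) for \(k\ge 1\), the case \(k=0\) being trivial. With that replacement your argument is complete; as stated, the step from the level-sequence description to the claimed constant is the one that fails.
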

It is known that typical sequences of a sequence of natural numbers can be computed in linear time.
\begin{fact}[\cite{BJT20}]
	\label{fact:typ-comp}
	The typical sequence of a sequence of natural numbers of length \(\ell\) can be computed in time in \(\mathcal{O}(\ell)\).
\end{fact}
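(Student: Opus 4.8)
The plan is to first reduce to sequences with no two consecutive equal entries: one left-to-right scan removes all consecutive repetitions in time $\mathcal{O}(\ell)$ and can only shorten the sequence, so from now on assume $s=(s_1,\dots,s_n)$ with $n\le\ell$ satisfies $s_i\neq s_{i+1}$ for all $i$. On such a sequence the first defining operation never applies, and contracting a pair $i<j$ with the second operation never creates a new repetition; moreover the second operation applies to $i<j$ with $j-i\ge 2$ exactly when $\min_{i\le k\le j}s_k$ and $\max_{i\le k\le j}s_k$ are both attained among $\{s_i,s_j\}$, and then it deletes $s_{i+1},\dots,s_{j-1}$. Call such an interval \emph{contractible}. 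By the (already noted) uniqueness of the typical sequence, $\tau(s)$ is the unique sequence reachable from $s$ by contractions that has no contractible interval.

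I would compute $\tau$ by a single sweep maintaining, on a stack, the typical sequence of the prefix processed so far. This is justified by a \emph{prefix-closure} property: appending a value $x$ at the end of a sequence $w$ does not affect whether any operation internal to $w$ is applicable, so $w$ followed by $x$ rewrites to $\tau(w)$ followed by $x$, whence $\tau(w\cdot x)=\tau(\tau(w)\cdot x)$ by confluence. Hence it suffices to implement the update: given a stack holding a typical sequence $t_1,\dots,t_m$ and a new value $x$, output $\tau(t_1,\dots,t_m,x)$.

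For the update I would argue it only pops a suffix of the stack: a contractible interval of $(t_1,\dots,t_m,x)$ cannot lie inside $(t_1,\dots,t_m)$ (which is typical), so it must end at the position of $x$; contracting it deletes a suffix $t_{i+1},\dots,t_m$ and leaves $(t_1,\dots,t_i,x)$, whose prefix $(t_1,\dots,t_i)$ is again typical, so this iterates and, as each step strictly shortens the sequence, ends with $\tau(t_1,\dots,t_m,x)=(t_1,\dots,t_p,x)$ for some $p\le m$. The target $p$ is governed by how the running minimum and running maximum of the stack, read from the top downward, compare to $x$. To locate $p$ without rescanning long tails, I would maintain, alongside the main stack, two auxiliary monotone stacks recording the running suffix-minima and suffix-maxima of the main stack together with their positions; then $p$ is found by a constant number of pointer jumps interleaving these two structures, after which the popped entries are discarded from all three stacks. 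Since every entry is pushed and popped at most once from each stack and the bookkeeping is $\mathcal{O}(1)$ amortized per entry, together with the initial scan this gives total time $\mathcal{O}(\ell)$.

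The main obstacle is the amortized analysis of this update. A naive backward scan for $p$ can traverse a long tail of the stack and pop nothing — this happens exactly when $x$ lies strictly between the current suffix-minimum and suffix-maximum, so that moving deeper to fix the minimum condition breaks the maximum condition and vice versa — so the linear bound genuinely depends on the monotone auxiliary structures plus a potential argument charging each entry traversed to a future pop. The remaining, more routine, ingredients are: formally deriving prefix-closure and the ``only a suffix is removed'' claim from confluence of the rewriting system, and checking that the constant-jump computation of $p$ from the two monotone stacks is correct in each case according to where $x$ falls relative to the stored minima and maxima.
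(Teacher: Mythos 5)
The paper does not actually prove this statement: it is imported verbatim from~\cite{BJT20} (the paper only records it as a fact), so there is no internal proof to compare your argument against; it has to stand on its own. The structural part of your plan does stand up: the dedup pass, the observation that on repetition-free sequences a contraction never creates a repetition and applies exactly when both window extremes sit at the endpoints, the prefix-closure identity $\tau(w\cdot x)=\tau(\tau(w)\cdot x)$ via uniqueness of the typical sequence (which the paper itself asserts), and the claim that processing $x$ only ever removes a suffix of the current typical stack are all correct and easy to justify.

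The genuine gap is exactly where you place it, and it is not a routine remainder but the entire content of the $\mathcal{O}(\ell)$ claim: the per-element update must run in amortized constant time, and your description of how to locate $p$ --- ``a constant number of pointer jumps interleaving'' two monotone suffix-min/suffix-max stacks --- is asserted, not established. When pops occur, scanning down the stack can be charged to the popped entries, so that part is fine; the problematic case is the one you flag yourself, where $x$ lies strictly between the current suffix extremes and \emph{nothing} is popped. There the search cost cannot be charged to pops, so you need a worst-case $\mathcal{O}(1)$ test for ``no contractible interval ending at $x$ exists.'' Such a test does exist --- e.g.\ an increasing-type contraction ending at $x$ exists iff the suffix maximum taken from the topmost suffix-minimum record strictly below the top is at most $x$, and symmetrically for the decreasing type, since suffix maxima only grow as one moves deeper --- but you never state such a criterion, never prove it, and never verify that the auxiliary stacks supplying it can be maintained consistently under the interleaving of contraction-pops and pushes (including the corner case where the contraction endpoint equals $x$ and a second contraction becomes available). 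Without this, the proof establishes correctness of a stack algorithm but not the linear running time, which is the statement being proved.
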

For the correctness of our dynamic program it is helpful to note some trivial observations:
The typical sequence of a concatenation of two sequences \(s\) and \(s'\) is the typical sequence of the concatenation of the typical sequences of \(s\) and \(s'\).
Moreover, the typical sequence of a sequence \((s_1 + z, \dotsc, s_\ell + z)\) that arises from a typical sequence by adding some constant \(z \in \mathbb{N}\) to every entry of a sequence \(s\) is equal to the sequence that arises by adding \(z\) to each entry of the typical sequence of \(s\).
We also write \(s + z\) for a sequence \(s\) of length \(\ell\) and a natural number \(z\) to denote the sequence \((s_1 + z, \dotsc, s_\ell + z)\).

A definition which will be crucial at the join nodes in our dynamic program is that of the \(\oplus\)-operator for typical sequences.
\begin{definition}[\cite{TSB05a,BodlaenderKloks96}]
	Let \(s\) be a sequence of \(\ell\) natural numbers.  We define the set \(E(s)\) of \emph{extensions} of \(s\) as the set of sequences that are obtained from \(s\) by repeating each of its elements an arbitrary number of times, and at least once.
	Now we define the \emph{interleaving} of two typical sequences \(s\) and \(t\) of arbitrary length to be the set \(s \oplus t = \{\tau((\tilde{s}_1 + \tilde{t}_1, \dotsc, \tilde{s}_\ell + \tilde{t}_\ell)) \mid \tilde{s} \in E(s), \tilde{t} \in E(t), |\tilde{s}| = |\tilde{t}| = \ell\}\), where we use \(|r|\) to denote the length of a sequence \(r\).
\end{definition}

\begin{fact}[\cite{TSB05a}]
	\label{fact:typ-compint}
	The interleaving of two typical sequences with entries in \([k]\) can be computed in time in \(\mathcal{O}(k^42^{4k(2k + 1)})\).
\end{fact}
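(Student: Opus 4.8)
The plan is to first describe the interleaving $s \oplus t$ exactly as a set of typical sequences obtained from monotone lattice paths in a small grid, and then to bound the number of such paths together with the cost of processing each one. In more detail, writing $m = |s|$ and $n = |t|$, the first (and main) step is to prove that $s \oplus t$ is exactly the set of all $\tau(\sigma)$, where $\sigma$ is the value sequence $(s_i + t_j)$ obtained by walking, cell by cell, along a lattice path from $(1,1)$ to $(m,n)$ in the $m \times n$ grid using only unit rightward, upward and diagonal steps. For the forward inclusion, given extensions $\tilde s \in E(s)$ and $\tilde t \in E(t)$ of a common length $\ell$, every position $p$ falls into the block of $\tilde s$ with index $a(p) \in [m]$ and into the block of $\tilde t$ with index $b(p) \in [n]$; the maps $a$ and $b$ are nondecreasing, surjective, start at $1$, end at $m$ and $n$, and change by at most $1$ per step, so the cells $(a(p), b(p))$, after collapsing consecutive repetitions, trace a path of the required form, and $\tilde s + \tilde t$ coincides with the associated value sequence up to consecutive repetitions. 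Since removing consecutive repetitions is one of the two defining operations of $\tau$ and typical sequences are unique, these two sequences have the same typical sequence. For the reverse inclusion, any such path visits every row and every column of the grid, so realizing each of its cells by one position yields extensions $\tilde s, \tilde t$ of equal length witnessing membership.

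The second step is purely combinatorial. By \Cref{fact:typ-bounds} we have $m, n \le 2k + 1$, so every path considered above has at most $m + n - 2 \le 4k$ steps, each of one of three types; hence there are at most $3^{4k}$ of them, and a depth-first enumeration that prunes as soon as a partial path would leave the grid lists all of them, together with their value sequences, in time $\mathcal{O}(k \cdot 3^{4k})$. Each value sequence has length at most $4k+1$ and entries at most $2k$, so by \Cref{fact:typ-comp} we can compute its typical sequence in time $\mathcal{O}(k)$. Finally, sorting the resulting at most $3^{4k}$ typical sequences---each of length $\mathcal{O}(k)$---lexicographically removes duplicates in time $\mathcal{O}(k^2 \cdot 3^{4k})$. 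Since $3^{4k} \le 2^{4k(2k+1)}$ for all $k \ge 1$, the total running time is $\mathcal{O}(k^4 \cdot 2^{4k(2k+1)})$, as required.

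The hard part is entirely in the first step: carefully verifying both inclusions of the lattice-path characterization, and in particular that collapsing consecutive repetitions does not change the typical sequence, and that the surjectivity of $a$ and $b$ matches precisely the ``repeat each element at least once'' clause in the definition of $E(\cdot)$. After that, the counting and the time analysis are routine; note also that the exponent in the statement is very generous---any bound polynomial in $k$ times $2^{\mathcal{O}(k)}$ would already be more than enough for the use of this fact in \Cref{sec:tw}.
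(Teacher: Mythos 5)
There is nothing in the paper to compare against here: the statement is a \emph{fact} imported verbatim from~\cite{TSB05a}, and the paper gives no proof of it. Your argument is correct and is essentially the standard one from that literature: the interleavings \(s\oplus t\) are exactly the typical sequences of the value sequences \(s_i+t_j\) read along monotone right/up/diagonal lattice paths in the \(m\times n\) grid, both inclusions being exactly as you argue (block indices of a pair of equal-length extensions give such a path after collapsing \((0,0)\)-steps, which is harmless since removing consecutive repetitions is one of the two reduction operations and typical sequences are unique; conversely surjectivity of the row and column projections of a path matches the ``each element at least once'' clause of \(E(\cdot)\)). With \(m,n\le 2k+1\) by Fact~\ref{fact:typ-bounds} and Fact~\ref{fact:typ-comp} for the per-path reduction, the enumeration and deduplication land comfortably inside the very generous \(\mathcal{O}(k^4 2^{4k(2k+1)})\) bound. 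One cosmetic quibble: ``at most \(4k\) steps of three types, hence at most \(3^{4k}\) paths'' is not literally the right inference, since paths have varying lengths (summing over lengths gives roughly \(\tfrac32\cdot 3^{4k}\)); the bound \(3^{4k}\) is nonetheless true (e.g.\ by the Delannoy recursion \(D(a,b)\le D(a-1,b)+D(a,b-1)+D(a-1,b-1)\le 3^{a+b}\)), and any \(2^{\mathcal{O}(k)}\cdot\mathrm{poly}(k)\) count would suffice anyway.
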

\fi 
\fi

	\section{$\cF$-Branchwidth}	
	\label{sec:fbw}
	In this paper we consider branchwidth and branch decompositions where the value of the cut-function $f(X)$ is defined by 
	the maximum number of vertices of any
	induced subgraph of $G[X,V(G)\setminus X]$ that is isomorphic to some graph in some fixed infinite family of bipartite graphs. 
	We focus on families of bipartite graphs where each part has the same size satisfying a certain closure property.
	Let \(\cF\) be an infinite family of bipartite graphs where each graph in $\cF$ has bipartition \((A=(a_1,\ldots, a_n),B= (b_1,\ldots, b_n))\). 
	We say that \(\cF\) is \ph\ (\phsymb) if for each \(2n\)-vertex graph in \(\cF\) and each subset \(L\) of \([n]\), the graph induced on \(\{a_i,b_i | i\in L\}\) is isomorphic to a graph in \(\cF\). 
	Note that the vertices in both parts of the bipartite graph in \(\cF\) are ordered.
	Given a bipartite graph $H=(V,E)$ with bipartition \((A=(a_1,\ldots, a_n),B= (b_1,\ldots, b_n))\), for each $i\in [n]$ we call \(a_i\) and \(b_i\) \emph{partners}. Moreover, for simplicity, when referring to a bipartite graph $H$ with bipartition \((A,B)\) and edge set $E$ we will sometimes write $H=(A,B,E)$.
	
	For a \phsymb\ family of bipartite graphs \(\cF\), we define \emph{\(\cF\)-branchwidth} as the branchwidth where the cut-function $\decf$ of a cut \((X,Y)\) is defined as the largest \(n\) such that a \(2n\)-vertex graph in \(\cF\) is isomorphic to an induced subgraph of \(G[X,Y]\). For a graph \(G\), a branch decomposition \(\decT\) of \(G\), and an edge \(e\) of \(\decT\), we denote by \(\FbwSh(\decT,e) \) the width of the cut of \(G\) \emph{induced} by \(e\), by \(\FbwSh(\decT)\) the width of \(\decT\) and we denote by \(\FbwSh(G)\) the \(\cF\)-branchwidth of \(G\).
	\iflong
	We begin by noting the following simple observations that follow directly from the fact that the width of an edge depends only on the existence of some graph in \(\cF\) as induced subgraph across the cut. 
	
	\begin{observation}\label{obs:closedInducedSubgraphs}
		Let \(\cF\) be a \emph{\phsymb} family of graphs, \(G\) a graph, and \(H\) an induced subgraph of \(G\), then \(\FbwSh(H)\le \FbwSh(G) \). 
	\end{observation}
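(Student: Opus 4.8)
The plan is to derive \(\FbwSh(H)\le\FbwSh(G)\) by taking an optimal branch decomposition of \(G\) and restricting it to the leaves corresponding to \(V(H)\), then arguing that every cut of the restricted decomposition has width no larger than the width of a naturally associated cut in the original decomposition. Concretely, let \((\decT,\decf)\) be a branch decomposition of \(G\) of width \(\FbwSh(G)\), and let \(A=\decf^{-1}(V(H))\) be the set of leaves mapped into \(V(H)\). I would first observe (as already recorded in the preliminaries) that the restricted tree \(\lcac{\decT}{A}\) together with the restriction of \(\decf\) is a branch decomposition of \(H\).

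Next I would take an arbitrary edge \(e'\) of \(\lcac{\decT}{A}\); by the remark in the preliminaries, \(e'\) corresponds to a path \(P\) in \(\decT\), and in particular to some edge \(e\) of \(\decT\) lying on \(P\). Removing \(e\) from \(\decT\) induces a bipartition \((X,Y)\) of \(V(G)\), and removing \(e'\) from \(\lcac{\decT}{A}\) induces the bipartition \((X\cap V(H),\,Y\cap V(H))\) of \(V(H)\). The key point is then that \(H[X\cap V(H),\,Y\cap V(H)]\) is an induced subgraph of \(G[X,Y]\): indeed \(H\) is an induced subgraph of \(G\), and restricting to the bipartite graph across a cut commutes with taking induced subgraphs in the sense that the edges kept on \(V(H)\) are exactly those of \(G[X,Y]\) with both endpoints in \(V(H)\). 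Since \(\cF\) is closed under isomorphism, any \(2n\)-vertex member of \(\cF\) appearing as an induced subgraph of \(H[X\cap V(H),Y\cap V(H)]\) also appears as an induced subgraph of \(G[X,Y]\); hence the width of \(e'\) in the decomposition of \(H\) is at most the width of \(e\) in the decomposition of \(G\), which is at most \(\FbwSh(G)\). Taking the maximum over all edges \(e'\) shows \(\lcac{\decT}{A}\) is a branch decomposition of \(H\) of width at most \(\FbwSh(G)\), so \(\FbwSh(H)\le\FbwSh(G)\).

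The only mild subtlety — and the step I would be most careful about — is the degenerate case where \(V(H)\) is too small for \(\lcac{\decT}{A}\) to be a genuine subcubic tree with the right leaf set (e.g.\ \(|V(H)|\le 1\), or when contractions could in principle create a non-subcubic node). For \(|V(H)|\le 1\) the statement is trivial since \(\FbwSh\) of such a graph is whatever the convention dictates (typically \(0\), or the value assigned to a trivial decomposition), and for \(|V(H)|\ge 2\) the restriction operation as defined does yield a valid subcubic branch decomposition because contracting degree-\(2\) non-\(A\) vertices never raises any degree. Everything else is bookkeeping: matching edges of the restricted tree to edges of \(\decT\), and invoking the isomorphism-closure of \(\cF\) together with ``induced subgraph of an induced subgraph is an induced subgraph''. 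I do not expect a real obstacle here; the observation is essentially a monotonicity-under-induced-subgraphs fact and the proof is short.
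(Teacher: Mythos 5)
Your proof is correct and matches the paper's intent: the paper states this as an immediate observation (no proof given), justified by exactly the facts you use — that the restricted tree with respect to $\decf^{-1}(V(H))$ is a branch decomposition of $H$, and that each of its cuts induces a bipartite graph that is an induced subgraph of the corresponding cut in $G$, so no member of $\cF$ can appear in the former without appearing in the latter. Your handling of the degenerate small-$H$ case and the correspondence between edges of the restricted tree and edges of $\decT$ is fine and adds nothing beyond what the paper's preliminaries already assert.
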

	
	\begin{observation}\label{obs:closedSubFamilies}
		Let \(\cF, \cF'\) be two \emph{\phsymb} families of graphs such that \(\cF'\subseteq \cF\) and let \(G\) be a graph. Then \(\cF'\mbox{-}\bw(G)\le \FbwSh(G) \). 
	\end{observation}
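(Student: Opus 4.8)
The plan is to show that for every graph $G$ and every branch decomposition $\decT$ of $G$, the width of $\decT$ with respect to the cut-function of $\cF'$ is at most its width with respect to the cut-function of $\cF$; taking a $\cF$-optimal decomposition then immediately yields the claimed inequality $\cF'\mbox{-}\bw(G)\le \FbwSh(G)$, since an optimal $\cF'$-branch decomposition can only do better. So fix an arbitrary branch decomposition $\decT$ of $G$ and an arbitrary edge $e$ of $\decT$, inducing the bipartition $(X,Y)$ of $V(G)$.

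The key step is a pointwise comparison of the two cut-functions on $(X,Y)$. By definition, $\cF'\mbox{-}\bw(\decT,e)$ is the largest $n$ such that some $2n$-vertex graph in $\cF'$ is isomorphic to an induced subgraph of $G[X,Y]$. Since $\cF'\subseteq\cF$, any $2n$-vertex graph in $\cF'$ that embeds as an induced subgraph of $G[X,Y]$ is also a $2n$-vertex graph in $\cF$ embedding as an induced subgraph of $G[X,Y]$; hence every $n$ witnessing a value of the $\cF'$-cut-function also witnesses at least that value for the $\cF$-cut-function. Therefore $\cF'\mbox{-}\bw(\decT,e)\le \FbwSh(\decT,e)$. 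Taking the maximum over all edges $e$ of $\decT$ gives $\cF'\mbox{-}\bw(\decT)\le \FbwSh(\decT)$.

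Finally, let $\decT^*$ be a branch decomposition of $G$ attaining $\FbwSh(\decT^*)=\FbwSh(G)$. Then
\[
\cF'\mbox{-}\bw(G)\;\le\;\cF'\mbox{-}\bw(\decT^*)\;\le\;\FbwSh(\decT^*)\;=\;\FbwSh(G),
\]
where the first inequality holds because $\cF'\mbox{-}\bw(G)$ is a minimum over all branch decompositions and the second is the pointwise bound just established. This proves the observation.

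I do not expect a genuine obstacle here: the argument is entirely a matter of unwinding the definitions, and the only thing to be slightly careful about is the edge case where $G[X,Y]$ contains \emph{no} induced subgraph isomorphic to any member of $\cF'$ (or of $\cF$), in which case the relevant cut value is taken to be $0$ (or whatever the empty-maximum convention dictates) and the inequality $0\le \FbwSh(\decT,e)$ holds trivially; one should also note that the same fixed decomposition is used on both sides, so no structural manipulation of $\decT$ is needed. One minor point of presentation: since $\cF$ and $\cF'$ are both \phsymb, the cut-function is well-defined (symmetric) in each case, so ``the width of $e$'' is unambiguous regardless of which side of $e$ we call $X$.
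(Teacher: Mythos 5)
Your argument is correct and matches the paper's intent: the paper states this as an observation "following directly" from the fact that the width of an edge depends only on which graphs of the family appear as induced subgraphs across the cut, which is exactly the pointwise comparison of cut-functions you spell out before minimizing over decompositions. No issues.
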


	We say that two  families of graphs \(\cF_1\) and \(\cF_2\)
	are \emph{isomorphic} if for every graph \(H_1\) in \(\cF_1\), the class \(\cF_2\) contains a graph isomorphic to \(H_1\) and for every graph \(H_2\) in \(\cF_2\), the class \(\cF_1\) contains a graph isomorphic to \(H_2\). 
	Since the ordering on the vertices in \(G[X,Y]\) for a partition \((X,Y)\) of the vertices of \(G\) is not important to determine the width of the cut, we get the following observation. 
	
	\begin{observation}\label{obs:isomorphicFamilies}
		Let \(\cF_1, \cF_2\) be two isomorphic \emph{\phsymb} families of graphs and \(G\) a graph, then \(\cF_1\mbox{-}\bw(G)=  \cF_2\mbox{-}\bw(G) \). 
	\end{observation}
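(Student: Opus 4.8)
The plan is to reduce the statement to the equality of the underlying cut-functions: the $\cF$-branchwidth of a graph $G$ depends on $\cF$ \emph{only} through the cut-function $\decf$, so it suffices to show that $\cF_1$ and $\cF_2$ define the same cut-function on every graph $G$; the equality $\cF_1\mbox{-}\bw(G)=\cF_2\mbox{-}\bw(G)$ is then immediate.

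Concretely, I would fix a graph $G$ and an arbitrary bipartition $(X,Y)$ of $V(G)$, write $\decf_1(X)$ and $\decf_2(X)$ for the two cut-values, and show $\decf_1(X)=\decf_2(X)$. Recall that $\decf_i(X)$ is the largest $n$ such that some $2n$-vertex graph in $\cF_i$ is isomorphic to an induced subgraph of $G[X,Y]$ (and $0$ if no graph in $\cF_i$ is isomorphic to an induced subgraph of $G[X,Y]$ at all). If $\decf_1(X)=n$ is witnessed by a graph $H_1\in\cF_1$ on $2n$ vertices that is isomorphic to an induced subgraph $H$ of $G[X,Y]$, then, since $\cF_2$ is isomorphic to $\cF_1$, there is a graph $H_2\in\cF_2$ isomorphic to $H_1$; thus $H_2$ has $2n$ vertices and is isomorphic to $H$, so $\decf_2(X)\ge n$. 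The reverse inequality is obtained symmetrically by transferring a maximum witness in $\cF_2$ back to $\cF_1$. Hence $\decf_1(X)=\decf_2(X)$ for all $X\subseteq V(G)$. Consequently, for every branch decomposition $\decT$ of $G$ and every edge $e$ of $\decT$ the values $\FbwSh(\decT,e)$ agree for $\cF_1$ and $\cF_2$, so $\decT$ has the same width for both families; minimising over all branch decompositions of $G$ yields the claim.

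The only subtlety — and it is minor — is that the graphs in $\cF_1$ and $\cF_2$ come equipped with ordered bipartitions and a partner structure, whereas the cut-function tests only for plain (induced-subgraph) isomorphism of graphs; the argument above uses precisely the fact, already noted before the statement, that this extra structure is irrelevant for the value of a cut, so witnesses transfer freely along any isomorphism of families. I expect no real obstacle here beyond making this last observation explicit.
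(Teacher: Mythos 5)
Your proof is correct and matches the paper's reasoning: the paper treats this as an immediate observation, justified exactly by the remark that the cut-function only tests plain induced-subgraph isomorphism against the family, so isomorphic families induce identical cut-values and hence identical widths. Your write-up just makes this explicit, with no substantive difference in approach.
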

	\fi
	
It will be useful to introduce the following six special families of \phsymb\ classes. 
	In the following, we denote a bipartite graph with vertex bipartition $(A, B)$ and edge set $E$ by $(A, B, E)$.
	Let:
	
	\begin{enumerate}
		\item $H^n_\emptyset=((a_1,\ldots,a_n), (b_1,\ldots,b_n), \emptyset)$ and let \(\iemptyFamily = \{H^n_\emptyset \mid n\in \Nat \} \), i.e.,  \iemptyFamily contains all edgeless graphs on even number of vertices. 
		
		\item $H^n_{\imatch} = ((a_1,\ldots,a_n), (b_1,\ldots,b_n), E_{\imatch})$ where \(E_{\imatch} = \{a_ib_i\mid i\in [n] \} \)  
		and let \(\imatchFamily = \{H^n_{\imatch} \mid n\in \Nat \} \), i.e.,  \imatchFamily contains all \(1\)-regular graphs (matchings). 
		\item $H^n_{\ichain}= ((a_1,\ldots,a_n), (b_1,\ldots,b_n), E_{\ichain})$ where \(E_{\ichain} = \{a_ib_j\mid i\le j\in [n] \} \) and let \(\ichainFamily = \{H^n_{\ichain} \mid n\in \Nat \} \), i.e.,  \ichainFamily contains all bipartite chain graphs without twins. 
		
		\item $H^n_{\ichainstrict}= ((a_1,\ldots,a_n), (b_1,\ldots,b_n), E_{\ichainstrict})$ where \(E_{\ichainstrict} = \{a_ib_j\mid i < j\in [n] \} \) and let \(\ichainstrictFamily = \{H^n_{\ichain} \mid n\in \Nat \} \), i.e.,  \ichainstrictFamily contains each graph in \ichainFamily without the matching.
		\item $H^n_{\iantimatch}= ((a_1,\ldots,a_n), (b_1,\ldots,b_n), E_{\iantimatch})$ where \(E_{\iantimatch} = \{a_ib_j\mid i\neq j\in [n] \} \) and let \(\iantimatchFamily = \{H^n_{\iantimatch} \mid n\in \Nat \} \), i.e.,  \iantimatchFamily contains for every \(n\) the \(2n\)-vertex \((n-1)\)-regular bipartite graph (all anti-matchings). 
		\item $H^n_{\icomplete}= ((a_1,\ldots,a_n), (b_1,\ldots,b_n), E_{\icomplete})$ where \(E_{\icomplete} = \{a_ib_j\mid i, j\in [n] \} \) and let \(\icompleteFamily = \{H^n_{\icomplete} \mid n\in \Nat \} \), i.e.,  \icompleteFamily contains all complete bipartite graphs $K_{n,n}$.
	\end{enumerate}
	
 We will later prove that to compute an approximately-optimal branch decomposition for any \phsymb\ family \(\cF\), it suffices to compute an optimal branch decomposition for  \(\cF^*\), where \(\cF^*\) is the union of a subset of classes \imatchFamily, \ichainFamily, and \iantimatchFamily\ that are fully contained in \(\cF\) (see Lemma~\ref{lem:primal}).

\section{Overview of Technical Contributions}\label{sec:overview}

We now give an overview of our technical contributions and sketch the main ideas of our more involved proofs. There is a one to one correspondence between the subsections of this section and the remaining technical sections. The proofs of the statements in this section are postponed to the later sections. For the convenience of the reader, we use the same numbering of statements and definitions as in the corresponding full technical sections.

\subsection{Size-Identifiable Classes and Comparison to Existing Measures}
Here, we focus on the six families introduced at the end of Section~\ref{sec:fbw}. We first show that while there is an infinite number of possible \phsymb\ graph classes, to asymptotically characterize $\Fbw$ for any \phsymb\ family of graphs \(\cF\) we only need to consider these families. Afterwards, we turn our attention on several previously studied graph measures and show how they compare to $\Fbw$. Finally we show that to compute an approximate \(\cF^*\)-branch decomposition, it suffices to focus on only three of these classes, namely \(\imatchFamily\), \(\ichainFamily\), and \(\iantimatchFamily\).
\subsubsection{Size-Identifiable Classes}

We say that \(\cF\) is \si\ (\sisymb) if for each \(n\in \Nat\), there is a single \(2n\)-vertex graph in \(\cF\) (up to isomorphism). We will show \(\cF_\emptyset\), \(\imatchFamily\), \(\ichainFamily\), \(\ichainstrictFamily\), \( \iantimatchFamily\), and \(\icompleteFamily\) are the only \sisymb\ \phsymb\ families of graph. To show that these are the only \sisymb\ \phsymb\ families as well as to show importance of these families, we will make use of the following lemma that basically shows that any bipartite graph with ordered partitions contains a large subset of partners that induce a graph isomorphic to a graph in one of the six above families.

\begin{restatable*}{lemma}{ramseyPartners}
	\label{lem:ramseyPartners}
		Let $q \ge R(2n,2n,2n,2n)$ and let \(H^q = ((a_1,\ldots, a_q), (b_1,\ldots, b_q), E)\) be a bipartite graph. Then there exists $L\subseteq [q]$ such that $|L|= n$ and the graph induced on \(\{a_i,b_i\mid i\in L\}\) is isomorphic to one of $H^n_\emptyset$, $H^n_{\imatch}$, $H^n_{\ichain}$, $H^n_{\ichainstrict}$, $H^n_{\iantimatch}$, or $H^n_{\icomplete}$.
	\end{restatable*}

\begin{restatable*}{lemma}{onlysix}
	\label{lem:onlysix}
		There are precisely six \emph{\sisymb} and \emph{\phsymb} families: \iemptyFamily, \imatchFamily, \ichainFamily, \ichainstrictFamily, \iantimatchFamily, \icompleteFamily.
\end{restatable*}

We now apply Ramsey's Theorem to show that these six \texttt{\emph{si}} classes capture the whole hierarchy of width parameters induced by $\mathcal{F}$-branchwidth.

\begin{restatable*}{lemma}{sigood}
	\label{lem:sigood}
		Let \(\cF\) be a \emph{\phsymb} graph class and let \(\cF'\) be the union of all \emph{\sisymb} families fully contained in \(\cF\). Then \(\cF\)-branchwidth is asymptotically equivalent to \(\cF'\)-branchwidth.
\end{restatable*}

\subsubsection{Comparison to Existing Measures}
Next, we discuss how specific \texttt{\emph{si ph}} classes capture previously studied parameters.
First, mim-width is precisely \imatchFamily-branchwidth. 

\begin{restatable*}{observation}{mimWidth}
	\label{obs:mim-width}
	For every graph \(G\), mim-width of \(G\) is equal to \imatchFamily-branchwidth of \(G\).
\end{restatable*}

The next decomposition parameter that we consider is treewidth. 
To show the equivalence, we actually show that \emph{maximum-matching width}, which is always linearly upper- and lower-bounded by treewidth~\cite{vatshelle2012new,JeongST18}, is also \(\Fbw\) for the \phsymb\ family \(\cF\) of all \texttt{\emph{ph}} graphs that contain a matching on the partners.

\begin{restatable*}{lemma}{twbound}
	\label{lem:twbound}
There exists a \emph{\phsymb} family \(\cF\) such that for all graphs \(G\) it holds \(\FbwSh(G)\le \tw(G) + 1\le 3\cdot \FbwSh(G)\).
\end{restatable*}

By Lemma~\ref{lem:sigood}, treewidth is then asymptotically equivalent to $\mathcal{F}^*$-branchwidth for $\mathcal{F}^*$ being the union of some \texttt{\emph{si ph}} classes, notably \imatchFamily, \ichainFamily, \iantimatchFamily, and \(\icompleteFamily\).
By applying the easy observation that \(\cF'\subseteq \cF\) implies \(\cF'\mbox{-}\bw(G)\le \FbwSh(G)\) for any \phsymb\ families \(\cF\) and \(\cF'\) and any graph \(G\),
we can also establish an explicit upper bound when considering \imatchFamily, \ichainFamily and \iantimatchFamily.
The reason we are interested in specifically this set of families will become clear when we define \emph{\primal} families.

\begin{restatable*}{corollary}{FisTWBounded}
	\label{cor:FisTWBounded}
	Let \(\cF^*\) be a union of any combination of families among \imatchFamily, \ichainFamily, and \iantimatchFamily and let \(G\) be a graph. Then \(\cF^*\mbox{-}\bw(G)\le\tw(G)+1\).
\end{restatable*}
		
Throughout Section~\ref{sec:tw} and in Proposition~\ref{prop:manycomponent} in Section~\ref{sec:td}, the fact that $\Fsbw(G)$ is bounded in the respective parameters is essential, and Corollary~\ref{cor:FisTWBounded} gives this bound.

The next parameter that we will consider is clique-width. Similarly as for treewidth, it will be easier to show equivalence with another parameter, \emph{module-width}, which is known to be asymptotically equivalent to cliquewidth~\cite{Rao08,BuiXuanSTV13}. 
An important tool to establish this equivalence is Corollary 2.4 in \cite{DingOOV96}, which intuitively states that any sufficiently large twin-free bipartition of vertices in a graph must contain a large chain, anti-matching or matching.

\begin{restatable*}{lemma}{compModuleWidth}
	\label{lem:compModuleWidth}
	Let \(\cF= \imatchFamily\cup \iantimatchFamily \cup \ichainFamily\). Then there exists a computable function $f$ such that \(\FbwSh(G)\le \operatorname{modw}(G) \le f(\FbwSh(G))\) for every graph $G$; in particular, module-width and $\cF$-branchwidth with \(\cF= \imatchFamily\cup \iantimatchFamily \cup \ichainFamily\) are asymptotically equivalent. 
\end{restatable*}

Finally, we compare \(\Fbw\) to the so-called \(H\)-join decompositions introduced by Bui-Xuan, Telle, and Vatshelle~\cite{BuiXuanTV10}.

We note that unlike other parameters, $H$-join decompositions do not have a specific ``width''; each graph either admits a decomposition or not, and the complexity measure is captured by the order of the graph $H$. Keeping this in mind, we can show that $\mathcal{F}$-branch captures \(H\)-joins as well.

\begin{restatable*}{lemma}{compHJoin}
	\label{lem:compHJoin}
	Let \(\cF= \imatchFamily\cup \iantimatchFamily \cup \ichainFamily \) and let \(H\) be a fixed graph. Then for all \(H\)-join decomposable graphs \(G\) it holds that \(\cF\mbox{-}\bw(G)\le |V(H)|\).	
\end{restatable*}

\subsubsection{Primal Families and Computing $\mathcal{F}$-Branchwidth}

Let us call the \phsymb\ families \(\imatchFamily\), \(\ichainFamily\), and \(\iantimatchFamily\) \primal. As the last result of this subsection, we can build on Lemma~\ref{lem:sigood} and show that it suffices to focus our efforts to compute good $\mathcal{F}$-branch decompositions merely on unions of \primal\ classes. To this end, one can prove the following simple claim.

\begin{restatable*}{lemma}{antichain}
	\label{lem:antichain}
	Let $\cF$ be a union of \texttt{si ph} classes. If \(\ichainstrictFamily\subseteq \cF\), then for every graph \(G\) it holds that \(|\FbwSh(G) - \cF'\operatorname{-bw}(G)|\le 1\), where \(\cF' = (\cF\setminus\ichainstrictFamily)\cup\ichainFamily \).
\end{restatable*}

Our high-level strategy for dealing with \(\icompleteFamily\) and \(\iemptyFamily\) is to start by considering an arbitrary optimal branch decomposition $\decT$ of a graph $G$ for the class $\cF^*$ obtained after removing these two families from $\cF$. Intuitively, we show that if there were a large difference between the $\cF$-branchwidth of $\decT$ (which is equal to the \(\cF^*\)-branchwidth of \(G\)) and the $\cF$-branchwidth of $G$, then $G$ or the complement of \(G\) would have to contain a sufficiently large complete bipartite subgraph; but then \emph{any} decomposition of the graph would still necessarily have a cut inducing a fairly large complete bipartite graph or edgeless graph. This would contradict our assumption about the $\cF$-branchwidth of $G$ being substantially smaller than the $\cF$-branchwidth of $\decT$. Using this, we obtain:

\begin{restatable*}{lemma}{primalLemma}
	\label{lem:primal}
	Let $\cF$ be a union of a subset of \texttt{si ph} classes other than \(\ichainstrictFamily\). Let \(\cF^*\) be the union of all \emph{\primal} families in \(\cF\). Then an optimal \(\cF^*\)-branch decomposition of a graph $G$ is a \(3\)-approximate \(\cF\)-branch decomposition of $G$.
\end{restatable*}

We use $\cF^*$ to refer to the union of any combination of \texttt{\emph{primal}} families, and $\Fsbw(G)$ to refer to the $\cF^*$-branchwidth of $G$. If follows from Lemmas~\ref{lem:antichain} and \ref{lem:primal} that computing \(\cF^*\)-branchwidth is sufficient to obtain approximately optimal decompositions for all possible $\cF$-branchwidths.
We conclude this section by showing that for \textsc{$\cF^*$-Branchwidth}, i.e., the problem of computing an optimal $\cF^*$-branch decomposition for a fixed choice of $\cF^*$, it is sufficient to deal with connected graphs only.

\begin{restatable*}{lemma}{componentLemma}
	\label{lem:component}
	Let $\mathcal{F}^*$ be a union of a non-empty set of \emph{\primal} families of graphs.
	For a graph $G$, the $\mathcal{F}^*$-branchwidth of $G$ is the maximum $\mathcal{F}^*$-branchwidth over all connected components of $G$.
\end{restatable*}

\subsection{Treewidth and Maximum Vertex Degree}
\label{subsec:tw}
Our aim in Section~\ref{sec:tw} is to prove the following theorem.

\begin{restatable*}{theorem}{twDegTheorem}
	\label{thm:tw_deg}
	Let \(\mathcal{F}^*\) be the union of some \emph{\primal} \emph{\phsymb} families.
	\Fsbwprob is \FPT\ parameterized by the treewidth and the maximum degree of the input graph.
\end{restatable*}

As an immediate corollary, by considering \(\mathcal{F}^* = \mathcal{F}_{\imatch}\) we obtain:
\begin{restatable*}{corollary}{twDegMimWidthCorollary}
	\label{thm:tw_deg}
	\textsc{Mim-Width} is \FPT\ parameterized by the treewidth and the maximum degree of the input graph.
\end{restatable*}

On a high level, our proof of Theorem~\ref{thm:tw_deg} relies on the usual dynamic programming approach 
which computes a set of records for each node \(t \in V(T)\) of a minimum-width nice tree decomposition \((T, \chi)\) of \(G\) in a leaves-to-root manner.
Here each record captures a set of partial \(\mathcal{F}^*\)-branch decompositions not exceeding \(\mathcal{F}^*\)-branchwidth at most \(\tw(G)\).
It is worth noting that the most involved and difficult parts of our algorithm are necessary to handle the case that \(\mathcal{F}_{\imatch} \subseteq \mathcal{F}^*\).

For \(t \in V(T)\), denote by \(T_t\) the subtree of \(T\) rooted at \(t\), and by \(G_t = G[\bigcup_{s \in V(T_t)} \chi(s)]\) the subgraph of \(G\) induced by the vertices in the bags of \(T_t\).
Moreover, for \(v \in V(G)\) we denote its \emph{closed distance-\(i\) neighbourhood} in \(G\) by \(N^i[v]\), its \emph{open distance-\(i\) neighbourhood in \(G\) by \(N^i(v)\)}, and also extend these notations to vertex sets \(V' \subseteq V(G)\) by denoting \(N^i[V'] = \bigcup_{v \in V'} N^i[v]\) and \(N^i(V') = N^i[V'] \setminus V'\).

We define a \emph{record} \((D, \flat, \Lambda, \sigma, \alpha^{\imatch}, \alpha^{\iantimatch}, \alpha^{\ichain})\) of \(t \in V(T)\) to consist of the following.
\begin{itemize}[leftmargin=*]
	\item A binary tree \(D\) on at most \(2^{|N^3[\chi(t)]|}\) vertices, all leaves of which are identified with distinct vertices in \(N^3[\chi(t)]\).
	Intuitively, \(D\) will describe the restricted tree with respect to \(N^3[\chi(t)]\) of all branch decompositions of \(G_t\) that are captured by the record.
	\item A set \(\flat = \{\flat_e\}_{e \in E(D)}\) indexed by the edges of \(D\), where each \(\flat_e\) is a sequence of subsets of \(N^{3\tw(G)}(N^3[\chi(t)])\), such that \(\bigcup_{e \in E(D)} \flat_e\) is a partition of \(N^{3\tw(G)}(N^3[\chi(t)])\).
	\(\flat_e\) describes the order in which subtrees containing vertices in \(N^{3\tw(G)}(N^3[\chi(t)])\) are attached to the path which corresponds to \(e\) in any branch decomposition captured by \((D, \flat, \Lambda, \sigma, \alpha^{\imatch}, \alpha^{\iantimatch}, \alpha^{\ichain})\).
	The entry \(\flat\) can be interpreted to express a ``refinement'' of each edge of \(D\).
	Similarly to edges of \(D\) corresponding to paths in a branch decomposition captured by the record
	we will define refinements of these paths in such branch decompositions into \emph{blocks} which correspond to pairs \((e,i)\) where \(e \in E(D)\) and~\(i \in [|\flat_e|+1]\).
	\item \(\Lambda = (\Lambda_e)_{e \in E(D)}\) indexed by the edges of \(D\), where each \(\Lambda_e\) is a sequence \(\Lambda_e = (\lambda_1, \dotsc, \lambda_{|\flat_e| + 1})\) of subsets of \(N^3(\chi(t))\).
	\(\lambda_i \in \Lambda_e\) should be such that the \imatchwidth value at every edge in the block corresponding to \((e,i)\) in any branch decomposition captured by \((D, \flat, \Lambda, \sigma, \alpha^{\imatch}, \alpha^{\iantimatch}, \alpha^{\ichain})\) can be achieved by a matching such that its intersection with \(G[N^3[\chi(t)]]\) is incident to precisely the vertices in \(\lambda_i\).
	In other words, \(\lambda_i \in \Lambda_e\) describes the interaction we assume any matching to have with \(G[N^3[\chi(t)]]\) when computing the \imatchwidth value at any edge in the block corresponding to \((e,i)\).
	We can argue (Lemma~\ref{lem:blockset}) that such a set \(\lambda_i\) exists.
	\item A collection \(\sigma = (\sigma_{e,i})_{e \in E(D), i \in [|\flat_e| + 1]}\) of typical sequences \(\sigma_{e,i}\) with entries in \(\{0\} \cup [\tw(G) + 1]\).
	\(\sigma_{e,i}\) should describe a further refinement of each path corresponding to \((e,i)\) in a branch decomposition captured by \((D, \flat, \Lambda, \sigma, \alpha^{\imatch}, \alpha^{\iantimatch}, \alpha^{\ichain})\) according to the \imatchwidth values that can be achieved by the restriction of the branch decomposition to \(G_t\) under the condition imposed by the choice of \(\lambda_i\).
	The importance of these typical sequences is that we can argue that it is safe to assume that a partial branch decomposition at node \(t\) can be extended to a minimum-\(\Fsbw\) decomposition of \(G\) if and only if this is possible by attaching vertices only at positions in the typical sequences (Lemma~\ref{lem:typical}).
	\item \(\alpha^{\imatch}, \alpha^{\iantimatch}, \alpha^{\ichain} \in [\tw(G) + 1]\) should describe the \(\imatchwidth\), \(\iantimatchwidth\) and \(\ichainwidth\) of any branch decomposition of \(G_t\) captured by \((D, \flat, \Lambda, \sigma, \alpha^{\imatch}, \alpha^{\iantimatch}, \alpha^{\ichain})\).
	For \(\alpha^{\imatch}\) there is a slight caveat -- to compute this entry in the record we will rely on the correctness of \(\Lambda\) which we cannot ensure at the time of computation, but rather retroactively after the root of \(T\) is processed.
	This is done by constructing a \emph{canonical decomposition} concurrently with each record, which is representative of all branch decompositions captured by \((D, \flat, \Lambda, \sigma, \alpha^{\imatch}, \alpha^{\iantimatch}, \alpha^{\ichain})\), and then after processing the root of \(T\) verifying that its \imatchwidth is at most \(\alpha^{\imatch}\).
\end{itemize}

Note that the number of records is easily seen to be \FPT\ parameterized by \(\tw(G)\) and \(\Delta(G)\) using 
the fact that the number of considered typical sequences is at most \(\frac{8}{3}2^{2(\tw(G) + 1)}\)~\cite{BodlaenderKloks96}.
We remark that the information stored in \(D\) is sufficient to keep track of the \iantimatchwidth and the \ichainwidth of the constructed partial decomposition (this means to update \(\alpha^{\iantimatch}\) and \(\alpha^{\ichain}\)) appropriately.
For \imatchwidth the ``forgotten'' vertices, i.e.\ vertices in \(V(G_t) \setminus \chi(t)\) have a more far-reaching impact.
More explicitly, any graphs in \(\mathcal{F}_{\iantimatch}\) and \(\mathcal{F}_{\ichain}\) induced in a cut of a partial branch decomposition which contains a vertex in \(\chi(t)\), can only contain vertices in \(N^3[\chi(t)]\).

Obviously, this is not true for graphs in \(\mathcal{F}_{\imatch}\) which can have infinite diameter.

Hence, for \imatchwidth instead of keeping track of vertices which can occur in graphs in \(\FFF^*\) together with vertices in \(\chi(t)\),
we store the rough location in the constructed partial branch decomposition of a sufficiently large neighbourhood \(\tilde{N}\) of \(\chi(t)\) in \(G\) to separate the interaction (in terms of independence) of edges of \(G_t[V(G_t) \setminus \tilde{N}]\) and edges of \(G[V(G) \setminus (V(G_t) \cup \tilde{N})]\) that occur in independent induced matchings at cuts in the constructed partial branch decomposition.
After this separation we still need to be able to actually derive the \imatchwidth.
To do this we use typical sequences in a similar way as they are also used to compute e.g.\ treewidth, pathwidth and cutwidth~\cite{BodlaenderKloks96,TSB05a,Hamm19,BJT20}, as indicated in the description of \(\sigma\) above.

In the remainder of this subsection we give an overview of the technical results and formal definitions that justify our approach.
For this fix a node \(t \in V(T)\) of a tree decomposition \((T,\chi)\) of \(G\),
let \(\decT\) be a branch decomposition of \(G\) of minimum \(\mathcal{F}^*\)-branchwidth, \(D = \lcac{\decT}{N^3[\chi(t)]}\), 
for an edge \(e \in E(D)\), \(P_e\) denote the path in \(\decT\) that corresponds to \(e\), and
\(\flat\) be such that for every \(e \in E(D)\),  
(1) the subtree of \(\decT - P_e\) containing \(v \in N^{3\tw(G)}(N^3[\chi(t)])\) is attached to an internal node of \(P_e\) if and only if \(v \in \bigcup_{d \in \flat_e} d\); and (2) for \(v,w \in \bigcup_{d \in \flat_e} d\), \(v\) is contained in a subtree of \(\decT - P_e\) attached at a node \emph{before} (in an arbitrary but fixed traversal of \(P_e\)) the node at which the subtree of \(\decT - P_e\) containing \(w\) is attached if and only if the set in \(\flat_e\) that contains \(v\) is before the set in \(\flat_e\) that contains \(w\) in the sequence \(\flat_e\).

\begin{restatable*}{definition}{BlockDefinition}
	\label{def:block}
	For \(e \in E(D)\) which corresponds to the path \(P_e\) in \(\decT\), each maximal subpath of \(P_e\) which does not have internal nodes at which subtrees of \(\decT - P_e\) containing vertices in \(N^{3\tw(G)}(N^3[\chi(t)])\) are attached is called a \emph{block} of \(e\).
\end{restatable*}

\begin{restatable*}{observation}{NoBlocksObservation}
	\label{obs:No_blocks}
	There are \(|\flat_e| + 1\) blocks of an edge \(e \in E(D)\).
\end{restatable*}
In this way we can immediately relate each block to a pair \((e,i)\) where \(e \in E(D)\) and \(i \in [|\flat_e| + 1]\), where we enumerate the blocks of \(e\) according to the fixed traversal of \(P_e\).
For the remainder of this subsection we fix an edge \(e \in E(D)\) and its corresponding path \(P_e\)~in~\(\decT\).

Next we show that blocks have the desirable property that we can consider induced matchings with uniformly restricted interaction with \(N^3[\chi(t)]\) when computing the \imatchwidth values at edges of a fixed block.
Even stronger this restricted interaction does not depend on the internal structure of a block which is important for the application of typical sequences.

\begin{restatable*}{lemma}{blocksetLemma}
	\label{lem:blockset}
	Let \(P'\) be a block of \(P_e\).
	Then there is some \(\lambda \subseteq N^3[\chi(t)]\) such that the following holds.
	Consider an arbitrary branch decomposition \(\tilde{\decT}\) of \(G\) that arises from \(\decT\) by deleting all subtrees of \(\decT - P_e\) that are attached at internal vertices of \(P'\) and reattaching binary trees with the same cumulative set of leaves at internal nodes of \(P'\) or subdivisions of edges of \(P'\).
	Denote by \(\tilde{P}\) the path in \(\tilde{\decT}\) which corresponds to the path of all (possibly subdivided) edges of \(P'\).
	Then for every edge \(\tilde{e} \in E(\tilde{P})\) there is an induced matching \(\tilde{M}\) in the bipartite subgraph of \(G\) induced by \(\tilde{e}\) with \(2\imatchwidth(\decT,\tilde{e})\) vertices such that the edges in \(E(\tilde{M}[N^3[\chi(t)]])\) are incident to exactly \(\lambda\).
\end{restatable*}

Lemma~\ref{lem:blockset} justifies the following definition.

\begin{restatable*}{definition}{LambdaBlockDefinition}
	\label{def:lambda_block}
	We say that a block \(P'\) of \(P_e\) is a \emph{\(\lambda\)-block} where \(\lambda \subseteq N^3[\chi(t)]\) satisfies the properties described in Lemma~\ref{lem:blockset}.
\end{restatable*}

Note that the entry \(\Lambda\) in our dynamic programming records acts as a guess for which sets \(\lambda_{e,i}\) each block associated to \((e,i)\) is a \(\lambda_{e,i}\)-block for.

Now consider a \(\lambda\)-block \(P_\lambda\) of \(P\).
We show that we can apply typical sequences in the usual way within \(P_\lambda\), i.e.\ assume that whenever some vertices in \(V(G) \setminus V(G_t)\) are attached in \(\decT\) at \(P_\lambda\), that they are attached at certain points of \(P_\lambda\) which can be distinguished even after using typical sequences for the \(\imatchwidth\) to compress \(P_\lambda\).

For a branch decomposition \(\decT'\) of \(G_t\) and an edge \(f \in E(\decT')\) 
we use \(\mimw_\lambda(\decT',f)\) to denote the maximum size of an induced matching \(M\) in the cut of \(\decT'\) at \(f\) in which the set of vertices in \(N^3[\chi(t)]\) adjacent to edges in \(M[N^3[\chi(t)]]\) is equal to \(\lambda\).
In the pathological case that there is no such matching in which the set of vertices in \(N^3[\chi(t)]\) adjacent to edges in \(M[N^3[\chi(t)]]\) is equal to \(\lambda\), we set \(\mimw_\lambda(\decT',f) = \bot\).
Finally for two edges \(p, q \in E(P_\lambda)\), we write \(P_\lambda(p,q)\) to denote the subpath of \(P_\lambda\) between \(p\) and \(q\) and excluding \(p\) and \(q\).
We do not make any distinction between \(P_\lambda(p,q)\) and \(P_\lambda(q,p)\).

\begin{restatable*}{lemma}{typicalLemma}
	\label{lem:typical}
	Let \(\decT' = \lcac{\decT}{N^3[V(G_t)]}\) be the branch decomposition of \(G[N^3[V(G_t)]]\) given by \(\decT\),
	and let \(P'\) be the subpath of \(\decT'\) that corresponds to \(P_\lambda\).
	Assume that there are \(p, q\in E(P')\) such that all of the following hold.
	\begin{enumerate}
		\item \label{typical:cond1}
		\(
		\mimw_\lambda(\decT', p) = \min\{\mimw_\lambda(\decT', r) \mid r \in E(P'(p,q))\}.
		\)
		
		\item \label{typical:cond2}
		\(
		\mimw_\lambda(\decT', q) = \max\{\mimw_\lambda(\decT', r) \mid r \in E(P'(p,q))\}.
		\)
		
		\item \label{typical:cond3}
		For every subtree \(\decT_v\) of \(\decT\) which is attached at an internal node \(v\) of \(P_\lambda\), it holds that \(V(\decT'_v) \cap N^3[\chi(t)] = \emptyset\).
	\end{enumerate}
	Obtain \(\decT^{**}\) from \(\decT\) by removing for each internal node \(v\) of \(P_\lambda(p,q)\) \(V(G) \setminus V(G_t)\) from \(\decT\)
	and attaching \(F_v = \lcac{\decT_v}{V(G) \setminus V(G_t)}\) in the order of the considered vertices \(v\) along \(P_\lambda\) at the iterative subdivision of \(p\).
	See Figure~\ref{fig:typicalOverview} for an illustration.

	Then \(\mimw(\decT^{**}) \leq \mimw(\decT)\), \(\iantimatchwidth(\decT^{**}) \leq \iantimatchwidth(\decT)\), and \(\ichainwidth(\decT^{**}) \leq \ichainwidth(\decT)\).
\end{restatable*}

\begin{figure}
	\begin{minipage}[c]{.5\textwidth}
		\includegraphics[page=3]{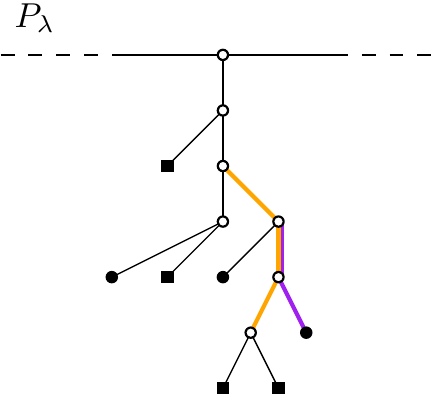}
		\vspace{0.5em}
		
		\includegraphics[page=4]{fig/typ-section-figs}
	\end{minipage}
	\hfill
	\begin{minipage}[c]{.48\textwidth}
		\caption{\label{fig:typicalOverview}
			Illustration of modification of \(\decT\) (top) to \(\decT^{**}\) (bottom) as described in Lemma~\ref{lem:typical}.
			The cross-vertices in the bottom figure are the subdivision vertices used to attach the \(\lcac{\decT_v}{V(G) \setminus V(G_t)}\) to \(p\).
			The cutfunction values at the orange and purple edge in \(\decT^{**}\) are upper-bounded by the cutfunction values at the orange and purple edge in \(\decT^*\) respectively.}
	\end{minipage}
\end{figure}

In the later application of Lemma~\ref{lem:typical}, \(P'(p,q)\) takes the role of a path that is contracted in for obtaining any of the typical sequences in a record, and iteratively applying this lemma ensures the ``safeness'' (w.r.t.\ \(\cF^*\)-branchwidth and the properties of blocks) of using them.

With the machinery now in place, we are able to describe the dynamic programming procedure which we use to prove Theorem~\ref{thm:tw_deg}. (see \emph{Sections~\ref{sec:dp} and~\ref{sec:correct}}).

\subsection{Treedepth}
\label{subsec:td}
In Section~\ref{sec:td} we give a fixed parameter tractable algorithm for $\mathcal{F}^*$-\textsc{Branchwidth} parameterized by the treedepth of the input graph,
when $\mathcal{F}^*$ is the union of a non-empty set of some \primal \phsymb families. We begin by outlining the high-level idea.

If a graph has small treedepth but sufficiently many vertices, 
then one can always find a small vertex set $R$ such that $G-R$ has many components each of them has bounded size.
The main step is to argue that in this case, we can safely remove (``prune'') one of the components without changing its $\mathcal{F}^*$-branchwidth.
Proposition~\ref{prop:manycomponent} captures the idea of this procedure, and we inductively use this result from bottom to top in the treedepth decomposition. 
At the end, we reduce the given graph to a graph whose number of vertices is bounded by a function of the treedepth of the given graph, i.e., a (non-polynomial) kernel. The problem can then be solved on such a kernel using an arbitrary brute-force algorithm.

\begin{restatable*}{proposition}{manycomponentProposition}
	\label{prop:manycomponent}
	There is a function $g\colon \mathbb{N}\times \mathbb{N} \to \mathbb{N}$ satisfying the following. 
	Let $t,p,$ and $m$ be positive integers, and let $\mathcal{F^*}$ be the union of a non-empty set of \texttt{primal ph} classes. 
	Let $G$ be a graph, let $R$ be a vertex set of size at most $t$, let $H$ be a graph on $p$ vertices, and let $\{G_i\mid i\in [m]\}$ be a set of components of $G-R$ such that 
	\begin{itemize}
		\item for each $i\in [m]$, there is a graph isomorphism $\phi_i$ from $H$ to $G_i$, 
		\item for all $i, j\in [m]$ and $h\in V(H)$ and $v\in R$, $\phi_i(h)$ is adjacent to $v$ if and only if $\phi_j(h)$ is adjacent to $v$.
	\end{itemize}
	If $m\ge g(t,p)$ and $\Fsbw(G)\le t$, then $\Fsbw(G)=\Fsbw(G-V(G_m))$.
\end{restatable*}

We explain the idea to prove Proposition~\ref{prop:manycomponent}. Suppose we have $G$, $R$, $H$, and the family $\{G_i\mid i\in [m]\}$ as in the statement.
Let $(\decT, \decf)$ be a branch decomposition of $G-V(G_m)$ of optimal width, and 
we want to find a branch decomposition of $G$ having same width.
Let $V(H)=\{h_1, h_2, \ldots, h_p\}$.
First we take nodes corresponding to $R$ and its least common ancestors. Let $S$ be the union of all least common ancestors in the tree.
We take a large subset $I_1$ of $[m]$ satisfying that 
for all $a\in [p]$ and $i_1, i_2\in I_1$, $\phi_{i_1}(h_a)$ and $\phi_{i_2}(h_a)$ are contained in the same component of $\decT-S$.

As a second step, we show that 
there are a large subset $I_2\subseteq I_1$, a set $F$ of at most $\ell-1$ edges in $\decT$ for some $1\le \ell\le p$, 
a partition $(A_1, \ldots, A_{\ell})$ of $[p]$, and a bijection $\mu$ from $[\ell]$ to the set of connected components of $\decT-F$ such that 
for all $j\in [\ell]$ and $i_1, i_2\in I_2$, the minimal subtree of $\decT$ containing all nodes in $\{\decf^{-1}(\phi_{i_1}(h_b)) \mid  b\in A_j\}$
and the minimal subtree of $\decT$ containing all nodes in $\{\decf^{-1}(\phi_{i_2}(h_b)) \mid  b\in A_j\}$
are vertex-disjoint.
In fact, by the choice of $S$ and $I_2$, there should be a component of $\decT-S-F$ containing all nodes corresponding to $\{h_b\mid b\in A_j\}$.
Then in each component of $\decT-S-F$ corresponding to $A_j$, 
we can find an edge separating  the family of minimal subtrees corresponding to $A_j$ in a balanced way, 
and expand that edge to some subtree corresponding to vertices of $\{\phi_m(h_b) \mid   b\in A_j\}$.
When we expand the edge, we take the restricted tree of $\decT$ with respect to some $\{\decf^{-1}(\phi_\alpha(h_b)) \mid   b\in A_j\}$ which is closest to the edge, and make a copy and identify with the edge in a natural way.
Then we can prove that every new cut has small width as well.

\begin{restatable*}{theorem}{treedepthTheorem}
	\label{thm:treedepth}
	Let $\mathcal{F}^*$ be the union of a non-empty set of \emph{\primal} \texttt{ph} classes. Then $\mathcal{F}^*$-\textsc{Branchwidth} is fixed-parameter tractable when parameterized by treedepth.
\end{restatable*}

\begin{restatable*}{corollary}{treedepthCorollary}
	\label{cor:treedepth}
	\textsc{Mim-Width} is fixed-parameter tractable parameterized by the treedepth of the input graph.
\end{restatable*}

\subsection{Feedback Edge Set}\label{subsec:fes}
In Section~\ref{sec:fes} we give a linear kernel for \Fsbwprob parameterized by the size 
$k$ of a feedback edge set of the input graph,
when $\Fstar$ is the union of some \primal \phsymb families.

If a graph $G$ has many vertices but a small feedback edge set,
then either it has many bridges (in form of `dangling trees') and isolated vertices,
or it has long paths of degree two vertices in $G$,
which we refer to as `unimportant paths'.
This simple observation paves the road to the kernel:
For the former, we can observe that bridges and isolated vertices 
are not important when it comes to computing \Fstar-branchwidth.
For the latter, we show that we can always shrink any unimportant path to a 
constant-length subpath without changing the \Fstar-branchwidth of $G$.
It is not surprising that this gives a safe reduction rule when $\Fstar$\xspace
does not contain induced matchings.
For induced matchings however, proving safeness requires quite an amount of detail.
For this we show Lemma~\ref{lem:fes:unimportant:placement} which states that
if $G$ has a long enough unimportant path $P$, 
then we can modify each branch decomposition of $G$ such that the vertices
of a long enough subpath of $P$ appear `consecutively' in it,
without increasing the \Fstar-branchwidth.
In addition to that, we only have to overcome one minor hurdle 
that concerns induced chains of value $2$.
From there on it is not difficult to argue that 
contracting an edge on $P$ does not change the \Fstar-branchwidth either.

Let us begin. 
We first define unimportant paths,
and what it means for the vertices of a path to appear `consecutively'
in a branch decomposition via the notion of preservation.

\begin{restatable*}{definition}{unimportantPathDefinition}
	\label{def:unimportantPath}
	Let $G$ be a graph. 
	A path $P \subseteq G$ is called \emph{unimportant}
	if all vertices on $P$ have degree~two~in~$G$.
\end{restatable*}

\begin{restatable*}{definition}{preservesDefinition}
	\label{def:preserves}
	Let $G$ be a graph, let $(\decT, \decf)$ a branch decomposition of $G$,
	and let $P = a_1 \ldots a_\ell \subseteq G$ be a path.
	We say that $(\decT, \decf)$ \emph{preserves} $P$
	if there is a caterpillar $\decT''$ in $\decT$ such that the linear order 
	induced by $\decT''$ (up to reversal) is $a_1 \ldots a_\ell$.
\end{restatable*}

We now turn to the main technical lemma needed in the correctness proof of the 
reduction rule alluded to above. 
Note that it only concerns induced matchings.

\begin{restatable*}{lemma}{FesUnimportantPlacementLemma}
	\label{lem:fes:unimportant:placement}
	Let $G$ be a graph with branch decomposition $(\decT, \decf)$.
Suppose that $G$ contains an unimportant path $P$ of length $7$.
Then there is a branch decomposition $(\decT', \decf')$ of $G$ such that
\begin{enumerate}
	\item $\mimw(\decT', \decf') \le \mimw(\decT, \decf)$, 
	\item $(\decT', \decf')$ preserves a subpath $P^\star$ of $P$ on at least five vertices, and
	\item the set of cuts induced by $(\decT, \decf)$ on $G - V(P^\star)$ is equal to the set of cuts induced by $(\decT', \decf')$ on $G - V(P^\star)$.
\end{enumerate}
\end{restatable*}

The construction in the previous lemma does not 
increase the $\Fmatch$-branchwidth of a branch decomposition.
It might however introduce induced chain graphs of value $2$ in some cut, which is a technical complication that can, luckily, be dealt with by considering the case of $\Fstar$-branchwidth $1$ separately. Next, we can develop separate arguments that are designed to handle the case of $\Fchain$-branchwidth when $\Fstar$-branchwidth is at least $2$, and show that these two approaches can be combined together to deal with $\Fstar$-branchwidth for any union $\Fstar$ of \primal \phsymb\ classes. Recall that a graph is \emph{bridgeless} if there is no edge $e \in E(G)$
such that $G - e$ has more connected components than $G$; we obtain:

\begin{restatable*}{lemma}{FesMainLemma}
	\label{lem:fes:main}
	Let $\Fstar$ be a union of \emph{\primal} \emph{\phsymb} classes.
	Let $G$ be a bridgeless graph 
	without isolated vertices 
	with feedback edge set number $k$ and an unimportant path $P$ of length at least $8$. Let $e \in E(P)$. Either $\card{V(G)} \le 8k - 3$, or $\Fsbw(G) = \Fsbw(G/e)$, or both.
\end{restatable*}

As the previous lemma requires that our input graph does not have any bridges,
we need one more simple lemma to show that we can safely remove bridges from the
input graph without changing its \Fstar-branchwidth.

\begin{restatable*}{reduction}{FesBridgeIsolatedReduction}
	\label{reduction:fes:bridge:isolated}
	Let $(G, k)$ be an instance of \Fsbwprob, 
	where $G$ has a feedback edge set of size $k$.
	Then, reduce $(G, k)$ to $(G', k)$ where $G'$ is obtained from $G$ by
	\begin{itemize}
		\item removing all bridges of $G$, and then
		\item removing all isolated vertices of $G$.
	\end{itemize}
\end{restatable*}

\begin{restatable*}{reduction}{FesUnimportantReduction}
	\label{reduction:fes:unimportant}
	Let $(G, k)$ be an instance of \Fsbwprob,
	where $G$ has a feedback edge set of size $k$.
	Let $P \subseteq G$ be any unimportant path in $G$ of length at least $8$ 
	and let $e \in E(P)$.
	Then, reduce $(G, k)$ to $(G/e, k)$.
\end{restatable*}

We now have all the tools necessary to describe the kernelization algorithm:
We first apply Reduction Rule~\ref{reduction:fes:bridge:isolated} to clean up the graph.
Next, while the input graph has more than $18k - 8$ vertices
we find a long enough unimportant path and apply 
Reduction Rule~\ref{reduction:fes:unimportant} to it.
It remains to employ the results of this section to finalize the correctness proof, yielding:

\begin{restatable*}{theorem}{FesMainTheorem}
	\label{thm:fes_main}
		Let $\Fstar$ be the union of some \emph{\primal} \emph{\phsymb} families.
	\Fsbwprob admits a linear kernel when parameterized by the feedback edge set number of the input graph.	
\end{restatable*}

\begin{restatable*}{corollary}{FesMainCorollary}
	\label{cor:fes_main}
	\textsc{Mim-Width} parameterized by the feedback edge set number of the input graph admits a linear kernel.
\end{restatable*}

\section{Size-Identifiable Classes and Comparison to Existing Measures}
\label{sec:comparison}

Here, we focus on the six families introduced at the end of Section~\ref{sec:fbw}. We first show that while there is an infinite number of possible \phsymb\ graph classes, to asymptotically characterize $\Fbw$ for any \phsymb\ family of graphs \(\cF\) we only need to consider these families. Afterwards, we turn our attention on several previously studied graph measures and show how they compare to $\Fbw$. Finally we show that to compute an approximate \(\cF^*\)-branch decomposition, it suffices to focus on only three of these classes, namely \(\imatchFamily\), \(\ichainFamily\), and \(\iantimatchFamily\).

\subsection{Size-Identifiable Classes}
	We say that \(\cF\) is \si\ (\sisymb) if for each \(n\in \Nat\), there is a single \(2n\)-vertex graph in \(\cF\) (up to isomorphism). We will show \(\cF_\emptyset\), \(\imatchFamily\), \(\ichainFamily\), \(\ichainstrictFamily\), \( \iantimatchFamily\), and \(\icompleteFamily\) are the only \sisymb\ \phsymb\ families of graph. To show that these are the only \sisymb\ \phsymb\ families as well as to show importance of these families, we will make use of the following lemma that basically shows that any bipartite graph with ordered partitions contains a large subset of partners that induce a graph isomorphic to a graph in one of the six above families.
\ramseyPartners
\iflong
\begin{proof}
	To prove the lemma, we will construct an auxiliary edge-colored complete graph \(G\) on \(q\) vertices such that there is bijection between the vertices of \(G\) and partners \(a_i, b_i\) in \(H^q\). Moreover, a monochromatic clique in \(G\) will correspond to one of $H^n_\emptyset$, $H^n_{\imatch}$, $H^n_{\ichain}$, $H^n_{\iantimatch}$, or $H^n_{\icomplete}$.
	Let \(G\) be the edge-colored graph such that \(V(G) = (v_1, v_2, \ldots, v_n)\) and the edge \(v_iv_j\), \(1\le i<j\le q\), has color 
	\begin{itemize}
		\item 1 if both edges \(a_ib_j, a_jb_i\) are in \(E\), 	
		\item 2 if the edge \(a_ib_j\) is in \(E\) and the edge \(a_jb_i\) is not in \(E\),
		\item 3 if the edge \(a_jb_i\) is in \(E\) and the edge \(a_ib_j\) is not in \(E\),
		\item 4 if both edges \(a_ib_j, a_jb_i\) are not in \(E\),
	\end{itemize}

	Since $q \ge R(2n,2n,2n,2n)$, by Ramsey's Theorem $G$ contains a monochromatic clique on \(2n\) vertices. Note that either at least \(n\) vertices of the clique correspond to partners that are adjacent to each other, or at least \(n\) vertices of the clique correspond to partners that are not adjacent to each other. Let \(C\) be a monochromatic clique on \(n\) vertices such that either $a_ib_i\in E$ for each $v_i\in C$ (i.e., all partners in $C$ are matched), or $a_ib_i\not \in E$ for each $v_i\in C$ (i.e., all partners in $C$ are not matched).
	Let us now consider each of the following possibilities:
	\begin{description}
		\item[Color 1 and partners are adjacent.] It is easy to verify that the graph induced on \(\{a_i,b_i\mid i\in L\}\) is isomorphic with $H^n_{\icomplete}$.
		\item[Color 1 and partners are not adjacent.] Then the graph induced on \(\{a_i,b_i\mid i\in L\}\) contains an edge \(a_i,b_j\) if and only if $i\neq j$ and it is isomorphic with $H^n_{\iantimatch}$.	
		
		\item[Color 2 and partners are adjacent.] Then the graph induced on \(\{a_i,b_i\mid i\in L\}\) contains an edge \(a_i,b_j\) if and only if $i\le j$ and it is isomorphic with $H^n_{\ichain}$.
		\item[Color 2 and partners are not adjacent.] Then the graph induced on \(\{a_i,b_i\mid i\in L\}\) contains an edge \(a_i,b_j\) if and only if $i < j$ and it is isomorphic with $H^n_{\ichainstrict}$.
		
		\item[Color 3 and partners are adjacent.] Then the graph induced on \(\{a_i,b_i\mid i\in L\}\) contains an edge \(a_i,b_j\) if and only if $i\ge j$ and it is isomorphic with $H^n_{\ichain}$. Note that here we have to change the order of the pairs to obtain \(H^n_{\ichain}\).
		\item[Color 3 and partners are not adjacent.] Then the graph induced on \(\{a_i,b_i\mid i\in L\}\) contains an edge \(a_i,b_j\) if and only if $i > j$ and it is isomorphic with $H^n_{\ichainstrict}$.
		
		\item[Color 4 and partners are adjacent.] Then the graph induced on \(\{a_i,b_i\mid i\in L\}\) contains an edge \(a_i,b_j\) if and only if $i=j$ and it is isomorphic with $H^n_{\imatch}$.
		\item[Color 4 and partners are not adjacent.] Then it is easy to verify that the graph induced on \(\{a_i,b_i\mid i\in L\}\) is isomorphic with $H^n_{\emptyset}$. \qedhere
	\end{description} 
\end{proof}
\fi
	
\onlysix
\iflong
	\begin{proof}
		It is rather straightforward to verify that each of the families \iemptyFamily, \imatchFamily, \ichainFamily, \ichainstrictFamily, \iantimatchFamily, \icompleteFamily\ is \phsymb\ and \sisymb. 
		Let \(\cF\) be an \sisymb\ and \phsymb\ graph family and let $H^n$ denote the unique \(2n\)-vertex graph in \(\cF\). First note that if \(H^n\) is in one of the six families described above (i.e., for example \(H^n = H^n_{\imatch}\)) then for all $m<n$ the graph \(H^m\) is in the same family of the graphs. Indeed for each of the six families the bipartite subgraph of \(H^n\) induced on the first \(m\) partners is in the same family. Since \(\cF\) is \phsymb, this subgraph is in \(\cF\) and since \(\cF\) is also \sisymb, this subgraph is isomorphic to \(H^m\). To finish the proof, it suffices to show that \(H^n\) is always one of the the following six graphs: $H^n_\emptyset$, $H^n_{\imatch}$, $H^n_{\ichain}$, $H^n_{\ichainstrict}$, $H^n_{\iantimatch}$, or $H^n_{\icomplete}$. Indeed, let $q = R(2n,2n,2n,2n)$ and let us consider the graph \(H^q\). By Lemma~\ref{lem:ramseyPartners} there exists \(L\subseteq [q]\) such that \(|L| = n\) and the subgraph of \(H^q\) induced on \(\{a_i,b_i\mid i\in L\}\) is isomorphic to one of $H^n_\emptyset$, $H^n_{\imatch}$, $H^n_{\ichain}$, $H^n_{\ichainstrict}$, $H^n_{\iantimatch}$, or $H^n_{\icomplete}$. Since \(\cF\) is \phsymb, \(\cF\) contains a graph isomorphic to one of $H^n_\emptyset$, $H^n_{\imatch}$, $H^n_{\ichain}$, $H^n_{\ichainstrict}$, $H^n_{\iantimatch}$, or $H^n_{\icomplete}$ and since \(\cF\) is \sisymb, this graph is indeed isomorphic to \(H^n\).
	\end{proof}
	\fi
	
We now apply Ramsey's Theorem to show that these six \texttt{\emph{si}} classes capture the whole hierarchy of width parameters induced by $\mathcal{F}$-branchwidth.

\sigood
\iflong
\begin{proof}
	Since \(\cF'\subseteq \cF\), it follows from Observation~\ref{obs:closedSubFamilies} that for every graph \(G\) it holds that \(\cF'\operatorname{-bw}(G)\le \operatorname{\cF-bw}(G)\). It hence suffices to show that $\operatorname{\cF-bw}(G)$ is upper-bounded by a function of $\cF'\operatorname{-bw}(G)$.
	
	Let \(q\) be the size of the largest graph \(H\) such that \(H\in \cF\setminus \cF'\) and \(H\) is isomorphic to one of $H^q_\emptyset$, $H^q_{\imatch}$, $H^q_{\ichain}$, $H^q_{\ichainstrict}$, $H^q_{\iantimatch}$, or $H^q_{\icomplete}$; if no such graph $H$ exists, we set $q$ to $0$. Note that, in any case, \(\cF\) does not fully contain the \sisymb\ class containing \(H\) and \(q\) is a constant. 
	Let us consider a branch decomposition \(\decT\) of \(G\) such that \(\cF'\mbox{-}\bw(\decT)=k\) and let \(n = 1+\max(q,k)\). We will show that  \(\cF\mbox{-}\bw(\decT)< R(2n,2n, 2n, 2n)\), which will complete the proof. 
	
	For the sake of contradiction, let us assume that this is not the case and that there is an edge \(e\in\decT\) such that \(\cF\mbox{-}\bw(\decT,e)\ge R(2n,2n, 2n, 2n)\). Let \(X,Y\) be the partition of \(V(G)\) induced by \(e\). Then \(G[X,Y]\) has an induced subgraph on \(2m\) vertices, for some \(m \ge R(2n,2n, 2n, 2n)\), that is isomorphic to some graph \(H^m\) in \(\cF\). However, \(m\ge R(2n,2n, 2n, 2n)\) and by Lemma~\ref{lem:ramseyPartners}, there is a subset \(L\subseteq [m] \) such that \(|L|=n\) and the graph \(H^n\) induced on \(\{a_i,b_i\mid i\in L\}\) is isomorphic to one of $H^n_\emptyset$, $H^n_{\imatch}$, $H^n_{\ichain}$, $H^n_{\ichainstrict}$, $H^n_{\iantimatch}$, or $H^n_{\icomplete}$. Since \(n>q\), \(H^n\) is isomorphic to a graph in an \sisymb\ family fully contained in \(\cF\). But then \(\cF'\mbox{-}\bw(\decT,e)\ge n > k\), contradicting \(\cF'\mbox{-}\bw(\decT)=k\).
\end{proof}
\fi

\subsection{Comparison to Existing Measures} 
 Next, we discuss how specific \texttt{\emph{si ph}} classes capture previously studied parameters.
 First, mim-width is precisely \imatchFamily-branchwidth. 
 \iflong Indeed, mim-width is defined as branchwidth where the cut-function $f(X)$ is defined as the size of the largest induced matching in \(G[X,V(G)\setminus X]\), which is precisely the largest $n$ such that $H^n_{\imatch}$ is an induced subgraph of \(G[X,V(G)\setminus X]\). 
 \fi
 
\mimWidth

	The next decomposition parameter that we consider is treewidth. 
	\ifshort To show the equivalence, we actually show that \emph{maximum-matching width}, which is always linearly upper- and lower-bounded by treewidth~\cite{vatshelle2012new,JeongST18}, is also \(\Fbw\) for the \phsymb\ family \(\cF\) of all \texttt{\emph{ph}} graphs that contain a matching on the partners.
	\fi
	
	\twbound
	\iflong
	\begin{proof}
		It is known that \(\operatorname{mmw}(G)\le \tw(G) + 1\le 3\cdot \operatorname{mmw}(G)\), where \(\operatorname{mmw}(G)\) is the maximum matching-width of \(G\)~\cite{vatshelle2012new,JeongST18}. The maximum matching-width is the branchwidth where the cut-function \(\decf(X)\) is the size of maximum matching in \(G[X, V(G)\setminus X]\). Let \(\cF\) be the family of graphs that for each \(n\) contains all bipartite graphs \(H = ((a_1,\ldots, a_n),(b_1,\ldots, b_n), E)\) such that \(a_ib_i\in E\) for all \(i\in [n]\). It is straightforward to verify that for each graph \(H\) in \(\cF\) on \(n\) vertices and each subset \(L\) of \([n]\), the graph induced on \(\{a_i,b_i | i\in L\}\) is in \(\cF\). Hence \(\cF\) is a \phsymb\ family. Moreover, the size of a maximum matching in \(G[X, V(G)\setminus X]\) is precisely the largest \(n\) such that a \(2n\)-vertex graph in \(\cF\) occurs as an induced subgraph of \(G[X,V(G)\setminus X]\). Indeed, if \(G[X, V(G)\setminus X]\) contains a matching \(M = \{a_1b_1, a_2b_2, \ldots, a_nb_n\}\), where for each \(i\in [n]\) the vertex \(a_i\) is in \(X\) and the vertex \(b_i\) is in \(V(G)\setminus X\), then \(G[\{a_1,\ldots,a_n\}, \{b_1,\ldots,b_n\}]\) is in \(\cF\). On the other hand, the size of a maximum matching of a \(2n\)-vertex graph \(H\) in \(\cF\) with bipartition \((A=(a_1,\ldots, a_n),B= (b_1,\ldots, b_n))\) is \(n\).
	\end{proof}
	
		Note that the \phsymb\ family \(\cF\) in the above lemma contains \(\imatchFamily\), \(\ichainFamily\), \(\icompleteFamily\) and a family isomorphic to \((\iantimatchFamily\setminus \{H^1_{\iantimatch}\}) \cup \{H^1_{\imatch}\} \) as subfamilies, because anti-matchings of size at least 2 contain a perfect matching. Hence in combination with Observations~\ref{obs:closedSubFamilies}~and~\ref{obs:isomorphicFamilies} we get the following corollary whose importance will be apparent later and which gives a better bound for one inequality than a direct application of Lemma~\ref{lem:sigood}. 
	\fi

\ifshort
By Lemma~\ref{lem:sigood}, treewidth is then asymptotically equivalent to $\mathcal{F}^*$-branchwidth for $\mathcal{F}^*$ being the union of some \texttt{\emph{si ph}} classes, notably \imatchFamily, \ichainFamily, \iantimatchFamily, and \(\icompleteFamily\).
By applying the easy observation that \(\cF'\subseteq \cF\) implies \(\cF'\mbox{-}\bw(G)\le \FbwSh(G)\) for any \phsymb\ families \(\cF\) and \(\cF'\) and any graph \(G\),
we can also establish an explicit upper bound when considering \imatchFamily, \ichainFamily and \iantimatchFamily.
The reason we are interested in specifically this set of families will become clear when we define \emph{\primal} families.
\fi

	\FisTWBounded
	\iflong
	\begin{proof}
	Let \(\cF\) be the family defined in the proof of Lemma~\ref{lem:twbound}. If $\cF^*$ does not contain \iantimatchFamily, then 
	 \(\cF^*\) is isomorphic to a subfamily of $\cF$. 
	 By Observations~\ref{obs:closedSubFamilies}~and~\ref{obs:isomorphicFamilies}, we have  
	 \(\cF^*\mbox{-}\bw(G)\le \cF\mbox{-}\bw(G)\le\tw(G)+1.\)

	 Now we assume that $\cF^*$ contains \iantimatchFamily. Let $\cF^{**}$ be the family of all graphs in $\cF^*$ except the graph $H^1_{\iantimatch}$. Then $\cF$ contains a family isomorphic to $\cF^{**}$. We observe that 
	 \(\cF^*\mbox{-}\bw(G)\le \max (1, \cF^{**}\mbox{-}\bw(G)).\)
	 Therefore, we have  \(\cF^*\mbox{-}\bw(G) \le\tw(G)+1.\)
	\end{proof}
	\fi
	
	Throughout Section~\ref{sec:tw} and in Proposition~\ref{prop:manycomponent} in Section~\ref{sec:td}, the fact that $\Fsbw(G)$ is bounded in the respective parameters is essential, and Corollary~\ref{cor:FisTWBounded} gives this bound.
	
	The next parameter that we will consider is clique-width. Similarly as for treewidth, it will be easier to show equivalence with another parameter, \emph{module-width}, which is known to be asymptotically equivalent to cliquewidth~\cite{Rao08,BuiXuanSTV13}. 
	\ifshort
	An important tool to establish this equivalence is Corollary 2.4 in \cite{DingOOV96}, which intuitively states that any sufficiently large twin-free bipartition of vertices in a graph must contain a large chain, anti-matching or matching.
	\fi
	
	\iflong
		Let \(G\) be a graph and \(X\subseteq V(G) \) be a subset of its vertices. The twin class partition of \(X\)
		is a partition of \(X\) such that for all \(x,y\in X\), $x$ and $y$ are in the same partition class if and only if \(N(x)\setminus X = N(y)\setminus X \). We let \(\operatorname{ntc}(X)\) be the number of classes in the twin class partition of \(X\). 
	
	Note that the function \(\operatorname{ntc}: 2^{V(G)}\rightarrow \Nat\) is not symmetric. 
	However, to deal with such functions it is natural to consider a \emph{rooted} adaptation of branchwidth. Let a rooted branch decomposition \(\decT\) to be a branch decomposition that is \emph{rooted} at a node \(r\in \decT\) of degree two.
	For an edge \(e\) in \(\decT\), let the width of \(e\) be \(f(X)\) such that the connected components of $\decT\setminus e$ induce a bipartition $(X,Y)$ of the set of leaves of $\decT$ and where \(X\) is in the component of $\decT\setminus e$ not containing \(r\). The width of a rooted decomposition and rooted branchwidth is then defined in the same way as for branchwidth. The graph parameter module-width is then the rooted branchwidth with the cut-function \(\operatorname{ntc}\). For a graph \(G\) we denote by \(\operatorname{modw}(G)\) the module-width of \(G\) and for a rooted branch decomposition \(\decT\) and edge \(e\) of \(\decT\) we denote by \(\operatorname{modw}(\decT)\) and \(\operatorname{modw}(\decT,e)\) the module-width of the decomposition \(\decT\) and of the edge \(e\), respectively.
	
	\begin{fact}[\cite{Rao08}]
		Let \(G\) be a graph, then \(\operatorname{modw}(G)\le \operatorname{cw}(G)\le 2 \operatorname{modw}(G)\). 
	\end{fact}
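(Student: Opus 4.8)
The plan is to prove the two inequalities separately, in each case by directly translating one kind of decomposition into the other.

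\textbf{From clique-width to module-width ($\operatorname{modw}(G)\le\operatorname{cw}(G)$).} First I would take an optimal $k$-expression for $G$, so $k=\operatorname{cw}(G)$, and consider its parse tree $\calP$, in which disjoint-union nodes are binary and relabel/add-edge nodes are unary; each node $p$ carries a labelled graph on a vertex set $V_p$ with labelling $\ell_p\colon V_p\to[k]$. The key invariant I would establish, by following the unique path from $p$ to the root of $\calP$, is that any two $u,u'\in V_p$ with $\ell_p(u)=\ell_p(u')$ satisfy $N_G(u)\setminus V_p=N_G(u')\setminus V_p$: relabel operations act identically on all vertices sharing a label, so $u$ and $u'$ have equal labels at every ancestor of $p$, and therefore every add-edge operation applied above $p$ either adds both of the edges from $u$ and from $u'$ to a given external vertex or adds neither. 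Consequently $\operatorname{ntc}(V_p)\le|\ell_p(V_p)|\le k$. I would then contract all unary nodes of $\calP$ to obtain a rooted binary tree $\decT$ whose leaves are the vertices of $G$ and whose root (the topmost disjoint-union node) has degree two; since contracting unary nodes does not change which vertices lie below a node, for every edge $e$ of $\decT$ the side $X$ away from the root equals $V_p$ for the corresponding disjoint-union node $p$ (or a singleton at a leaf), so $\operatorname{ntc}(X)\le k$ at every edge, and $\decT$ witnesses $\operatorname{modw}(G)\le k$. The degenerate case where the expression contains no disjoint union (so $G$ is a single vertex) is handled directly.

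\textbf{From module-width to clique-width ($\operatorname{cw}(G)\le 2\operatorname{modw}(G)$).} Fix a rooted branch decomposition $\decT$ of module-width $w$ and build a clique-width expression bottom-up, maintaining the invariant that after a node $t$ with ``below'' vertex set $X$ has been processed, the expression constructs $G[X]$ with exactly one label per twin class of $X$ with respect to $V(G)\setminus X$, hence with at most $\operatorname{ntc}(X)\le w$ labels. At a leaf this is trivial. At an internal node $t$ with child sets $X_1,X_2$ and $X=X_1\sqcup X_2$, I would shift the labels of the second child expression into a fresh block $\{w+1,\dots,2w\}$, take the disjoint union of the two child expressions, and then repair the missing edges between $X_1$ and $X_2$: if $u,u'\in X_1$ lie in the same twin class of $X_1$ with respect to $V(G)\setminus X_1\supseteq X_2$, then $N(u)\cap X_2=N(u')\cap X_2$, so between any label used on $X_1$ and any label used on $X_2$ either all or none of the possible edges are present, and the missing ones can be added with a constant number of add-edge operations. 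This produces $G[X]$ on at most $2w$ labels; moreover, since $N(u)\setminus X=(N(u)\setminus X_i)\setminus X_{3-i}$ is determined by the child-label of $u$, the current labelling refines the twin-class partition of $X$, so a final round of relabel operations consolidates it to one label per twin class of $X$, restoring the invariant. At the root, whose two children carry $X$ and $Y=V(G)\setminus X$, the same disjoint-union-plus-add-edges step (now $V(G)\setminus X=Y$ exactly, so the all-or-nothing property applies directly) assembles $G$ itself using at most $2w$ labels, giving $\operatorname{cw}(G)\le 2w=2\operatorname{modw}(G)$.

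The routine parts are verifying that contracting unary nodes yields a legitimate rooted branch decomposition (root of degree two, leaves in bijection with $V(G)$, internal nodes of degree three) and the bookkeeping of label blocks. The one place needing genuine care, and the main obstacle, is the structural invariant driving the first direction --- that equal labels force equal external neighbourhoods --- together with its mirror image, the all-or-nothing edge property driving the second direction; both rest on the single fact that relabel and add-edge operations act uniformly on an entire label class, and getting the quantification over ancestors right is where the argument must be made precise.
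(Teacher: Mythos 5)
The paper does not prove this statement at all --- it is imported verbatim as a known Fact from Rao's work \cite{Rao08} --- so there is no in-paper argument to compare against. Your proof is correct and is essentially the standard argument from that literature: the label partition at each parse-tree node refines the twin-class partition of the vertices below it (giving $\operatorname{modw}(G)\le\operatorname{cw}(G)$), and conversely the all-or-nothing edge pattern between twin classes of the two sides of a cut lets one assemble a clique-width expression with at most $2w$ labels. The only cosmetic quibble is that the edge-repair step at an internal node may need up to $w^2$ (not ``a constant number of'') add-edge operations, which is immaterial since clique-width counts labels rather than operations.
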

		
	To show that we can capture module-width as \(\Fbw\), we will make use of the following lemma. 
\begin{fact}[Corollary 2.4 in \cite{DingOOV96}]\label{prop:ramseyCliquewidth}
	For every positive integer \(n\), there is an integer \(d\) with the following property: Suppose \(G\)  is  a bipartite graph with a bipartition \(U\) and \(V\) such that \(|U|= m\ge d\) and \(N(u_1)\neq N(u_2) \) for \(u_1\neq u_2\in U\). Then \(U\) and \(V\) have \(n\)-element subsets \(U'\) and \(V'\), respectively, such that the subgraph \(G'\) induced by \(U'\cup V'\)
	is isomorphic to \(H^n_{\imatch}\), \(H^n_{\iantimatch}\), or \(H^n_{\ichain}\).
\end{fact}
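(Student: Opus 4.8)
The plan is to use the hypothesis that $\{N(u) : u\in U\}$ consists of $m$ pairwise distinct sets to extract, for $N$ as large as we like (once $m=|U|$ is large enough), a sub-collection of $U$ together with an equally large subset of $V$ on which the induced bipartite graph has enough structure to contain one of $H^n_{\imatch}$, $H^n_{\iantimatch}$, $H^n_{\ichain}$. First I would split according to the inclusion order on $\{N(u):u\in U\}$ --- a genuine partial order, since the neighbourhoods are distinct --- by $3$-colouring the pairs of an arbitrarily ordered $U$ according to ``$N(u_i)\subsetneq N(u_j)$'', ``$N(u_j)\subsetneq N(u_i)$'', ``incomparable'', and applying Ramsey's theorem. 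This yields either a \emph{chain} $N(u_{i_1})\supsetneq\dots\supsetneq N(u_{i_N})$ or an \emph{antichain} (pairwise incomparable neighbourhoods) of length $N$. In the chain case we are immediately done: picking $v_s\in N(u_{i_s})\setminus N(u_{i_{s+1}})$ for $s\in[N-1]$, nestedness forces $u_{i_r}v_s\in E(G)$ exactly when $r\le s$, so $\{u_{i_1},\dots,u_{i_{N-1}}\}\cup\{v_1,\dots,v_{N-1}\}$ induces a copy of $H^{N-1}_{\ichain}$.

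The antichain case is the heart of the argument. Given $x_1,\dots,x_t$ with pairwise incomparable neighbourhoods, I would greedily build a sequence $x_{(1)},x_{(2)},\dots$ in $U$ together with witnesses $v_1,v_2,\dots$ in $V$ and types $\sigma_1,\sigma_2,\dots\in\{M,A\}$, maintaining a shrinking pool $P$ (initially $\{x_1,\dots,x_t\}$). With current pool $P$: a \emph{matching move} picks $x\in P$ and $v\in N(x)$ such that at least half of $P\setminus\{x\}$ avoids $N(v)$, records $M$, and replaces $P$ by $(P\setminus\{x\})\setminus N(v)$; an \emph{anti-matching move} picks $x\in P$ and $v\notin N(x)$ such that at least half of $P\setminus\{x\}$ lies in $N(v)$, records $A$, and replaces $P$ by $(P\setminus\{x\})\cap N(v)$. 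The key point is that whenever $|P|\ge 2$ at least one of the two moves is available: otherwise every vertex of $V$ would be adjacent either to all of $P$ or to none of it, forcing all elements of $P$ to have identical neighbourhoods and contradicting distinctness. Hence the process runs for $\Omega(\log t)$ steps, and by construction $x_{(i)}v_j\in E(G)$ is determined whenever $j\le i$ --- it holds iff $\sigma_j=A$ when $j<i$, and iff $\sigma_i=M$ when $j=i$ --- while the entries with $j>i$ stay uncontrolled. Applying Ramsey once to the index set coloured by $\sigma$, and once more to the uncontrolled upper-triangular adjacencies, yields a sub-collection whose induced bipartite graph is one of $H^{n'}_{\imatch}$, $H^{n'}_{\iantimatch}$, $H^{n'}_{\ichain}$, $H^{n'}_{\ichainstrict}$; since $H^{n'}_{\ichainstrict}$ contains $H^{n'-1}_{\ichain}$ as an induced subgraph (restrict to $u_1,\dots,u_{n'-1}$ and $v_2,\dots,v_{n'}$), this gives what we want provided $n'$, hence $t$, hence $m$, was taken large enough; tracking the Ramsey numbers through the split, the greedy, and the two cleanups produces an (admittedly tower-type) value of $d$, which suffices since the statement only claims existence of $d$.

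I expect the main obstacle to be exactly the antichain case, and within it the fact that the naive idea --- directly selecting private neighbours --- fails because the vertices of $V$ distinguishing pairs of $U$-vertices lie in the unbounded set $V$, so there is no averaging to exploit. The resolution is the ``one of the two moves is always available'' lemma above, which is where the distinctness hypothesis gets converted into a usable combinatorial dichotomy, together with the realisation that the greedy only controls a triangular part of the resulting pattern and therefore needs two further rounds of Ramsey to be cleaned up. By comparison, the chain case, the bookkeeping identifying these patterns with $H^n_{\imatch}$, $H^n_{\iantimatch}$, $H^n_{\ichain}$, and the observation that $H^{n'}_{\ichainstrict}$ contains $H^{n'-1}_{\ichain}$ are all routine.
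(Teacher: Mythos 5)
The paper does not actually prove this statement: it is imported verbatim as Corollary~2.4 of Ding, Oporowski, Oxley, and Vertigan~\cite{DingOOV96}, so there is no ``paper's own proof'' to compare against. I therefore assess your sketch on its own merits, and it is essentially correct as a self-contained derivation.

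The chain case is fine: distinctness of the $N(u_i)$ gives a genuine $3$-colouring, and in a $\supsetneq$-chain the chosen $v_s\in N(u_{i_s})\setminus N(u_{i_{s+1}})$ are automatically distinct and realise exactly $\{a_rb_s:r\le s\}$, i.e.\ $H^{N-1}_{\ichain}$. The antichain case also goes through, and I agree that the ``one of the two moves is always available'' lemma is the crux; your dichotomy is verified by the counting argument (taking $v$ with $0<|N(v)\cap P|<|P|$, a suitable $x_1\in N(v)\cap P$, $x_2\in P\setminus N(v)$, and summing the two failure inequalities gives $|P|<|P|-1$). The remaining bookkeeping --- distinctness of the $v_j$ follows because reusing a witness would annihilate the pool, the $\sigma$-cleanup is really a pigeonhole on singletons, the triangular cleanup is a genuine $2$-colour Ramsey, and $H^{n'}_{\ichainstrict}$ yields an induced $H^{n'-1}_{\ichain}$ on $\{a_1,\dots,a_{n'-1}\}\cup\{b_2,\dots,b_{n'}\}$ --- all checks out.

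Two remarks. First, the initial chain/antichain split is unnecessary work: the availability lemma uses only neighbourhood distinctness, never incomparability, so the greedy can be launched directly on $U$ and the four outcomes $H_{\imatch},H_{\iantimatch},H_{\ichain},H_{\ichainstrict}$ already arise from the final cleanup; you would then not need the chain case at all. Second, the argument in \cite{DingOOV96} is routed through a matrix-theoretic intermediate (their Theorem~2.3 concerning $0\text{--}1$ matrices with distinct rows) rather than an explicit halving greedy, so your route is genuinely different in its key lemma, though both are Ramsey-theoretic at heart; what your version buys is a very transparent and self-contained proof that isolates the role of the distinctness hypothesis in a single pigeonhole-style dichotomy, at the (acknowledged) cost of tower-type bounds on $d$.
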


\fi

	\compModuleWidth

\iflong
\begin{proof}
	First we will show that \(\FbwSh(G)\le \operatorname{modw}(G) \). 
	 Let \(\decT\) be rooted branch decomposition of \(G\) and let \(e\) be and edge in \(\decT\) that induces bipartition $(X,Y)$, where \(X\) is in the component of \(\decT\setminus e\) not containing the root. Observe that all graphs in \(\cF\) are twin-free, i.e., for every graph \(H\in\cF\) and every pair of vertices \(x,y\in V(H)\) such that \(x\neq y\) we have \(N_H(x)\neq N_H(y)\). Hence, if \(G[X,V(G)\setminus X]\) contains a \(2n\)-vertex induced subgraph isomorphic to a graph in \(\cF\), then each of the \(n\) vertices in this subgraph are in different twin classes of \(X\). It follows that \(ntc(X)\) is at least the largest \(n\) such that a \(2n\)-vertex graph in \(\cF\) is isomorphic to an induced subgraph of \(G[X,V(G)\setminus X]\). Hence, \(\cF\mbox{-}\bw(\decT) \le \operatorname{modw}(\decT)\). 
	 
	 On the other hand, let \(\decT\) be a branch decomposition of \(G\) and let \(\decT'\) be a rooted branch decomposition we get from \(\decT\) by subdividing an arbitrary edge of \(\decT\) and rooting \(\decT'\) in the new vertex \(r\). Trivially, \(\cF\mbox{-}\bw(\decT') = \cF\mbox{-}\bw(\decT)\). 
	 We show that \( \operatorname{modw}(\decT')\le f(\cF\mbox{-}\bw(\decT'))\) for some computable function \(f\). Let \(e\) be an edge in \(\decT\) that induces the bipartition \((X,Y)\), where \(X\) is in the component of \(\decT\setminus e\) not containing the root. Let \(n\) be the largest integer such that \(G[X,V(G)\setminus X]\) contains a \(2n\)-vertex induced subgraph isomorphic to a graph in \(\cF\) and let \(d\) be the integer given by Proposition~\ref{prop:ramseyCliquewidth} for \(n+1\). If \(\operatorname{ntc}(X) \ge d\), then \(X\) contains a subset \(U\) of at least \(d\) vertices, each in a differenct twin partition class, such that any two distinct vertices in \(U\) have different neighborhoods in \(V(G)\setminus X\). But then by Proposition~\ref{prop:ramseyCliquewidth} \(G[X,V(G)\setminus X]\) contains as an induced subgraph a graph isomorphic to a \((2n+2)\)-vertex graph in \(\cF\). Hence, \( \operatorname{modw}(\decT')\le f(\cF\mbox{-}\bw(\decT'))\), where \(f(n) = d\), such that \(d\) is the integer guaranteed by Proposition~\ref{prop:ramseyCliquewidth} for \(n+1\). 
\end{proof}
\fi

Finally, we compare \(\Fbw\) to the so-called \(H\)-join decompositions introduced by Bui-Xuan, Telle, and Vatshelle~\cite{BuiXuanTV10}.
\iflong
\begin{definition}
	Let \(H = (V_1, V_2, E)\) be a bipartite graph. Let \(G\) be a graph and let \(S\subseteq V(G)\). We say that \(G\) is an \(H\)-join across the ordered cut \((S,V(G)\setminus S)\) if there exists a partition \(\cP = (P_1, \ldots, P_r) \) of \(S\) and a partition \(\mathcal{Q} = (Q_1, \ldots, Q_s)\), and injective functions \(f_1: \cP\rightarrow V_1 \) and \(f_2: \cP\rightarrow V_2 \) such that for any \(x\in P_i\in \cP\) and \(y\in Q_j\in \mathcal{Q}\) we have \(x\) adjacent to \(y\) in \(G\) if and only if \(f_1(P_i)\) is adjacent to \(f(Q_j)\) in \(H\). We say that \(G\) is an \(H\)-join across the non-ordered cut \(\{S, V(G)\setminus S\}\) if \(G\) is an \(H\)-join across either \((S,V(G)\setminus S)\) or \((V(G)\setminus S,S)\).
\end{definition}
Let \(\decT\) be a branch decomposition of a graph \(G\). We say that \(\decT\) is \(H\)-join decomposition of \(G\) if for every edge \(e\) of \(\decT\) that induces the bipartition \((X,Y)\) of \(V(G)\) the graph \(G\) is \(H\)-join across the cut \(\{X,Y\}\). A graph that admits an \(H\)-join decomposition is called an \(H\)-join decomposable graph. \fi
We note that unlike other parameters, $H$-join decompositions do not have a specific ``width''; each graph either admits a decomposition or not, and the complexity measure is captured by the order of the graph $H$. Keeping this in mind, we can show that $\mathcal{F}$-branch captures \(H\)-joins as well.

	\compHJoin

\iflong
\begin{proof}
	Let \(\decT\) be an \(H\)-join decomposition of \(G\).
	We claim that \(\cF\mbox{-}\bw(G)\le|V(H)|\). For the sake of contradiction let us assume otherwise and let \(e\) be an edge in \(\decT\) such that \(\cF\mbox{-}\bw(\decT,e)\ge |V(H)|+1\) and let \(X,Y\) be the partition of \(V(G)\) induced by \(e\). Then the graph
	\(G[X,Y]\) contains a \((2|V(H)|+2)\)-vertex induced subgraph \(F\) isomorphic to a graph in \(\cF\). Because for every pair of vertices \(u,v\) in a graph in \(\cF\) it holds that \(N(u)\neq N(v)\), it follows that no two vertices of \(F\) can map to the same vertex of \(H\) by the functions \(f_1\) and \(f_2\) from the definition of \(H\)-join. But \(|V(F)|>|V(H)|\), which is impossible.
\end{proof}

We remark that if \(H\) is twin-free (i.e., there are no two vertices \(u,v\in V(H)\) such that \(N(u)=N(v)\)) and there is an edge of an \(H\)-join decomposition of a graph \(G\) that induces partition \((X,Y)\) of \(V(G)\) and that the functions \(f_1\) and \(f_2\) in the definition of \(H\)-join are surjective, then we can find a subgraph of \(G[X,Y]\) isomorphic to \(H\) by picking a single vertex from each class in the partitions of \(X\) and \(Y\). Proposition~\ref{prop:ramseyCliquewidth} then implies that \(\cF\mbox{-}\bw(G)\ge f(|V(H)|)\) for some computable function \(f\). However, in general the class of \(H\)-decomposable graphs contains also very simple graph. So unless we say for example that \(H\) is a graph of minimum size such that \(G\) is \(H\)-join decomposable, we cannot give lower bound \(\cF\mbox{-}\bw(G)\) as a function of \(|V(H)|\) only upper bound. 

\fi

\subsection{Primal Families and Computing $\mathcal{F}$-Branchwidth}	
	Let us call the \phsymb\ families \(\imatchFamily\), \(\ichainFamily\), and \(\iantimatchFamily\) \primal. As the last result of this section, we can build on Lemma~\ref{lem:sigood} and show that it suffices to focus our efforts to compute good $\mathcal{F}$-branch decompositions merely on unions of \primal classes. \iflong To this end, let us first prove the following simple claim.\fi\ifshort To this end, one can prove the following simple claim.\fi
		
	\antichain
	\iflong
	\begin{proof}
		Let \(X, Y\) be a partition of \(V(G)\) and let \(n\in\Nat\) be the largest integer such that \(G[X,Y]\) contains an induced subgraph isomorphic to \(H^n_{\ichain}\). If \(H^n_{\ichain} = ((a_1, \ldots, a_n), (b_1, \ldots, b_n), E_{\ichain})\), then it is easy to see that \(((a_2, \ldots, a_n), (b_1, \ldots, b_{n-1}), E_{\ichain})\) is isomorphic to \(H^{n-1}_{\ichainstrict}\). On the other hand if \(m\in\Nat\) is the largest integer such that \(G[X,Y]\) contains an induced subgraph isomorphic to \(H^m_{\ichainstrict} = ((a_1, \ldots, a_m), (b_1, \ldots, b_m), E_{\ichainstrict})\), then \(((a_1, \ldots, a_{m-1}), (b_2, \ldots, b_{m}), E_{\ichainstrict})\) is isomorphic to \(H^{m-1}_{\ichain}\).
	\end{proof}
	\fi
	
	\ifshort
	Our high-level strategy for dealing with \(\icompleteFamily\) and \(\iemptyFamily\) is to start by considering an arbitrary optimal branch decomposition $\decT$ of a graph $G$ for the class $\cF^*$ obtained after removing these two families from $\cF$. Intuitively, we show that if there were a large difference between the $\cF$-branchwidth of $\decT$ (which is equal to the \(\cF^*\)-branchwidth of \(G\)) and the $\cF$-branchwidth of $G$, then $G$ or the complement of \(G\) would have to contain a sufficiently large complete bipartite subgraph; but then \emph{any} decomposition of the graph would still necessarily have a cut inducing a fairly large complete bipartite graph or edgeless graph. This would contradict our assumption about the $\cF$-branchwidth of $G$ being substantially smaller than the $\cF$-branchwidth of $\decT$. Using this, we obtain:
	\fi
	
	\iflong
	To prove that \(\iemptyFamily\) and \(\icompleteFamily\) are redundant we will use the following well-known fact that for any subset of leaves of branch decomposition \(\decT\) there is an edge of \(\decT\) that separates the subset in balanced way. Let \(\decT = (V,E)\) be a a subcubic tree and consider a nonnegative weight function \(\mathbf{w}: V\rightarrow \mathbb{R}_{\ge 0}\). Let \(e\) be an edge in \(\decT\) and \(\decT_1, \decT_2\) be the two connected components of \(\decT - e\). We say that \(e\) is \(\alpha\)-balanced, for \(0< \alpha\le \frac{1}{2}\), if \(\alpha\cdot\mathbf{w}(\decT) \le \mathbf{w}(\decT_1) \le (1-\alpha)\cdot\mathbf{w}(\decT)\), where \(\mathbf{w}(\decT) = \sum_{v\in V(\decT)}\mathbf{w}(v)\) (note that \(\mathbf{w}(\decT) = \mathbf{w}(\decT_1)+ \mathbf{w}(\decT_2)\) so also  \(\alpha\cdot\mathbf{w}(\decT) \le \mathbf{w}(\decT_2) \le (1-\alpha)\cdot\mathbf{w}(\decT)\)). The following lemma is folklore and we include the proof only for completeness.
	
	\begin{lemma}
	\label{lem:balancedtrees}
		Let \(\decT = (V,E)\) be a a subcubic tree and consider a nonnegative weight function \(\mathbf{w}: V\rightarrow \mathbb{R}_{\ge 0}\) on vertices of \(\decT\). Then there exists a \(\frac{1}{3}\)-balanced edge in \(\decT\).
	\end{lemma}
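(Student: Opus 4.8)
The plan is to run the classical ``orient each edge towards its heavier side and chase a sink'' argument. Write $W = \mathbf{w}(\decT)$, and observe first that, since the two sides of any edge $e$ partition the total weight, $e$ is $\frac13$-balanced exactly when neither side of $\decT - e$ has weight more than $\frac23 W$ (the lower bound $\frac13 W$ on the opposite side is then automatic, as the two weights sum to $W$). If $W = 0$ every edge is trivially $\frac13$-balanced, and if $\decT$ has at most one vertex there is nothing to show; so assume $W > 0$ and $|E(\decT)| \ge 1$.

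First I would orient $\decT$: orient each edge $e = uv$ from $u$ to $v$ whenever the component of $\decT - e$ containing $v$ is strictly heavier than the one containing $u$. If some edge has both sides of weight exactly $\frac12 W$, then that edge is already $\frac12$-balanced, hence $\frac13$-balanced, and we are done; so assume every edge is oriented towards its (unique) strictly heavier side. Starting at an arbitrary vertex and repeatedly following an outgoing edge produces a walk that never revisits a vertex (a tree contains no closed walk), so it is a simple path, and since $\decT$ is finite it must end at a vertex $s$ with no outgoing edge, i.e.\ a vertex all of whose incident edges are oriented towards $s$.

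Now let $v_1, \dots, v_d$ with $d = \deg_{\decT}(s) \le 3$ be the neighbours of $s$, and for each $i$ let $B_i$ denote the weight of the component of $\decT - sv_i$ containing $v_i$. Since $sv_i$ points towards $s$, the $s$-side is strictly heavier, so $W - B_i > B_i$, hence $B_i < \frac12 W \le \frac23 W$ for every $i$; moreover $\mathbf{w}(s) + \sum_{i=1}^{d} B_i = W$. Choosing $i^\star$ with $B_{i^\star}$ maximum, the at most three branches carry all of $W$ apart from $\mathbf{w}(s)$, so $B_{i^\star} \ge \frac13\bigl(W - \mathbf{w}(s)\bigr)$; when $\mathbf{w}(s) = 0$ this gives $\frac13 W \le B_{i^\star} < \frac23 W$, and the edge $sv_{i^\star}$ is $\frac13$-balanced. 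It remains to see $\mathbf{w}(s) = 0$: a leaf $\ell$ of $\decT$ can be a sink only if the side $\{\ell\}$ outweighs its complement, i.e.\ $\mathbf{w}(\ell) > \frac12 W$, which does not happen in the intended use of the lemma, where $\mathbf{w}$ is placed on the leaves of $\decT$ with no leaf carrying more than half the total weight (indeed each leaf weight lies in $\{0,1\}$ while $W$ is large); hence $s$ is an internal node and $\mathbf{w}(s)=0$. I expect this last point—controlling the weight sitting exactly at the sink—to be the only step that genuinely requires attention; everything else is the routine centroid/orientation computation.
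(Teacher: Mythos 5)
Your proof is the same argument as the paper's: orient every edge towards its heavier side, take a sink of the resulting orientation, and among the at most three branches hanging off the sink pick the heaviest one. The point you single out --- the weight sitting at the sink itself --- is indeed the only delicate step, but it is a defect of the lemma's statement rather than of your argument: for arbitrary vertex weights the claim is simply false (put all the weight on a single vertex and no edge is $\frac13$-balanced), and the paper's proof passes over exactly this step with ``it is straightforward to verify'' that an edge incident to the sink works. In the only places the lemma is applied (Corollaries~\ref{cor:emptyLarge} and~\ref{cor:completeLarge}) the weights are $0/1$ on the leaves of a branch decomposition with total weight $6n\ge 6$, so a leaf can never be a sink, the sink is an internal node of weight $0$, and your computation closes the gap exactly as you describe.
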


\begin{proof}
	Let us orient the edges of \(\decT\) as follows. Let \(e\) be an edge in \(\decT\) and \(\decT_1, \decT_2\) be the two connected components of \(\decT - e\). We orient the edge from \(\decT_1\) to \(\decT_2\) if \(\mathbf{w}(\decT_1) < \frac{1}{2}\cdot \mathbf{w}(\decT) \) and from \(\decT_2\) to \(\decT_1\) otherwise. The directed graph obtained in this way is acyclic and it has a sink, i.e., vertex of out-degree zero. It is straightforward to verify that there is an edge incident to the sink such that \(\frac{1}{3}\cdot\mathbf{w}(\decT) \le \mathbf{w}(\decT_1) \le \frac{1}{2}\cdot\mathbf{w}(\decT)\).
\end{proof}

	Using the above lemma we get the following corollaries. 
	
	\begin{corollary}\label{cor:emptyLarge}
		Let \(n\in\Nat\) and let \(G\) be a graph such that there exists a partition \((A,B)\) of \(V(G)\) with \(H^{3n}_\emptyset\) isomorphic to an induced subgraph \(H\) of \(G[A,B]\). Then each branch decomposition \(\decT\) of \(G\) has an edge \(e\) that induces a partition \((X,Y)\) of \(G\) such that \(H^{n}_\emptyset\) is isomorphic to an induced subgraph of \(G[X,Y]\).
	\end{corollary}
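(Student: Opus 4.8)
The plan is to combine the balanced-edge Lemma~\ref{lem:balancedtrees} with a short counting argument. By hypothesis there are sets $A' \subseteq A$ and $B' \subseteq B$ with $|A'| = |B'| = 3n$ and no edge of $G$ joining a vertex of $A'$ to a vertex of $B'$ (this is exactly what it means for $H^{3n}_\emptyset$ to embed as an induced subgraph of $G[A,B]$ respecting the bipartition). Given a branch decomposition $\decT$ of $G$, define a weight function on $V(\decT)$ that assigns weight $1$ to every leaf whose image under $\decf$ lies in $A' \cup B'$ and weight $0$ to all other nodes; the total weight is then $6n$. Since $\decT$ is subcubic, Lemma~\ref{lem:balancedtrees} supplies a $\frac{1}{3}$-balanced edge $e$; let $(X,Y)$ be the partition of $V(G)$ it induces, so both $|(A'\cup B') \cap X|$ and $|(A'\cup B') \cap Y|$ lie in the interval $[2n,4n]$.

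Next I would set $a_X = |A' \cap X|$, $b_X = |B' \cap X|$, and define $a_Y, b_Y$ symmetrically, so that $a_X + a_Y = b_X + b_Y = 3n$ while $a_X + b_X \ge 2n$ and $a_Y + b_Y \ge 2n$. The key observation is that these four relations force either $\min(a_X, b_Y) \ge n$ or $\min(b_X, a_Y) \ge n$: if both minima were below $n$, then in each of the four possible cases two of the quantities $a_X, b_X, a_Y, b_Y$ would sum to strictly less than $2n$, contradicting one of the two equalities $a_X + a_Y = 3n$, $b_X + b_Y = 3n$ or one of the two lower bounds $a_X + b_X \ge 2n$, $a_Y + b_Y \ge 2n$. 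In the first case I pick $X' \subseteq A' \cap X$ and $Y' \subseteq B' \cap Y$ with $|X'| = |Y'| = n$; in the second case I take $X' \subseteq B' \cap X$ and $Y' \subseteq A' \cap Y$. Either way $X' \subseteq X$, $Y' \subseteq Y$, $|X'| = |Y'| = n$, and $X'$ and $Y'$ lie in opposite parts of $(A', B')$, so no edge of $G$ joins them; hence the subgraph of $G[X,Y]$ induced on $X' \cup Y'$ is edgeless with both sides of size $n$, i.e.\ isomorphic to $H^n_\emptyset$, and $e$ is the desired edge.

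There is no genuine obstacle here beyond bookkeeping; the one point that needs care is the factor $3$ in $H^{3n}_\emptyset$. A $\frac{1}{3}$-balanced split of $6n$ marked leaves leaves at least $2n$ marked leaves on each side of $e$, and it is precisely this surplus of $n$ over $n$ that makes the four-case argument go through — a weaker balancing guarantee, or a blow-up factor smaller than $3$, would not be enough to guarantee a ``diagonal'' pair $(A' \cap X,\, B' \cap Y)$ or $(B' \cap X,\, A' \cap Y)$ with both sides of size at least $n$.
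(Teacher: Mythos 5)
Your proof is correct and follows essentially the same route as the paper's: assign weight $1$ to the leaves carrying the $6n$ embedded vertices, invoke the $\frac{1}{3}$-balanced edge of Lemma~\ref{lem:balancedtrees}, and then count to extract $n$ vertices of one side of the embedded edgeless graph in $X$ and $n$ of the other side in $Y$. The only difference is cosmetic: the paper resolves the counting with two ``without loss of generality'' choices, while you spell it out as a symmetric four-case argument on $a_X,b_X,a_Y,b_Y$.
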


\begin{proof}
	Let \(H= (A_H, B_H, E_H)\) such that \(A_H = A\cap V(H) \) and \(B_H = B\cap V(H)\). Let \(\mathbf{w}: V(\decT)\rightarrow \mathbb{R}_{\ge 0}\) such that \(\mathbf{w}(v)=1\) if \(v=\mathcal{L}(u)\) for some vertex \(u\in V(H)\) and \(\mathbf{w}(v)=0\), otherwise. Let \(e\) be a \(\frac{1}{3}\)-balanced edge of \(\decT\) and let \((X,Y)\) be the partition of \(G\) induced by \(e\) such that \(\mathbf{w}(X)\le \mathbf{w}(Y)\). Then \(\frac{|V(H)|}{3}\le |V(H)\cap X|\le \frac{|V(H)|}{2}\). Furthermore, let us assume, without loss of generality, that \(|A_H\cap X|\ge |B_H\cap X|\). 
	Clearly, \(|A_H\cap X|\ge n\) and \(|B_H\cap X|\le 2n\), hence \(|B_H\cap Y|\ge n\). It is easy to verify that \( (A_H\cap X)\cup (B_H\cap Y) \) induces a graph isomorphic to \(H^{n}_\emptyset\). 
\end{proof}
	Following an analogous argument for \(\icompleteFamily\) we get. 
\begin{corollary}\label{cor:completeLarge}
	Let \(n\in\Nat\) and let \(G\) be a graph such that there exists a partition \((A,B)\) of \(V(G)\) with \(H^{3n}_{\icomplete}\) isomorphic to an induced subgraph \(H\) of \(G[A,B]\). Then each branch decomposition \(\decT\) of \(G\) has an edge \(e\) that induces a partition \((X,Y)\) of \(G\) such that \(H^{n}_{\icomplete}\) is isomorphic to an induced subgraph of \(G[X,Y]\).
\end{corollary}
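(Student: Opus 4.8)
The plan is to follow the proof of Corollary~\ref{cor:emptyLarge} line for line, changing only the very last step to exploit that \(H^{3n}_{\icomplete}\) is \emph{complete} bipartite rather than edgeless. Write \(H=(A_H,B_H,E_H)\) with \(A_H = A\cap V(H)\) and \(B_H = B\cap V(H)\), so that \(|A_H| = |B_H| = 3n\) and \(|V(H)| = 6n\). Fix an arbitrary branch decomposition \(\decT\) of \(G\) and define the weight function \(\mathbf{w}\colon V(\decT)\to\mathbb{R}_{\ge 0}\) by letting \(\mathbf{w}(v) = 1\) precisely when \(v\) is a leaf of \(\decT\) mapped to a vertex of \(V(H)\), and \(\mathbf{w}(v) = 0\) otherwise; thus \(\mathbf{w}(\decT) = 6n\).

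Next I would apply Lemma~\ref{lem:balancedtrees} to obtain a \(\tfrac13\)-balanced edge \(e\) of \(\decT\); let \((X,Y)\) be the partition of \(V(G)\) it induces, labelled so that the total weight of the leaves on the \(X\)-side is at most that on the \(Y\)-side (note this total weight is exactly \(|V(H)\cap X|\), respectively \(|V(H)\cap Y|\)). Since both components of \(\decT-e\) carry weight at least \(\tfrac13\mathbf{w}(\decT)\) and \(X\) is the lighter side, we get \(2n \le |V(H)\cap X| \le 3n\). Assume without loss of generality that \(|A_H\cap X| \ge |B_H\cap X|\). From \(|A_H\cap X| + |B_H\cap X| = |V(H)\cap X| \ge 2n\) it follows that \(|A_H\cap X| \ge n\); and from \(2\,|B_H\cap X| \le |V(H)\cap X| \le 3n\) we get \(|B_H\cap X| \le 2n\), hence \(|B_H\cap Y| = 3n - |B_H\cap X| \ge n\).

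Finally, since \(H\) is an \emph{induced} subgraph of \(G[A,B]\) and \(H\cong K_{3n,3n}\), every vertex of \(A_H\) is adjacent in \(G\) to every vertex of \(B_H\). Let \(S = (A_H\cap X)\cup(B_H\cap Y)\). Every pair with one endpoint in \(A_H\cap X\subseteq X\) and the other in \(B_H\cap Y\subseteq Y\) is an edge of \(G\) crossing the cut, hence an edge of \(G[X,Y]\); and every pair inside \(A_H\cap X\), respectively inside \(B_H\cap Y\), is a non-edge of \(G[X,Y]\) because both endpoints lie on the same side of the cut. Thus the subgraph of \(G[X,Y]\) induced by \(S\) is complete bipartite with both sides of size at least \(n\), and keeping exactly \(n\) vertices on each side yields an induced subgraph of \(G[X,Y]\) isomorphic to \(K_{n,n} = H^n_{\icomplete}\), as required.

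I do not anticipate a genuine obstacle: the skeleton of the argument is identical to that of Corollary~\ref{cor:emptyLarge}, and the only point that needs a second look is the final paragraph, namely that distributing the selected copies of \(H\)-vertices so that the two sides of \(H\) land on \emph{opposite} sides of \((X,Y)\) is precisely what makes the \(K_{n,n}\) survive in \(G[X,Y]\) --- in contrast to the edgeless case, where the same split is used only to equalise the two sides. Everything else is routine cardinality bookkeeping.
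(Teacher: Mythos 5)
Your proof is correct and matches the paper's approach: the paper derives Corollary~\ref{cor:completeLarge} precisely by the "analogous argument" to Corollary~\ref{cor:emptyLarge} (balanced edge via Lemma~\ref{lem:balancedtrees}, the same counting, and the observation that placing the two sides of \(H\) on opposite sides of the cut preserves the complete bipartite structure in \(G[X,Y]\)), which is exactly what you carried out.
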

	
Given the above two corollaries it is straightforward to show the following lemma.
\fi
	
	\primalLemma

\iflong
	\begin{proof}
		Let \(\decT\) be an optimal \(\cF^*\)-branch decomposition and let \(e\) be an edge in \(\decT\) that induces partition \((X,Y)\) of \(G\). If \(\cF^*\)-\(\bw(\decT, e)< \FbwSh(\decT, e)=n\), then either 
		\begin{itemize}
		\item \(G[X,Y]\) contains an induced subgraph isomorphic to \(H^{n}_\emptyset\) and \(\iemptyFamily\subseteq  \cF\setminus \cF^* \), or
		\item it contains an induced subgraph isomorphic to \(H^{n}_{\icomplete}\) and \(\icompleteFamily\subseteq  \cF\setminus \cF^* \). 
		\end{itemize} In both cases, we have  \(\FbwSh(G)\ge \frac{n}{3}\) by Corollaries~\ref{cor:emptyLarge} and~\ref{cor:completeLarge}.  
	\end{proof}
	\fi
	
	We use $\cF^*$ to refer to the union of any combination of \texttt{\emph{primal}} families, and $\Fsbw(G)$ to refer to the $\cF^*$-branchwidth of $G$. If follows from Lemmas~\ref{lem:antichain} and \ref{lem:primal} that computing \(\cF^*\)-branchwidth is sufficient to obtain approximately optimal decompositions for all possible $\cF$-branchwidths.
We conclude this section by showing that for \textsc{$\cF^*$-Branchwidth}, i.e., the problem of computing an optimal $\cF^*$-branch decomposition for a fixed choice of $\cF^*$, it is sufficient to deal with connected graphs only.
		
		\componentLemma

\iflong
\begin{proof}
	We will prove the lemma by induction on the number of connected components. Clearly, if \(G\) is connected, then the statement hold.
	Now, if \(G\) be a disjoint union of graphs \(G_1\) and \(G_2\). Given a \(\cF^*\)-branch decomposition \(\decT\) of \(G\) of width \(k\), it is easy to see that the restriction of \(\decT\) to \(G_1\) (resp. \(G_2\)), that is the branch decomposition obtained from \(\decT\) by deleting the leaves that do not correspond to vertices of \(G_1\) (resp. \(G_2\)), is a  decomposition \(\decT\) of \(G_1\) (resp. \(G_2\)) of width \(k\). Note that this shows that the \(\cF^*\)-branchwidth of \(G\) is at least maximum of \(\cF^*\)-branchwidth of \(G_1\) and \(\cF^*\)-branchwidth of \(G_2\) which is in turn, by induction hypothesis, at least the maximum \(\cF^*\)-branchwidth over all connected components of \(G\).
	
	On other hand, let \(\decT_1\) and \(\decT_2\) be \(\cF^*\)-branch decompositions of \(G_1\) and \(G_2\) respectively. 
	We construct a \(\cF^*\)-branch decomposition of \(G\) of width \(\max(\cF^*\mbox{-}\bw(\decT_1), \cF^*\mbox{-}\bw(\decT_2))\). This shows that \(\cF^*\)-branchwidth of \(G\) is at most maximum of \(\cF^*\)-branchwidth of \(G_1\) and \(\cF^*\)-branchwidth of \(G_2\) which is by induction hypothesis the maximum \(\cF^*\)-branchwidth over all connected components of \(G\), finishing the proof. 
	
	Let \(e_1\) be an arbitrary edge of \(\decT_1\) and \(e_2\) an arbitrary edge of \(\decT_2\). Let us subdivide \(e_1\) and \(e_2\) and let the resulting vertices be \(w_1\) and \(w_2\), respectively. Note that subdivision of an edge in a branch decomposition does not change the width of the decomposition. Now let \(\decT\) be the \(\cF^*\)-branchwidth decomposition obtained by taking the disjoint union of  \(\decT_1\) and  \(\decT_2\) and adding the edge \(w_1w_2\). Let us show that the width of every edge is bounded by \(\max(\cF^*\mbox{-}\bw(\decT_1), \cF^*\mbox{-}\bw(\decT_2))\). First, consider the edge \(w_1w_2\). The partition of \(V(G)\) induced by this edge is \((V(G_1), V(G_2))\). There is no edge between \(V(G_1)\) and \(V(G_2)\). But \(\mathcal{F}^*\) is a union of a non-empty set of \primal\ families of graphs and every vertex in a bipartite graph in \(\mathcal{F}^*\) has degree at least one. So no induced subgraph of \(G[(V(G_1), V(G_2))]\) is isomorphic to a graph in \(\cF^*\) and the width of the edge \(w_1w_2\) is \(0\).
	Now let \(e\) be an edge in \( \decT_1\) and let \((X,Y)\) be the partition of \(V(G)\) induced by \(e\). It is easy to see that, depending of which connected component of \(\decT_1- e\) contains \(w_1\), we have either \(V(G_2)\subseteq X\) or \(V(G_2)\subseteq Y\). Let \(H\) be an induced subgraph of \(G[X,Y]\) isomorphic to a graph in \(\cF^*\). It follows that every vertex in \(V(H)\cap X\) has a neighbor in \(Y\) and  every vertex in \(V(H)\cap Y\) has a neighbor in \(X\). Since either \(V(G_2)\subseteq X\) or \(V(G_2)\subseteq Y\) and there is no edge between \(G_1\) and \(G_2\), we have \(V(H)\subseteq V(G_1)\) and \(\cF^*\mbox{-}\bw(\decT,e)=\cF^*\mbox{-}\bw(\decT_1,e)\). An analogous argument for an edge in \(\decT_2\) finishes the proof.
\end{proof}
\fi

	\section{Treewidth and Maximum Vertex Degree}
	\label{sec:tw}
Our aim in this section is to prove the following theorem.
	\iflong
	To streamline the presentation of this result, in this section we suppress the function $\decf$ of a branch decomposition \(\decT\) by assuming that the leaves of branch decompositions are actually the vertices of $G$.
	\fi
	
	\twDegTheorem

As an immediate corollary, by considering \(\mathcal{F}^* = \mathcal{F}_{\imatch}\) we obtain:
	\twDegMimWidthCorollary
	
On a high level, our proof of Theorem~\ref{thm:tw_deg} relies on the usual dynamic programming approach
	which computes a set of records for each node \(t \in V(T)\) of a minimum-width nice tree decomposition \((T, \chi)\) of \(G\) in a leaves-to-root manner.
	Here each record captures a set of partial \(\mathcal{F}^*\)-branch decompositions not exceeding \(\mathcal{F}^*\)-branchwidth at most \(\tw(G)\).
	It is worth noting that the most involved and difficult parts of our algorithm are necessary to handle the case that \(\mathcal{F}_{\imatch} \subseteq \mathcal{F}^*\).
	\iflong
	We first describe the records and give some informal intuition on their role in the dynamic program.
	\fi

	For \(t \in V(T)\), denote by \(T_t\) the subtree of \(T\) rooted at \(t\), and by \(G_t = G[\bigcup_{s \in V(T_t)} \chi(s)]\) the subgraph of \(G\) induced by the vertices in the bags of \(T_t\).
	Moreover, for \(v \in V(G)\) we denote its \emph{closed distance-\(i\) neighbourhood} in \(G\) by \(N^i[v]\), its \emph{open distance-\(i\) neighbourhood in \(G\) by \(N^i(v)\)}, and also extend these notations to vertex sets \(V' \subseteq V(G)\) by denoting \(N^i[V'] = \bigcup_{v \in V'} N^i[v]\) and \(N^i(V') = N^i[V'] \setminus V'\).
	
	We define a \emph{record} \((D, \flat, \Lambda, \sigma, \alpha^{\imatch}, \alpha^{\iantimatch}, \alpha^{\ichain})\) of \(t \in V(T)\) to consist of the following.
	\begin{itemize}[leftmargin=*]
		\item A binary tree \(D\) on at most \(2^{|N^3[\chi(t)]|}\) vertices, all leaves of which are identified with distinct vertices in \(N^3[\chi(t)]\).
		Intuitively, \(D\) will describe the restricted tree with respect to \(N^3[\chi(t)]\) of all branch decompositions of \(G_t\) that are captured by the record\ifshort .\fi
		\iflong
		 \((D, \flat, \Lambda, \sigma, \alpha^{\imatch}, \alpha^{\iantimatch}, \alpha^{\ichain})\).\fi
		\item A set \(\flat = \{\flat_e\}_{e \in E(D)}\) indexed by the edges of \(D\), where each \(\flat_e\) is a sequence of subsets of \(N^{3\tw(G)}(N^3[\chi(t)])\), such that \(\bigcup_{e \in E(D)} \flat_e\) is a partition of \(N^{3\tw(G)}(N^3[\chi(t)])\).
		\(\flat_e\) describes the order in which subtrees containing vertices in \(N^{3\tw(G)}(N^3[\chi(t)])\) are attached to the path which corresponds to \(e\) in any branch decomposition captured by \((D, \flat, \Lambda, \sigma, \alpha^{\imatch}, \alpha^{\iantimatch}, \alpha^{\ichain})\).
		The entry \(\flat\) can be interpreted to express a ``refinement'' of each edge of \(D\).
		Similarly to edges of \(D\) corresponding to paths in a branch decomposition captured by the record 
		we will define refinements of these paths in such branch decompositions into \emph{blocks} which correspond to pairs \((e,i)\) where \(e \in E(D)\) and~\(i \in [|\flat_e|+1]\).
		\item \(\Lambda = (\Lambda_e)_{e \in E(D)}\) indexed by the edges of \(D\), where each \(\Lambda_e\) is a sequence \(\Lambda_e = (\lambda_1, \dotsc, \lambda_{|\flat_e| + 1})\) of subsets of \(N^3(\chi(t))\).
		\(\lambda_i \in \Lambda_e\) should be such that the \imatchwidth value at every edge in the block corresponding to \((e,i)\) in any branch decomposition captured by \((D, \flat, \Lambda, \sigma, \alpha^{\imatch}, \alpha^{\iantimatch}, \alpha^{\ichain})\) can be achieved by a matching such that its intersection with \(G[N^3[\chi(t)]]\) is incident to precisely the vertices in \(\lambda_i\).
		In other words, \(\lambda_i \in \Lambda_e\) describes the interaction we assume any matching to have with \(G[N^3[\chi(t)]]\) when computing the \imatchwidth value at any edge in the block corresponding to \((e,i)\).
		\iflong
		We will show that such a set \(\lambda_i\) always exists in Corollary~\ref{cor:blockset}.
		\fi
		\ifshort
		We can argue (Lemma~\ref{lem:blockset}) that such a set \(\lambda_i\) exists.
		\fi

		\item A collection \(\sigma = (\sigma_{e,i})_{e \in E(D), i \in [|\flat_e| + 1]}\) of typical sequences \(\sigma_{e,i}\) with entries in \(\{0\} \cup [\tw(G) + 1]\).
		\(\sigma_{e,i}\) should describe a further refinement of each path corresponding to \((e,i)\) in a branch decomposition captured by \((D, \flat, \Lambda, \sigma, \alpha^{\imatch}, \alpha^{\iantimatch}, \alpha^{\ichain})\) according to the \imatchwidth values that can be achieved by the restriction of the branch decomposition to \(G_t\) under the condition imposed by the choice of \(\lambda_i\).

		The importance of these typical sequences is that we can argue that it is safe to assume that a partial branch decomposition at node \(t\) can be extended to a minimum-\(\Fsbw\) decomposition of \(G\) if and only if this is possible by attaching vertices only at positions in the typical sequences (Lemma~\ref{lem:typical}).

		\item \(\alpha^{\imatch}, \alpha^{\iantimatch}, \alpha^{\ichain} \in [\tw(G) + 1]\) should describe the \(\imatchwidth\), \(\iantimatchwidth\) and \(\ichainwidth\) of any branch decomposition of \(G_t\) captured by \iflong the record\fi \((D, \flat, \Lambda, \sigma, \alpha^{\imatch}, \alpha^{\iantimatch}, \alpha^{\ichain})\).
		For \(\alpha^{\imatch}\) there is a slight caveat -- to compute this entry in the record we will rely on the correctness of \(\Lambda\) which we cannot ensure at the time of computation, but rather retroactively after the root of \(T\) is processed.

		This is done by constructing a \emph{canonical decomposition} concurrently with each record, which is representative of all branch decompositions captured by \((D, \flat, \Lambda, \sigma, \alpha^{\imatch}, \alpha^{\iantimatch}, \alpha^{\ichain})\), and then after processing the root of \(T\) verifying that its \imatchwidth is at most \(\alpha^{\imatch}\).
	\end{itemize}

	Note that the number of records is easily seen to be \FPT\ parameterized by \(\tw(G)\) and \(\Delta(G)\) using 
	\iflong 
	Fact~\ref{fact:typ-bounds}.
	\fi
	\ifshort 
	the fact that the number of considered typical sequences is at most \(\frac{8}{3}2^{2(\tw(G) + 1)}\)~\cite{BodlaenderKloks96}.
	\fi
	We remark that the information stored in \(D\) is sufficient to keep track of the \iantimatchwidth and the \ichainwidth of the constructed partial decomposition (this means to update \(\alpha^{\iantimatch}\) and \(\alpha^{\ichain}\)) appropriately.
	For \imatchwidth the ``forgotten'' vertices, i.e.\ vertices in \(V(G_t) \setminus \chi(t)\) have a more far-reaching impact.
	\iflong
	More explicitly, all graphs in \(\mathcal{F}_{\iantimatch}\) and \(\mathcal{F}_{\ichain}\) have diameter at most \(3\).
	Hence any such graph induced in a cut of a partial branch decomposition which contains a vertex in \(\chi(t)\), can only contain vertices in \(N^3[\chi(t)]\).
	\fi
	\ifshort
	More explicitly, any graphs in \(\mathcal{F}_{\iantimatch}\) and \(\mathcal{F}_{\ichain}\) induced in a cut of a partial branch decomposition which contains a vertex in \(\chi(t)\), can only contain vertices in \(N^3[\chi(t)]\).
	\fi
	Obviously, this is not true for graphs in \(\mathcal{F}_{\imatch}\) which can have infinite diameter.
	
	Hence, for \imatchwidth instead of keeping track of vertices which can occur in graphs in \(\FFF^*\) together with vertices in \(\chi(t)\),
	we store the rough location in the constructed partial branch decomposition of a sufficiently large neighbourhood \(\tilde{N}\) of \(\chi(t)\) in \(G\) to separate the interaction (in terms of independence) of edges of \(G_t[V(G_t) \setminus \tilde{N}]\) and edges of \(G[V(G) \setminus (V(G_t) \cup \tilde{N})]\) that occur in independent induced matchings at cuts in the constructed partial branch decomposition.
	After this separation we still need to be able to actually derive the \imatchwidth.
	To do this we use typical sequences in a similar way as they are also used to compute e.g.\ treewidth, pathwidth and cutwidth~\cite{BodlaenderKloks96,TSB05a,Hamm19,BJT20}, as indicated in the description of \(\sigma\) above.
	
\iflong 
	The use of blocks is crucial for the applicability of typical sequences, as will become clear when we establish the correctness of using typical sequences in Lemma~\ref{lem:typical}. 
	Hence, we first describe blocks,
	then show how to correctly refine blocks using typical sequences in Section~\ref{sec:typical},
	and finally assimilate these components to show how to compute the records of our dynamic program at each type of node in the tree decomposition in Section~\ref{sec:dp} and finally prove Theorem~\ref{thm:tw_deg} in Section~\ref{sec:correct}.
\fi	

\iflong
\subsection{Blocks}
	\label{sec:blocks}
\fi

	\iflong
	Fix for this entire subsection some node \(t \in V(T)\) of our tree decomposition \((T,\chi)\) of \(G\),
	\fi
	\ifshort
	In the remainder of this subsection we give an overview of the technical results and formal definitions that justify our approach.
	For this fix a node \(t \in V(T)\) of a tree decomposition \((T,\chi)\) of \(G\),
	\fi
	let \(\decT\) be a branch decomposition of \(G\) of minimum \(\mathcal{F}^*\)-branchwidth, \(D = \lcac{\decT}{N^3[\chi(t)]}\), 
	for an edge \(e \in E(D)\), \(P_e\) denote the path in \(\decT\) that corresponds to \(e\), and
	\(\flat\) be such that for every \(e \in E(D)\), 
	\ifshort 
	 (1) the subtree of \(\decT - P_e\) containing \(v \in N^{3\tw(G)}(N^3[\chi(t)])\) is attached to an internal node of \(P_e\) if and only if \(v \in \bigcup_{d \in \flat_e} d\); and (2) for \(v,w \in \bigcup_{d \in \flat_e} d\), \(v\) is contained in a subtree of \(\decT - P_e\) \iflong which is \fi attached at a node \emph{before} (in an arbitrary but fixed traversal of \(P_e\)) the node at which the subtree of \(\decT - P_e\) containing \(w\) is attached if and only if the set in \(\flat_e\) that contains \(v\) is before the set in \(\flat_e\) that contains \(w\) in the sequence \(\flat_e\).
	\fi\iflong  
	\begin{itemize}
		\item the subtree of \(\decT - P_e\) containing \(v \in N^{3\tw(G)}(N^3[\chi(t)])\) is attached to an internal node of \(P_e\) if and only if \(v \in \bigcup_{d \in \flat_e} d\); and
		\item for \(v,w \in \bigcup_{d \in \flat_e} d\), \(v\) is contained in a subtree of \(\decT - P_e\) which is attached at a node \emph{before}\footnote{in an arbitrary but fixed traversal of \(P_e\)} the node at which the subtree of \(\decT - P_e\) containing \(w\) is attached if and only if the set in \(\flat_e\) that contains \(v\) is before the set in \(\flat_e\) that contains \(w\) in the sequence \(\flat_e\);
	\end{itemize}
	\fi 
	
	\BlockDefinition
\iflong 	
By the definition of \(\flat_e\), we immediately can observe the following.
\fi
	\NoBlocksObservation
	In this way we can immediately relate each block to a pair \((e,i)\) where \(e \in E(D)\) and \(i \in [|\flat_e| + 1]\), where we enumerate the blocks of \(e\) according to the fixed traversal of \(P_e\).
	For the remainder of this subsection we fix an edge \(e \in E(D)\) and its corresponding path \(P_e\)~in~\(\decT\).
	
	Next we show that blocks have the desirable property that we can consider induced matchings with uniformly restricted interaction with \(N^3[\chi(t)]\) when computing the \imatchwidth values at edges of a fixed block.
	\ifshort
	Even stronger this restricted interaction does not depend on the internal structure of a block which is important for the application of typical sequences.
	\fi
	
	\iflong 
	As an intermediate step we show that vertices that are contained in subtrees of \(\decT - P_e\) that are attached at an internal node of some block of \(P_e\) are not connected to \(N^3[\chi(t)]\) in any bipartite subgraph of \(G\) considered for \(\mathcal{F}^*\)-branchwidth values.
	\begin{proposition}
		\label{prop:intsep}
		Let \(P'\) be a block of \(P_e\).
		Any vertex of \(G\) that is contained in a subtree of \(\decT - P_e\) that is attached to an internal node of \(P'\) is not connected via a path to \(N^3[\chi(t)]\) in any bipartite subgraph of \(G\) induced by an edge of \(P'\).
	\end{proposition}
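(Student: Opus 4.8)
The plan is to prove the proposition in two stages: first show that any vertex $v$ as in the statement is far from $N^3[\chi(t)]$ in $G$, and then show that if $v$ were nonetheless connected to $N^3[\chi(t)]$ inside some cut $G[X,Y]$ with the cut-edge on $P'$, a shortest connecting path would yield an induced matching in $G[X,Y]$ that is too large for $\decT$.

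For the first stage, let $T'$ be the minimal subtree of $\decT$ that spans $N^3[\chi(t)]$, so that $D = \lcac{\decT}{N^3[\chi(t)]}$ is obtained from $T'$ by contracting degree-$2$ nodes not in $N^3[\chi(t)]$; thus $N^3[\chi(t)] \subseteq V(T')$, $V(P_e) \subseteq V(T')$, and every internal node $p$ of $P_e$ has both of its $T'$-incident edges on $P_e$. Let $p$ be an internal node of the block $P'$ (hence of $P_e$) and let $S$ be the subtree of $\decT - P_e$ attached at $p$ that contains $v$. Since $\decT$ is a tree and $T'$ is a connected subtree containing $p$ and all of $P_e$, the unique $\decT$-path from any $w \in V(S)$ to $P_e$ leaves $p$ along the edge pointing into $S$, and this edge is not an edge of $T'$; hence $V(S)$ is disjoint from $V(T')$, and in particular from $N^3[\chi(t)]$. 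On the other hand, because $p$ is \emph{internal} to the block $P'$, Definition~\ref{def:block} tells us $S$ contains no vertex of $N^{3\tw(G)}(N^3[\chi(t)])$. Combining the two, $V(S)$ is disjoint from $N^{3\tw(G)}[N^3[\chi(t)]]$, so $\dist_G(v, N^3[\chi(t)]) > 3\tw(G)$.

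For the second stage, assume towards a contradiction that for some edge $f$ of $P'$, inducing the bipartition $(X,Y)$ of $V(G)$, the vertex $v$ and some $z \in N^3[\chi(t)]$ lie in one connected component of $G[X,Y]$. Take a shortest path $Q = u_0 u_1 \cdots u_\ell$ in $G[X,Y]$ with $u_0 = v$, $u_\ell \in N^3[\chi(t)]$, and no internal vertex in $N^3[\chi(t)]$; by minimality $Q$ has no chord in $G[X,Y]$, and since $Q$ is also a path in $G$ we get $\ell > 3\tw(G)$ from the first stage. Because $Q$ is induced in $G[X,Y]$, the edges $u_0u_1, u_3u_4, u_6u_7, \dots$ form an induced matching of $G[X,Y]$ of size at least $\lceil \ell/3 \rceil$, which (once $\ell$ is far enough beyond $3\tw(G)$) is at least $\tw(G)+2$; hence $G[X,Y]$ has $H^{\tw(G)+2}_{\imatch}$ as an induced subgraph, so the cut of $\decT$ at $f$ has $\imatchwidth$ at least $\tw(G)+2$. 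As $\mathcal{F}_{\imatch} \subseteq \mathcal{F}^*$, this gives $\Fsbw(\decT) \ge \tw(G)+2$, contradicting the facts that $\decT$ realizes $\Fsbw(G)$ and that $\Fsbw(G) \le \tw(G)+1$ by Corollary~\ref{cor:FisTWBounded}.

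The main obstacle is the constant bookkeeping in the last step: an induced path with $\ell$ edges only yields an induced matching with about $\ell/3$ edges, so to turn ``$\ell > 3\tw(G)$'' into ``$\lceil \ell/3\rceil \ge \tw(G)+2$'' one needs the distance bound from the first stage to be a fixed additive amount larger (e.g.\ $\dist_G(v,N^3[\chi(t)]) \ge 3\tw(G)+3$), which is precisely what the radius in the definition of blocks must supply; I would verify that this radius is chosen with enough slack, inflating it by an absolute constant if needed. A secondary point worth flagging is that this argument genuinely uses $\mathcal{F}_{\imatch} \subseteq \mathcal{F}^*$ — exactly the regime in which blocks play a role — since for families without induced matchings the relevant obstruction graphs ($\mathcal{F}_{\ichain}$ and $\mathcal{F}_{\iantimatch}$) have diameter at most $3$ and are handled directly through $N^3[\chi(t)]$.
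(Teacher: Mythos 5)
Your proof follows essentially the same route as the paper's: it combines $\Fsbw(\decT)=\Fsbw(G)\le\tw(G)+1$ (Corollary~\ref{cor:FisTWBounded}) with extracting an induced matching from every third edge of an induced shortest path in a cut graph, and closes with Definition~\ref{def:block}; the paper packages this as a $3\tw(G)+3$ bound on the diameter of every cut graph along $P'$ and leaves implicit the step you prove carefully via the minimal subtree $T'$ spanning $N^3[\chi(t)]$ (that subtrees hanging off internal nodes of $P_e$ avoid $N^3[\chi(t)]$ altogether). The additive-constant slack you flag is real but is not a defect specific to your write-up: with the block radius exactly $3\tw(G)$ your path only guarantees a matching of size $\tw(G)+1$ rather than $\tw(G)+2$, and the paper's own proof has the mirror-image discrepancy (it derives diameter at most $3\tw(G)+3$ yet concludes membership in $N^{3\tw(G)}(N^3[\chi(t)])$), so inflating the radius in the block definition by an absolute constant, as you propose, is exactly what makes either argument airtight and is harmless elsewhere. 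Your observation that the contradiction genuinely requires $\mathcal{F}_{\imatch}\subseteq\mathcal{F}^*$ also matches the paper, which only deploys the block machinery in that regime.
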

	\begin{proof}
		The fact that \(\decT\) achieves minimum \(\mathcal{F}^*\)-branchwidth for \(G\) and \(\Fsbw(G) \leq \tw(G) + 1\) (by Corollary~\ref{cor:FisTWBounded}) implies that the diameter of any bipartite subgraph of \(G\) induced by any edge of \(P'\) is at most \(3\tw(G) + 3\);
		otherwise the longest shortest path in such a bipartite subgraph is an induced path of length more than \(3\tw(G) + 3\) and hence includes an induced matching with more than \(\tw(G) + 1\) edges just by taking every third edge on the induced path.
		Hence the set of vertices that are connected via a path to \(N^3[\chi(t)]\) in any bipartite subgraph of \(G\) induced by any edge of \(P'\) necessarily lies in \(N^{3\tw(G)}(N^3[\chi(t)])\).
	\end{proof}

	Using this result we are able to show a slightly stronger statement than the desirable property described above.
	This stronger property will become critical when we show the correctness of our dynamic programming procedure for introduce nodes.
	It allows us to shift all vertices of \(G - N^3[\chi(G_t)]\) that are attached at internal nodes of a block in a hypothetical decomposition to positions distinguished by the used typical sequences, which rely on the properties within a block \emph{while maintaining these properties}.
	\fi
	\blocksetLemma
\iflong
	\begin{proof}
		Fix \(\tilde{\decT}\) and \(\tilde{P}\) as described in the statement of the lemma.
		Consider a set of pairwise independent edges \(E_L\) each of which is connected to some vertex in \(N^3[\chi(t)]\) in all bipartite subgraphs of \(G\) induced by any edge of \(\tilde{P}\), such that \(E_L\) is cardinality-maximum with these properties.
		Let \(E_\lambda = E_L \cap E(G[N^3(\chi(t))])\) and \(\lambda \subseteq N^3[\chi(t)]\) be the set of endpoints of all edges in \(E_\lambda\).
		We claim that for every edge \(\tilde{e} \in E(\tilde{P})\) there is an induced matching \(\tilde{M}\) in the bipartite subgraph of \(G\) induced by \(\tilde{e}\) with \(2\imatchwidth(\decT,\tilde{e})\) vertices such that the edges in \(E(\tilde{M}[N^3[\chi(t)]])\) are exactly \(E_\lambda\), and hence the edges in \(E(\tilde{M}[N^3[\chi(t)]])\) are incident to exactly \(\lambda\).
		
		To show this, fix an edge \(\tilde{e} \in E(\tilde{P})\) and an induced matching \(M^*\) in the bipartite subgraph of \(G\) induced by \(\tilde{e}\) with \(2\imatchwidth(\decT,\tilde{e})\) vertices.
		(Such a matching exists by definition of \(\imatchwidth\).)
		Obtain \(\tilde{M}\) from \(M^*\) by replacing the set of all edges in \(E(M^*)\), at least one of whose endpoints is connected to some vertex in \(N^3[\chi(t)]\) in all bipartite subgraphs of \(G\) induced by any edge of \(\tilde{P}\), with \(E_L\).
		By the maximality of \(E_L\), \(\tilde{M}\) has at least as many vertices as \(M^*\).
		Next we show that \(\tilde{M}\) is indeed an induced matching in the bipartite subgraph of \(G\) induced by \(\tilde{e}\).
		The edges in \(E_L\) are pairwise independent in the bipartite subgraph of \(G\) induced by \(\tilde{e}\) by choice of \(E_L\) and the edges in \(E(\tilde{M}) \setminus E_L\) are pairwise independent in the bipartite subgraph of \(G\) induced by \(\tilde{e}\) because they are contained in an induced matching \(M^*\).
		It remains to show that any edge in \(e_1 \in E_L\) is independent of any edge \(e_2 \in E(\tilde{M}) \setminus E_L\) in the bipartite subgraph of \(G\) induced by \(\tilde{e}\).
		By the choice of \(E_L\), \(e_1\) is connected to a vertex in \(N^3[\chi(t)]\) in the bipartite subgraph of \(G\) induced by any edge of \(\tilde{P}\), in particular the subgraph induced by \(\tilde{e}\).
		On the other hand, by the construction of \(\tilde{M}\) from \(M^*\), neither of the endpoints of \(e_2\) are connected to any vertex in \(N^3[\chi(t)]\) in all bipartite subgraphs of \(G\) induced by any edge of \(\tilde{P}\).
		In other words, there is some edge \(e' \in E(\tilde{P})\) such that in the bipartite subgraph of \(G\) induced by \(e'\), neither endpoint of \(e_2\) is connected to any vertex in \(N^3[\chi(t)]\).
		Proposition~\ref{prop:intsep} together with the fact that, apart from the subtrees of \(\decT - P_e\) which are attached at internal nodes of \(P'\), \(\decT\) and \(\tilde{\decT}\) are the same, implies that \(e_2\) not being connected to any vertex in \(N^3[\chi(t)]\) in the bipartite subgraph of \(G\) induced by \(e'\), means the same is true for all bipartite subgraphs of \(G\) induced any edge of \(\tilde{P}\), in particular the subgraph induced by \(\tilde{e}\).
		All together this shows that the endpoints of \(e_1\) and \(e_2\) are not even in the same connected component of the bipartite subgraph of \(G\) induced by \(\tilde{e}\), and thus \(e_1\) and \(e_2\) are independent in this subgraph of \(G\).
		
		Thus for each edge \(\tilde{e} \in E(\tilde{P})\) we are able to construct an induced matching \(\tilde{M}\) in the bipartite subgraph of \(G\) given by the cut of \(G\) induced by \(\tilde{e}\) with \(2\imatchwidth(\decT,\tilde{e})\) vertices, such that the set of all edges in \(E(\tilde{M})\), at least one of whose endpoints is connected to some vertex in \(N^3[\chi(t)]\) in all bipartite subgraphs of \(G\) induced by any edge of \(\tilde{P}\), is equal to \(E_L\).
		In particular, \(E(\tilde{M}[N^3[\chi(t)]]) = E_\lambda\).
	\end{proof}

	Applying this lemma for \(\tilde{\decT} = \decT\), we obviously get:
	\begin{corollary}
		\label{cor:blockset}
		Let \(P'\) be a block of \(P_e\).
		Then there is some \(\lambda \subseteq N^3[\chi(t)]\) such that for every edge \(e' \in E(P')\) there is an induced matching \(M'\) in the bipartite subgraph of \(G\) induced by \(e'\) with \(2\imatchwidth(\decT,e')\) vertices such that the edges in \(E(M'[N^3[\chi(t)]])\) are incident to exactly~\(\lambda\).
	\end{corollary}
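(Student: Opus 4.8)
The plan is to derive this corollary immediately from Lemma~\ref{lem:blockset} by instantiating the auxiliary decomposition \(\tilde{\decT}\) appearing there as \(\decT\) itself. First I would verify that \(\decT\) is a legitimate instance of the construction permitted in Lemma~\ref{lem:blockset}: indeed, \(\decT\) arises from \(\decT\) by deleting every subtree of \(\decT - P_e\) that is attached at an internal vertex of the block \(P'\) and then reattaching each such subtree --- which is in particular a binary tree whose cumulative set of leaves is unchanged --- at exactly the internal node of \(P'\) at which it was originally attached, while subdividing no edge of \(P'\). This ``identity'' reattachment is a special case of the family of modifications allowed in the lemma, so the lemma applies to this choice of \(\tilde{\decT}\).

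Under this choice, the path \(\tilde{P}\) of Lemma~\ref{lem:blockset}, which by definition corresponds to the (possibly subdivided) edges of \(P'\), is simply \(P'\) itself, since no edge of \(P'\) has been subdivided; hence \(E(\tilde{P}) = E(P')\). The conclusion of Lemma~\ref{lem:blockset} then provides a set \(\lambda \subseteq N^3[\chi(t)]\) such that for every edge \(e' \in E(P')\) there is an induced matching \(M'\) in the bipartite subgraph of \(G\) induced by \(e'\) on \(2\imatchwidth(\decT,e')\) vertices whose edges in \(E(M'[N^3[\chi(t)]])\) are incident to exactly \(\lambda\), which is precisely the assertion to be proved. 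There is no genuine obstacle here; the only point requiring a moment's attention is confirming that the trivial reattachment is subsumed by the rather general modification allowed in the hypothesis of Lemma~\ref{lem:blockset}, which it evidently is. Hence the corollary follows with no further work.
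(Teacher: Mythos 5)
Your proposal is correct and matches the paper's argument exactly: the paper obtains this corollary by applying Lemma~\ref{lem:blockset} with \(\tilde{\decT} = \decT\), which is precisely your instantiation. Your additional remarks verifying that the identity reattachment is a permitted special case merely spell out what the paper treats as immediate.
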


	Moreover Lemma~\ref{lem:blockset} justifies the following definition.
\fi
\ifshort 
Lemma~\ref{lem:blockset} justifies the following definition.
\fi
	\LambdaBlockDefinition

	\iflong
	We point out already that the entry \(\Lambda\) in our dynamic programming records acts as a guess for which sets \(\lambda_{e,i}\) each block associated to \((e,i)\) is a \(\lambda_{e,i}\)-block for.
	\fi
	\ifshort
	Note that the entry \(\Lambda\) in our dynamic programming records acts as a guess for which sets \(\lambda_{e,i}\) each block associated to \((e,i)\) is a \(\lambda_{e,i}\)-block for.
	\fi
\iflong	At first glance this might seem unnecessary considering the proof of Lemma~\ref{lem:blockset}, as in the proof we characterised an appropriate \(\lambda_{e,i}\) for any arbitrary block explicitly.
	However this characterisation assumes full knowledge of \(\decT\) which we will not have during the dynamic programming procedure while constructing \(\decT\).
\fi	
\iflong 
	\subsection{Typical Sequences Within Blocks}
	\label{sec:typical}
 	
	Fix for this entire subsection some node \(t \in V(T)\) of our tree decomposition \((T,\chi)\) of \(G\), and
	let \(\decT\) be a branch decomposition of \(G\) of minimum \(\mathcal{F}^*\)-branchwidth, \(D\) be the restricted tree with respect to \(N^3[\chi(t)]\) of \(\decT\), and \(e \in E(D)\) correspond to the path \(P\) in \(\decT\).
\fi
	Now consider a \(\lambda\)-block \(P_\lambda\) of \(P\).
	We show that we can apply typical sequences in the usual way within \(P_\lambda\), i.e.\ assume that whenever some vertices in \(V(G) \setminus V(G_t)\) are attached in \(\decT\) at \(P_\lambda\), that they are attached at certain points of \(P_\lambda\) which can be distinguished even after using typical sequences for the \(\imatchwidth\) to compress \(P_\lambda\).
	
	\iflong 
	The following lemma serves as an intermediate step to cleanly separate vertices in \(V(G) \setminus V(G_t)\) from vertices in \(V_t\) within each subtree of \(\decT - P_\lambda\) attached at an internal vertex of \(P_\lambda\), without shifting these points of attachment.
	Note, that for this we do not yet need to make use of the properties of \(P_\lambda\) as a \(\lambda\)-block.
	
	\begin{lemma}
		\label{lem:attachtree}
		Let \(S\) be a subtree of \(\decT - P_\lambda\) that attaches at an internal node of \(P_\lambda\),
		let \(F = \lcac{S}{V(G) \setminus V(G_t)}\).
		Let \(\decT^*\) arise from \(\decT\) by replacing \(S\) by \(\lcac{S}{V(G_t)}\) and then and attaching \(F\) at the subdivided edge between the root of \(\lcac{S}{V(G_t)}\) and its neighbour on \(P_\lambda\).
		We call the tree that is now attached at \(P_\lambda\) in place of \(S\) \(S^*\).
		See Figure~\ref{fig:attachtree} for an example.
		Then \(\mimw_\lambda(\decT^*) \leq \mimw_\lambda(\decT)\), \(\iantimatchwidth(\decT^*) \leq \iantimatchwidth(\decT)\), and \(\ichainwidth(\decT^*) \leq \ichainwidth(\decT)\).
	\end{lemma}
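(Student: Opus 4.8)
The plan is to argue that turning $\decT$ into $\decT^*$ affects only the cuts lying strictly inside the modified subtree, and that each new cut is dominated, for all three cut functions simultaneously, by a cut already present in $\decT$. First I would set up notation: let $B_0$ be the set of $V(G_t)$-leaves of $S$ and $C_0$ the set of its remaining leaves, so that $\lcac{S}{V(G_t)}$ is the branch decomposition of $G[B_0]$ spanned by $B_0$ and $F = \lcac{S}{V(G)\setminus V(G_t)}$ is the one spanned by $C_0$. In $\decT^*$ the replacement subtree $S^*$ is just the node $x$ with $\lcac{S}{V(G_t)}$ and $F$ hanging off it, and it has the same leaf set as $S$. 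Consequently every edge of $\decT^*$ outside $S^*$, together with the edge joining $S^*$ to $P_\lambda$, induces the same bipartition of $V(G)$ as the corresponding edge of $\decT$, so only the edges inside $\lcac{S}{V(G_t)}$ and $F$ and the two edges of $S^*$ incident to $x$ need to be controlled.

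The decisive preliminary step is to establish two structural facts. Since $S$ hangs off an internal node $v$ of the block $P_\lambda$, hence of $P_e$, and the internal nodes of $P_e$ are exactly the degree-$2$ vertices of the minimal subtree of $\decT$ spanning $N^3[\chi(t)]$ that are contracted when forming $D$, the subtree $S$ contains no vertex of $N^3[\chi(t)]$; in particular $V(S)\cap\chi(t)=\emptyset$. Combining this with the standard separation property of tree decompositions — $\chi(t)$ separates $V(G_t)\setminus\chi(t)$ from $V(G)\setminus V(G_t)$ in $G$ — I obtain that no leaf in $B_0$ has a neighbour outside $V(G_t)$, that is, $G$ has no edge between a leaf of $\lcac{S}{V(G_t)}$ and a leaf of $F$. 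This is the crux: the ``unzipping'' of $S$ into $\lcac{S}{V(G_t)}$ and $F$ introduces no new adjacency across any cut.

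Next I would treat a generic new edge $e^*$, say one inside $\lcac{S}{V(G_t)}$, with induced cut $(Z, V(G)\setminus Z)$, $Z\subseteq B_0$ (the cases of edges inside $F$ and of the two edges at $x$ are symmetric, the latter being compared with the edge of $\decT$ joining $S$ to $P_\lambda$). As $\lcac{S}{V(G_t)}$ is a restriction of $S$, the edge $e^*$ corresponds to a path of $S$, and picking any edge $f$ of $\decT$ on that path, the side of the cut at $f$ containing $Z$ equals $Z\cup D$ for some $D\subseteq C_0$; by the structural fact each vertex of $D$ is isolated in $G[Z, V(G)\setminus Z]$. Since every graph on at least four vertices in $\mathcal{F}_{\imatch}$, $\mathcal{F}_{\iantimatch}$ or $\mathcal{F}_{\ichain}$ has minimum degree at least $1$, any induced copy of such a graph in $G[Z, V(G)\setminus Z]$ avoids $D$ and hence is also an induced subgraph of the cut of $\decT$ at $f$; this bounds $\iantimatchwidth$ and $\ichainwidth$ at $e^*$ (the two-vertex antimatching contributes only the trivial value $1$, which is dominated by $\iantimatchwidth(\decT)$ unless all cuts of $G$ are complete bipartite, in which case all these values are $0$). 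For $\mimw_\lambda$ I would use that $Z\cup D\subseteq V(S)$ is disjoint from $N^3[\chi(t)]$: if $\lambda\neq\emptyset$ then no induced matching in $G[Z, V(G)\setminus Z]$ meets $\lambda$ inside $G[N^3[\chi(t)]]$, so $\mimw_\lambda(\decT^*,e^*)=\bot$; and if $\lambda=\emptyset$ then a maximum induced matching in this cut, being a member of $\mathcal{F}_{\imatch}$, avoids $D$ exactly as above and is therefore an induced matching at $f$ with no edge inside $G[N^3[\chi(t)]]$, giving $\mimw_\emptyset(\decT^*,e^*)\le\mimw_\emptyset(\decT,f)$. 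Taking maxima over all edges of $\decT^*$ then yields the three claimed inequalities.

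I expect the main obstacle to be the second paragraph — establishing that the operation creates no new edge across any cut. This rests on the at-first-non-obvious interaction between $S$ sitting inside a block (so that $V(S)$ misses $N^3[\chi(t)]\supseteq\chi(t)$) and the separation property of $\chi(t)$; once that is in hand, the rest follows from monotonicity of each cut function under taking induced subgraphs. A secondary technical nuisance is the careful identification of the comparison edge $f$ in $\decT$ and the bookkeeping of the $\lambda=\emptyset$ versus $\lambda\neq\emptyset$ dichotomy for induced matchings, together with the trivial special case of the two-vertex antimatching.
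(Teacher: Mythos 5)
Your proposal is correct and takes essentially the same route as the paper: it reduces to the new cuts inside \(S^*\), uses that \(S\) avoids \(N^3[\chi(t)]\) together with the separator property of \(\chi(t)\) to conclude there are no \(G\)-edges between the \(V(G_t)\)-leaves and the \(V(G)\setminus V(G_t)\)-leaves of \(S\), and then dominates each new cut by an old cut on the corresponding path of \(S\). The one point you leave implicit — that for \(\lambda\neq\emptyset\) the value \(\bot\) at a new edge is matched by \(\bot\) at the compared edge of \(\decT\), because its relevant side also avoids \(N^3[\chi(t)]\) — follows immediately from the structural fact you already established.
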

	\begin{figure}
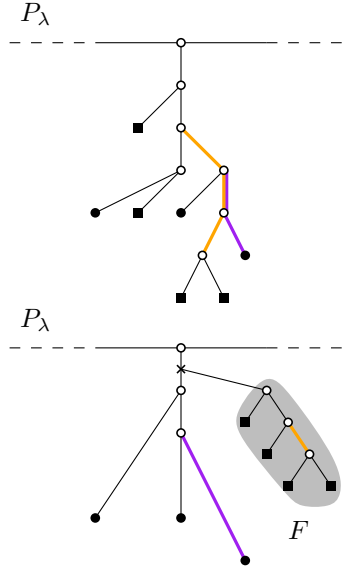

		\centering
		\begin{minipage}{.5\textwidth}
			\centering
			\includegraphics[page=1]{fig/typ-section-figs}
		\end{minipage}
		\begin{minipage}{.5\textwidth}
			\centering
			\includegraphics[page=2]{fig/typ-section-figs}
		\end{minipage}
		\caption{\label{fig:attachtree}
			Modification of \(S\) (left) to \(S^*\) (right) as described in Lemma~\ref{lem:attachtree}.
			Filled vertices correspond to vertices in \(V(G)\), and among these the squares correspond to vertices in \(V(G) \setminus V(G_t)\).
			The cross-vertex on the right is the subdivision vertex used to attach \(F\) to \(S[V(G) \setminus V(G_t)]\).
			The cutfunction values at the orange edge in \(S^*\) are upper-bounded by the cutfunction values at any of the three orange edges in \(S\),
			and the cutfunction values at the purple edge in \(S^*\) are upper-bounded by the cutfunction values at any of the two purple edges in \(S\).
		}
	\end{figure}
	\begin{proof}
		All cuts induced by edges in \(\decT^*[V(\decT) \setminus V(S^*)]\) correspond immediately to identical cuts in \(\decT[V(\decT) \setminus V(S)]\).
		Hence it is sufficient to consider edges in \(S^*\).
		For more convenient phrasing, we also say a cut of \(G\) induced by an edge in \(S^*\) (or \(S\)) is a \emph{\(G\)-cut} in \(S^*\) (or \(S\)).
		
		By construction of \(S^*\), every \(G\)-cut in \(S^*\) which is not identical to some \(G\)-cut in \(S\), has either only vertices in \(V(S^*) \cap (V(G) \setminus V(G_t))\), or only vertices in \(V(S^*) \cap V(G_t)\) on one side.
		
		\(\lcac{S}{V(G) \setminus V(G_t)}\) does not contain any vertices from \(N^3[\chi(t)]\).
		By properties of tree decompositions, it also does not contain any vertices from \(N^3[V(G) \setminus V(G_t)]\).
		Hence when considering the cutfunctions \iantimatchwidth and \ichainwidth at some \(G\)-cut in \(S^*\) which is not identical to some \(G\)-cut in \(S\), they are bounded by the maximum of the respective cutfunction attained at an edge that separates the same vertex subset of \(N^3[V(G) \setminus V(G_t)]\), or the same subset of \(N^3[V(G_t) \setminus N^3[\chi(t)]]\) from \(P_\lambda\) respectively.
		Such an edge exists by construction of \(S^*\) from \(S\).
		
		It remains to consider \(\mimw\).
		Here the situation is similar as for \iantimatchwidth, and \ichainwidth.
		While a single independent matching across a \(G\)-cut in \(S^*\) can involve edges adjacent to vertices in \(V(G) \setminus V(G_t)\), \emph{and} edges adjacent to vertices in \(V(G_t) \setminus \chi(t)\), there are no edges between vertices in \(V(G) \setminus V(G_t)\) and vertices in \(V(G_t) \setminus \chi(t)\).
		Hence, once again such a \(G\)-cut can be compared to a \(G\)-cut in \(S\) which separates the same set of vertices vertex subset of \(N^3[V(G) \setminus V(G_t)]\), or the same subset of \(N^3[V(G_t) \setminus N^3[\chi(t)]]\) from \(P_\lambda\) respectively.
	\end{proof}
		
	For the next lemma, we introduce the following notation: \fi
	For a branch decomposition \(\decT'\) of \(G_t\) and an edge \(f \in E(\decT')\) 
	\iflong 
	we use \(\mimw(\decT',f)\) to denote the \(\imatchwidth\) achieved in the cut of \(\decT'\) at \(f\).
	Moreover 
	\fi 
	we use \(\mimw_\lambda(\decT',f)\) to denote the maximum size of an induced matching \(M\) in the cut of \(\decT'\) at \(f\) in which the set of vertices in \(N^3[\chi(t)]\) adjacent to edges in \(M[N^3[\chi(t)]]\) is equal to \(\lambda\).
	In the pathological case that there is no such matching in which the set of vertices in \(N^3[\chi(t)]\) adjacent to edges in \(M[N^3[\chi(t)]]\) is equal to \(\lambda\), we set \(\mimw_\lambda(\decT',f) = \bot\).
	\iflong 
	For all later considerations the sum of any number with \(\bot\) equals \(\bot\), and \(\bot\) is treated as larger than any number.
	\fi 
	Finally for two edges \(p, q \in E(P_\lambda)\), we write \(P_\lambda(p,q)\) to denote the subpath of \(P_\lambda\) between \(p\) and \(q\) and excluding \(p\) and \(q\).
	We do not make any distinction between \(P_\lambda(p,q)\) and \(P_\lambda(q,p)\).
	\typicalLemma

	\begin{figure}
		\begin{minipage}[c]{.5\textwidth}
			\includegraphics[page=3]{fig/typ-section-figs}
			\vspace{0.5em}
		
		\includegraphics[page=4]{fig/typ-section-figs}
		\end{minipage}
		\hfill
		\begin{minipage}[c]{.48\textwidth}
		\caption{\label{fig:typical}
			Illustration of modification of \(\decT\) (top) to \(\decT^{**}\) (bottom) as described in Lemma~\ref{lem:typical}.
			\iflong 
			(For simplicity of depiction, we assume \(\decT\) to have the form we can guarantee by the application of Lemma~\ref{lem:attachtree} to all \(\decT_v\); i.e.\ each \(F_v\) is a subtree of \(\decT_v\).)\fi
			The cross-vertices in the bottom figure are the subdivision vertices used to attach the \(\lcac{\decT_v}{V(G) \setminus V(G_t)}\) to \(p\).
			The cutfunction values at the orange and purple edge in \(\decT^{**}\) are upper-bounded by the cutfunction values at the orange and purple edge in \(\decT^*\) respectively.}
		\end{minipage}
	\end{figure}

	\iflong
	\begin{proof}
		As a first step, we make the following observation:
		Let \(\decT^*\) arise from \(\decT\) by applying the construction described in the statement of Lemma~\ref{lem:attachtree} to each \(\decT_v\).
		Correspondingly we refer to each tree that replaces \(\decT_v\) in \(\decT^*\) as \(\decT^*_v\).
		In this way, each \(F_v\) is actually a subtree of \(\decT^*_v\),
		while the cuts at edges in \(P_\lambda\) remain completely unchanged from \(\decT\) to \(\decT^*\).
		Moreover, because the considered cutfunction values in \(\decT\) upper-bound those in \(\decT^*\), it suffices to compare the cutfunction vaules in \(\decT^{**}\) to those in \(\decT^*\).
		
		First consider some internal node \(v\) of \(P_\lambda(p,q)\), and the cut in \(\decT^{**}\) induced by an edge of \(\decT_v[V(\decT_v) \setminus (V(G) \setminus V(G_t))]\).
		This cut is exactly the same as the cut in \(\decT^*\) induced by the same edge in \(\decT^*_v - F_v\).
		Because of the structure of \(\decT^*\) an analogous argument can be made for each cut in \(\decT^{**}\) induced by an edge of \(F_v\);
		this cut is identical to the one in \(\decT^*\) induced by the same edge in \(F_v\).
		
		It remains to consider the cuts in \(\decT^{**}\) which are induced by edges that arose by subdividing \(p\) and edges of \(P_\lambda\).
		We treat both these types of edges separately, and in each case first consider \iantimatchwidth and \ichainwidth, and then turn our attention to \(\mimw_\lambda\).
		
		Let \(x \in E(\decT^{**})\) be an edge that arose by subdividing \(p\),
		and let \(F_{v_i}\) and \(F_{v_{i + 1}}\) be the subtrees of the form \(\lcac{\decT_v}{V(G) \setminus V(G_t)}\) attached at each endpoint of \(x\) (not necessarily both, but at least one of them, exist).
		Then consider the edge \(y \in E(P_\lambda)\) the left endpoint of which \(\decT^*_{v_i}\) is attached at (if \(v_i\) is not defined, then let \(y \in E(P_\lambda)\) be the edge the right endpoint of which \(\decT^*_{v_{i + 1}}\)) (see Figure~\ref{fig:typical}, purple).
		We claim that the cutfunction values in \(\decT^{**}\) at \(x\) are at most as large as the respective cutfunction values in \(\decT\) at \(y\), or in one situation the respective cutfunction values in \(\decT\) at \(p\).
		For \iantimatchwidth and \ichainwidth, the subgraph of \(G\) that determines the cutfunction values at any edge of a branch decomposition either
		(i) only has vertices in \(N^3[\chi(t)]\),
		(ii) only has vertices in \(V(G) \setminus V(G_t)\), or
		(iii) only has vertices in \(V(G_t) \setminus \chi(t)\),
		because of the bounded diameter of antimatchings and chaingraphs, as well as the properties of treedecompositions.
		Obviously all edges in \(P_\lambda\) and \(P_\lambda\) with subdivided \(p\) separate the same vertex sets in \(N^3[\chi(t)]\).
		Hence cutfunction values attained by graphs of type (i) are the same in \(\decT^{**}\) at \(x\), and in \(\decT\) at \(y\).
		Note that by construction the edges between vertices in \(V(G) \setminus V(G_t)\) that are contained in the cut in \(\decT\) at edge \(y\) are the same as the ones in the cut in \(\decT^{**}\) at edge \(x\).
		We point out that in particular this is also true for all edges between vertices in \(V(G) \setminus V(G_t)\) that can be included in an independent matching in combination with \(\lambda\); this will be used when we consider \(\mimw_\lambda(\decT^{**},x)\).
		Hence also the cutfunction values attained by graphs of type (ii) are the same in \(\decT^{**}\) at \(x\), and in \(\decT\) at \(y\).
		For (iii) we actually do not compare the cutfunction values in \(\decT^{**}\) at \(x\) with the ones in \(\decT\) at \(y\), but rather with the ones in \(\decT\) at \(p\), because the edges between vertices in \(V(G_t) \setminus \chi(t)\) that are contained in the cut in \(\decT\) at edge \(p\) are the same as the ones in the cut in \(\decT^{**}\) at edge \(x\).
		
		Now we consider \(\mimw\).
		Our definition of \(\lambda\)-blocks, the fact that \(P_\lambda\) is a \(\lambda\)-block, and the way \(\decT^{**}\) is constructed from \(\decT\) allow us to apply Lemma~\ref{lem:blockset} to obtain that \(\mimw(\decT^{**},x) = \mimw_\lambda(\decT^{**},x)\).
		As \(N^3[\chi(t)]\) acts as a separator between \(N^3[V(G_t)]\) and \(N^3[V(G) \setminus V(G_i)]\) and for \(\mimw_\lambda\) the behaviour of any considered induced matching is determined on \(G[N^3[\chi(t)]]\) by \(\lambda\), any independent matching in which each matching edge involves at least one vertex in \(V(G) \setminus V(G_t)\) that is compatible with (i.e.\ independent of) \(\lambda\) is also compatible with any independent matching in which each matching edge involves at least one vertex in \(V(G_t) \setminus N^3[\chi(t)]\) that is compatible with \(\lambda\).
		Note that by assumption~\ref{typical:cond3}, the vertex subsets of \(\lambda\) separated by the cut in \(\decT^{**}\) at \(x\) are the same as the vertex subsets of \(\lambda\) separated by the cut in \(\decT\) at \(y\).
		Recall that the same is true for the edges between vertices in \(V(G) \setminus V(G_t)\).
		Hence, it merely remains to compare how many edges which have at least one endpoint in \(V(G_t) \setminus N^3[\chi(t)]\) can contribute to \(\mimw_\lambda(\decT^{**},x)\) and how many such edges can contribute to \(\mimw_\lambda(\decT,y)\).
		More specifically, it remains to consider edges which have at least one endpoint in some \(\decT_v\) before or in \(\decT_{v_i}\), since all other edges with at least one endpoint in \(V(G_t) \setminus N^3[\chi(t)]\) are either both separated in \(\decT^{**}\) at edge \(x\) and in \(\decT\) at position \(y\), or both not separated in \(\decT^{**}\) at edge \(x\) and in \(\decT\) at position \(y\).
		Moreover, any edge with an endpoint in some \(\decT_v\) before or in \(\decT_{v_i}\) has both endpoints in \(G_t\) by assumption~\ref{typical:cond3}, and the properties of tree decompositions and the edges we are concerned with are counted as \(\mimw_\lambda(\decT', y) - \ell\) for the cut in \(\decT\) at \(y\) and as \(\mimw_\lambda(\decT', p) - \ell\) for the cut in \(\decT^{**}\) at \(x\), where \(\ell\) denotes the number of edges which are separated by both cuts.
		By assumption~\ref{typical:cond1}, it holds that \(\mimw_\lambda(\decT', p) \leq \mimw_\lambda(\decT', y)\), all together showing that \(\mimw(\decT^{**}, x) = \mimw_\lambda(\decT^{**}, x) \leq \mimw_\lambda(\decT,y) \leq \mimw(\decT)\) in this case.
		
		We turn to edges in \(P_\lambda\), i.e.\ edges of \(\decT^{**}\) on the path corresponding to \(P_\lambda\) which did not arise from subdividing \(p\).
		Let \(x\) by such an edge and we identify \(x\) with the corresponding edge of \(\decT\) and claim that the cutfunction values in \(\decT^{**}\) at \(x\) are at most as large as the respective cutfunction values in \(\decT\) at \(q\) (see Figure~\ref{fig:typical}, orange), or in one situation the respective cutfunction values in \(\decT\) at \(x\).
		Once again, for \iantimatchwidth and \ichainwidth, the subgraph of \(G\) that determines the cutfunction values at any edge of a branch decomposition either
		(i) only has vertices in \(N^3[\chi(t)]\),
		(ii) only has vertices in \(V(G) \setminus V(G_t)\), or
		(iii) only has vertices in \(V(G_t) \setminus \chi(t)\),
		because of the bounded diameter of antimatchings and chaingraphs, as well as the properties of treedecompositions.
		All edges in \(P_\lambda\) and \(P_\lambda\) with subdivided \(p\) separate the same vertex sets in \(N^3[\chi(t)]\), hence cutfunction values attained by graphs of type (i) are the same in \(\decT^{**}\) at \(x\), and in \(\decT\) at \(y\).
		By construction the edges between vertices in \(V(G) \setminus V(G_t)\) that are contained in the cut in \(\decT\) at edge \(q\) are the same as the ones in the cut in \(\decT^{**}\) at edge \(x\).
		In particular this is also true for all edges between vertices in \(V(G) \setminus V(G_t)\) that can be included in an independent matching in combination with \(\lambda\); this will be used when we consider \(\mimw_\lambda(\decT^{**},x)\).
		Hence also the cutfunction values attained by graphs of type (ii) are the same in \(\decT^{**}\) at \(x\), and in \(\decT\) at \(y\).
		For (iii) we actually do not compare the cutfunction values in \(\decT^{**}\) at \(x\) with the ones in \(\decT\) at \(q\), but rather with the ones in \(\decT\) at \(x\), because the edges between vertices in \(V(G_t) \setminus \chi(t)\) that are contained in the cut in \(\decT\) at edge \(x\) are the same as the ones in the cut in \(\decT^{**}\) at edge \(x\).
		
		Now we consider \(\mimw_\lambda\).
		Again, our definition of \(\lambda\)-blocks, the fact that \(P_\lambda\) is a \(\lambda\)-block, and the way \(\decT^{**}\) is constructed from \(\decT\) allow us to apply Lemma~\ref{lem:blockset} to obtain that \(\mimw(\decT^{**},x) = \mimw_\lambda(\decT^{**},x)\).		
		As in the previous case (when \(x\) was an edge that arose from subdividing \(p\)), any independent matching in which each matching edge involves at least one vertex in \(V(G) \setminus V(G_t)\) that is compatible with (i.e.\ independent of) \(\lambda\) is also compatible with any independent matching in which each matching edge involves at least one vertex in \(V(G_t) \setminus N^3[\chi(t)]\) that is compatible with \(\lambda\).
		By assumption~\ref{typical:cond3}, the vertex subsets of \(\lambda\) separated by the cut in \(\decT^{**}\) at \(x\) are the same as the vertex subsets of \(\lambda\) separated by the cut in \(\decT\) at \(q\).
		Recall that the same is true for the edges between vertices in \(V(G) \setminus V(G_t)\).
		Hence, it merely remains to compare how many edges which have at least one endpoint in \(V(G_t) \setminus N^3[\chi(t)]\) can contribute to \(\mimw_\lambda(\decT^{**},x)\) and how many such edges can contribute to \(\mimw_\lambda(\decT,q)\).
		More specifically, it remains to consider edges which have exactly one endpoint in some \(\decT_v\) after \(x\), since all other edges with at least one endpoint in \(V(G_t) \setminus N^3[\chi(t)]\) are either both separated in \(\decT^{**}\) at edge \(x\) and in \(\decT\) at position \(q\), or both not separated in \(\decT^{**}\) at edge \(x\) and in \(\decT\) at position \(q\).
		Moreover, any edge with an endpoint in some \(\decT_v\) before or in \(\decT_{v_i}\) has both endpoints in \(G_t\) by assumption~\ref{typical:cond3}, and the properties of tree decompositions and the edges we are concerned with are counted as \(\mimw_\lambda(\decT', q) - \ell\) for the cut in \(\decT\) at \(q\) and as \(\mimw_\lambda(\decT', x) - \ell\) for the cut in \(\decT^{**}\) at \(x\), where \(\ell\) denotes the number of edges which are separated by both cuts.
		By assumption~\ref{typical:cond2}, it holds that \(\mimw_\lambda(\decT', x) \leq \mimw_\lambda(\decT', q)\), all together showing that \(\mimw(\decT^{**}, x) = \mimw_\lambda(\decT^{**}, x) \leq \mimw_\lambda(\decT,q) \leq \mimw(\decT)\).
	\end{proof}
	\fi
	
	In the later application of Lemma~\ref{lem:typical}, \(P'(p,q)\) takes the role of a path that is contracted in for obtaining any of the typical sequences in a record, and iteratively applying this lemma ensures the ``safeness'' (w.r.t.\ \(\cF^*\)-branchwidth and the properties of blocks) of using them.
	
	\ifshort
	With the machinery now in place, we are able to describe the dynamic programming procedure which we use to prove Theorem~\ref{thm:tw_deg}. (\(\star\), \emph{Sections~4.3 and~4.4}).
	\fi
	
	\iflong
	\subsection{Dynamic Programming Procedure}
	\label{sec:dp}
	With the machinery from the previous two subsections in place we are able to describe the dynamic programming procedure which we use to prove Theorem~\ref{thm:tw_deg}.
	We show correctness of this procedure in the following subsection (Section~\ref{sec:correct}).
	
	We traverse \(T\) in leaves-to-root order and compute a set \(\mathcal{R}(t)\) of records together with corresponding \emph{canonical} branch decompositions \(\decT\) of \(G_t\) for each node \(t \in V(T)\).
	Intuitively the information in the records allows us to iteratively extend the associated branch decompositions while bounding \(\iantimatchwidth\) and \(\ichainwidth\) by \(\alpha^{\iantimatch}\) and \(\alpha^{\ichain}\) respectively.
	For \imatchwidth we face a slight caveat.
	During the dynamic programming we are not able to keep track of the actual \imatchwidth without keeping a significant amount of information about forgotten vertices, which we cannot allow ourselves without exceeding any bound in our parameters.
	Instead we store a bound \(\alpha^{\imatch}\) on the \imatchwidth of the constructed branch decompositions \emph{assuming that in the records by which a certain record is reached each \(\sigma_{e,i}\) corresponds to a \(\lambda_i \in \Lambda_e\)-block which is consistent with the choices made for blocks at lower levels of \(T\)}.
	Checking whether this assumption is actually consistent with the construction in the dynamic programming procedure is the reason we include the canonical branch decompositions, some edges of which have `\emph{pointers}' to elements of sequences in \(\sigma\) for the respective record.
	
	Keeping this in mind, we first describe the four procedures by which we compute \(\mathcal{R}(t)\) depending on the type of \(t\) assuming \(\mathcal{F}\) contains all of \(\mathcal{F}_{\imatch}\), \(\mathcal{F}_{\iantimatch}\) and \(\mathcal{F}_{\ichain}\).
	It can easily be seen that whenever \(\mathcal{F}_{\imatch}\) is not contained in \(\mathcal{F}\), the entries \(\flat\), \(\Lambda\) and \(\alpha^{\imatch}\) as well as the canonical branch decompositions for records can be safely omitted in each step, as the computation of the other entries do not depend on them.
	Analogously whenever \(\mathcal{F}_{\iantimatch}\) is not contained in \(\mathcal{F}\), \(\alpha^{\iantimatch}\) can safely be omitted from all computation steps, and whenever \(\mathcal{F}_{\ichain}\) is not contained in \(\mathcal{F}\), \(\alpha^{\ichain}\) can safely be omitted from all computation steps.
	All of the following constructions are easily seen to be executable in \FPT\ time parameterized by \(\tw(G)\) and \(\Delta(G)\) using
	Facts~\ref{fact:typ-bounds}, \ref{fact:typ-comp} and \ref{fact:typ-compint}.
	\paragraph*{Leaf nodes.}
		Consider a leaf node \(t \in V(T)\), and let \(\chi(t) = \{v\}\).
		We define \(\mathcal{R}(t)\) to consist of the following pairs of records and canonical branch decompositions of \(G_t\).\\
		For each binary tree \(\decT = D\) with leaves \(N^3[v]\), i.e.\ each branch decomposition of \(G_t\), each set \(\flat = (\flat_e)_{e \in E(D)}\) indexed by the edges of \(D\), where each \(\flat_e\) is a sequence of subsets of \(N^{3\tw(G)}(N^3[\chi(t)])\), such that \(\bigcup_{e \in E(D)} \flat_e\) is a partition of \(N^{3\tw(G)}(N^3[\chi(t)])\),
		and each assignment of pairs \((e,i)\) with \(e \in E(D)\) and \(i \in [|\flat_e| + 1]\) to a subset \(\lambda_{e,i}\) of \(N^3[v]\), we construct a record \((D,\flat,(\Lambda_e = (\lambda_{e,1}, \dotsc, \lambda_{|\flat_e| + 1}))_{e \in E(D)}, \sigma,\alpha^{\imatch},\alpha^{\iantimatch},\alpha^{\ichain})\).
		Because the size of \(N^3[v]\) is bounded in \(\tw(G)\) and \(\Delta(G)\), \(\alpha^{\iantimatch} = \iantimatchwidth(\decT)\) and \(\alpha^{\ichain} = \ichainwidth(\decT)\) can be computed by brute force.
		The same is true for \(\sigma = (\sigma_{e,i})_{e \in E(D), i \in [|\flat_e| + 1]}\) for which we set \(\sigma_{e,i} = (\mimw_{\lambda_i}(\decT,e))\).
		Finally we set \(\alpha^{\imatch}\) to the maximum entry in any of the \(\sigma_{e,i}\).\\
		If any of entry of a sequence in \(\sigma\) or any of \(\alpha^{\imatch},\alpha^{\iantimatch},\alpha^{\ichain}\) exceeds \(\tw(G) + 1\) we discard the record and branch decomposition currently being constructed and move on to the next choice for a combination of \(D\), \(\flat\) and \(\Lambda\).
		
	\paragraph*{Introduce nodes.}
		Consider an introduce node \(t \in V(T)\) with child \(t' \in V(T)\) for which \(\mathcal{R}(t')\) was already computed, and let \(\chi(t) \setminus \chi(t') = \{v\}\).
		We define \(\mathcal{R}(t)\) to consist of the following pairs of records and branch decompositions of \(G_t\).
		For each record \((D', \flat', \Lambda', \sigma', \alpha^{{\imatch}'}, \alpha^{{\iantimatch}'}, \alpha^{{\ichain}'})\) in \(\mathcal{R}(t')\) with canonical branch decomposition \(\decT'\) of \(G_{t'}\), we proceed as follows:\\		
		We branch on \(\ell \leq |N^3[\chi(t)] \setminus N^3[\chi(t')]|\) binary trees \(\decT_1, \dotsc, \decT_n\) the union of whose leaves is exactly \(N^3[\chi(t)] \setminus N^3[\chi(t')]\).
		For each \(i \in [\ell]\) branch on an entry \(s_i\) of a sequence in \(\sigma'\) at the subdivision of whose pointing edge of \(\decT'\), \(T_i\) should be attached.
		For \(I \subseteq [\ell]\) such that \(\forall i, j \in I, \ s_i = s_j\) we additionally branch on a permutation \(\pi\) of \(I\).
		Then let \(D\) arise from \(D'\) by attaching each \(\decT_i\) at a subdivision of the edge \(e \in E(D')\) such that there is some \(j\) such that \(s_i\) is an entry of \(\sigma_{e,j}\).
		For \(I \subseteq [\ell]\) such that \(\forall i, j \in I, \ s_i = s_j\) we attach the \(\decT_i\) with \(i \in I\) in root to leaf order according to \(\pi\).
		We can concurrently perform the corresponding insertions on \(\decT'\) to obtain \(\decT\), where an edge of \(\decT\).\\
		Now branch on a set \(\flat = \{\flat_e\}_{e \in E(D)}\) indexed by the edges of \(D\), where each \(\flat_e\) is a sequence of subsets of \(N^{3\tw(G)}(N^3[\chi(t)])\), such that \(\bigcup_{e \in E(D)} \flat_e\) is a partition of \(N^{3\tw(G)}(N^3[\chi(t)])\).
		Before proceeding, we check whether the information branched on (specifically \(D\) and \(\flat\)) up till now is consistent with \(\flat'\).
		Now for each edge \(e \in E(\decT_i)\) and \(j \in [|\flat_e| + 1]\) for one of the attached \(\decT_i\), just as when considering leaf nodes, we branch on a subset \(\lambda_{e,i}\) of \(N^3[\chi(t)]\).
		Each other edge \(e\) of \(D\), is also an edge \(e' = e\) of \(D'\), or a subdivision of an edge \(e'\) of \(D'\).
		We distinguish these two cases when describing how to construct \(\lambda_{e,i}\).\\
		\begin{ourcase}[\(e\) is also an edge \(e'\) of \(D'\)]
			For each \(i \in [|\flat'_{e'}| + 1]\) we branch on a partition of \(\sigma'_{e',i}\) into subsequences \(\rho^i_1, \dotsc, \rho^i_{r_i}\) such that \(\sum_{i \in [|\flat'_{e'}| + 1]} r_i = |\flat_e|\).
			For each subsequence \(\rho^i_j\) we branch on a set of pairwise independent edges \(A^i_j\) of \(G[N^3[\chi(t)]]\) each of which has at least one endpoint in \(N^3[\chi(t)] \setminus N^3[\chi(t')]\).
			Then we set \(\lambda_{e,1}, \dotsc, \lambda_{e,r_1}, \dotsc, \lambda_{e,|\flat_e| + 1}\) to be \(\lambda'_{e',1} \cup V(A^1_1), \dotsc, \lambda'_{e',1} \cup V(A^1_{r_1}), \dotsc, \lambda'_{e,|\flat'_{e'}| + 1} \cup V(A^{|\flat'_{e'}| + 1}_{r_{|\flat'_{e'}| + 1}})\) respectively,
			and \(\sigma_{e,i}\) for \(i \in [|\flat_e| + 1]\) to be \((\bot)\) if not all edges in \(A^{i'}_j\) are in the \(G\)-cut induced by \(e\) in \(D\), and to be the sequence \((z + |A^{i'}_j|)_{z \in \rho^{i'}_j}\) otherwise,
			where \(i'\) and \(j\) are such that \(\lambda_{e,i} = \lambda'_{e',i'} \cup V(A^{i'}_j)\).
		\end{ourcase}
		\begin{ourcase}[\(e\) is a subdivision of an edge \(e'\) of \(D'\)]
			We use similar constructions as in Case 1 with the main difference being that for subdivisions of \(e' \in E(D)\) arising from attaching \(\decT_i\) at \(e'\) we subdivide all \(\sigma'_{e',j}\) at the entries which correspond to any insertion point \(s_i\).
			By this we mean that we duplicate each entry \(s_i\) in any \(\sigma'_{e',j}\) (note that not all \(\sigma'_{e',j}\) contain such insertion point; we only modify those that do), and partition \(\sigma'_{e',j}\) into subsequences at the points between duplicates.
			Each of the arising subsequences can be associated to a unique subdivision of \(e'\) which is an edge \(e\) of \(D\) in a natural way.
			We artificially set \(\flat'_e = \emptyset\), and \(\lambda'_{e,1}\) to be \((\lambda_{e',j})\) and \(\sigma'_{e,1}\) to be the subsequence associated to \(e\).
			With this setup we proceed as in Case 1.\\
		\end{ourcase}
		Finally we set \(\alpha^{\imatch}\) to be the maximum over all entries of the constructed sequences \(\sigma_{e,i}\),
		\(\alpha^{\iantimatch} = \max\{\alpha^{{\iantimatch}'}, \iantimatchwidth(D)\}\),
		and \(\alpha^{\ichain} = \max\{\alpha^{{\ichain}'}, \ichainwidth(D)\}\) where we note that \(\iantimatchwidth\) and \(\ichainwidth\) of the branch decomposition \(D\) of \(G[N^3[\chi(t)]]\) can be computed by brute force.\\
		If any of entry of a sequence in \(\sigma\) or any of \(\alpha^{\imatch},\alpha^{\iantimatch},\alpha^{\ichain}\) exceeds \(\tw(G) + 1\) we discard the record and branch decomposition currently being constructed and move on to the next branch.
		
		\paragraph*{Forget nodes.}
		Consider a forget node \(t \in V(T)\) with child \(t' \in V(T)\) for which \(\mathcal{R}(t')\) was already computed, and let \(\chi(t') \setminus \chi(t) = \{v\}\).
		We define \(\mathcal{R}(t)\) to consist of the following pairs of records and canonical branch decompositions of \(G_t\).
		For each record \((D', \flat', \Lambda', \sigma', \alpha^{{\imatch}'}, \alpha^{{\iantimatch}'}, \alpha^{{\ichain}'})\) in \(\mathcal{R}(t')\) with canonical branch decomposition \(\decT'\) of \(G_{t'}\), we proceed as follows:\\
		We let \(\decT = \decT'\), \(D = \lcac{D'}{N^3[\chi(t)]}\), \(\alpha^{\imatch} = \alpha^{{\imatch}'}\), \(\alpha^{\iantimatch} = \alpha^{{\iantimatch}'}\) and \(\alpha^{\ichain} = \alpha^{{\ichain}'}\).\\
		Moreover, because \(D\) is a restricted tree of \(D'\) with respect to \(N^3[\chi(t)]\) each \(e \in E(D)\) corresponds to a path \(P_e\) in \(D'\).
		To construct \(\flat\) consider \(\tilde{\flat_e} = (\flat'_{e'})_{e' \in E(P_e)}\).
		First we remove all vertices in \(N^{3\tw}(N^3[\chi(t')]) \setminus N^{3\tw}(N^3[\chi(t)])\) from the sets in \(\tilde{\flat_e}\), i.e.\ we consider \(\bar{\flat_e} = (z \cap N^{3\tw(G)}(N^3[\chi(t)]))_{z \in \tilde{\flat_e}}\).
		To finally obtain \(\flat_e\) from \(\bar{\flat_e}\), we remove occurrences of the empty set.\\
		Similarly to obtain \(\lambda_{e,i}\) for each \(e \in E(D)\) and \(i \in [|\flat_e| + 1]\), consider the sequence \(\tilde{\lambda_e} = (\lambda'_{e',1}, \dotsc, \lambda'{e',|\flat'_{e'}| + 1})_{e' \in E(P_e)}\).
		First we remove \(N^3[\chi(t')] \setminus N^3[\chi(t)]\) from each entry in \(\tilde{\lambda_e}\), i.e.\ we consider \(\bar{\lambda_e} = (z \cap N^3[\chi(t)])_{z \in \tilde{\lambda_e}}\).
		Then we obtain \(\lambda_{e}\) from \(\bar{\lambda_e}\) by replacing multiple consecutive repetitions of the same vertex set by a single entry of that vertex set.
		If \(|\lambda_{e}| \neq |\flat_e| + 1\) we discard the record and canonical branch decomposition currently being constructed and move on to the next branch.
		Otherwise we set \(\lambda_{e,i}\) to be the \(i\)-th entry of \(\lambda_{e}\).\\
		For every \(e \in E(D)\) and \(i \in [|\flat_e| + 1]\) consider the concatenation \(\tilde{\sigma}\) of all \(\sigma'_{e',j}\) for which \(\lambda'_j\) were replaced by \(\lambda_i\) when obtaining \(\Lambda_e\) from \(\bar{\Lambda}\) (possibly \(\lambda_i\) corresponds to a single \(\lambda'_j\)).
		To obtain \(\sigma_{e,i}\) from \(\tilde{\sigma}\), we then apply the typical sequence operator to \(\tilde{\sigma}\).
		
		\paragraph*{Join nodes.}
		Consider a join node \(t \in V(T)\) with children \(t_1,t_2 \in V(T)\) for which \(\mathcal{R}(t_1)\) and \(\mathcal{R}(t_2)\) were already computed.
		We define \(\mathcal{R}(t)\) to consist of the following pairs of records and branch decompositions of \(G_t\).
		For each record \((D^1, \flat^1, \Lambda^1, \sigma^1, \alpha^{{\imatch}^1}, \alpha^{{\iantimatch}^1}, \alpha^{{\ichain}^1})\) in \(\mathcal{R}(t^1)\) with corresponding branch decomposition \(\decT^1\) of \(G_{t_1}\), and \((D^2, \flat^2, \Lambda^2, \sigma^2, \alpha^{{\imatch}^2}, \alpha^{{\iantimatch}^2}, \alpha^{{\ichain}^2})\) in \(\mathcal{R}(t^2)\) with corresponding branch decomposition \(\decT^2\) of \(G_{t_2}\), where \(D^1 = D^2\), \(\flat^1 = \flat^2\) and \(\Lambda^1 = \Lambda^2\) we proceed as follows:\\
		We let \(D = D^1 = D^2\), \(\flat = \flat^1 = \flat^2\) and \(\Lambda = \Lambda^1 = \Lambda^2\), \(\alpha^{\iantimatch} = \max\{\alpha^{{\iantimatch}^1}, \alpha^{{\iantimatch}^2}\}\) and \(\alpha^{\ichain} = \max\{\alpha^{{\ichain}^1}, \alpha^{{\ichain}^2}\}\).
 		For each choice of an interleaving \({\tilde{\sigma}}_{e,i} \in \sigma^1_{e,i} \oplus \sigma^1_{e,i}\), we define \(\sigma_{e,i} = \tilde{\sigma}_{e,i} - \frac{|\lambda_{e,i}|}{2}\).
		With this combination of \(D, \flat, \Lambda, \sigma, \alpha^{\iantimatch}\) and \(\alpha^{\ichain}\) we add \((D, \flat, \Lambda, \sigma, \alpha^{\imatch}, \alpha^{\iantimatch}, \alpha^{\ichain})\) to \(\mathcal{R}(t)\) together with the branch decomposition \(\decT\) of \(G_t\) corresponding to the chosen interleaving, where
		\(\alpha^{\imatch}\) is set to the maximum over all entries of the constructed sequences \(\sigma_{e,i}\), \(\alpha^{{\imatch}^1}\) and \(\alpha^{{\imatch}^2}\).\\
		If any of entry of a sequence in \(\sigma\) or any of \(\alpha^{\imatch},\alpha^{\iantimatch},\alpha^{\ichain}\) exceeds \(\tw(G) + 1\) we discard the record and branch decomposition currently being constructed and move on to the next branch.
		
	To complete the description of the algorithm we detail the final step of the dynamic programming procedure which uses \(\mathcal{R}(r)\), where \(r\) is the root of \(T\), to output \(\Fsbw(G)\).
	In particular we deal with the implicit assumption underlying the dynamic program, that that in the records by which a certain record is reached each \(\sigma_{e,i}\) corresponds to a \(\lambda_i \in \Lambda_e\)-block which is consistent with the choices made for blocks at lower levels of \(T\).
	We do so by explicitly verifying the expected \(\Fsbw\) bound for the computed branch decompositions in \(\mathcal{R}(r)\).
	While it might be the case that during the dynamic program the implicit assumption is actually violated, this final step guarantees that this violation does not lead to a wrong output.
	
	\paragraph*{At the root of \(T\) -- the final step.}
	Consider the root \(r\) of \(T\), and \(\mathcal{R}(r)\) computed according to the procedures described above.
	Go through the elements of \(\mathcal{R}(r)\) in ascending order of \(\max\{\alpha^{\imatch},\alpha^{\iantimatch},\alpha^{\ichain}\}\) for the respective records.
	For each element, consider its branch decomposition and compute its \imatchwidth.
	If it is equal to \(\alpha^{\imatch}\) then we output \(\max\{\alpha^{\imatch},\alpha^{\iantimatch},\alpha^{\ichain}\}\) and the associated branch decomposition witnesses that \(\Fsbw(G) = \max\{\alpha^{\imatch},\alpha^{\iantimatch},\alpha^{\ichain}\}\).
	\subsection{Correctness}
	\label{sec:correct}
	
	In this section we argue correctness of the algorithm given in the previous subsection.
	
	First we show that the claimed \(\FFF^*\)-branchwidth given as the output of the algorithm actually can be achieved by a branch decomposition of \(G\), in particular this is true for the canonical branch decomposition \(\decT\) associated to the record \((D, \flat, \Lambda, \sigma, \alpha^{\imatch}, \alpha^{\iantimatch}, \alpha^{\ichain})\) that leads to the output.
	Because we ensure that \(\mimw(\decT) = \alpha^{\imatch}\), we obtain immediately that \(\mimw(\decT) \leq \max\{\alpha^{\imatch},\alpha^{\iantimatch},\alpha^{\ichain}\}\).
	Assume for contradiction that \(\Fsbw(\decT) > \max\{\alpha^{\imatch},\alpha^{\iantimatch},\alpha^{\ichain}\}\).
	Then \(\iantimatchwidth(\decT) > \max\{\alpha^{\imatch},\alpha^{\iantimatch},\alpha^{\ichain}\}\) or \(\ichainwidth(\decT) > \max\{\alpha^{\imatch},\alpha^{\iantimatch},\alpha^{\ichain}\}\).
	Let \(v \in V(G)\) be in a subgraph \(H\) of \(G\) which witnesses \(\iantimatchwidth(\decT) > \max\{\alpha^{\imatch},\alpha^{\iantimatch},\alpha^{\ichain}\}\) or \(\ichainwidth(\decT) > \max\{\alpha^{\imatch},\alpha^{\iantimatch},\alpha^{\ichain}\}\).
	Then because of the small diameter of antimatchings and chain graphs we have that \(V(H) \subseteq N^3[v]\).
	Consider \(t \in V(T)\) in which \(v\) is introduced, or the leaf node with \(\chi(t) = \{v\}\).
	By construction there must be a record \((D', \flat', \Lambda', \sigma', \alpha^{{\imatch}'}, \alpha^{{\iantimatch}'}, \alpha^{{\ichain}'})\) together with an associated branch decomposition \(\decT'\) such that \(\lcac{\decT}{V(G_t)} = \decT'\), and in particular \(\lcac{\decT}{N^3[v]}\) is a subtree of \(D'\).
	This means that \(\alpha^{{\iantimatch}'} \geq \iantimatchwidth(D') > \alpha^{\iantimatch}\) or \(\alpha^{{\ichain}'} \geq \ichainwidth(D') > \alpha^{\ichain}\).
	However, the iterative construction ensures that the entries for \(\alpha^{{\iantimatch}''}\) and  \(\alpha^{{\ichain}''}\) can not decrease for records \((D'', \Lambda'', \sigma'', \alpha^{{\imatch}''}, \alpha^{{\iantimatch}''}, \alpha^{{\ichain}''})\) encountered when constructing \((D, \flat, \Lambda, \sigma, \alpha^{\imatch}, \alpha^{\iantimatch}, \alpha^{\ichain})\) from \((D', \Lambda', \sigma', \alpha^{{\imatch}'}, \alpha^{{\iantimatch}'}, \alpha^{{\ichain}'})\).
	Thus \(\alpha^{\iantimatch} \geq \alpha^{{\iantimatch}'}\) and \(\alpha^{\ichain} \geq \alpha^{{\ichain}'}\), yielding a contradiction.
	
	Conversely, now we show that the \(\FFF^*\)-branchwidth returned by the algorithm is at most as large as the \(\FFF^*\)-branchwidth of any branch decomposition of \(G\), and in particular this implies that the algorithm always returns a solution (which is not immediately clear from the description of the root step).
	Assume for contradiction that there is a node \(t \in V(T)\) such that there is no record \((D, \flat, \Lambda, \sigma, \alpha^{\imatch}, \alpha^{\iantimatch}, \alpha^{\ichain})\) with canonical branch decomposition \(\decT\) of \(G_t\) in \(\mathcal{R}(t)\) such that the following two conditions hold.
	\begin{itemize}
		\item \(\decT\) can be extended to a branch decomposition \(\decT^*\) of \(G\) realizing the \(\FFF^*\)-branchwidth~of~\(G\).
		\item Every path in \(\decT^*\) that corresponds to an edge \(e \in E(D)\) consists of a sequence of blocks \(P_1, \dotsc, P_{|\flat_e| + 1}\), specifically \(P_i\) is a \(\lambda_{e,i}\)-block.
		Further each typical sequence of \((\mimw_{\lambda_{e,i}}(\decT,e))_{e \in E(P'_i)}\) where \(P'_i\) is the path in \(\decT\) which \(P_i\) corresponds to is equal~to~\(\sigma_{e,i}\).
	\end{itemize}
	Let \(t\) be such a node such that for all nodes \(t' \in V(T_t)\) there is a record \((D', \Lambda', \sigma', \alpha^{{\imatch}'}, \alpha^{{\iantimatch}'}, \alpha^{{\ichain}'})\) with corresponding branch decomposition \(\decT'\) of \(G_{t'}\) in \(\mathcal{R}(t')\) such that the following two conditions hold:
	\begin{itemize}
		\item \(\decT'\) can be extended to the same branch decomposition \(\decT^*\) of \(G\) realizing the \(\FFF^*\)-branchwidth of \(G\).
		\item Every path in \(\decT^*\) that corresponds to an edge \(e \in E(D')\) consists of a sequence of blocks \(P_1, \dotsc, P_{|\flat'_e| + 1}\), specifically \(P_i\) is a \(\lambda'_{e,i}\)-block.
		Further each typical sequence of \((\mimw_{\lambda'_{e,i}}(\decT',e))_{e \in E(P'_i)}\) where \(P'_i\) is the path in \(\decT'\) which \(P_i\) corresponds to is equal~to~\(\sigma'_{e,i}\).
	\end{itemize}
	We distinguish the type of \(t\) and arrive at a contradiction in every case.
	\paragraph*{Leaf nodes.}
	Assume that \(t\) is a leaf node and let \(\chi(t) = \{v\}\).
	Let \(\decT^*\) be a branch decomposition that realizes optimum \(\FFF^*\)-branchwidth.
	Then there is a record constructed from \(\lcac{\decT^*}{N^3[v]}\) (then \(\decT = \lcac{\decT^*}{N^3[v]}\) and not discarded as \(\Fsbw(\lcac{\decT^*}{N^3[v]}) \leq \Fsbw(\decT^*) = \Fsbw(G) \leq \tw(G) + 1\) by Corollary~\ref{cor:FisTWBounded}.
	Moreover it is extensively branched for each edge of \(D\) on the sequence of its blocks.
	As each block \(P_i\) corresponds to a single edge \(e\) of \(\decT\), the typical sequence of \((\mimw_{\lambda_i}(\decT,e))_{e \in E(P'_i)}\) where \(P'_i\) is the path in \(\decT\) which \(P_i\) corresponds to, is simply \((\mimw_{\lambda_{e,i}}(\decT,e))\).
	By the construction for leaf nodes this is equal to \(\sigma_{e,i}\).\\
	All together this is a contradiction to the choice of \(t\).
	
	\paragraph*{Introduce nodes.}
	Assume that \(t\) is an introduce node with child \(t'\) and let \(\chi(t) \setminus \chi(t') = \{v\}\).
	By choice of \(t\) there is a record \((D', \flat', \Lambda', \sigma', \alpha^{{\imatch}'}, \alpha^{{\iantimatch}'}, \alpha^{{\ichain}'})\) with canonical branch decomposition \(\decT'\) of \(G_{t'}\) in \(\mathcal{R}(t')\) such that the following two conditions hold:
	\begin{itemize}
		\item \(\decT'\) can be extended to a branch decomposition \(\decT^*\) of \(G\) realizing the \(\FFF^*\)-branchwidth of \(G\).
		\item Every path in \(\decT^*\) that corresponds to an edge \(e \in E(D')\) consists of a sequence of blocks \(P'_1, \dotsc, P'_{|\flat'_e| + 1}\), specifically \(P'_i\) is a \(\lambda'_{e,i}\)-block.
		Further each typical sequence of \((\mimw_{\lambda'_{e,i}}(\decT',e))_{e \in E(P'_i)}\) where \(P'_i\) is the path in \(\decT'\) which \(P'_i\) corresponds to is equal~to~\(\sigma'_{e,i}\).
	\end{itemize}
	Consider instead of \(\decT^*\) the branch decomposition \(\decT^{**}\) that arises by iteratively applying the construction described in the statement of Lemma~\ref{lem:typical} to a subsequence of \[(\mimw_{\lambda'_{e,i}}(\lcac{\decT^*}{N^3[V(G_{t'})]},e))_{e \in E(P'_i)}\] which is contracted to obtain the typical sequence of \((\mimw_{\lambda'_{e,i}}(\lcac{\decT^*}{N^3[V(G_{t'})]},e))_{e \in E(P'_i)}\) for each \(e \in E(D')\), \(i \in [|\flat'_e| + 1]\).
	Then, by the construction described in the statement of Lemma~\ref{lem:typical}, all vertices in \(V(G) \setminus N^3[V(G_{t'})]\) are contained in subtrees of \(\decT^{**}\) which are attached at subdivisions of edges of \(\lcac{\decT^*}{V(G_{t'})} = \lcac{\decT^{**}}{V(G_{t'})}\) with pointers to the typical sequences of one of \((\mimw_{\lambda'_{e,i}}(\lcac{\decT^*}{N^3[V(G_{t'})]},e))_{e \in E(P'_i)}\).
	In particular this is true for vertices in \(N^3[\chi(t)] \setminus N^3[\chi(t')]\) and vertices in \(N^{3\tw(G)}(N^3[\chi(t)]) \setminus N^{3\tw(G)}(N^3[\chi(t')])\).
	Moreover the construction maintains the fact that \(\lcac{\decT^{**}}{N^3[\chi(t')]} = \lcac{\decT^*}{N^3[\chi(t')]} = D'\), and by the definition of \(\lambda\)-blocks that every path in \(\decT^{**}\) that corresponds to an edge \(e \in E(D')\) consists of a sequence of blocks \(P'_1, \dotsc, P'_{|\flat_e| + 1}\), where \(P'_i\) is a \(\lambda'_{e,i}\)-block in \(\decT^{**}\).\\
	Because \(\decT^{**}\) is an extension of \(\lcac{\decT}{V(G_{t'})}\) which is in turn an extension of \(D'\), in the procedure for introduce nodes at some point \(D = \lcac{\decT^{**}}{N^3[\chi(t)]}\) is considered.\\
	Similarly a set of blocks for node \(t'\) can be refined to blocks for \(t\) in accordance with insertion points for all vertices in \(N^{3\tw(G)}(N^3[\chi(t)])\).
	By construction the dynamic program enumerates all refinements of blocks for \(t'\) to blocks for \(t\) that are consistent with any such combinations of insertion points at subdivisions of edges of \(\lcac{\decT^*}{V(G_{t'})} = \lcac{\decT^{**}}{V(G_{t'})}\) with pointers to the typical sequences of one of \((\mimw_{\lambda'_{e,i}}(\lcac{\decT^*}{N^3[V(G_{t'})]},e))_{e \in E(P'_i)}\).
	Hence at some point, additionally to \(D\), \(\flat\), \(\Lambda\), and \(\decT\) are considered such that \(\decT\) can be extended to \(\decT^{**}\) and every path in \(\decT^{**}\) that corresponds to an edge \(e \in E(D)\) consists of a sequence of blocks \(P_1, \dotsc, P_{|\flat_e| + 1}\) for which each \(P_i\) is a \(\lambda_{e,i}\)-block.\\
	Finally, in each of these blocks \(P_i\), the  typical sequence of \((\mimw_{\lambda_{i}}(\decT,e))_{e \in E(P_i)}\) is easily seen to be given by the typical sequence of \((\mimw_{\lambda'_{j}}(\decT',e') + \frac{|\lambda_{i} \setminus \lambda'_{j}|}{2})_{e' \in E(P'_j)}\) where \(P_i\) corresponds to a subpath of the block \(P'_j\) in \(\decT^{**}\) at node \(t'\) (unless \(\lambda_{e,i}\) is not separated at the cuts in the block in which case the typical sequence is simply \((\bot)\)).
	This is because by construction \(\lambda_{i} \setminus \lambda'_{j}\) consists only of endpoints of pairwise independent edges \(A\) that each have at least one endpoint in \(N^3[\chi(t)] \setminus N^3[\chi(t')]\) and are all independent of \(\lambda'_{j}\).
	By the properties of tree decompositions all such edges are also independent of edges of \(G\) which have at least one endpoint in \(V(G_{t'}) \setminus N^3[\chi(t')]\) making all edges of any matching considered to achieve \(\mimw_{\lambda'_{j}}(\decT',e')\) independent of \(A\).\\
	All together we have shown that at some point in the branching procedure we consider a record at node \(t\) corresponding to \(\decT^{**}\) in the desired way.
	This record is not discarded, as none of its entries surpass \(\tw(G) + 1\), because \(\Fsbw(\decT^{**})\leq \tw(G)\), contradicting the choice of \(t\).
	
	\paragraph*{Forget nodes.}
	Assume that \(t\) is a forget node with child \(t'\) and let \(\chi(t') \setminus \chi(t) = \{v\}\).
	By choice of \(t\) there is a record \((D', \flat', \Lambda', \sigma', \alpha^{{\imatch}'}, \alpha^{{\iantimatch}'}, \alpha^{{\ichain}'})\) with canonical branch decomposition \(\decT'\) of \(G_{t'}\) in \(\mathcal{R}(t')\) such that the following two conditions hold:
	\begin{itemize}
		\item \(\decT'\) can be extended to a branch decomposition \(\decT^*\) of \(G\) realizing the \(\FFF^*\)-branchwidth of \(G\).
		\item Every path in \(\decT^*\) that corresponds to an edge \(e \in E(D')\) consists of a sequence of blocks \(P'_1, \dotsc, P'_{|\flat'_e| + 1}\), specifically \(P'_i\) is a \(\lambda'_{e,i}\)-block.
		Further each typical sequence of \((\mimw_{\lambda'_{e,i}}(\decT',e))_{e \in E(P'_i)}\) where \(P'_i\) is the path in \(\decT'\) which \(P'_i\) corresponds to is equal to \(\sigma'_{e,i}\).
	\end{itemize}
	Then at some point we consider \(\decT = \decT'\) which can trivially be extended to \(\decT^*\) and set \(D = \lcac{D'}{N^3[\chi(t)]} = \lcac{\decT^*}{N^3[\chi(t)]}\).
	Moreover a \(\lambda\)-block \(P_i\) of a path \(P_e\) corresponding to \(e \in E(D)\) consists of a concatenation of \(\lambda'_{1}, \dotsc, \lambda'_{\ell}\)-blocks \(P'_1, \dotsc, P'_\ell\) of paths in \(\decT^*\) corresponding to edges \(e'_1, \dotsc, e'_r \in E(P')\) where \(P'\) is the path in \(D'\) corresponding to \(e\), and \(\lambda'_{1} \cap N^3[\chi(t)] = \dotsc, \lambda'_{\ell} \cap N^3[\chi(t)] = \lambda\).
	These blocks are consistent with the construction of \(\flat\) and \(\Lambda\) from \((D', \flat', \Lambda', \sigma', \alpha^{{\imatch}'}, \alpha^{{\iantimatch}'}, \alpha^{{\ichain}'})\) in the dynamic programming procedure for forget nodes.
	That is every path in \(\decT^*\) that corresponds to an edge \(e \in E(D)\) consists of a sequence of blocks \(P_1, \dotsc, P_{|\flat_e| + 1}\) with each \(P_i\) being a \(\lambda_{e,i}\)-block.\\
	Finally, in each of these \(\lambda\)-blocks \(P_i\), the  typical sequence of \((\mimw_{\lambda}(\decT,f))_{f \in E(P_i)}\) is easily seen to be given by the typical sequence of the concatenation of the typical sequences of \((\mimw_{\lambda'_{j}}(\decT',f))_{f \in E(P'_j)}\) where \(j \in [\ell]\), and \(\ell\), \(P'_1, \dotsc, P'_\ell\) and \(\lambda'_1, \dotsc, \lambda'_\ell\) are as above.
	To see this one can argue that for \(f \in E(P'_j) \subseteq E(P_i)\), \(\mimw_{\lambda}(\decT,f) = \mimw_{\lambda'_{j}}(\decT',f)\):\\
	Recall that \(\decT = \decT'\), \(P'_j\) is a subpath of a \(\lambda\)-block of \(P_e\) and at the same time a \(\lambda'_j\)-block of the path in \(\decT^*\), and \(\lambda = \lambda'_j \cap N^3[\chi(t)]\).
	We immediately get \(\mimw_{\lambda}(\decT,f) \geq \mimw_{\lambda'_{j}}(\decT',f)\).
	Assume for contradiction that there is some (possibly empty) \(\lambda'' \neq \lambda'_{j}\) such that \(\lambda'' \cap N^3[\chi(t)] = \lambda\) and \(\mimw_{\lambda''}(\decT,f) \geq \mimw_{\lambda'_{j}}(\decT,f)\).
	Let \(M''\) be a matching in the bipartite subgraph of \(G_t = G_{t'}\) induced by \(f\) in \(\decT = \decT'\) that witnesses this.
	At the same time, because \(P'_j\) is a \(\lambda'_j\)-block there is some matching \(M\) in the bipartite subgraph of \(G\) induced by \(f\) in \(\decT^*\) that witnesses that a matching whose edges in \(E(G[N^3[\chi(t')]]))\) are incident to exactly \(\lambda'_j\) achieves \(\mathcal{F}_{\imatch}\)-branchwidth at least as large as any matching whose edges in \(E(G[N^3[\chi(t')]]))\) are incident to exactly \(\lambda''\) at the edge \(f\) of \(\decT^*\).
	By the properties of tree decompositions, and the fact that the behavior of \(M''\) and \(M\) on \(E(G[N^3[\chi(t)]])\) is determined by \(\lambda\), we can construct the induced matching with edges \(\tilde{M} = E(M'') \cup (E(M) \setminus E(G_t))\) with cardinality greater than \(|E(M)|\), and \(E(\tilde{M}) \cap E(G[N^3[\chi(t)]]) = \lambda'' \neq \lambda'_j\).
	This is a contradiction to the fact that \(P'_j\) is a \(\lambda'_j\)-block.\\
	All together we have shown that at some point in the branching procedure we consider a record at node \(t\) corresponding to \(\decT^{*}\) in the desired way, contradicting the choice of \(t\).
	
	\paragraph*{Join nodes.}
	Assume that \(t\) is a join node with children \(t_1\) and \(t_2\).
	By choice of \(t\) for \(j \in \{1,2\}\) there is a record \((D^j, \flat^j, \Lambda^j, \sigma^j, \alpha^{{\imatch}^j}, \alpha^{{\iantimatch}^j}, \alpha^{{\ichain}^j})\) with canonical branch decomposition \(\decT^j\) of \(G_{t_j}\) in \(\mathcal{R}(t_j)\) such that the following two conditions hold:
	\begin{itemize}
		\item \(\decT^j\) can be extended to a branch decomposition \({\decT^*}\) of \(G\) realizing the \(\FFF^*\)-branchwidth of \(G\).
		\item Every path in \({\decT^*}\) that corresponds to an edge \(e \in E(D^j)\) consists of a sequence of blocks \(P^j_1, \dotsc, P^j_{|\flat^j_e| + 1}\), specifically \(P^j_i\) is a \(\lambda^j_{e,i}\)-block.
		Further each typical sequence of \((\mimw_{\lambda^j_{e,i}}(\decT^j,e))_{e \in E(P^j_i)}\) where \(P^j_i\) is the path in \(\decT^j\) which \(P^j_i\) corresponds to is equal to \(\sigma^j_{e,i}\).
	\end{itemize}
	At some point in the dynamic programming procedure at join nodes we consider combinations of records at \(t_1\) and \(t_2\) such that \(D^1 = \lcac{B^*}{N^3[\chi(t_2)]} \lcac{B^*}{N^3[\chi(t_1)]} = D^2\) for which then also the blocks are the same by the definition of blocks and because \(N^{3\tw(G)}(N^3[\chi(t_1)]) = N^{3\tw(G)}(N^3[\chi(t_2)])\).
	In particular their number is the same for each path in \(\decT^*\) that corresponds to an edge in \(D\), and there is a combination of elements in \(\mathcal{R}(t_1)\) and \(\mathcal{R}(t_2)\) for which not only \(D^1 = D^2\) but also \(\flat^1 = \flat^2\) and \(\Lambda^1 = \Lambda^2\).
	As both the branch decomposition \(\decT^1\) of \(G_{t_1}\) as well as the branch decomposition \(\decT^2\) of \(G_{t_2}\) can be extended to the same branch decomposition \(\decT^*\) there must be a way to combine them to obtain a branch decomposition \(\decT\) of \(G_t\) that can be extended to \(\decT^*\).
	As the structure of \(\decT^*\) is uniquely determined on \(N^3[\chi(t)]\) by \(D\), and along each path of \(\decT^*\) that corresponds to an edge of \(D\) the rough structure on \(N^{3\tw(G)}(N^3[\chi(t)])\) is given by the blocks determined by \(\flat\) and \(\lambda\).\\
	Hence what is left open for the construction of \(\decT\) is to attach the subtrees attached at any path of \(\decT^*\) that corresponds to an edge of \(D\) whose internal structure is already fixed in \(\decT^1\) and \(\decT^2\) respectively.
	Interleavings of the typical sequences correspond to possibilities to do this;
	e.g.\ repeating an entry in the extension of a typical sequence \(\sigma^1_{e,i}\) corresponds to postponing the occurrence of the edge of \(\decT^1\) pointing to that entry while moving on to the next entry of \(\sigma^1_{e,i}\) corresponds to fixing the relative position of the edge of \(\decT^1\) pointing to it also with respect to the edges of \(\decT^2\).
	(This use of \(\oplus\) is similar in other applications of typical sequences.)
	Note that in this way edges that contribute to the \(\imatchwidth_\lambda\) at edges of \(\decT^1\) and edges of \(\decT\) are in some sense overlayed.
	In this way, at least one interleaving has to lead to the construction of \(\decT = \lcac{\decT^*}{V(G_t)}\) and the branch in which this happens is not discarded, as \(\Fsbw(\lcac{\decT^*}{V(G_t)}) \leq \tw + 1\).
	Whenever they are not edges in \(E(G[N^3(\chi(t))])\) these edges are pairwise independent and they are also completely fixed on \(G[N^3[\chi(t)]]\) by \(\lambda\) which they are also independent of.
	Hence the \(\imatchwidth_\lambda\) values behave additively according to the interleaving, apart from the fact that \(\frac{|\lambda|}{2}\) edges are double-counted.
	This is taken into account in the dynamic programming step for join nodes.\\
	All together we have shown that at some point in the branching procedure we consider a record at node \(t\) corresponding to \(\decT^{*}\) in the desired way, contradicting the choice of \(t\).
	
\fi	
	\section{Treedepth}
	\label{sec:td}
	In this section we give a fixed parameter tractable algorithm for $\mathcal{F}^*$-\textsc{Branchwidth} parameterized by the treedepth of the input graph,
	when $\mathcal{F}^*$ is the union of a non-empty set of some \primal \phsymb families. We begin by outlining the high-level idea.

	If a graph has small treedepth but sufficiently many vertices, 
	then one can always find a small vertex set $R$ such that $G-R$ has many components each of them has bounded size.
	The main step is to argue that in this case, we can safely remove (``prune'') one of the components without changing its $\mathcal{F}^*$-branchwidth.
	Proposition~\ref{prop:manycomponent} captures the idea of this procedure, and we inductively use this result from bottom to top in the treedepth decomposition. 
	At the end, we reduce the given graph to a graph whose number of vertices is bounded by a function of the treedepth of the given graph, i.e., a (non-polynomial) kernel. The problem can then be solved on such a kernel using an arbitrary brute-force algorithm.

	\manycomponentProposition

	We explain the idea to prove Proposition~\ref{prop:manycomponent}. Suppose we have $G$, $R$, $H$, and the family $\{G_i\mid i\in [m]\}$ as in the statement.
	Let $(\decT, \decf)$ be a branch decomposition of $G-V(G_m)$ of optimal width, and 
	we want to find a branch decomposition of $G$ having same width.
	Let $V(H)=\{h_1, h_2, \ldots, h_p\}$.
	First we take nodes corresponding to $R$ and its least common ancestors. Let $S$ be the union of all least common ancestors in the tree.
	We take a large subset $I_1$ of $[m]$ satisfying that 
	for all $a\in [p]$ and $i_1, i_2\in I_1$, $\phi_{i_1}(h_a)$ and $\phi_{i_2}(h_a)$ are contained in the same component of $\decT-S$.

	As a second step, we show that 
	there are a large subset $I_2\subseteq I_1$, a set $F$ of at most $\ell-1$ edges in $\decT$ for some $1\le \ell\le p$, 
	 a partition $(A_1, \ldots, A_{\ell})$ of $[p]$, and a bijection $\mu$ from $[\ell]$ to the set of connected components of $\decT-F$ such that 
	for all $j\in [\ell]$ and $i_1, i_2\in I_2$, the minimal subtree of $\decT$ containing all nodes in $\{\decf^{-1}(\phi_{i_1}(h_b)) \mid  b\in A_j\}$
 and the minimal subtree of $\decT$ containing all nodes in $\{\decf^{-1}(\phi_{i_2}(h_b)) \mid  b\in A_j\}$
 are vertex-disjoint.
 	In fact, by the choice of $S$ and $I_2$, there should be a component of $\decT-S-F$ containing all nodes corresponding to $\{h_b\mid b\in A_j\}$.
 	Then in each component of $\decT-S-F$ corresponding to $A_j$, 
	we can find an edge separating  the family of minimal subtrees corresponding to $A_j$ in a balanced way, 
	and expand that edge to some subtree corresponding to vertices of $\{\phi_m(h_b) \mid   b\in A_j\}$.
	When we expand the edge, we take the restricted tree of $\decT$ with respect to some $\{\decf^{-1}(\phi_\alpha(h_b)) \mid   b\in A_j\}$ which is closest to the edge, and make a copy and identify with the edge in a natural way.
 	Then we can prove that every new cut has small width as well.

	\iflong
	We will use the following lemma.
	\fi
	
	\iflong
	\begin{lemma}
	\label{lem:treepartition}
	There is a function~${f \colon \mathbb{N}\times \mathbb{N} \to \mathbb{N}}$ satisfying the following property.
	Let $k$, $m$, $p\in \Nat$.
	Let $T$ be a subcubic tree and for each $i\in [m]$, let $V_i=\{v_{i,j}\mid j\in [k]\}$ be a set of $k$ distinct vertices in $T$ such that 
	$V_a\cap V_b=\emptyset$ for all distinct $a, b\in [m]$.
	
	If $m\ge f(k, p)$, then 
	there are a subset $I$ of $[m]$ of size at least $p$, a set $F$ of $\ell-1$ edges in $T$ for some $1\le \ell \le k$, a partition $(A_1, \ldots, A_{\ell})$ of $[k]$, and a bijection $\mu$ from $[\ell]$ to the set of connected components of $T-F$ such that 
	for all $j\in [\ell]$ and $i_1, i_2\in I$, the minimal subtree of $T$ containing vertices of $\{v_{i_1,b}\mid b\in A_j\}$ and 
	the minimal subtree of $T$ containing vertices of $\{v_{i_2,b}\mid b\in A_j\}$ are vertex-disjoint.
	\end{lemma}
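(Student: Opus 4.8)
The plan is to prove the statement by induction on $k$, defining the function $f$ through an explicit recurrence as we go. The base case $k=1$ is immediate: since the $V_i$ are pairwise disjoint, the vertices $v_{1,1},\dots,v_{m,1}$ are pairwise distinct, so the one-vertex subtrees $\{v_{i,1}\}$ are pairwise vertex-disjoint, and we may take $\ell=1$, $A_1=\{1\}$, $F=\emptyset$, $\mu(1)=T$, $I=[m]$, and $f(1,p)=p$. For $k\ge 2$, let $S_i$ be the minimal subtree of $T$ containing $V_i$ (the Steiner tree of $V_i$), and form the auxiliary graph $\Gamma$ on vertex set $[m]$ in which $i$ and $i'$ are adjacent iff $V(S_i)\cap V(S_{i'})\neq\emptyset$. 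By Ramsey's theorem (Fact~\ref{fact:ramsey}) there is a number $R_0(q,p)$ such that $m\ge R_0(q,p)$ forces $\Gamma$ to contain either an independent set of size $p$ or a clique of size $q$, where $q$ is fixed below.

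If $\Gamma$ has an independent set $I$ of size $p$, then the $S_i$ with $i\in I$ are pairwise vertex-disjoint, and $\ell=1$, $A_1=[k]$, $F=\emptyset$, $\mu(1)=T$ work. Otherwise let $I'$ be a clique of size $q$; the subtrees $S_i$, $i\in I'$, pairwise intersect, so by the Helly property of subtrees of a tree they share a common vertex $w$. As the $V_i$ are pairwise disjoint, at most one $i\in I'$ has $w\in V_i$, which we discard. For every remaining $i\in I'$ we have $w\in S_i$, $w\notin V_i$; since all leaves of the Steiner tree $S_i$ lie in $V_i$, $w$ is internal in $S_i$, so $S_i-w$ has at least two components, each (by minimality of $S_i$) containing a vertex of $V_i$, and these lie in distinct components of $T-w$. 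Hence $V_i$ meets at least two of the (at most three) components $C_1,C_2,C_3$ of $T-w$. Assign to each remaining $i$ the partition of $[k]$ into the sets $B^i_t=\{\,j: v_{i,j}\in C_t\,\}$; there are at most $3^k$ possibilities, so we refine to a subset $I''\subseteq I'$ of size $\ge (q-1)/3^k$ whose members induce one common partition $(B_1,B_2,B_3)$, at least two of whose parts are nonempty.

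Now pick one direction $t^\ast$ with $B_{t^\ast}\neq\emptyset$ and let $\widehat C_{t^\ast}$ be the subtree of $T$ induced on $V(C_{t^\ast})\cup\{w\}\cup\bigcup_{B_t=\emptyset}V(C_t)$; note $\widehat C_{t^\ast}$ and each $C_t$ with $t\neq t^\ast$ are subcubic trees, and each is a subtree of $T$, so Steiner trees of vertex sets contained in them are the same whether taken in $T$ or in the piece. For $i\in I''$ and $B_t\neq\emptyset$ all vertices $v_{i,j}$ with $j\in B_t$ lie in $C_t$ (in $\widehat C_{t^\ast}$ when $t=t^\ast$). Apply the induction hypothesis to the $\ell_0\in\{2,3\}$ nonempty pieces, one after another, each time over the index pool returned by the previous application, with the size thresholds chosen so the last application still outputs an index set of size $\ge p$; it suffices that $|I''|\ge f(k-1,f(k-1,f(k-1,p)))$ (using that $R_0$ and $f(k-1,\cdot)$ are non-decreasing and $f(k-1,p)\ge p$, so fewer than three nested calls are dominated by three), so we set $q=3^k\cdot f(k-1,f(k-1,f(k-1,p)))+1$ and $f(k,p)=R_0(q,p)$, a computable function. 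Let $I$ be the final common index set, $|I|\ge p$. Take $F$ to be the union of the edge sets returned by the sub-applications together with the $\ell_0-1$ edges joining $w$ to the nonempty directions other than $t^\ast$, and $(A_1,\dots,A_\ell)$ the concatenation of the returned sub-partitions, so $\ell\le\sum_t|B_t|=k$; since $T$ is a tree, $T-F$ has exactly $\ell$ components, namely the components produced inside $\widehat C_{t^\ast}$ and inside each $C_t$, $t\neq t^\ast$, and the bijection $\mu$ is the obvious combination of the returned maps. Because each piece is a subtree of $T$, the minimal subtree in $T$ of any $\{v_{i,b}:b\in A_j\}$ agrees with the one taken in the relevant piece, hence lies in $\mu(j)$; this yields the required vertex-disjointness (same part: the induction hypothesis; different parts: different components of $T-F$).

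The delicate point is the bookkeeping in the second case. The (up to) three sub-applications of the induction hypothesis cannot be run in parallel, because the single index set $I$ must work simultaneously for all of them, so they are nested, and the threshold handed to the first sub-application is a triple composition $f(k-1,f(k-1,f(k-1,\cdot)))$ of the previous-level function — this is what drives the growth of $f$. One must also absorb the separating vertex $w$ into one chosen direction $t^\ast$ so that $\widehat C_{t^\ast}$ remains a subtree of $T$; this is exactly what makes Steiner trees behave consistently across the recursion and keeps $\mu$ a genuine bijection onto the components of $T-F$.
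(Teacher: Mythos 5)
Your argument is correct, and it takes a recognizably different route from the paper's proof. The paper keeps the whole tree $T$ fixed and inducts on the number of parts $\ell$ of the partition of $[k]$, carrying the invariant that within each part the subtrees $Q_i(A_j)$ are either pairwise vertex-disjoint or share a common node; the disjoint-versus-clique dichotomy is obtained not from Ramsey but from Lov\'asz's bound $n\le\alpha(G)\omega(G)$ for perfect graphs (intersection graphs of subtrees of a tree are chordal), and in the clique case Helly plus subcubicity yields an \emph{edge} $xy$ lying in many of the subtrees, along which the offending part $A_j$ is split in two and $xy$ is added to $F$. You instead induct on $k$, apply Ramsey to the intersection graph of the full Steiner trees, and in the clique case split the \emph{tree} at the Helly vertex $w$ into at most three pieces, using pairwise disjointness of the $V_i$ (to discard the at most one index with $w\in V_i$) and minimality of Steiner trees to ensure every surviving $V_i$ meets at least two pieces, so each part of the common split type has size at most $k-1$ and the nested sub-applications make the induction go through; the piecewise structure then makes the bijection $\mu$ and the count $|F|=\ell-1$ immediate, and the ``same part'' disjointness (which is all the lemma asks for) comes straight from the sub-applications since Steiner trees taken in a subtree of $T$ coincide with those taken in $T$. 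What each approach buys: the paper's Lov\'asz-based recursion gives a bound on $f$ that is roughly of the form $p^{(ck)^k}$, whereas your Ramsey-based recursion produces a tower-type bound of height about $k$ --- irrelevant here, since only computability of $f$ is needed --- while your version is more elementary (no perfect-graph machinery) and arguably cleaner in how the pieces, $F$, and $\mu$ fit together. One small point to make explicit in a polished write-up: the sub-applications are to parts of size $|B_t|$ that may be strictly smaller than $k-1$, so you should either define $f$ to be nondecreasing in its first argument (harmless, since $f$ is yours to define) or use $\max_{k'<k} f(k',\cdot)$ as the nested thresholds; with that convention your bookkeeping, including the domination of two nested calls by three via monotonicity in the second argument and $f(k-1,p)\ge p$, is sound.
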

	\fi
	\iflong
	\begin{proof}
    	For each $i\in [m]$, let $Q_i$ be the minimal subtree of $T$ containing all vertices of $V_i$, 
	and for $J\subseteq [k]$, we denote by $Q_i (J)$ the minimal subtree of $T$ containing all vertices of $\{v_{i,j}\mid j\in J\}$.
	
	For all positive integers~$k$ and $p$, we define $f^*(k,p, k):= p$, and 
for~${1\le \ell< k}$, we recursively define
       $f^*(k, p, \ell)=(3\cdot 2^k \cdot f^*(k, p, \ell+1)^4)^{\ell} p$.
       We define $f(k,p):=f^*(k, p, \ell)^2$.
       
    	We claim the following statement.
	\begin{itemize}
		\item Let $F$ be a set of $\ell-1$ edges in $T$, let $(A_1, \ldots, A_{\ell})$ be a partition of $[k]$, let $I_0\subseteq [m]$, and let $\mu$ be a bijection from $[\ell]$ to the set of connected components of $T-F$ satisfying that 
		\begin{itemize}
			\item for all $j\in [\ell]$ and $i\in I_0$, $Q_i(A_j)$ is fully contained in $\mu(j)$,
			\item for each $j\in [\ell]$, either 
			\begin{itemize}
				\item the subtrees in $\{Q_i(A_j)\mid  i\in I_0\}$ are pairwise vertex-disjoint, or
				\item the subtrees in $\{Q_i(A_j)\mid  i\in I_0\}$ share a node.
			\end{itemize}
		\end{itemize}
	If $|I_0|\ge f^*(k,p, \ell)$, then 
	there are
	a subset $I$ of $I_0$ of size at least $p$, 
	a set $F$ of at most $\ell^*-1$ edges in $T$ for some $0\le \ell^* \le k-1$, a partition $(B_1, \ldots, B_{\ell^*})$ of $[k]$, and a bijection $\mu^*$ from $[\ell^*]$ to the set of connected components of $T-F$ such that 
	 for each $j\in [\ell^*]$, the subtrees in $\{Q_i(B_j)\mid  i\in I\}$ are pairwise vertex-disjoint.
	\end{itemize}
	We prove it by induction on $k-\ell\in \{0, 1, \ldots, k-1\}$. If $k-\ell=0$, then each $A_j$ consists of a single element, and the statement is trivial. 
	We assume that $k-\ell>0$.

	For each $j\in [\ell]$, let $H_j$ be the intersection graph of the subtrees in $\{Q_i(A_j)\mid  i\in I_0\}$.
	By the construction, each $H_j$ is a chordal graph on at least $f^*(k,p,\ell)$ vertices, and it is perfect.
	A well-known theorem of Lov\'asz~\cite{Lovasz1972} says that for every perfect graph $G$ on $n$ vertices, 
	we have $n\le \alpha(G)\omega(G)$, where $\alpha(G)$ is the maximum size of an independent set and $\omega(G)$ is the maximum size of a clique.  
	We divide into two cases:
	\begin{itemize}
		\item For some $j\in [\ell]$, $H_j$ has a clique of size at least $3\cdot 2^k \cdot f^*(k,p,\ell+1)^4$.
		\item For all $j\in [\ell]$ $H_j$ has no clique of size $3\cdot 2^k \cdot f^*(k,p,\ell+1)^4$.
	\end{itemize}

	First assume that $H_j$ has a clique of size at least $3\cdot 2^k \cdot f^*(k, p, \ell+1)^4$.
	By the Helly property of subtrees in a tree, the subtrees corresponding to a clique share a common node.
	Let $x$ be a node of $T$ where all subtrees in $\{Q_i(A_j)\mid i\in I_0\}$ contain $x$. 
	As $T$ is subcubic, 
	there exist a neighbor $y$ of $x$ in $T$ and a subset $I_1$ of $I_0$ with $| I_1 |\ge 2^k \cdot f^*(k, p, \ell+1)^4$ such that 
	each subtree in  $\{Q_i(A_j)\mid i\in I_1\}$ contains both $x$ and $y$.
	Let $T^x$ and $T^y$ be the connected components of $T-xy$ where $x\in V(T^x)$ and $y\in V(T^y)$.
	Next, we choose a subset $I_2$ of $I_1$ with $| I_2 |\ge | I_1 |/2^k\ge f^*(k, p, \ell+1)^4$ such that for all $i_1, i_2\in I_2$ and $b\in A_j$, 
	\begin{itemize}
		\item $v_{i_1, b}\in V(T^x)$ if and only if $v_{i_2, b}\in V(T^x)$.
	\end{itemize}
	Let $(A_{j,1}, A_{j,2})$ be the partition of $A_j$ where for $i\in I_2$ and $b\in A_j$, 
	$v_{i,b}\in V(T^x)$ if and only if $b\in A_{j, 1}$.
	This means that for each $i\in I_2$, $Q_i(A_{j,1})$ is contained in $T^x$ 
	and $Q_i(A_{j,2})$ is contained in $T^y$.
	Lastly, by applying Lov\'asz's result twice, 
	we obtain a subset $I_3$ of $I_2$ with $ | I_3 | \ge f^*(k, p, \ell+1)$ such that for each $b\in [2]$,
	\begin{itemize}
		\item the subtrees in the family $\{Q_i(A_{j, b})\mid  i\in I_3\}$ are pairwise vertex-disjoint, or
		\item the subtrees in the family $\{Q_i(A_{j, b})\mid  i\in I_3\}$ share a common node.
	\end{itemize}
	
	Observe that $F\cup \{xy\}$ is a set of $\ell$ edges in $T$ and the partition obtained from $(A_1, \ldots, A_{\ell})$ by replacing $A_j$ with $A_{j,1}, A_{j,2}$  and the family $\{Q_i\mid  i\in I_3\}$ satisfy the precondition with $\ell+1$. 
	Thus, by the induction hypothesis, we obtain the result.

	Now, we assume that each $H_j$ has no clique of size $3\cdot 2^k \cdot f^*(k, p, \ell+1)^4$.
	Because $f^*(k, p, \ell)=(3\cdot 2^k \cdot f^*(k, p, \ell+1)^4)^{\ell} p$, 
	by recursively applying the  Lov\'asz's result~\cite{Lovasz1972} to subtrees in $\{Q_i(A_j)\mid  i\in I_0\}$ for each $j\in [\ell]$, 
	we obtain that there exists $I_4\subseteq I_0$ of size at least $p$ such that 
	 for each $j\in [\ell]$, the subtrees in $\{Q_i(A_j)\mid  i\in I_4\}$ are pairwise vertex-disjoint, and we are done.

	Recall that $f(k,p):=f^*(k,p,1)^2$. By applying the Lov\'asz's result to the given family $\{Q_i\mid i\in [m]\}$, 
	we obtain a subset $I_0\subseteq [m]$ of size at least $f^*(k,p,1)$ such that 
	\begin{itemize}
		\item the subtrees in the family $\{Q_i\mid  i\in I_0\}$ are pairwise vertex-disjoint, or
		\item the subtrees in the family $\{Q_i\mid  i\in I_0\}$ share a common node.
	\end{itemize}
	As the family $\{Q_i\mid  i\in I_0\}$ satisfies the preconditions of the claim with $(F, \ell, (A_1, \ldots, A_{\ell})) =(\emptyset, 1, ([k]))$ and $\mu$ the function from $[1]$ to $\{T\}$, we obtain the result.
	\end{proof}
	\fi

\iflong

The following is well known.
	\begin{lemma}
	\label{lem:balanced}
	Let $m$ be a positive integer.
	Let $T$ be a subcubic tree and let $\{T_i\mid i\in [m]\}$ be a family of pairwise vertex-disjoint subtrees in $T$.
	Then $T$ contains an edge $e$ such that each component of $T-e$ fully contains at least $m/3-1$ trees in $\{T_i\mid i\in [m]\}$.
	\end{lemma}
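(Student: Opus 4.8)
The plan is to use a standard balanced‑separator (sink) argument. First I would record two elementary observations that come from the subtrees being connected and pairwise vertex‑disjoint: for any edge $e=uv$ of $T$, a subtree $T_i$ either is fully contained in one of the two components of $T-e$, or contains $e$ itself (since a connected subgraph containing both $u$ and $v$ must contain the edge $uv$); and at most one $T_i$ contains $e$, while at most one $T_i$ contains any fixed vertex of $T$. In particular, if $a(e)$ and $b(e)$ denote the numbers of trees of the family fully contained in the two components of $T-e$, then $a(e)+b(e)\ge m-1$. We may assume $T$ has at least one edge; otherwise $|V(T)|\le 1$, which (by pairwise disjointness) forces $m\le 1$ and the statement is vacuous.

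Next I would orient each edge $e$ of $T$ towards the component of $T-e$ that fully contains more trees of the family, breaking ties arbitrarily. Since $T$ is a tree, any orientation is acyclic, so there is a sink $v$: every edge incident to $v$ is oriented towards $v$. Let $C_1,\dots,C_d$ be the components of $T-v$, where $d=\deg_T(v)\le 3$ because $T$ is subcubic. Every $T_i$ that does not contain $v$ lies entirely in exactly one $C_j$ (connectedness again), and at most one $T_i$ contains $v$, so the trees of the family distribute among the $C_j$ with at most one exception; hence by pigeonhole some $C_{j}$ fully contains at least $(m-1)/d\ge (m-1)/3\ge m/3-1$ of the $T_i$.

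Finally, take $e$ to be the edge joining $v$ to $C_{j}$. The component of $T-e$ not containing $v$ is precisely $C_{j}$, which fully contains at least $m/3-1$ trees of the family. Since $e$ is incident to the sink $v$, it is oriented towards $v$, which by definition of the orientation means the component of $T-e$ containing $v$ fully contains at least as many trees as $C_{j}$ does, hence also at least $m/3-1$. Thus each component of $T-e$ fully contains at least $m/3-1$ trees of $\{T_i\mid i\in[m]\}$, as required. There is no real obstacle in this argument; the only two points needing a line of care are the fact that a connected subtree meeting both sides of an edge must use that edge (so that "fully contained'' partitions the family apart from a bounded exceptional set), and the use of pairwise disjointness to bound by one the number of $T_i$ through any fixed vertex or edge.
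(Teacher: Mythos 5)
Your proof is correct: the observation that each $T_i$ is either fully contained in one side of $T-e$ or contains $e$ (and, by disjointness, at most one $T_i$ straddles any edge or vertex), combined with the orientation-towards-the-heavier-side and sink argument on the subcubic tree, gives both bounds of $m/3-1$ exactly as needed. The paper states this lemma without proof as folklore, but your argument is essentially the same sink/orientation technique the paper itself uses to prove the weighted analogue (Lemma~\ref{lem:balancedtrees}), here applied directly to counts of fully contained subtrees; your explicit handling of the degenerate edgeless case is a reasonable way to deal with a boundary situation the lemma implicitly ignores.
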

\fi

\iflong
	Next, we say that a pair $(T, A=(a_1, a_2, \ldots, a_k))$ of a tree $T$ and an ordered set $A$ of $k$ vertices is a \emph{$k$-rooted tree}.
	Two $k$-rooted trees $(T_1, (a_1, a_2, \ldots, a_k))$ and $(T_2, (b_1, b_2, \ldots, b_k))$ are \emph{isomorphic}
	if there is a graph isomorphism $\phi$ from $T_1$ to $T_2$ where $\phi(a_i)=b_i$ for all $i\in [k]$.	
	For a tree $T$ and an ordered set $A=(a_1, a_2, \ldots, a_k)$ of distinct vertices in $T$, 
	we denote by $rt(T; a_1, a_2, \ldots, a_k)$ or $rt(T; A)$ the $k$-rooted tree $(F, A)$ where $F = \lcac{T}{A}$.
	We call it the \emph{reduced tree of $T$ with respect to $A$}.
	\fi
	\iflong
	\begin{proof}[Proof of Proposition~\ref{prop:manycomponent}]
	Let $f$ be the function defined in Lemma~\ref{lem:treepartition}.
	For all positive integers $t$ and $p$, we define that 
	\begin{itemize}
		\item $g_4(t,p):=4t+4p+1$,
		\item $g_3(t,p):=2g_4(t,p)+2$,
		\item $g_2(t, p):=\max (g_3(t,p)(6t-1)+6t , 3g_4(t,p)+3)$,
		\item $g_1(t, p):=f( p, g_2(t,p) )$,
		\item $g(t,p):= (6t+1)^p g_1(t,p)+1$.
	\end{itemize}
	Suppose that $m\ge g(t,p)$ and $\tw(G)\le p$.
	
	It is sufficient to show that $\Fsbw(G)\le \Fsbw(G-V(G_m))$. 
	Let $(\decT, \decf)$ be a branch decomposition of $G-V(G_m)$ that has optimal width.
	From $(\decT, \decf)$, we construct a branch decomposition $(\decT^*, \decf^*)$ of $G$ having width at most $\Fsbw(G-V(G_m))$.
	
	We partition $\decT$ into edge-disjoint subtrees divided by nodes corresponding to vertices of $R$ and their least common ancestors.
	Let $S_1:=\decf^{-1}(R)$, $S_2:=N_{\decT}(S_1)$, and let $S_3$ be the set of least common ancestors of vertices in $S_2$.
	Finally, let $S:=S_1\cup S_2\cup S_3$. 
	Note that $|S_2|\le |S_1|=t$, and $|S_3|\le 2|S_2|\le 2t$.
	Let $\mathcal{C}$ be the set of maximal subtrees in $T$ that do not contain a vertex of $S$ as an internal vertex and do not contain a vertex of $S_1$.
	By the property of least common ancestors, 
	each subtree in $\mathcal{C}$ contains at most two vertices of $S_2\cup S_3$.
	Because a leaf is never taken as a common ancestor of vertices in $S_2$, 
	for every vertex $v\in V(G)\setminus R$, $\decf^{-1}(v)$ is contained in $\decT-S$.
	As each node of $S_2\cup S_3$ is contained in at most two subtrees in $\mathcal{C}$ that appear below, we obtain that $ |\mathcal{C} | \le 2|S_2\cup S_3|+1 \le 6t+1$ (one for the subtree containing the root, when the root is not in $S_2\cup S_3$).

	Since $m\ge g(t,p) = (6t+1)^{p}g_1(t,p)+1$,
	there exists $I_1\subseteq [m-1]$ of size $g_1(t,p)$ such that 
	\begin{itemize}
	\item for all integers $a,b\in I_1$ and $v\in V(H)$, 
	$\decf^{-1}(\phi_a(v))$ and $\decf^{-1}(\phi_b(v))$ are contained in the same subtree in $\mathcal{C}$.
	\end{itemize}
	Next, we apply Lemma~\ref{lem:treepartition} with $(k,m,p)=(p, |I_1|, y_2)$ and the family $\{\decf^{-1}(V(G_i))\mid i\in I_1\}$. 
	Since $|I_1|=g_1(t,p) = f(p, g_2(t,p))$, 
	there are a subset $I_2$ of $I_1$ of size  $g_2(t,p)$, 
	a set $F$ of $\ell-1$ edges in $\decT$ for some $1\le \ell \le p$, 
	a partition $(A_1, \ldots, A_{\ell})$ of $[p]$, and 
	a bijection $\mu$ from $[\ell]$ to the set of connected components of $\decT-F$ such that 
	for all $j\in [\ell]$ and $i_1, i_2\in I_2$, the minimal subtree of $\decT$ containing vertices of $\{\decf^{-1}(\phi_{i_1}(h_b))\mid b\in A_j\}$ and 
	the minimal subtree of $\decT$ containing vertices of $\{\decf^{-1}(\phi_{i_2}(h_b))\mid b\in A_j\}$ are vertex-disjoint.

	Let $\widehat{C}$ be the set of maximal subtrees $T$ of $\decT$ such that 
	\begin{itemize}
		\item $T$ does not contain a vertex of $S_1$,
		\item $T$ does not contain a vertex of $S_2\cup S_3$ as an internal vertex, and 
		\item $T$ does not contain an edge of $F$.
	\end{itemize}
	By the choice of $\mathcal{C}$ and $F$, 
	we know that there is an injection $\mu^*\colon [\ell] \to \widehat{C}$ such that
	\begin{itemize}
		\item for all $i\in I_2$, $j\in [\ell]$ and  $b\in A_j$, we have $\decf^{-1}(\phi_i(h_b))\in V(\mu^*(j))$, and 
		\item for all $j\in [\ell]$ and $i_1, i_2\in I_2$, the minimal subtree of $\decT$ containing vertices of $\{\decf^{-1}(\phi_{i_1}(h_b))\mid b\in A_j\}$ and 
	the minimal subtree of $\decT$ containing vertices of $\{\decf^{-1}(\phi_{i_2}(h_b))\mid b\in A_j\}$ are vertex-disjoint.
	\end{itemize}

	We discuss, for each $j\in [\ell]$, how to insert the nodes corresponding to vertices of $\{\phi_m(h_b)\mid b\in A_j \}$ by introducing some tree with $|A_j|$ leaves.
	We fix $j\in [\ell]$.
	Let $W$ be the subgraph obtained from $\mu^*(j)$ by recursively choosing a node $t$ such that $\mu^*(j)-t$ contains a component $W'$ having no node of $S_2\cup S_3$ and no node incident with an edge of $F$, and then removing $W'$.

	For each $i\in I_2$, let $Q_i$ be the minimal subtree of $\decT$ containing all nodes of $\{\decf^{-1}(\phi_i(h_b))\mid b\in A_j \}$.
	Clearly, $Q_i$ is contained in $\mu^*(j)$.

	Observe that every leaf of $W$ is either a node of $S_2\cup S_3$ or a node incident with an edge of $F$.
	It implies that $W$ has at most $3t$ leaves and has at most $6t$ nodes of degree 1 or 3.
	We consider two cases, ($\ast$) and ($\ast\ast$) below.

	\medskip
	($\ast$) We first assume that 	there exists a node $w$ in $W$ such that
	there is a component of $\mu^*(j)-w$ that contains no nodes of $S_2\cup S_3$, no nodes incident with an edge of $F$, and fully contains $Q_\alpha$ for some $\alpha\in I_2$. 
	
	By Lemma~\ref{lem:balanced}, there exists an edge $uv$ such that 
	each component of $\mu^*(j)-uv$ fully contains at least $| I_2 | /3 -1$ of subtrees in $\{Q_i\mid i\in I_2\}$. We assume that the distance from $u$ to $w$ in $\decT$ is at most the distance from $v$ to $w$. 
	We subdivide $uv$ into $uqv$.
	
	Let $Y$ be the reduced tree of $\decT$ with respect to $\{\decf^{-1}(\phi_\alpha(h_b))\mid b\in A_j \}\cup \{w\}$.
	We take a copy of $Y$, say $Y_1$, with isomorphism $\eta$ from $Y$ to $Y_1$.
	We add $Y_1$ to $\decT$, identify $q$ with $\eta(w)$, 
	and then extend $\decf$ by assigning, for each $v=\eta(\decf^{-1}(\phi_\alpha(h_b)))$ with $b\in A_j$, 
	$\decf(v):=\phi_m(h_b)$.
	
	\medskip
	($\ast\ast$) 
	Now, we assume that there are no such nodes $w$ in $W$ described in $(\ast)$. It implies that 
	every subtree in $\{Q_i\mid i\in I_2\}$ intersects $W$.
	Since $W$ has at most $6t$ nodes of degree 1 or 3, it has at most $6t-1$ maximal paths without degree 1 or 3 nodes.
	As $|I_2| =g_2(t,p) =g_3(t,p)(6t-1) + (6t)$, 
	there exists a path $P$ in $W$ and a subset $I_3$ of $I_2$ of size at least $g_3(t,p)$ such that 
	\begin{itemize}
		\item the endpoints of $P$ are nodes of degree 1 or 3 in $W$ and 
		all internal nodes of $P$ are nodes of degree 2 in $W$, and
		\item all subtrees of $\{Q_i\mid i\in I_3\}$ intersect $P$ and do not contain the endpoints of $P$. 
	\end{itemize}
	We choose an edge $uv$ of $P$ such that the number of paths in $\{P\cap Q_i\mid i\in I_3\}$ meeting components of $P-uv$ are differ by at most $1$.
	Let $\alpha\in I_3$ such that the distance from $uv$ to $P\cap Q_{\alpha}$ in $P$ is minimum.
	Assume that $u$ is closer to $P\cap Q_{\alpha}$ than $v$.
	
	Let $w_1, w_2$ be the two nodes in $Q_{\alpha}\cap P$ (possibly, $w_1=w_2$) such that they have a neighbor in $V(P)\setminus V(Q_{\alpha})$ and
	$w_1$ is closer to $u$ than $w_2$.
	Let $Y$ be the reduced tree of $\decT$ with respect to $\{\decf^{-1}(\phi_\alpha(h_b))\mid b\in A_j \}\cup \{w_1, w_2\}$.
	We take a copy of $Y$, say $Y_1$, with isomorphism $\eta$ from $Y$ to $Y_1$.
	We add $Y_1$ to $\decT$, subdivide $uv$ into $uu_1v_1v$, remove $u_1v_1$ and identify $u_1$ with $\eta(w_2)$ and identify $v_1$ with $\eta(w_1)$. 
	Also, we extend $\decf$ by assigning, for each $v=\eta(\decf^{-1}(\phi_\alpha(h_b)))$ with $b\in A_j$, 
	$\decf(v):=\phi_m(h_b)$.
	
	\medskip

	We apply the procedure $(\ast)$ or $(\ast\ast)$ for every $j\in [\ell]$. This is the end of the construction.
	Let $(\decT^*, \decf^*)$ be the resulting branch decomposition.
	We verify that it has width at most $\Fsbw(G-V(G_m))$.
	 
	 Let $e$ be an edge of $\decT^*$. 
	 We claim that $\Fsbw(\decT^*,e)\le \Fsbw(\decT)$.	 
	 Suppose for contradiction that $\Fsbw(\decT^*,e)> \Fsbw(\decT)$.
	 Let $\beta:=\Fsbw(\decT)$.
	 Note that since  
	 $\Fsbw(G-V(G_m))\le t$, and we have 
	 \[\Fsbw(\decT^*)\le t + |V(G_m)|\le t+p.\]
	
	Let $(A, B)$ be the vertex partition of $G$ separated by $e$.
	By the assumption that $\Fsbw(\decT^*,e)> \beta$, 
	$G[A, B]$ contains an induced subgraph $Z$ isomorphic to a graph in $\mathcal{F^*}$ that has $2\beta+2$ vertices.
	As $\beta=\Fsbw(\decT)$, $Z$ must contain a vertex of $G_m$.
	
	\setcounter{ourcase}{0}
		\begin{ourcase}[$e$ was an edge of $\decT$.]
	Let $(J_1, J_2)$ be a partition of $[p]$ such that 
	for each $i\in [p]$, $\phi_m(h_i)\in A$ if and only if $i\in J_1$.
	By the procedure $(\ast)$ and $(\ast\ast)$, we can observe that 
	if $e$ is contained in some $\mu^*(j)$ and $B$ contains the vertex set $\{\phi_m(h_b)\mid  b\in A_j\}$, 
	there exists a subset $I_4$ of $I_2$ with $| I_4 |\ge \min ( |I_2| /3 -1 , |I_3| /2 -1)$ such that 
	\begin{itemize}
		\item for all $a\in I_4$ and $b\in A_j$, we have $\phi_a(h_b)\in B$.
	\end{itemize}
	This implies that 
	there exists a subset $I_4$ of $I_2$ with $| I_4 |\ge \min ( |I_2| /3 -1 , |I_3| /2 -1)\ge g_4(t,p)$ such that 
	\begin{itemize}
		\item for all $a\in I_4$, $b\in J_1$, $c\in J_2$, we have $\phi_a(h_b)\in A$ and $\phi_a(h_c)\in B$.
	\end{itemize}
	As $| I_4 |\ge 2t+2p+1$ and $\Fsbw(G)\le t+p$, there exists $x\in I_4$ such that 
	$Z$ does not intersect $G_x$.
	It means that we can replace $G_m$ with $G_x$ to obtain an induced subgraph isomorphic to $Z$, 
	and it implies that $\Fsbw(\decT,e)\ge \beta+1$. This is a contradiction.
	\end{ourcase}

\begin{ourcase}[$e$ is a new edge in $\decT^*$]
	This means that $e$ was obtained when we apply the procedure $(\ast)$ or $(\ast\ast)$.
	We check for the two cases. We recall the terminologies from the procedures.
	
	\begin{itemize}
	\item (Case $(\ast)$. There exists a node $w$ in $W$ such that
	there is a component of $\mu^*(j)-w$ that contains no nodes of $S_2\cup S_3$, no nodes incident with an edge of $F$, and fully contains $Q_\alpha$ for some $\alpha\in I_2$.)
	
	In this case, one of $A$ and $B$, say $A$, is a subset of $V(G_m)$.
	Let $(J_1, J_2)$ be a partition of $[p]$ such that 
	for each $i\in [p]$, $\phi_m(h_i)\in A$ if and only if $i\in J_1$.
	In the minimal subtree of $\decT$ containing $\{\decf^{-1}(\phi_\alpha(h_b))\mid b\in A_j \}$, 
	there exists an edge $e^*$ separating $V(G)$ into $(A^*, B^*)$ where $\{\phi_{\alpha}(h_b)\mid b\in J_1\}\subseteq A^*$ and $\{\phi_{\alpha}(h_b)\mid b\in J_2\}\subseteq B^*$.

	By the construction, $B^*$ contains all vertices of $S$.
	From the choice of the node $w$ and $e^*$, there exists a subset $I_5$ of $I_2$ of size at least $|I_2| /3-1\ge 4t+4p$ such that $B^*$ fully contains components in $\{G_i\mid i\in I_5\}$.
	Since $\Fsbw(G)\le t+p$, by replacing $G_m$ with $G_x$, and replacing components of $\{G_i\mid i\in I_2\}$ used for $Z$ with some components in $\{G_i\mid i\in I_5\}$ that do not intersect $Z$, 
	we can obtain an induced subgraph isomorphic to $Z$ without using vertices of $G_m$. 
	It implies that $\Fsbw(\decT,e^*)\ge \beta+1$, a contradiction.
	
	\medskip
	\item (Case $(\ast\ast)$. There are no nodes $w$ in $W_2$ as in Case $(\ast)$.) 	
	
	If one of $A$ and $B$ is a subset of $V(G_m)$, then an analogous argument from Case $(\ast)$ holds, because $| I_3 |\ge g_3(t,p)\ge 4t+4p$.
	Thus, we may assume that it is not the case. 
	Let $a,b$ be the two endpoints of $P$, 
	without loss of generality, we may assume that $A$ corresponds to the subtree of $\decT^*-uv$ containing $a$.
	Let $(J_1, J_2)$ be a partition of $[p]$ such that 
	for each $i\in [p]$, $\phi_m(h_i)\in A$ if and only if $i\in J_1$.
	
	By the construction, in the minimal subtree of $\decT$ containing $\{\decf^{-1}(\phi_\alpha(h_b))\mid b\in A_j \}\cup \{w_1, w_2\}$, 
	there exists an edge $e^*$ separating $V(G)$ into $(A^*, B^*)$ where  
	where $\{\phi_{\alpha}(h_b)\mid b\in J_1\}\subseteq A^*$ and $\{\phi_{\alpha}(h_b)\mid b\in J_2\}\subseteq B^*$.

	By the choice of $S$ and $F$, 
	for all $j'\neq j$, $\{\decf^{-1}(\phi_\alpha(h_b))\mid b\in A_{j'} \}$ are fully contained in $A^*$, 
	or it is fully contained in $B^*$.
	Also, since $P$ intersects $g_3(t, p)$ paths in $\{P\cap Q_i\mid  i\in I_3\}$, 
	each of $A^*$ and $B^*$ contains at least $(g_3(t,p)-1) /2\ge 4t+4p$  paths in $\{P\cap Q_i\mid  i\in I_3\}$.
	Since $\Fsbw(G)\le t+p$, by replacing $G_m$ with $G_\alpha$, and replacing components of $\{G_i\mid i\in I_3\}$ used for $Z$ with some components in $\{G_i\mid i\in I_3\}$ that do not intersect $Z$, 
	we can obtain an induced subgraph isomorphic to $Z$ without using vertices of $G_m$. 
	It implies that $\Fsbw(\decT,e^*)\ge \beta+1$, a contradiction.
		\end{itemize}
	\end{ourcase}
	We conclude that $\Fsbw(\decT^*)\le \Fsbw(\decT)$.
	\end{proof}
			\fi

\iflong
	Let $T$ be a rooted tree with height $k$, and let $r$ be the root of $T$. For every node $t$ of $T$, we define the \emph{rank} of $t$ as $k-\dist_T(t, r)$.
	Note that the rank of a leaf $t$ with maximum $\dist_T(t, r)$ has rank $1$.
	\fi

\treedepthTheorem
\iflong
	\begin{proof}
	Let $G$ be a graph of treedepth $k$. By Lemma~\ref{lem:component}, we may assume that $G$ is connected. 
	Let $T$ be a rooted forest with height $k=\td(G)$ whose closure contains $G$ as a subgraph. As $G$ is connected, we may assume that $T$ is a tree. 
	Let $r$ denote the root of $T$.
		
	Let $g$ be the function defined in Proposition~\ref{prop:manycomponent}.
	We define $h(1):=1$ and 
	for~${1< j\le k}$, we recursively define
       $h(j):=2^{{h(j-1) \choose 2}} \cdot 2^{(k+j-1)h(j-1)} \cdot g(k+1, h(j-1))$.

	For each $j\in [k]$, we recursively obtain a graph $G^j$ and a rooted tree $T^j$ whose closure contains $G^j$ as a subgraph such that  
	\begin{itemize}
		\item $\Fsbw(G)=\Fsbw(G^j)$, 
		\item $T^j$ has height $k$, and
		\item for every $i\in [j]$ and every node $t$ in $T^j$ of rank $i$, 
		$G^j[V(G^j)\cap V((T^j)_t)]$ has at most $h(i)$ vertices.
	\end{itemize}
	We define $G^1:=G$ and $T^1:=T$. Clearly, $G^1$ and $T^1$ satisfy the above properties. 
	If this holds, then $G^k$ will have the size bounded by a function of $k$, and
	the problem can be solved on $G_k$ using an arbitrary brute-force algorithm.
	
	We assume that $1<j\le k$ and that $G^{j-1}$ and $T^{j-1}$ have been constructed, and we now want to construct $G^j$ and $T^j$ in polynomial time. 
	For each node $t$ of $T^{j-1}$, we denote $(G^{j-1})_t:=G^{j-1}[V(G^{j-1})\cap V((T^{j-1})_t)]$.
	We check whether $(G^{j-1})_t$ has at most $h(j)$ vertices for every node $t$ of rank $j$. 
	If this is true, then we can set $G^j:=G^{j-1}$, $T^j:=T^{j-1}$ and we are done.
	Thus, we may assume that there is a node $t$ of rank $j$ such that $(G^{j-1})_t$ has more than $h(j)$ vertices.
	Assume that $\{t_i \mid  i\in [x]\}$ is the set of all children of $t$ in $T^{j-1}$, and $R=\{r_i\mid  i\in [k+j-1]\}$ is the vertex set of the path from the root $r$ to $t$ in $T^{j-1}$.

	We consider a pair $(H, \gamma)$ of a graph $H$ on at most $h(j-1)$ vertices and a function $\gamma:V(H)\rightarrow 2^{V(R)}$. 
	Up to isomorphism, there are at most $2^{{h(j-1) \choose 2}} \cdot 2^{(k+j-1)h(j-1)}$ such pairs. 
	Intuitively, we use this pair to reflect the information how the corresponding vertex in $(G^{j-1})_{t_i}$ is adjacent to a vertex in $V(R)$.
	
	As each $(G^{j-1})_{t_i}$ has at most $h(j-1)$ vertices and $(G^{j-1})_t$ has more than $h(j)$ vertices, 
	there is a subset $\{p_i \mid  i\in [y]\}$ of $\{t_i\mid  i\in [x]\}$ and a pair $(H, \gamma)$ with $y=g(k+1, h(j-1))$ such that 
	\begin{itemize}
		\item for every $i\in [y]$, there is a graph isomorphism $\phi_i$ from $H$ to $(G^{j-1})_{p_i}$, 
		\item for every $i\in [y]$ and $z\in [k-j+1]$ and $v\in V(H)$, 
		$\phi_i(v)$ is adjacent to $r_z$ if and only if $r_z\in \gamma(v)$.
	\end{itemize}
	Note that by Corollary~\ref{cor:FisTWBounded}, we have $\Fsbw(G^{j-1})=\Fsbw(G)\le \tw(G)+1\le \td(G)+1\le k+1$.
	As $y=g(k+1, h(j-1))$, by Proposition~\ref{prop:manycomponent}, we know that $\Fsbw(G^{j-1})=\Fsbw(G^{j-1}-V((G^{j-1})_{p_y}))$.
	Thus, we can safely reduce the graph by removing $V((G^{j-1})_{p_y})$.
	After applying this procedure recursively, we will obtain a graph $G^j$ and a rooted tree $T^j$
	such that for  every node $t$ in $T^j$ of rank $j$, 
		$G^j[V(G^j)\cap V((T^j)_t)]$ has at most $h(j)$ vertices.
	
	This concludes the theorem.
	\end{proof}
	\fi
	
	\treedepthCorollary	
	
\section{Feedback Edge Set}\label{sec:fes}
In this section we give a linear kernel for \Fsbwprob parameterized by the size 
$k$ of a feedback edge set of the input graph,
when $\Fstar$ is the union of some primal \phsymb families.
\iflong
Since a minimum-size feedback edge set of a graph can easily be computed in polynomial time 
as the complement of a spanning forest of a graph, we assume throughout this section
that we are given a size-$k$ feedback edge set of the input graph.
\fi

If a graph $G$ has many vertices but a small feedback edge set,
then either it has many bridges (in form of `dangling trees') and isolated vertices,
or it has long paths of degree two vertices in $G$,
which we refer to as `unimportant paths'.
This simple observation paves the road to the kernel:
For the former, we can observe that bridges and isolated vertices 
are not important when it comes to computing \Fstar-branchwidth 
\iflong(see Lemma~\ref{lem:fes:bridge} for bridges, the argument for isolated vertices is trivial). \fi 
\ifshort($\star$). \fi 
For the latter, we show that we can always shrink any unimportant path to a 
constant-length subpath without changing the \Fstar-branchwidth of $G$.
It is not surprising that this gives a safe reduction rule when $\Fstar$\xspace
does not contain induced matchings.
For induced matchings however, proving safeness requires quite an amount of detail\iflong,
which comprises a large part of this section\fi.
\iflong This will be done via\fi\ifshort For this we show \fi Lemma~\ref{lem:fes:unimportant:placement} which states that
if $G$ has a long enough unimportant path $P$, 
then we can modify each branch decomposition of $G$ such that the vertices
of a long enough subpath of $P$ appear `consecutively' in it,
without increasing the \Fstar-branchwidth.
In addition to that, we only have to overcome one minor hurdle 
that concerns induced chains of value $2$\iflong (Corollary~\ref{cor:fes:neatly:breaks})\fi.
From there on it is not difficult to argue that 
contracting an edge on $P$ does not change the \Fstar-branchwidth either.

Let us begin. 
We first define unimportant paths,
and what it means for the vertices of a path to appear `consecutively'
in a branch decomposition via the notion of preservation.
\unimportantPathDefinition
\preservesDefinition

We now turn to the main technical lemma needed in the correctness proof of the 
reduction rule alluded to above. 
Note that it only concerns induced matchings.

\FesUnimportantPlacementLemma

\iflong
\begin{proof}
	Let $P = a_1 \ldots a_{8}$.
	For $j \in [3]$, let $a_{j, 1}'' \defeq a_{3(j-1)+1}$ 
	and $a_{j, 2}'' \defeq a_{3(j-1)+2}$.
	For an illustration of $P$ and this notation see Figure~\ref{fig:unimportant-path};
	note that $\{a_{1, 1}'' a_{1, 2}'', a_{2, 1}'' a_{2, 2}'', a_{3, 1}'' a_{3, 2}''\}$
	is an induced matching in $G$.
	For $j \in [3]$, let $Q_j$ be the path in $\decT$ from $\decf(a_{j, 1}'')$ to $\decf(a_{j, 2}'')$.
	Let $\calQ \defeq \{Q_j \mid j \in [3]\}$ and let $H$ be the intersection graph of $\calQ$ in $\decT$.
	Either $H$ is a triangle or it contains a pair of nonadjacent vertices.
	\begin{figure}
		\centering
		\includegraphics[width=.7\textwidth]{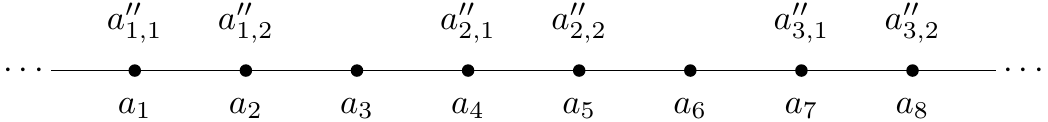}
		\caption{Illustration of the path $P$ and the notation used in Lemma~\ref{lem:fes:unimportant:placement}.}
		\label{fig:unimportant-path}
	\end{figure}
	
	\setcounter{ourcase}{0}
	\begin{ourcase}[$H$ is a triangle]
		Throughout the rest of this case,
		for all $j \in [3]$, $r \in [2]$ 
		we let $a_{j, r}' \defeq a_{j, r}''$.
		Since $H$ is a triangle, the paths $Q_{1}$, $Q_{2}$, and $Q_{3}$ intersect pairwise,
		and by the Helly property of trees 
		we know that they have a common nonempty intersection.
		Suppose they do not share a common edge, but only some vertex $x \in V(T)$.
		Since the endpoints of the three paths are (pairwise distinct) leaves of $\decT$,
		we know that $x$ is an internal vertex, so that each of the paths uses two edges incident with $x$.
		We may assume that $x$ has degree three,
		so by the pigeonhole principle
		each edge incident with $x$ is contained in two of the three paths.
		More precisely, the following holds.
		\begin{observation}\label{obs:clique:struc}
			Either $Q_{1}$, $Q_{2}$, and $Q_{3}$ share a common edge,
			or they share a common vertex $x \in V(\decT)$ such that for each pair of distinct $j, j' \in [3]$,
			$x$ has one incident edge that is contained in $Q_{j}$ and $Q_{j'}$.
		\end{observation}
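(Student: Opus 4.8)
The plan is to unpack what ``$H$ is a triangle'' means via the Helly property of subtrees of a tree, and then run a two-case analysis on the common intersection of $Q_1,Q_2,Q_3$ together with a small counting argument at a degree-$3$ node.

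First I would observe that since $H$ is a triangle, the paths $Q_1,Q_2,Q_3$ pairwise intersect in $\decT$; as each $Q_j$ is a connected subgraph (hence a subtree) of the tree $\decT$, the Helly property of subtrees yields a common vertex, and in fact the intersection $Q_1\cap Q_2\cap Q_3$, being a connected subgraph of the path $Q_1$, is a nonempty subpath $I$. If $I$ contains an edge, then $Q_1,Q_2,Q_3$ share a common edge and the first alternative of the observation holds, so from now on I would assume $I$ consists of a single vertex $x\in V(\decT)$.

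Next I would pin down the local structure at $x$. The endpoints of $Q_j$ are the leaves $\decf(a_{j,1}'')$ and $\decf(a_{j,2}'')$ of $\decT$, and since $a_{1,1}'',a_{1,2}'',a_{2,1}'',a_{2,2}'',a_{3,1}'',a_{3,2}''$ are six pairwise distinct vertices of $G$ on the path $P$ and $\decf$ is injective on the leaves, these six leaves are pairwise distinct. Hence $x$ cannot simultaneously be an endpoint of all three of $Q_1,Q_2,Q_3$, so $x$ is not a leaf of $\decT$; in particular $x\ne\decf(a_{j,r}'')$ for all $j\in[3]$ and $r\in\{1,2\}$, so $x$ is an internal vertex of each $Q_j$ and each $Q_j$ uses exactly two edges incident with $x$. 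If $x$ had degree $2$ in the subcubic tree $\decT$, then all three paths would use both edges incident with $x$ and $I$ would contain an edge, contradicting $I=\{x\}$; therefore $x$ has degree exactly $3$, with incident edges $e_1,e_2,e_3$, and each $Q_j$ uses a $2$-element subset $S_j\subseteq\{e_1,e_2,e_3\}$. Since two $2$-element subsets of a $3$-element set cannot be disjoint (that would require four elements), for every pair of distinct $j,j'\in[3]$ we have $S_j\cap S_{j'}\ne\emptyset$, i.e.\ there is an edge incident with $x$ lying on both $Q_j$ and $Q_{j'}$ --- which is the second alternative. (One could also simply assume that internal nodes of a branch decomposition have degree exactly three, in which case the degree-$2$ subcase does not even arise.)

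I do not expect a genuine obstacle here: the argument is essentially bookkeeping once the Helly property is invoked. The only point that needs a little care is excluding the degenerate possibilities for $x$ --- that $x$ is a leaf, or that $x$ has degree $2$ --- and this is exactly where the pairwise distinctness of the six endpoints $\decf(a_{j,r}'')$ and the case assumption $I=\{x\}$ are used.
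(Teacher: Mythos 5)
Your proof is correct and follows essentially the same route as the paper: invoke the Helly property to get a common intersection, reduce to the single-vertex case, use the distinctness of the six leaf endpoints to see that $x$ is internal so each $Q_j$ uses two edges at $x$, and conclude at the degree-$3$ vertex. The only cosmetic difference is that you finish by noting that two $2$-element subsets of a $3$-element set must intersect, while the paper phrases the same counting as a pigeonhole argument (and you explicitly rule out the degree-$2$ case that the paper dismisses with ``we may assume''), but this is the same argument in substance.
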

		
		We choose a vertex $x_1 \in \bigcap_{j \in [3]} V(Q_{j})$ and one of its incident edges $e^\star$,
		according to which case of Observation~\ref{obs:clique:struc} applies.
		\begin{enumerate}
			\item\label{enum:common:edge}
			If the three paths share a common edge, then we let $e^\star$ be one of these edges
			and we choose the endpoint of $e^\star$ that is closer to $\decf(a_{1, 1}')$
			as $x_1$.
			\item\label{enum:common:vtx} 
			Otherwise, we let $x_1$ be the only vertex in the intersection of the three paths 
			and we let $e^\star$ be the edge incident with $x_1$ that is contained in $Q_{2}$ and $Q_{3}$.
		\end{enumerate}
		
		We denote the endpoint of $e^\star$ other than $x_1$ by $x_2$.
		For an illustration of the following construction, see Figure~\ref{fig:clique-sketch}.
		We obtain $(\decT', \decf')$ as follows.
		\begin{itemize}
			\item We remove the leaves to which a vertex on $P$ is mapped from $\decT$.
			If this creates additional leaves other than the endpoints of $e^\star$, then
			we iteratively remove them.
			\item We perform $\card{V(P)}$ many subdivisions on $e^\star$
			and attach a new leaf to each subdivision vertex.
			If after this step, an endpoint of $e^\star$ is a leaf, then we remove it.
			For simplicity, we assume that this did not happen.
			This finishes the construction of the tree $\decT'$.
			Let $\decT''$ be the caterpillar introduced this way between $x_1$ and $x_2$ in $\decT'$.
			\item
			We let $\decf'|_{V(G) \setminus V(P)} = \decf|_{V(G) \setminus V(P)}$,
			and we let $\decf'$ map the vertices in $P$ to the (newly created) leaves of $\decT''$
			in the order in which they appear on $P$,
			with $a_{1, 1}'$ being mapped to the leaf closest to $x_1$ 
			and $a_{3, 2}'$ being mapped to the leaf closest to $x_2$.
		\end{itemize}
		
		\begin{figure}
			\centering
			\includegraphics[width=.95\textwidth]{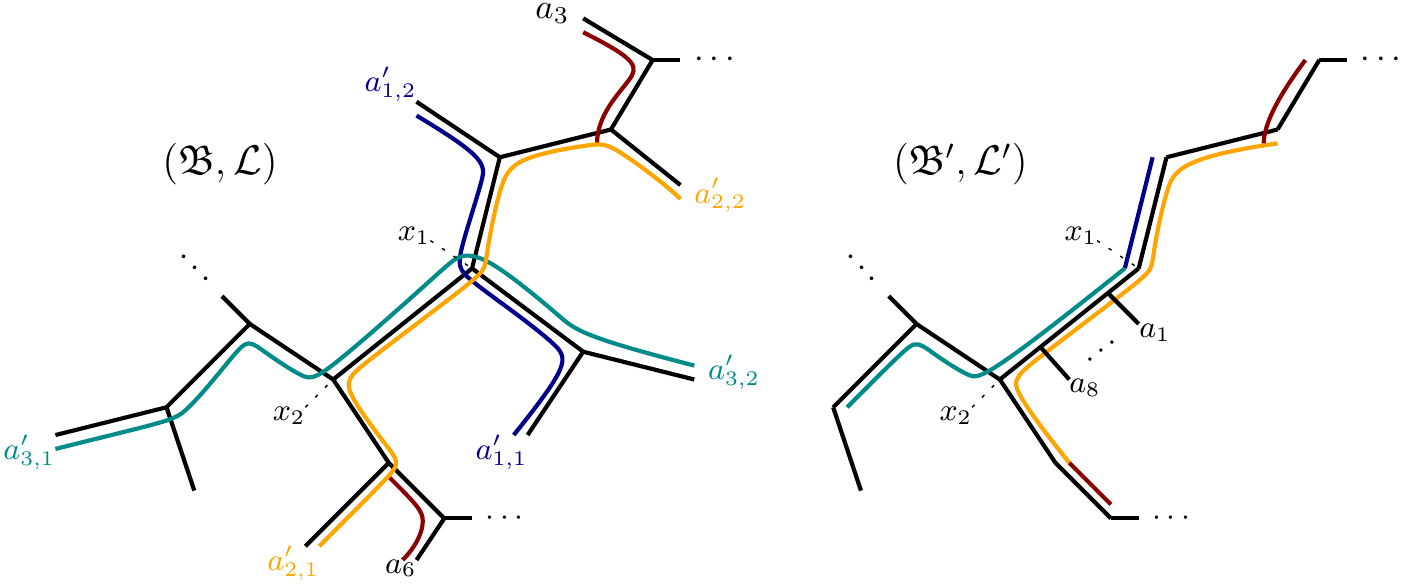}
			\caption{Sketch of the clique case. 
				Note that $a_{1, 1}' = a_1$, $a_{1, 2}' = a_2$, $a_{2, 1}' = a_4$, etc. 
			}
			\label{fig:clique-sketch}
		\end{figure}
		We show that $\mimw(\decT', \decf') \le \mimw(\decT, \decf)$ 
		by exhibiting for each induced matching $M'$ in each cut in $(\decT', \decf')$
		an induced matching $M$ in some cut of $(\decT, \decf)$ of the same size.
		We use the following convention.
		For a cut $(A', B')$ induced by some edge in $\decT'$,
		we call a cut $(A, B)$ induced by some edge in $\decT$ a 
		\emph{corresponding cut} 
		if $A' \setminus V(P) = A \setminus V(P)$ 
		and $B' \setminus V(P) = B \setminus V(P)$.
		Observe that for each cut $(A', B')$ there is always a corresponding cut $(A, B)$.
		
		Let $\decS$ be the minimal subtree of $\decT$ containing $\decf(a_i)$ for all $i \in [8]$
		(note in particular that $\decS$ contains $Q_1$, $Q_2$, and $Q_3$)
		and let $\decS'$ be the subtree of $\decT'$ obtained from $\decS$ by replacing the 
		leaves that are not in $\decT'$ and the edge $x_1x_2$ with the caterpillar between $x_1$ and $x_2$ in $\decT'$.
		Moreover, for all $j \in [3]$, 
		we let $Q_{j}'$ be the path in $S'$ that naturally corresponds to $Q_{j}$ in $\decS$.
		
		First, we observe that every cut induced by an edge in $E(\decT') \setminus E(\decS')$ 
		is also induced by some edge in $\decT$, 
		therefore we can restrict ourselves to the cuts induced by edges in $E(\decS')$.
		Let $\decS'' \defeq \decS' - V(\decT'')$,
		and observe that $\decS''$ has precisely two connected components -- 
		the one containing $x_1$, say $D_1$ and the one containing $x_2$, say $D_2$.
		
		Let $(A', B')$ be any cut induced by an edge in $D_2$.
		We may assume that $V(P) \subseteq B'$.
		We observe that for each $j \in [3]$, the path $Q_{j}'$ has at most one endpoint in $D_2$.
		(Recall that each such path uses the vertex $x_1 \notin V(D_2)$.)
		Let $R$ be the set of all vertices of $P$ 
		mapped to endpoints in the corresponding paths $Q_{j}$ by $\decf$.
		Then, there is a corresponding cut $(A, B)$ in $(\decT, \decf)$ such that
		$A = A' \cup R$ and $B = B' \setminus R$.
		Moreover, we have that 
		\begin{align*}
			E(G[A', B' \setminus V(P)]) = E(G[A \setminus V(P), B \setminus V(P)]).
		\end{align*}
		Observe that there is at most one edge in $G[A', B']$ that is not contained in $G[A, B]$:
		Since all vertices of $P$ are in $B'$, no edge of $P$ crosses the cut $(A', B')$.
		The vertices $a_{1, 1}'$ and $a_{3, 2}'$ have one more neighbor in $V(G) \setminus V(P)$ each;
		but since the cut we are considering is induced by an edge in $D_2$ we know that $a_{1, 1}$ remains in $B$.
		In the case that $a_{3, 2}' \in A$, it might happen that $a_{3, 2}' w \in E(G[A', B']) \setminus E(G[A, B])$,
		where $w$ is the neighbor of $a_{3, 2}'$ other than $a_{3, 1}'$.
		But this implies that $Q_{3}$ uses the edge that induces the cut $(A, B)$, in particular that
		$a_{3, 1}' a_{3, 2}' \in E(G[A, B])$.
		
		Now let $M'$ be an induced matching in $G[A', B']$.
		If $a_{3, 2}'w \notin M'$, then $M' \subseteq E(G[A, B])$ so $M'$ is an induced matching in $G[A, B]$ as well.
		If $a_{3, 2}'w \in M'$, then as we just argued, we can replace it with the edge $a_{3, 1}' a_{3, 2}'$ in $G[A, B]$
		and obtain an induced matching in $G[A, B]$ of the same size as $M'$.
		
		The cuts induced by edges in $D_1$ can be argued similarly, 
		with two exceptions.
		First, the path $Q_{1}$ may be fully contained in $D_1$ 
		(in situation~\ref{enum:common:vtx} described above).
		However, as $Q_{1}$ contains $x_1$, 
		the vertex $a_{1, 1}'$ only switches sides on the cuts induced by edges on
		the path between $x_1$ and the leaf to which $a_{1, 1}'$ is mapped,
		so it can be treated similarly.
		Second, there may be a cut 
		where both $a_{1, 1}'$ and $a_{3, 2}'$ switch sides. 
		But in this case, we can apply the same argument as before to both $a_{1, 1}'$ and $a_{3, 2}'$ at once.
		
		Before we move on to the edges in $\decT''$,
		note that there are two edges, 
		one between $x_1$ and a vertex from $\decT''$, 
		and one between $x_2$ and a vertex from $\decT''$,
		inducing cuts where all vertices of $P$ are on the same side of the cut.
		These can be treated similarly as cuts in the components $D_1$ and $D_2$ 
		(depending on which component they are incident with).
		
		It remains to consider any cut $(A', B')$ induced by an edge in $\decT''$,
		where $A'$ contains the vertices of $G$ 
		that are mapped to leaves in the connected component of $\decT' - V(\decT'')$ containing $x_1$, 
		and $B'$ contains the vertices that are mapped to the component containing $x_2$.
		Clearly, if the edge is incident with a leaf of $\decT''$ then the corresponding cut has a part with only one vertex,
		so we only have to consider edges that are not incident with any leaf.
		We may assume that there is some $j \in [7]$ such that 
		$A' \cap V(P) = \{a_1, \ldots, a_j\}$ and
		$B' \cap V(P) = \{a_{j+1}, \ldots, a_8\}$.
		Note that the edge $a_j a_{j+1}$ is the only edge on $P$ that is contained in $G[A', B']$.
		
		Consider the corresponding cut $(A, B)$ induced by the edge $x_1x_2 \in E(\decT)$ in $(\decT, \decf)$,
		where $A$ contains the vertices that are mapped to leaves in the component of $\decT - x_1 x_2$ containing $x_1$,
		and $B$ contains the vertices that are mapped to leaves in the component containing $x_2$.
		By construction, 
		we know that $a_{1, 1}' = a_1 \in A$, 
		and that the paths $Q_{2}$ and $Q_{3}$ use the edge $x_1x_2$.
		The latter implies that both $a_{2, 1}' a_{2, 2}'$ and $a_{3, 1}' a_{3, 2}'$ 
		are edges in $G[A, B]$.
		
		Let $M'$ be an induced matching in $G[A', B']$.
		Again, we can restrict our attention to the edges incident with vertices in $P$.
		If $a_{1, 1}' w \in M'$, where $w$ is the neighbor of $a_{1, 1}'$ other than $a_{1, 2}'$,
		then we observe that $a_{1, 1}' w \in E(G[A, B])$, since $a_{1, 1}' \in A$ and $w$ has not been moved.
		If $a_{3, 2}' u \in M'$, where $u$ is the neighbor of $a_{3, 2}'$ other than $a_{3, 1}'$,
		then we can replace $a_{3, 2}'u$ with the edge $a_{3, 1}' a_{3, 2}'$.
		Finally, if $a_j a_{j+1} \in M'$, then we can replace $a_j a_{j+1}$ with the edge $a_{2, 1}' a_{2, 2}'$.
		In all of these cases, the resulting set of edges is an induced matching in $G[A, B]$,
		clearly of the same size as $M'$.
		This finishes Case~1.
	\end{ourcase}
	
	\begin{ourcase}[$H$ has a pair of nonadjacent vertices]
		Let $Q_{h_1}$ and $Q_{h_2}$ be the paths corresponding to 
		the pair of nonadjacent vertices such that $h_1 < h_2$.
		For $j \in [2]$, $r \in [2]$, we use the shorthand notation $a_{j, r}' \defeq a_{h_j, r}''$.
		Since $Q_{h_1}$ and $Q_{h_2}$ do not intersect,
		we know that there is some edge $e^\star$ in $\decT$ such that $\decT - e^\star$ has two components $D_1$ and $D_2$
		such that $Q_{h_1}$ is fully contained in $D_1$ and $Q_{h_2}$ is fully contained in $D_2$.
		For $i \in [2]$, let $x_i$ denote the endpoint of $e^\star$ that is contained in $D_i$.
		
		Similarly to above, let $P^\star \subseteq P$ be the unimportant subpath from $a_{1, 1}'$ to $a_{2, 2}'$.
		we obtain a branch decomposition $(\decT', \decf')$ from $(\decT, \decf)$ as follows.
		\begin{itemize}
			\item We remove from $\decT$ all leaves to which a vertex from $P^\star$ is mapped via $\decf$.
			If this creates additional leaves other than the endpoints of $e^\star$, then
			we iteratively remove them.
			\item We perform $\card{V(P^\star)}$ subdivisions on the edge $e^\star$, 
			and attach a new leaf to each subdivision vertex.
			This finishes the construction of $\decT'$; 
			let $\decT''$ be the caterpillar introduced this way between $x_1$ and $x_2$.
			\item We let $\decf'|_{V(G) \setminus V(P^\star)} = \decf|_{V(G) \setminus V(P^\star)}$ and we let $\decf'$
			map the vertices of $P^\star$ to the (newly created) leaves of $\decT''$ in the order in which they appear on $P^\star$,
			with $a_{1, 1}'$ being mapped to the leaf closest to $x_1$ and $a_{2, 2}'$ being mapped to the leaf closest to $x_2$.
		\end{itemize}
		
		\begin{figure}
			\centering
			\includegraphics[width=.95\textwidth]{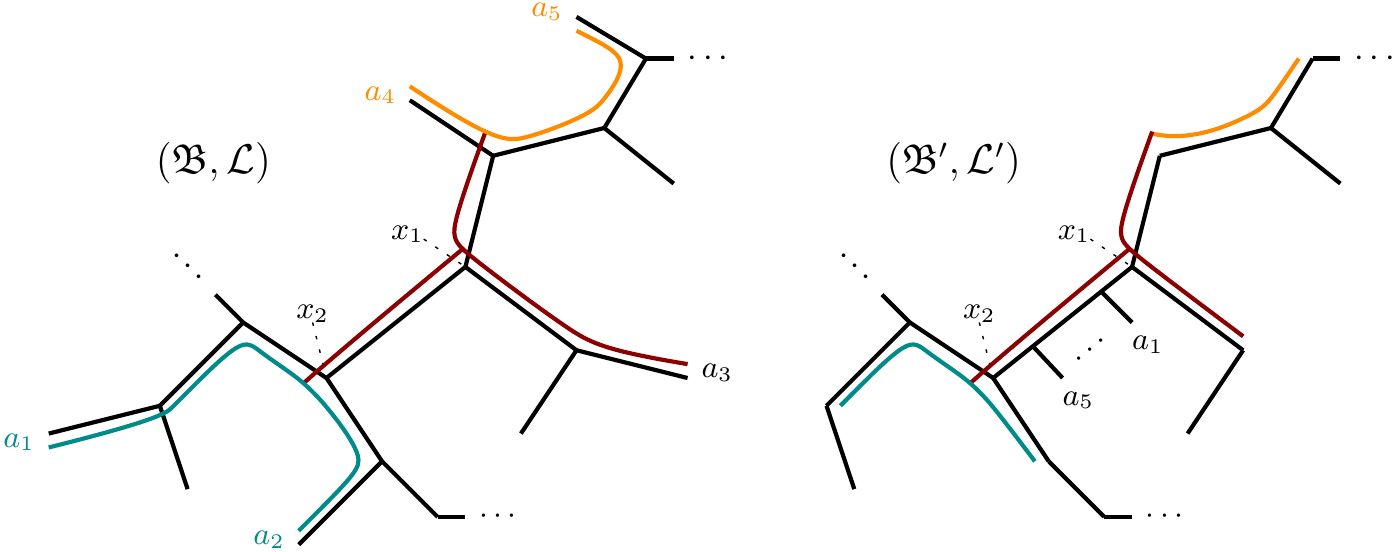}
			\caption{Sketch of the case when $H$ has a pair of nonadjacent vertices.
				We assume that $a_{1, 1}' = a_1$ and that $a_{2, 2}' = 5$. 
				The other cases are similar.}
			\label{fig:is-sketch}
		\end{figure}
		We show that $\mimw(\decT', \decf') \le \mimw(\decT, \decf)$;
		for an illustration of this case see Figure~\ref{fig:is-sketch}.
		Recall that for a cut
		$(A', B')$ induced by an edge in $\decT'$,
		we call a cut $(A, B)$ induced by an edge in $T$ a \emph{corresponding cut}
		if $A' \setminus V(P^\star) = A \setminus V(P^\star)$
		and $B' \setminus V(P^\star) = B \setminus V(P^\star)$.
		
		Let $\decS$ be the minimal subtree of $\decT$ containing $\decf(p)$ for all $p \in V(P^\star)$,
		and let $\decS'$ be the corresponding subtree of $\decT'$, including the caterpillar $\decT''$.
		For all $j \in [2]$, we denote by $Q_{h_j}'$ the path in $\decT'$ that naturally corresponds to $Q_{h_j}$.
		Each cut induced by an edge in $E(\decT') \setminus E(\decS')$ is also induced by some edge in $\decT$, 
		so we can focus on cuts induced by edges in $\decS'$.
		For $j \in [2]$, we denote by $D_j'$ the component of $\decS' - V(\decT'')$ containing $x_j$,
		and by $D_j$ the component of $\decS - x_1x_2$ containing $x_j$.
		
		We now consider cuts induced by edges in $D_1'$ and 
		we remark that the cuts induced by edges in $D_2'$ can be dealt with symmetrically.
		In each such cut $(A', B')$, we may assume that $V(P^\star) \subseteq B'$.
		We observe that there are at most two edges crossing the cut that are incident with a vertex in $P^\star$,
		namely one incident with $a_{1, 1}'$ and one incident with $a_{2, 2}'$.
		
		We first consider a cut $(A', B')$ induced by an edge in $Q_{h_1}'$,
		and let $(A, B)$ be the corresponding cut (induced by the copy of that edge in $Q_{h_1}$).
		We have that $a_{2, 2}' \in B$.
		By the choice of the cut, 
		we either have that $a_{1, 1}' \in A$ and $a_{1, 2}' \in B$
		or $a_{1, 1}' \in B$ and $a_{1, 2}' \in A$.
		If $a_{1, 1}' \in B$, then any induced matching in $G[A', B']$
		is also an induced matching in $G[A, B]$.
		Suppose $a_{1, 1}' \in A$ and consider any induced matching $M'$ in $G[A', B']$.
		If no edge of $M'$ is incident with $a_{1, 1}'$,
		then $M'$ is also an induced matching in $G[A, B]$;
		so we may assume that $a_{1, 1}' w \in M'$ for some $w \notin V(P^\star)$.
		But in this case, we have that $a_{1, 2}' \in B$, so $a_{1, 1}' a_{1, 2}' \in E(G[A, B])$,
		which means that $M \defeq M' \setminus \{a_{1, 1}' w\} \cup \{a_{1, 1}' a_{1, 2}'\}$
		is an induced matching in $G[A, B]$.
		
		Next, we consider a cut $(A', B')$ induced by an edge in $D_1' - V(Q_{h_1}')$,
		and let $(A, B)$ be the corresponding cut in $D_1 - V(Q_{h_1})$.
		Throughout the following, we use the notation $P^\star = b_1 \ldots b_\ell$.
		(Note that $b_1 = a_{1, 1}'$ and that $b_\ell = a_{2, 2}'$.)
		We may assume that $V(P^\star) \subseteq B'$,
		that $\{a_{1, 1}', a_{1, 2}'\} \subseteq A$, and that
		$\{a_{2, 1}', a_{2, 1}'\} \subseteq B$.
		Let $M'$ be an induced matching in $G[A', B']$.
		We may assume that there is some $a_{1, 1}' w \in M'$.
		Since $a_{1, 2}' = b_2 \in A$ and $a_{2, 1}' = b_{\ell-1} \in B$,
		and since there is a path $b_2 \ldots b_{\ell-1}$ in $G$,
		we know that there is some $i \in \{2, \ldots, \ell-2\}$
		such that $b_i \in A$ and $b_{i+1} \in B$.
		Since $V(P^\star) \subseteq B'$, 
		no endpoint of an edge in $M'$ is adjacent to $b_i$ or $b_{i+1}$.
		Therefore, $M' \setminus \{a_{1, 1}' w\} \cup \{b_i b_{i+1}\}$ is an induced matching in $G[A, B]$.
		
		It remains to consider a cut $(A', B')$ induced by an edge in $\decT''$.
		In this case there is some $i \in [\ell-1]$ such that 
		we may assume that $\{b_1, \ldots, b_i\} \subseteq A'$
		and that $\{b_{i+1}, \ldots, b_{\ell}\} \subseteq B'$.
		Let $(A, B)$ be the corresponding cut in $(\decT, \decf)$, 
		meaning the one induced by the edge $x_1 x_2$.
		By construction, 
		$\{b_1, b_2\} \subseteq A$ and $\{b_{\ell-1}, b_\ell\} \subseteq B$.
		Let $M'$ be an induced matching in $G[A', B']$.
		If $M'$ contains an edge between $b_1$ (or $b_\ell$) 
		and a vertex not in $P^\star$,
		then this edge is also contained in $G[A, B]$.
		If $M'$ uses the edge $b_i b_{i+1}$, 
		then by the same argument as above,
		we can find an edge $b_j b_{j+1}$ in $G[A, B]$ 
		that we can replace $b_i b_{i+1}$ with
		to obtain an induced matching in $G[A, B]$.
	\end{ourcase}
	This finishes the proof of the lemma.
\end{proof}
\fi

The construction in the previous lemma does not 
increase the $\Fmatch$-branchwidth of a branch decomposition.
\ifshort
It might however introduce induced chain graphs of value $2$ in some cut, which is a technical complication that can, luckily, be dealt with by considering the case of $\Fstar$-branchwidth $1$ separately ($\star$).
\fi
\iflong
It might however introduce induced chain graphs of value $2$ in some cut. 
Before we proceed and give the construction to deal with this case,
we prove an auxiliary lemma that tells us that we need not worry about
larger chains and anti-matchings.
(Observe that an induced anti-matching of value $2$ is an induced matching of value $2$,
so this case is covered by Lemma~\ref{lem:fes:unimportant:placement}.)
\begin{lemma}\label{lem:fes:aux:small}
	Let $G$ be a graph,
	and let $P = a_1 \ldots a_\ell \subseteq G$ be an unimportant path with $\ell \ge 5$,
	and let $P^\star = a_2 \ldots a_{\ell-1}$.
	Let $(A, B)$ be any vertex cut of $G$, and let $H = G[A, B]$.
	Then, in $H$, no vertex of $P^\star$ is contained in
	\begin{itemize}
		\item an induced anti-matching of value at least $3$,
		\item an induced strict chain of value at least $3$.
	\end{itemize}
\end{lemma}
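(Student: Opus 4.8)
The plan is to exploit how severely constrained the neighbourhood of a vertex of \(P^\star\) is in \emph{any} cut. First I would record the fact underlying everything: since \(P\) is unimportant, every \(a_i\in V(P^\star)\) has \(\deg_G(a_i)=2\) with \(N_G(a_i)=\{a_{i-1},a_{i+1}\}\); hence for every vertex cut \((A,B)\) and \(H=G[A,B]\) we have \(\deg_H(a_i)\le 2\) and \(N_H(a_i)\subseteq\{a_{i-1},a_{i+1}\}\subseteq V(P)\). Consequently \(H[V(P)]\) is a subgraph of the path \(P\) away from the two ends, so it is a disjoint union of subpaths of \(P\) together with at most one cycle that necessarily passes through \(a_1\) or \(a_\ell\). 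In particular no vertex of \(V(P^\star)\) lies on a short induced cycle inside \(V(P)\) or has three neighbours in \(H\).

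For anti-matchings this almost immediately finishes the hard direction. Since \(H^n_{\iantimatch}\) is \((n-1)\)-regular, for \(n\ge 4\) every one of its vertices has degree at least \(3\), which rules out an \(a_i\in V(P^\star)\); the case \(n=2\) is an induced matching of value \(2\) and is handled by Lemma~\ref{lem:fes:unimportant:placement}; and for \(n=3\), where \(H^3_{\iantimatch}\) is isomorphic to the \(6\)-cycle, I would argue structurally. If \(Z\) is such an induced \(6\)-cycle and \(a_i\in V(P^\star)\cap V(Z)\), then \(N_Z(a_i)=\{a_{i-1},a_{i+1}\}\), so \(a_i\) is an internal vertex of a subpath-component \(L=a_j\cdots a_k\) of \(Z\cap V(P)\). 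An endpoint of \(L\) has \(Z\)-degree \(2\) but only one neighbour inside \(L\), hence a \(Z\)-neighbour off \(V(P)\); only \(a_1\) and \(a_\ell\) can have off-path neighbours, so \(L=P\), whence \(\ell\le|V(Z)|=6\). The remaining values \(\ell\in\{5,6\}\) are then killed by a short bookkeeping argument using that the cut must alternate along the whole of \(L=P\) and that a path is not a \(6\)-cycle.

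For strict chains the same philosophy applies, but the obstruction is disconnected, so I would split \(H^n_{\ichainstrict}\) into its two isolated vertices and its connected \emph{core}, which is isomorphic to \(H^{n-1}_{\ichain}\) and has \(2(n-1)\ge 4\) vertices and diameter \(2\). If \(a_i\in V(P^\star)\) lies in the core (and, away from the four vertices \(a_2,a_3,a_{\ell-2},a_{\ell-1}\) near the ends of \(P\)), then the whole core is contained in the ball \(N^2_H[a_i]\subseteq\{a_{i-2},\dots,a_{i+2}\}\), which induces a subgraph of a path and thus has maximum degree at most \(2\); but for \(n\ge 4\) the core \(H^{n-1}_{\ichain}\) has a vertex of degree \(\ge 3\) (one endpoint of the chain), a contradiction, so \(n=3\), and the residual case \(n=3\) (core \(=P_4\)) is dispatched by tracing the two degree-\(2\) core vertices along \(P\) exactly as in the \(6\)-cycle argument. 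The genuinely delicate case is when \(a_i\) maps to one of the two isolated vertices of \(H^n_{\ichainstrict}\): then \(a_i\) carries no structural information on its own, and one has to show that the \emph{core} of the supposed obstruction is already forced to embed as an induced (near-)subpath of \(P\) reaching an endpoint, leaving no room for the isolated vertices together with the forced alternation of the cut. This boundary analysis near \(a_1\) and \(a_\ell\), and the isolated-vertex bookkeeping for strict chains, is where the hypothesis \(\ell\ge 5\) is spent and is the main obstacle; the anti-matching part and the high-value strict-chain part are, by contrast, routine once the degree/linear-forest observation is in place.
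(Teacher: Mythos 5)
Your treatment of the chain obstruction goes astray because you read ``strict chain'' as the disconnected family $\ichainstrictFamily$, i.e.\ $H^n_{\ichainstrict}\cong H^{n-1}_{\ichain}+2K_1$, whereas the object the lemma is really about (and the only chain-type family occurring in the primal $\Fstar$) is the twin-free chain $H^n_{\ichain}$ itself. For that object the paper's proof is just the short degree argument you only brush against: for $n\ge 3$ every vertex of $H^n_{\ichain}$ is adjacent to $a_1$ or $b_n$, which have degree $n\ge 3$, while both $G$-neighbours of a vertex of $P^\star$ lie on $P$ and have degree $2$ in $G$, hence degree at most $2$ in $H$ --- contradiction, no balls or diameters needed. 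Under your reading, the ``delicate'' isolated-vertex case you explicitly leave open is not a loose end but unfixable: a vertex of $P^\star$ with no crossing edges to the other five chosen vertices can always play the role of one of the two isolated vertices of an induced copy of $H^n_{\ichainstrict}$ whose connected core lies entirely elsewhere in the graph, so nothing forces the core anywhere near $P$, and the statement read your way is simply false. Your core analysis also has internal slips: $H^{n-1}_{\ichain}$ has diameter $3$, not $2$; the restriction ``away from $a_2,a_3,a_{\ell-2},a_{\ell-1}$'' leaves those vertices untreated; and the $n=3$ case cannot be ``dispatched as in the $6$-cycle argument,'' since the core is then a $P_4$ whose two pattern-degree-one endpoints need no neighbour outside $V(P)$ --- three consecutive crossing edges of $P$ already yield an induced $P_4$ inside the interior of $P$.

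For anti-matchings, your value-$\ge 4$ argument coincides with the paper's. For value $3$ the reduction to $\ell\le 6$ is correct, but the announced ``short bookkeeping'' for $\ell\in\{5,6\}$ does not exist: if $\ell=6$ and $a_1a_6\in E(G)$, or $\ell=5$ and $a_1,a_5$ share a neighbour off $P$, then a cut alternating around the resulting $6$-cycle induces a $C_6$ in $H$ through vertices of $P^\star$, so these residual cases cannot be argued away. The paper instead disposes of value $3$ by asserting that every cycle of $G$ through a vertex of $P$ has length at least $7$ --- which is what actually holds in all of its applications, where $P$ has length at least $7$ --- rather than by an end-of-path case analysis; as written, this step of your proof would fail and can only be repaired by strengthening the hypothesis on the length of $P$, as the paper implicitly does.
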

\begin{proof}
	The length of any shortest cycle (in $G$) that contains a vertex from $P$ has length at least $7$.
	For \Fantimatch, we observe that each vertex in an anti-matching of value at least $4$ has degree at least three;
	since $P$ is an unimportant path, all its vertices have degree $2$ in $G$ and therefore they cannot be part of 
	such an anti-matching in $H$.
	An anti-matching of value $3$ is an induced $C_6$, 
	and we observed above that no vertex of $P$ can be in a cycle of that length.
	For \Fchain, we observe that each vertex in a strict chain of value at least $3$ 
	is adjacent to a vertex of degree at least $3$; 
	therefore no vertex of $P^\star$ can be part of such a strict chain.
\end{proof}

Next we work towards a modification of branch decompositions 
that preserves their $\ichainwidth$.
The desired property is captured in the following definition.
\fi
\ifshort
Next, we can develop separate arguments that are designed to handle the case of $\Fchain$-branchwidth when $\Fstar$-branchwidth is at least $2$, and show that these two approaches can be combined together to deal with $\Fstar$-branchwidth for any union $\Fstar$ of \primal \phsymb\ classes ($\star$). Recall that a graph is \emph{bridgeless} if there is no edge $e \in E(G)$
such that $G - e$ has more connected components than $G$; we obtain:
\fi
\iflong
\begin{definition}
	Let $G$ be a graph, $(\decT, \decf)$ be a branch decomposition of $G$,
	and let $P = a_1 \ldots a_\ell \subseteq G$ be a path.
	We say that $(\decT, \decf)$ \emph{neatly breaks} $P$
	if there is an $(s_1, s_\ell)$-path $\decT'$ in $\decT$ such that:
	\begin{itemize}
		\item $\{\decf(a_1) s_1, \decf(a_\ell) s_\ell, \decf(a_{\ell-1}) s_\ell\} \subseteq E(\decT)$.
		\item There is a subpath $\decT_1$ of $\decT'$ such that $\decT[V(\decT_1) \cup \decf(\{a_1, \ldots, a_{\ell-2}\})]$
		is a caterpillar whose corresponding linear order is $a_1, \ldots, a_{\ell-2}$.
	\end{itemize}
\end{definition}
\fi

\iflong
The following lemma shows that neat breakage has the desired property:
if a path is neatly broken by a branch decomposition,
then none of its vertices is part of an induced chain of value $2$ or more in any cut.
\begin{lemma}
	\label{lem:fes:neatly:breaks}
	Let $G$ be a graph, $(\decT, \decf)$ be a branch decomposition of $G$,
	and let $P = a_1 \ldots a_\ell$ be an unimportant path with $\ell \ge 5$.
	If $(\decT, \decf)$ neatly breaks $P$, then
	there is no cut $(A, B)$ induced by $(\decT, \decf)$ such that
	any vertex of $P^\star = a_2 \ldots a_{\ell-1}$ is in an induced strict chain of value at least $2$.
\end{lemma}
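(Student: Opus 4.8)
The plan is to show that neat breakage forces every vertex of $P^\star$ to have a very restricted view across any cut, ruling out membership in an induced strict chain of value $\ge 2$. First I would recall the structure imposed by ``neatly breaks'': there is a path $\decT'$ in $\decT$ with endpoints $s_1, s_\ell$, the leaves $\decf(a_1)$, $\decf(a_\ell)$, $\decf(a_{\ell-1})$ hang directly off $\decT'$ at its ends, and a subpath $\decT_1$ of $\decT'$ carries the leaves $\decf(a_1), \dots, \decf(a_{\ell-2})$ as a caterpillar in the linear order $a_1, \dots, a_{\ell-2}$. The key consequence I want to extract: for any edge $e$ of $\decT$ inducing a cut $(A,B)$, the set $V(P^\star) \cap A$ (say) is a contiguous block of the path $P^\star$, except possibly for the ``boundary'' vertices $a_1, a_{\ell-1}, a_\ell$ which can be split off by the three pendant edges; crucially $a_2, \dots, a_{\ell-2}$ appear in caterpillar order along $\decT_1$, so whichever side of the cut they fall on, they fall on in a way compatible with a contiguous sub-interval of that order.

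Next I would argue the core claim: suppose for contradiction that some $a_j \in V(P^\star) = \{a_2, \dots, a_{\ell-1}\}$ lies in an induced strict chain graph $C$ of value $\ge 2$ inside $H = G[A,B]$, for some cut $(A,B)$ induced by $(\decT,\decf)$. An induced strict chain $H^2_{\ichainstrict}$ has bipartition $((a_1',a_2'),(b_1',b_2'), \{a_1'b_2'\})$ in the notation of the paper --- i.e.\ it is an induced $2K_2$ minus... actually let me be careful: $H^2_{\ichainstrict}$ has edge set $\{a_i'b_j' : i < j \le 2\} = \{a_1'b_2'\}$, so it is a single edge plus two isolated vertices; more relevantly any strict chain of value $2$ contains $H^2_{\ichain}$, a path on $3$ vertices ($a_1'b_1', a_1'b_2', a_2'b_2'$ gives $b_1'$—$a_1'$—$b_2'$—$a_2'$, a $P_4$). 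In any case such a configuration requires one of its vertices to have degree $\ge 2$ \emph{within the bipartite graph $H$}. Here is where I invoke that $P$ is unimportant: every vertex of $P$ has degree exactly $2$ in $G$, hence degree at most $2$ in $H$, and its two $G$-neighbours are its path-neighbours on $P$ (with the obvious exceptions at the ends of $P$, but $a_2, \dots, a_{\ell-1}$ have both neighbours inside $P$). So if $a_j$ has degree $2$ in $H$, then $a_{j-1}, a_{j+1} \in B$ while $a_j \in A$; and if $a_j$ has degree $1$ in $H$, exactly one of $a_{j-1}, a_{j+1}$ is on the other side.

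Then I would combine this with the caterpillar order. The vertices $a_{j-1}, a_j, a_{j+1}$ (all in $\{a_1,\dots,a_{\ell-2}\}$ if $2 \le j \le \ell-3$; the cases $j = \ell-2, \ell-1$ I would treat using the pendant edges at $s_\ell$) are consecutive in the caterpillar order along $\decT_1$; therefore no edge of $\decT$ can place $a_j$ on one side with \emph{both} $a_{j-1}$ and $a_{j+1}$ on the other side --- cutting an edge of a caterpillar always separates the leaves into two contiguous prefixes/suffixes of the order. This kills the degree-$2$ case outright: there is simply no cut of $(\decT,\decf)$ realizing $\{a_{j-1},a_{j+1}\}$ on one side and $a_j$ on the other. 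For the degree-$1$ case, $a_j$ together with its single $H$-neighbour $a_{j\pm1}$ forms a component of $H$ that is a single edge $K_2$, and an induced strict chain of value $\ge 2$ cannot be a disjoint union that has $a_j$ only in such a trivial $K_2$ component — every vertex of $H^2_{\ichain}$ lies in a connected component with at least $3$ vertices. Hence $a_j$ cannot be in an induced strict chain of value $\ge 2$, contradiction. Finally I would clean up the boundary cases $j \in \{2, \ell-1\}$: $a_2$'s neighbours are $a_1$ and $a_3$ --- $a_1$ hangs off $s_1$ and $a_3$ is in the caterpillar, and the same contiguity argument (now using that the edge $\decf(a_1)s_1$ and the caterpillar together still force $a_1, a_2, a_3$ to appear consecutively) applies; symmetrically for $a_{\ell-1}$ using the pendant edges $\decf(a_\ell)s_\ell$ and $\decf(a_{\ell-1})s_\ell$.

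The main obstacle I expect is bookkeeping at the two boundaries of $P^\star$: the vertices $a_2$ and $a_{\ell-1}$ (and to a lesser extent $a_3, a_{\ell-2}$) interact with the three special pendant edges of the neat-breakage structure rather than lying cleanly inside the caterpillar $\decT_1$, so the ``consecutive leaves can't be split $1$-vs-$2$'' argument has to be restated in terms of the full path $\decT'$ including those pendants, verifying that $a_1 a_2 a_3$ and $a_{\ell-2} a_{\ell-1} a_\ell$ still appear as consecutive leaf-blocks. Everything else is a short combinatorial argument about caterpillars plus the degree-$2$ property of unimportant paths.
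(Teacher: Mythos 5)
Your overall strategy (degree-two vertices of an unimportant path plus the linear placement forced by neat breakage) is in the same spirit as the paper's argument, but two of your key claims are false or unsupported as stated, and they sit exactly where the work lies. First, the claim that ``no edge of $\decT$ can place $a_j$ on one side with both $a_{j-1}$ and $a_{j+1}$ on the other'' is wrong: the pendant edge at the leaf $\decf(a_j)$ induces precisely such a cut. That cut is harmless only because one side is the singleton $\{a_j\}$, so it cannot contain the two same-side vertices that a chain of value $2$ (an induced $P_4$ across the cut) needs --- but this singleton observation, which you never make, is needed repeatedly: it is also what rescues your degree-one case whenever the cut edge is a pendant edge, since there the component of $a_j$ in $G[A,B]$ is a star on three vertices, not the $K_2$ you assert.

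Second, and more seriously, your degree-one argument rests on the claim that $a_j$ together with its unique cross-neighbour forms an isolated $K_2$ component of $G[A,B]$. This requires the cross-neighbour to have its remaining neighbour on its own side, which the caterpillar order guarantees only when that cross-neighbour is itself one of $a_2,\dots,a_{\ell-1}$. When the crossing $P$-edge is $a_1a_2$ (or $a_{\ell-1}a_\ell$), the other endpoint's second neighbour is the off-path vertex $w$ (resp.\ $w'$), whose side of the cut is not controlled by the caterpillar at all; if $w$ ends up on $a_2$'s side and has a neighbour on $a_1$'s side, one gets an induced $P_4$ through $a_2$, and your proposed boundary patch --- ``$a_1,a_2,a_3$ still appear consecutively'' --- says nothing about $w$. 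The paper closes exactly these configurations by a counting step that your proposal is missing: for a cut whose crossing $P$-edge is $a_ia_{i+1}$ with $i<\ell-2$ or $i=\ell-1$, the neat-breaking structure forces the side containing the prefix of the caterpillar to consist of $a_1,\dots,a_i$ alone (respectively forces a singleton side at $s_\ell$), so $\card{E(G[A,B])}\le 2$ and no $P_4$ can exist; only the case $i=\ell-2$ survives, and there the crossing edge is indeed isolated, which is the one place your $K_2$ argument is actually valid. Without the singleton observation and this boundary counting, the proof is incomplete.
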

\begin{proof}
	Suppose for a contradiction that there is a cut $(A, B)$ induced by $(\decT, \decf)$
	such that a vertex of $P^\star$ is contained in an induced strict chain $C$ of value at least~$2$ in $G[A, B]$.
	By Lemma~\ref{lem:fes:aux:small} we know that $C$ has value $2$.
	It is not possible that $\card{V(C) \cap V(P)} = 1$,
	since each vertex in $P^\star$ only has neighbors in $V(P)$.
	Since $G[A, B]$ contains at most one edge of $P$, we have that 
	$V(P) \cap V(C) = \{a_i, a_{i+1}\}$ for some $i \in \{1, \ldots, \ell-1\}$.
	If $i < \ell - 2$ or $i = \ell - 1$, then
	$\card{E(G[A, B])} \le 2$,
	so we may assume that $i = \ell - 2$.
	Assume up to renaming that $a_{\ell-2} \in A$ and $a_{\ell-1} \in B$.
	The neighbor of $a_{\ell-2}$ other than $a_{\ell-1}$ is in $A$,
	and the neighbor of $a_{\ell-1}$ other than $a_{\ell-2}$ is in $B$,
	therefore $a_{\ell-2}a_{\ell-1}$ cannot be an edge of an induced chain of value~$2$ in $G[A, B]$.
\end{proof}
\fi

\iflong
The previous lemma has the following consequence.
\fi
\iflong
\begin{corollary}
	\label{cor:fes:neatly:breaks}
	Let $G$ be a graph, $(\decT, \decf)$ be a branch decomposition of $G$,
	and let $P = a_1 \ldots a_\ell$ be an unimportant path with $\ell \ge 5$.
	Then, there is a branch decomposition $(\decT', \decf')$ of $G$ such that
	\begin{enumerate}
		\item $\ichainwidth(\decT', \decf') \le \ichainwidth(\decT, \decf)$,
		\item $(\decT', \decf')$ neatly breaks $P$, and
		\item restricted to $G - \{a_2, \ldots, a_{\ell-1}\}$, 
		the set of cuts induced by $(\decT, \decf)$ 
		is equal to the set of cuts induced by $(\decT', \decf')$.
	\end{enumerate}
\end{corollary}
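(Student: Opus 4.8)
The plan is to take an arbitrary branch decomposition $(\decT,\decf)$ of $G$ and explicitly surgically modify it so that the vertices $a_1,\dots,a_{\ell-2}$ become a caterpillar attached near a single chosen edge, while $a_{\ell-1}$ and $a_\ell$ are clustered with it in the specific pattern required by the definition of neat breakage. Concretely, first I would identify an edge $e^\star$ of $\decT$ which is a good ``insertion site'': pick any edge on the path in $\decT$ between $\decf(a_{\ell-1})$ and $\decf(a_\ell)$, or alternatively — to make the accounting easiest — contract the whole path $P$ in $\decf$-preimages down to a single leaf by first deleting the leaves $\decf(a_1),\dots,\decf(a_\ell)$ (and iteratively cleaning up any degree-$\le 1$ or degree-$2$ nodes thereby created, as in the proof of Lemma~\ref{lem:fes:unimportant:placement}), choose the edge $e^\star$ that remains where the leaf-cluster for $a_\ell$ attaches, subdivide $e^\star$ enough times, and reattach the leaves for $a_1,\dots,a_\ell$ in order along the new caterpillar, with $\decf'(a_\ell)$ and $\decf'(a_{\ell-1})$ both pendant at the node closest to one end $s_\ell$, and $\decf'(a_1)$ pendant at the other end $s_1$. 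By construction this $(\decT',\decf')$ neatly breaks $P$: the required $(s_1,s_\ell)$-path and the caterpillar subpath $\decT_1$ carrying $a_1,\dots,a_{\ell-2}$ are exactly what we built, and the three required edges $\decf(a_1)s_1$, $\decf(a_\ell)s_\ell$, $\decf(a_{\ell-1})s_\ell$ are present. Item~3 (the cuts on $G-\{a_2,\dots,a_{\ell-1}\}$ are unchanged) follows because the only structural change is confined to the caterpillar region and the subdivided edge $e^\star$, and outside the path vertices the leaf-to-side assignment is untouched; one just checks that deleting/cleaning and re-subdividing does not create new bipartitions of $V(G)\setminus V(P^\star)$ and does not destroy old ones — which is routine since $e^\star$ was chosen on the ``$P$-side''.

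The real content is item~1: $\ichainwidth(\decT',\decf')\le\ichainwidth(\decT,\decf)$. Here I would invoke Lemma~\ref{lem:fes:neatly:breaks}: once we have established that $(\decT',\decf')$ neatly breaks $P$, that lemma guarantees that in \emph{every} cut induced by $(\decT',\decf')$, no vertex of $P^\star=a_2\dots a_{\ell-1}$ lies in an induced strict chain of value $\ge 2$. So for every edge of $\decT'$, the induced-chain value in that cut can only be witnessed by a subgraph all of whose vertices avoid $P^\star$, i.e.\ a subgraph of $G[A',B']$ that is ``essentially'' living in $G-V(P^\star)$ together with possibly $a_1,a_\ell$. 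I would then match each such cut of $(\decT',\decf')$ to a cut of $(\decT,\decf)$ inducing the same bipartition on $V(G)\setminus V(P^\star)$ (such a corresponding cut always exists, exactly as in Lemma~\ref{lem:fes:unimportant:placement}), and argue that any induced strict chain in the $(\decT',\decf')$-cut transfers to an induced strict chain of the same value in the corresponding $(\decT,\decf)$-cut — the transfer is clean precisely because the chain avoids $P^\star$, and the at-most-two boundary vertices $a_1,a_\ell$ can be handled by the usual ``swap a pendant edge for the path edge $a_1a_2$ or $a_{\ell-1}a_\ell$'' trick, noting that an edge incident to $a_1$ (resp.\ $a_\ell$) together with its $P$-partner gives a strict chain of value $\le 1$, so it contributes nothing extra.

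The step I expect to be the main obstacle is the careful bookkeeping for cuts induced by edges \emph{inside} the new caterpillar $\decT_1$: there the two sides split $V(P)$ as a prefix/suffix $\{a_1,\dots,a_i\}\mid\{a_{i+1},\dots,a_\ell\}$, and one must verify that the single crossing edge $a_ia_{i+1}$ of $P$ never combines with anything outside $P^\star$ to form a strict chain of value $\ge 2$ in $G-V(P^\star)$'s corresponding cut — but Lemma~\ref{lem:fes:neatly:breaks} (applied to $(\decT',\decf')$, which we have already shown neatly breaks $P$) kills exactly this, so the obstacle is really just organising the case analysis so that the appeal to that lemma is legitimate (i.e.\ making sure the neat-breakage certificate is in place \emph{before} we reason about chain values). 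With that ordering the argument closes: items 2 and 3 are by construction, and item 1 is Lemma~\ref{lem:fes:neatly:breaks} plus a cut-correspondence argument.
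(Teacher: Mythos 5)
Your overall ordering is sensible---build the decomposition, certify neat breakage, then invoke Lemma~\ref{lem:fes:neatly:breaks}---and that lemma is indeed the paper's key ingredient. The gap is in your construction, and it is item~3 that breaks (with consequences for item~1). Under either of your choices of insertion site $e^\star$, you delete the leaves of \emph{all} of $a_1,\dots,a_\ell$ and re-insert them on a caterpillar near where $a_\ell$ was attached. But $a_1$ and $a_\ell$ belong to $G-\{a_2,\dots,a_{\ell-1}\}$, so item~3 is a statement about cuts that still see them, and relocating $a_1$ (which may originally sit far from $\decf(a_\ell)$ in $\decT$) changes the restricted cut set: any edge of $\decT$ on the tree path between $\decf(a_1)$ and $\decf(a_\ell)$ induces a restricted cut with $a_1$ and $a_\ell$ on opposite sides together with some particular split of $V(G)\setminus V(P)$, whereas in your $\decT'$ every cut either keeps all of $V(P)$ on one side or splits $V(G)\setminus V(P)$ exactly as $e^\star$ does, so those restricted cuts disappear. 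Your justification that ``outside the path vertices the leaf-to-side assignment is untouched'' does not help, since $a_1,a_\ell$ lie \emph{inside} the restricted graph. This also undercuts your item~1 argument: Lemma~\ref{lem:fes:neatly:breaks} only excludes $P^\star$-vertices from chains of value at least $2$, so chains through $a_1$ or $a_\ell$ (via their unique neighbours outside $P$) remain possible, and because a corresponding cut with the same restricted bipartition need no longer exist in $(\decT,\decf)$, the transfer is not routine; the pendant-edge swap you import from the matching case does not obviously produce an induced chain of the same value, since the substituted crossing edge $a_ia_{i+1}$ has interior path endpoints with no neighbours outside $P$ and thus cannot supply the adjacencies a chain of value at least $2$ requires to the rest of its vertices. (A minor further slip: hanging both $\decf'(a_{\ell-1})$ and $\decf'(a_\ell)$ off a subdivision vertex of $e^\star$ gives that vertex degree $4$.)

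The paper sidesteps all of this by never moving $a_1$ and $a_\ell$: it removes only the leaves $\decf(a_2),\dots,\decf(a_{\ell-1})$, subdivides the edge at $\decf(a_\ell)$ once to host $a_{\ell-1}$, and subdivides the edge at $\decf(a_1)$ exactly $\ell-3$ times to host $a_2,\dots,a_{\ell-2}$ in their path order. Then the set of relocated vertices is precisely $V(P^\star)$, item~3 is immediate, item~2 holds by construction, and item~1 follows at once: by Lemma~\ref{lem:fes:neatly:breaks} every chain of value at least $2$ in a cut of $(\decT',\decf')$ avoids the moved vertices and hence survives verbatim in the corresponding (identical once restricted) cut of $(\decT,\decf)$, while the value-$1$ case is absorbed because $G$ contains an edge, so $\ichainwidth(\decT,\decf)\ge 1$. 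If you redo your surgery as ``insert the interior vertices next to the untouched endpoints $\decf(a_1)$ and $\decf(a_\ell)$,'' your appeal to Lemma~\ref{lem:fes:neatly:breaks} then goes through essentially as you wrote it.
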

\begin{proof}
	Let $P^\star = a_2 \ldots a_{\ell-1}$.
	We obtain $(\decT', \decf')$ from $(\decT, \decf)$ as follows:
	\begin{itemize}
		\item We remove $\decf(V(P^\star))$ from $\decT$.
		\item We subdivide the edge between $\decf(a_\ell)$ and its neighbor once and
		attach a new leaf to the subdivision vertex, say $t_{\ell-1}$.
		\item We subdivide the edge between $\decf(a_1)$ and its neighbor $\ell-3$ times 
		and attach a new leaf to each subdivision vertex. 
		We denote these new leaves by $t_2, \ldots, t_{\ell-2}$,
		ordered by distance to $\decf(a_1)$.
		\item We let $\decf'|_{V(G) \setminus V(P^\star)} = \decf|_{V(G) \setminus V(P^\star)}$,
		and for $i \in \{2, \ldots, \ell-1\}$, we let $\decf'(a_i) = t_i$.
	\end{itemize}
	By this construction it is clear that $(\decT', \decf')$ neatly breaks $P$,
	and the bound on $\ichainwidth(\decT', \decf')$ follows from Lemma~\ref{lem:fes:neatly:breaks},
	since it asserts that no vertex whose position has changed is in a strict chain of value~$2$ or more.
	Since $G$ has at least one edge, $(\decT, \decf)$ must have a cut of value at least one.
\end{proof}

So far we have \iflong given\fi two modifications of branch decompositions; 
one that preserves $\mimw$ and one that preserves $\ichainwidth$.
However, a family $\Fstar$\xspace may contain both $\Fmatch$ and $\Fchain$,
in which case it would not be clear which construction to use.
Since the only problematic case is when the construction of Lemma~\ref{lem:fes:unimportant:placement}
might introduce a chain of value $2$,
we show that we can treat that case of $\Fstar$-branchwidth $1$ separately ($\star$).
It turns out that in this case we have a linear kernel immediately, 
or we can apply Corollary~\ref{cor:fes:neatly:breaks}.

We will now proceed to prove what can be considered the main technical lemma of this section;
we only require a bit more terminology.
Let $G$ be a graph with feedback edge set $F \subseteq E(G)$.
We call the vertices $V(F)$ \emph{significant}
and each vertex that is not significant \emph{insignificant}.
Furthermore, recall that $G$ is called \emph{bridgeless} if there is no edge $e \in E(G)$
such that $G - e$ has more connected components than $G$.
We obtain:
\fi

\FesMainLemma
\iflong
\begin{proof}
	Throughout the proof, let $P = a_1 \ldots a_\ell$
	and $P^\star = a_2 \ldots a_{\ell-1}$.
	We consider two cases.	
	\setcounter{ourcase}{0}
	\begin{ourcase}[$\Fsbw(G) = 1$]
		We first show that if $\Fstar$ contains either the induced matchings or anti-matchings,
		then the size of $G$ must be small.
		\begin{claim}\label{claim:immediate:kernel}
			Let $\Fstar$ be a union of primal \phsymb classes
			and suppose 
			$(\Fmatch \cup \Fantimatch) \cap \Fstar \neq \emptyset$.
			Let $G$ be a bridgeless graph and let $F \subseteq E(G)$ be a feedback edge set of $G$ of size $k$.
			If $\Fsbw(G) = 1$, then $\card{V(G)} \le 8k - 3$.
		\end{claim}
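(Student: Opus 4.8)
The plan is to argue by a case distinction on whether $\Fsbw(G) \le 1$ or $\Fsbw(G) \ge 2$. A useful preliminary observation is that, for an unimportant path $P$, the graph $G/e$ does not depend (up to isomorphism) on the choice of $e \in E(P)$: all vertices of $P$ have degree $2$, so contracting any edge of $P$ simply shortens $P$ by one vertex and leaves everything else untouched. Thus $e$ may be placed wherever convenient on $P$; as $P$ has length at least $8$, we place it near the middle, so that $e$ lies on a length-$7$ subpath $P_0$ of $P$ whose interior endpoints still lie in the interior of $P$.

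\textbf{Case $\Fsbw(G) \ge 2$.} I would prove $\Fsbw(G) = \Fsbw(G/e)$ in two steps. For $\Fsbw(G/e) \le \Fsbw(G)$, take an optimal branch decomposition $(\decT, \decf)$ of $G$ and apply Lemma~\ref{lem:fes:unimportant:placement} to $P_0$, obtaining $(\decT', \decf')$ with $\mimw(\decT', \decf') \le \mimw(\decT, \decf)$, preserving a subpath $P^\star$ of $P_0$ on at least five vertices with $e \in E(P^\star)$, and inducing the same cuts on $G - V(P^\star)$. This new decomposition also keeps $\iantimatchwidth$ and $\ichainwidth$ within $\Fsbw(G)$: off $V(P^\star)$ nothing changed, by Lemma~\ref{lem:fes:aux:small} no vertex of $P^\star$ lies in an induced anti-matching or strict chain of value $\ge 3$ in any cut of $G$, and objects of value at most $2$ are harmless since $\Fsbw(G) \ge 2$. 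Writing $P^\star = b_1 \dots b_m$ as a caterpillar segment in $\decT'$ with $e = b_c b_{c+1}$, we delete the leaf carrying $b_{c+1}$ together with its pendant internal node and relabel the leaf carrying $b_c$ by the contracted vertex; this yields a branch decomposition of $G/e$ each of whose cuts is a cut of $(\decT', \decf')$ with $b_c, b_{c+1}$ on one side, the contracted vertex taking over the role of $b_c$ or $b_{c+1}$ (both of degree $2$, so no new adjacencies appear). No cut value grows, so $\Fsbw(G/e) \le \Fsbw(G)$. The reverse inequality is symmetric: $G/e$ still contains an unimportant path of length at least $7$, so the same machinery on an optimal decomposition of $G/e$ produces a caterpillar segment carrying the contracted vertex, which we ``uncontract'' by subdividing the incident caterpillar edge and inserting a new leaf for $b_{c+1}$; the only genuinely new cut splits the segment exactly between $b_c$ and $b_{c+1}$, and this corresponds to an already present cut of the $G/e$-decomposition, of no larger value. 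Hence $\Fsbw(G) \le \Fsbw(G/e)$.

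\textbf{Case $\Fsbw(G) = 1$.} If $(\Fmatch \cup \Fantimatch) \cap \Fstar \neq \emptyset$, I would establish Claim~\ref{claim:immediate:kernel}, i.e.\ $\card{V(G)} \le 8k-3$, as follows. Since $H^2_{\imatch} = H^2_{\iantimatch}$ is the two-edge matching, $\Fsbw(G) = 1$ forces that no cut of $G$ induces two independent edges. Because $G$ is bridgeless without isolated vertices, every nontrivial component has minimum degree $2$, and the feedback-edge-set bound restricts the number of vertices of degree $\ge 3$ to at most $2k-2$, so $G$ is a subdivision of a multigraph $H^0$ on $\bigoh(k)$ vertices and edges. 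If some subdivided (non-loop) edge of $H^0$ carried at least three internal degree-$2$ vertices, then by bridgelessness a shortest cycle of $G$ through a middle such vertex would be an induced cycle on at least five vertices; since $\mimw(C_m) \ge 2$ for all $m \ge 5$ (a routine check), Observation~\ref{obs:closedInducedSubgraphs} would give $\Fsbw(G) \ge 2$, a contradiction. Combining the resulting constant per-edge bounds (with a slightly larger but still constant bound for subdivided loops and parallel edges of $H^0$) with the bounds on $\card{V(H^0)}$ and $\card{E(H^0)}$ gives $\card{V(G)} \le 8k-3$ after routine bookkeeping. The one remaining subcase, $\Fstar = \Fchain$, is handled exactly like the case $\Fsbw(G) \ge 2$ but using Corollary~\ref{cor:fes:neatly:breaks} (which preserves $\ichainwidth$ and neatly breaks $P$) instead of Lemma~\ref{lem:fes:unimportant:placement}; by Lemma~\ref{lem:fes:neatly:breaks} no vertex moved by the construction lies in an induced strict chain of value $\ge 2$, so the contraction argument goes through for every value of $\Fsbw(G)$.

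\textbf{Main obstacle.} The genuinely delicate part is controlling the secondary width measures: Lemma~\ref{lem:fes:unimportant:placement} only bounds $\mimw$ and its construction may create induced chain graphs of value $2$. The plan defuses this by noting that value-$2$ objects are irrelevant when $\Fsbw(G) \ge 2$, and by splitting off $\Fsbw(G) = 1$ into the structural kernel bound (when matchings or anti-matchings are present) and the chain-specific Corollary~\ref{cor:fes:neatly:breaks} (when only chains are present). The other fiddly point is extracting the precise constant $8k-3$ in Claim~\ref{claim:immediate:kernel}, which requires careful accounting of how many degree-$2$ vertices each edge, loop and parallel class of $H^0$ can carry before a short induced cycle is forced.
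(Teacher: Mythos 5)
Your argument for this claim is correct in outline, but it takes a genuinely different counting route from the paper's. The paper argues contrapositively: if $\card{V(G)} > 8k-3$, then besides the at most $2k$ endpoints of feedback edges there are more than $6k-3$ further vertices, all of which lie on the paths of $G-F$ joining feedback-edge endpoints; by pigeonhole some such path is long enough to force an induced cycle on at least six vertices, and then a balanced edge of \emph{any} branch decomposition of $G$ (Lemma~\ref{lem:balancedtrees}) gives a cut with at least two cycle vertices on each side, hence an induced $2$-matching, contradicting $\Fsbw(G)=1$ since $H^2_{\imatch}=H^2_{\iantimatch}$. You instead suppress the degree-two vertices into a multigraph $H^0$, bound its branch vertices (at most $2k-2$) and topological edges (at most $3k-3$) via the cycle rank, forbid three consecutive subdivision vertices on a non-loop topological edge by taking a shortest (hence chordless) cycle through the middle one, and conclude via $\mimw(C_m)\ge 2$ for $m\ge 5$ together with Observation~\ref{obs:closedInducedSubgraphs}. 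These steps are all sound: a shortest cycle through a fixed vertex is indeed induced, and your ``routine check'' on $C_m$ is essentially the paper's balanced-cut argument carried out inside the cycle rather than inside $G$'s decomposition. Your route excludes induced $C_{\ge 5}$ rather than only $C_6$, at the price of heavier structural bookkeeping.

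The one step that is not as routine as you claim is landing the exact constant $8k-3$. Taken literally, your accounting (at most $2k-2$ branch vertices, at most $3k-3$ topological edges, at most $2$ internal vertices per non-loop edge and $3$ per loop, at most $k$ loops) only yields $\card{V(G)} \le 9k-8$, which overshoots the bound. To recover it you must additionally use bridgelessness against loops: a loop of $H^0$ is a pendant chordless cycle, and its attachment vertex needs at least two further non-loop edge-ends (otherwise its attaching edge is a bridge), so every loop forces a branch vertex of degree at least four; feeding this back into the degree count tightens the branch-vertex bound and brings the total down to at most $8k-8$ for components containing branch vertices, plus at most $4$ vertices for each component that is itself a chordless cycle, which is within $8k-3$. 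So the constant does come out, but only after this extra observation (or an equivalent saving), which your sketch does not mention.
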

		\begin{claimproof}
			We argue that if $\card{V(G)} > 8k - 3$, 
			then $G$ has an induced cycle on at least six vertices.
			Since $G$ has no bridges, 
			we know the following for each vertex $v \in V(G)$:
			Either $v$ is significant, or there is an edge $xy \in F$
			such that $v$ is on the path from $x$ to $y$ in $G - xy$.
			There are at most $2k$ significant vertices, 
			which means that there are more than $6k - 3$ insignificant vertices.
			$G - F$ consists of the significant vertices connected into a tree structure by at most $2k-1$ paths.
			Since there are more than $3(2k-1)$ insignificant vertices, 
			we know that at least one of these paths has four vertices.
			This in turn implies that $G$ contains an induced $C_6$.
			
			Now, it is easy to see that any branch decomposition of $G$ has a cut $(A, B)$ 
			such that $\card{A \cap V(C)} \ge 2$ and $\card{B \cap V(C)} \ge 2$ (see Lemma~\ref{lem:balancedtrees}).
			This cut therefore contains an induced matching of value $2$
			and since an induced anti-matching of value $2$ is an induced matching of value $2$,
			the claim follows.
		\end{claimproof}
		
		From now on, we may assume that $(\Fmatch \cup \Fantimatch) \cap \Fstar = \emptyset$,
		which immediately implies that $\Fstar = \Fchain$.
		We prove that $\ichainwidth(G/e) \le \ichainwidth(G)$,
		and note that the other inequality can be shown analogously.
		Let $(\decT, \decf)$ be a branch decomposition of $G$ with $\ichainwidth(\decT, \decf) = 1$.
		We apply Corollary~\ref{cor:fes:neatly:breaks} to obtain a branch decomposition $(\decT'', \decf'')$
		with the properties listed there.
		Now consider $G' = G/a_2a_3 \simeq G/e$,
		and let $(\decT', \decf')$ be a branch decomposition of $G'$ obtained from $(\decT'', \decf'')$
		by replacing $a_2$ and $a_3$ with the vertex resulting from the contraction of $a_2a_3$.
		Let $P'$ denote the path corresponding to $P$ in $G'$,
		and note that $P'$ has length at least $7$.
		Since $(\decT', \decf')$ neatly breaks $P'$, we have that no internal vertex,
		which includes the vertex constructed by contracting $a_2 a_3$ 
		is in any induced strict chain of value at least $2$,
		therefore
		$\ichainwidth(\decT', \decf') \le \ichainwidth(\decT, \decf)$ and so $\ichainwidth(G/e) \le \ichainwidth(G)$.
		The argument showing $\ichainwidth(G) \le \ichainwidth(G/e)$ is analogous,
		and concludes this case.
	\end{ourcase}
	\begin{ourcase}[$\Fsbw(G) \ge 2$]
		We first show that $\Fsbw(G/e) \le \Fsbw(G)$.
		By Lemma~\ref{lem:fes:unimportant:placement},
		there is a branch decomposition $(\decT, \decf)$ of $G$ 
		with $\imatchwidth(\decT, \decf) = \imatchwidth(G)$
		that preserves a subpath of $P$ on five vertices, 
		say $P^\star = a_1 \ldots a_5 \subseteq P$.
		By Lemma~\ref{lem:fes:aux:small},
		no vertex of $P^\star$ can contribute to an induced anti-matching
		or an induced strict chain of value more than $2$,
		therefore $\Fsbw(\decT, \decf) = \Fsbw(G)$.
		We obtain a branch decomposition $(\decT', \decf')$
		of $G' \defeq G/a_2a_3 \simeq G/e$
		with $\Fsbw(\decT', \decf') \le \Fsbw(\decT, \decf)$ 
		as follows.
		Let $a_{23}$ denote the the vertex created during the contraction of $a_2 a_3$
		and $P^{\star\star}$ the path $a_1 a_{23} a_4 a_5$ in $G'$.
		
		\begin{itemize}
			\item We obtain $\decT'$ from $\decT$ by removing the leaf $\decf(a_3)$.
			\item To obtain $\decf'$, we let $\decf'(a_{23}) = \decf(a_2)$ and 
			for all $v \in V(G') \setminus \{a_{23}\}$ we let $\decf'(v) = \decf(v)$.
		\end{itemize}
		
		Let $\decS'$ denote the caterpillar in $\decT'$ to whose leaves the vertices of $P^{\star\star}$ are mapped.
		Consider any cut $(A', B')$ that is induced by an edge in $\decT' - V(\decS')$.
		We may assume that $V(P^{\star\star}) \subseteq A'$.
		Let $(A, B)$ be the corresponding cut in $(\decT, \decf)$
		with $V(P^\star) \subseteq A$, 
		$A \setminus V(P^\star) = A' \setminus V(P^{\star\star})$,
		and $B = B'$.
		Any edge in $G[A', B']$ incident with a vertex in $P^{\star\star}$
		is incident with either $a_1$ or $a_5$;
		and these edges are also contained in $G[A, B]$.
		Therefore, any induced matching in $G[A', B']$ is also an induced matching in $G[A, B]$.
		
		It remains to consider cuts induced by edges in $\decS'$;
		observe that it suffices to consider cuts induced by edges that are not incident with a leaf of $\decS'$.
		Moreover, the only edges in $G'$ that are not in $G$ are $a_1 a_{23}$ and $a_{23} a_4$,
		therefore we may focus on cuts in which one of these edges crosses.
		Let $(A', B')$ be such a cut. 
		If $a_1 a_{23} \in E(G[A', B'])$,
		then we may assume that $A' \cap V(P^{\star\star}) = \{a_1\}$
		and $B' \cap V(P^{\star\star}) = \{a_{23}, a_4, a_5\}$.
		There is a cut $(A, B)$ in $(\decT, \decf)$
		with $A \setminus V(P^\star) = A' \setminus V(P^{\star\star})$,
		$B \setminus V(P^\star) = B' \setminus V(P^{\star\star})$,
		as well as $a_1 \in A$ and $a_2 \in B$.
		Let $M'$ be any induced matching in $G[A', B']$.
		If $a_1 a_{23} \notin G[A, B]$,
		then $M' \subseteq E(G[A, B])$;
		so we may assume that $a_1 a_{23} \in M'$.
		But then, $M' \setminus \{a_1 a_{23}\} \cup \{a_1 a_2\}$ is an induced matching in $G[A, B]$.
		The other case when $a_{23} a_4 \in E(G[A', B'])$
		can be argued similarly with the edge $a_3 a_4$ taking the role of $a_1 a_2$.
		This finishes the argument that $\imatchwidth(\decT', \decf') \le \imatchwidth(\decT, \decf)$.
		For the remaining measures, we observe that $a_{23}$ cannot appear in 
		an induced anti-matching
		or induced strict chain of value more than $2$ by Lemma~\ref{lem:fes:aux:small},
		so we have that $\Fsbw(G/e) \le \Fsbw(G)$.
		
		We now show that $\Fsbw(G) \le \Fsbw(G/e)$.
		First, let $P' = P/e$ be the path in $G/e$ corresponding to $P$
		and note that $P'$ is of length at least $7$.
		Therefore, there is a branch decomposition $(\decT', \decf')$ 
		of $G/e$ with $\Fsbw(\decT', \decf') = \Fsbw(G/e)$ that preserves 
		a subpath of $P'$ on five vertices, say $P^\star = a_1 \ldots a_5$, 
		by Lemmas~\ref{lem:fes:unimportant:placement} and~\ref{lem:fes:aux:small}.
		We subdivide the edge $a_2 a_3$ to obtain a graph $G'' \simeq G$;
		let $b$ denote the subdivision vertex.
		We construct a branch decomposition $(\decT, \decf)$ of $G''$ as follows.
		\begin{itemize}
			\item We insert a new leaf $x$ between $\decf'(a_2)$ and $\decf'(a_3)$ to obtain $\decT$.
			\item for all $v \in V(G/e)$, we let $\decf(v) \defeq \decf'(v)$
			and we let $\decf(b) \defeq x$.
		\end{itemize}
		Using similar arguments as above, we can show that $\Fsbw(\decT, \decf) \le \Fsbw(\decT', \decf')$
		and therefore $\Fsbw(G) \le \Fsbw(G/e)$.
	\end{ourcase}
	This finishes the proof of Lemma~\ref{lem:fes:main}.
\end{proof}
\fi

As the previous lemma requires that our input graph does not have any bridges,
we need one more simple lemma to show that we can safely remove bridges from the
input graph without changing its \Fstar-branchwidth.
\ifshort ($\star$)
\fi
\iflong
\begin{lemma}\label{lem:fes:bridge}
	Let $\Fstar$ be the union of some primal \phsymb families.
	Let $G$ be a graph and let $e \in E(G)$ be a bridge of $G$, and
	let $G_1$ and $G_2$ be the two connected components of $G - e$.
	For $h \in [2]$, let $(\decT_h, \decf_h)$ be a branch decomposition of $G_h$.
	Let $(\decT, \decf)$ be a branch decomposition of $G$ obtained as follows.
	\begin{itemize}
		\item For $h \in [2]$, subdivide an arbitrary edge of $\decT_h$,
		denote the resulting node by $t_h$.
		\item Let $\decT = (V(\decT_1) \cup V(\decT_2), E(\decT_1) \cup E(\decT_2) \cup \{t_1t_2\})$,
		and let $\decf$ be such that 
		$\decf|_{V(G_1)} = \decf_1$ and $\decf|_{V(G_2)} = \decf_2$.
	\end{itemize}
	Then, $\Fsbw(\decT, \decf) \le \max\{1, \Fsbw(\decT_1, \decf_1), \Fsbw(\decT_2, \decf_2)\}$.
\end{lemma}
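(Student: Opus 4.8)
The plan is to bound, for every edge $f$ of $\decT$, the width of the cut of $G$ it induces by $\max\{1,\Fsbw(\decT_1,\decf_1),\Fsbw(\decT_2,\decf_2)\}$. Write $e=uv$ with $u\in V(G_1)$ and $v\in V(G_2)$, and recall that $e$ is the only edge of $G$ with one endpoint in $V(G_1)$ and one in $V(G_2)$. The edges of $\decT$ split into the new edge $t_1t_2$ and the edges lying in the copies of $\decT_1$ and of $\decT_2$. Removing $t_1t_2$ induces the cut $(V(G_1),V(G_2))$, which is crossed only by $e$; since every graph in a \primal \phsymb family on at least four vertices has minimum degree at least one, none of them embeds as an induced subgraph of a bipartite graph with a single edge, so this cut has width at most $1$.

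By symmetry it remains to treat an edge $f$ lying in the copy of $\decT_1$. Such an $f$ corresponds to an edge $f'$ of $\decT_1$ (namely $f$ itself, or the edge that was subdivided if $f$ is one of its two halves). Removing $f'$ from $\decT_1$ induces a bipartition $(A,B)$ of $V(G_1)$, and removing $f$ from $\decT$ induces the bipartition of $V(G)$ obtained from $(A,B)$ by adding all of $V(G_2)$ to the side $Z\in\{A,B\}$ on which the node $t_1$ lies. The edges of $G$ crossing this cut are exactly the edges of $G_1[A,B]$, together with the bridge $e$ precisely when $u\notin Z$. If $u\in Z$, the cut graph equals $G_1[A,B]$, which is a cut of $\decT_1$, so its width is at most $\Fsbw(\decT_1,\decf_1)$. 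Otherwise the cut graph is $G_1[A,B]$ with a single pendant vertex $v$ added at $u$ (as $v$ has no neighbour in $V(G)\setminus V(G_2)$ other than $u$), together with the isolated vertices of $V(G_2)\setminus\{v\}$, which can only contribute $\cF^*$-value at most $1$. Hence it suffices to prove the sub-claim: if a bipartite graph $H_0$ has $\cF^*$-value $n_0$, then the graph $H_1$ obtained from $H_0$ by adding one pendant vertex has $\cF^*$-value at most $\max\{1,n_0\}$.

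To attack the sub-claim, let $H$ be an induced subgraph of $H_1$ isomorphic to a graph in $\cF^*$ on $2n$ vertices with $n$ maximum, and let $v_0$ be the pendant vertex. If $v_0\notin V(H)$ then $H$ is an induced subgraph of $H_0$, so $n\le n_0$; and if $n\le 1$ there is nothing to prove. Otherwise $v_0\in V(H)$ and $n\ge 2$, so $|V(H)|\ge 4$ and $v_0$ has degree at most one in $H$; deleting $v_0$ together with its partner in the pairing of $H$ leaves, by partner-hereditarity, a graph in $\cF^*$ on $2n-2$ vertices, and since $v_0$ is the only vertex of $H$ outside $V(H_0)$, this graph is an induced subgraph of $H_0$. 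Hence $n-1\le n_0$.

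The main obstacle is strengthening this from $n\le n_0+1$ to $n\le\max\{1,n_0\}$. The slack of one is a genuine difficulty for all \primal families: $v_0$ may sit at a ``free'' vertex of a maximum-value $\cF^*$-substructure of $H_0$ which its attachment vertex $u$ fails to extend inside $G_1[A,B]$ but does extend once $v_0$ is present, so a direct counting argument stalls. Closing it requires controlling the shape of $\decT_h$: the clean route is to take the subdivided edge of $\decT_h$ to be the one incident to the leaf mapped to the endpoint $w_h$ of $e$ lying in $V(G_h)$; then $w_h$ lies on the same side as $t_h$ in every cut induced by an edge of the copy of $\decT_h$, so the bridge $e$ never occurs as a pendant edge and the bound for those cuts follows directly from the first two paragraphs together with $\Fsbw(\decT_h,\decf_h)$. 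I expect this interplay between the pendant configuration and the admissible subdivision points to be the crux of a complete argument.
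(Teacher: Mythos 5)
You have uncovered a genuine flaw, not in your own argument but in the lemma as stated: with an \emph{arbitrary} subdivision point the inequality fails, and the paper's proof asserts without justification that ``both endpoints of $e$ are either in $A$ or in $B$.'' This is false for an arbitrary choice of $t_h$. If $f$ is an edge of the $\decT_h$-copy that separates $t_h$ from the leaf $\decf_h^{-1}(w_h)$ carrying the $G_h$-endpoint $w_h$ of $e$, then at the cut of $\decT$ induced by $f$ the vertex $w_h$ and $V(G_{3-h})$ end up on opposite sides, and the bridge $e$ becomes an additional pendant edge at $w_h$ in the crossing graph --- exactly the situation you isolate. Your bound $n\le n_0+1$ is sharp here; for a concrete violation, take $G$ to be the six-vertex path $c\,b\,a\,u\,v\,w$ with bridge $e=uv$, $G_1$ the path $u\,a\,b\,c$, $G_2$ the edge $vw$, $\decT_1$ the caterpillar in leaf order $u,a,b,c$ (so $\Fsbw(\decT_1,\decf_1)=1$), and subdivide the $\decT_1$-edge incident to the leaf $c$. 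The cut $(\{c,v,w\},\{u,a,b\})$ at the new edge adjacent to $t_1$ then contains the induced matching $\{cb,uv\}$, so $\Fsbw(\decT,\decf)\ge 2 > \max\{1,1,1\}$.

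Your proposed repair --- subdivide precisely the edge of $\decT_h$ incident to $\decf_h^{-1}(w_h)$ --- is the correct fix and is what the lemma statement should say in place of ``arbitrary.'' With that choice $t_h$ is adjacent to $\decf_h^{-1}(w_h)$, so along every edge of the $\decT_h$-copy other than the two halves of the subdivided edge, $w_h$ stays on the same side as $V(G_{3-h})$; the crossing graph is then exactly that of a cut of $\decT_h$, padded by isolated vertices of $V(G_{3-h})$ which contribute nothing beyond value $1$ to any \primal family. Of the remaining three cuts, $t_1t_2$ carries only $e$, and each half of a subdivided edge yields a crossing graph that is a star centred at $w_h$, which contains no induced $P_4 = H^2_{\ichain}$, no $2K_2 = H^2_{\imatch} = H^2_{\iantimatch}$, nor anything larger, so has value at most $1$. (One small correction to your phrasing: at the half-edge $\decf_h^{-1}(w_h)\,t_h$ the vertex $w_h$ is alone and on the opposite side from $t_h$, but this is the harmless star case.) This proves the lemma for a \emph{specific} subdivision, which is all that Reduction Rule~\ref{reduction:fes:bridge:isolated} requires, since it only needs the existence of one branch decomposition matching the bound.
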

\begin{proof}
	Let $(A, B)$ be a cut induced by $(\decT, \decf)$, 
	and let $f \in E(\decT)$ denote the corresponding edge.
	Suppose $f \in E(\decT_h)$ for some $h \in [2]$,
	and let $(A', B')$ denote the cut that $f$ induces in $(\decT_h, \decf_h)$.
	Then, both endpoints of $e$ are either in $A$ or in $B$,
	and since there are no other edges between vertices in $G_1$ and $G_2$, 
	we have that $E(G[A, B]) = E(G[A', B'])$.
	If $f \notin E(\decT_h)$ for any $h \in [2]$,
	then $f = t_1 t_2$.
	But in this case, $E(G[A, B]) = \{e\}$ and therefore 
	$\Fstarval(A, B) = 1$.
\end{proof}

The previous lemma also has the following consequence for forests.
\begin{corollary}\label{cor:forest}
	Let $\Fstar$ be a union of primal \phsymb families, 
	and let $G$ be a forest with at least one edge.
	Then, $\Fsbw(G) = 1$, and a branch decomposition 
	witnessing the bound can be found in polynomial time.
\end{corollary}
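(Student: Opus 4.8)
The statement to prove is Corollary~\ref{cor:forest}: for a union $\Fstar$ of \primal \phsymb families and a forest $G$ with at least one edge, $\Fsbw(G)=1$, and a witnessing branch decomposition is computable in polynomial time. The plan is to prove this by induction on the number of edges of $G$ (equivalently, on the number of connected components containing an edge and their sizes), using Lemma~\ref{lem:fes:bridge} as the gluing tool, since \emph{every} edge of a forest is a bridge.

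\textbf{Base case.} First I would handle a single edge: if $G$ has exactly one edge $uv$ (plus possibly isolated vertices), take any branch decomposition. Every cut $(A,B)$ either separates $u$ from $v$, in which case $E(G[A,B])=\{uv\}$ and $\Fstarval(A,B)=1$ since every vertex in a graph of a \primal \phsymb family has degree at least one (so a single edge realizes exactly the graphs $H^1_{\imatch}$, $H^1_{\ichain}$, $H^1_{\iantimatch}$, each of ``value'' $1$), or it does not separate $u$ from $v$, in which case $E(G[A,B])=\emptyset$ and $\Fstarval(A,B)=0$. Hence $\Fsbw(G)=1$. Note $\Fsbw(G)\ge 1$ always holds when $G$ has an edge, again because the families are nonempty and contain only graphs with positive minimum degree, so any cut separating the two endpoints of an edge has value at least $1$.

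\textbf{Inductive step.} Suppose $G$ is a forest with at least two edges. Pick any edge $e\in E(G)$; since $G$ is a forest, $e$ is a bridge. Let $G_1,G_2$ be the two components of $G-e$ (each a forest, possibly with no edges). If $G_h$ has an edge, by the induction hypothesis it has a branch decomposition $(\decT_h,\decf_h)$ with $\Fsbw(\decT_h,\decf_h)=1$; if $G_h$ has no edges, any branch decomposition of $G_h$ has $\Fsbw$ equal to $0$ (no cut induces any edge). Now apply Lemma~\ref{lem:fes:bridge} to combine $(\decT_1,\decf_1)$ and $(\decT_2,\decf_2)$ into a branch decomposition $(\decT,\decf)$ of $G$, obtaining $\Fsbw(\decT,\decf)\le\max\{1,\Fsbw(\decT_1,\decf_1),\Fsbw(\decT_2,\decf_2)\}=1$. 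Combined with the lower bound $\Fsbw(G)\ge 1$, this gives $\Fsbw(G)=1$. The construction in Lemma~\ref{lem:fes:bridge} (one subdivision per side plus one new edge) is clearly polynomial-time, and the recursion has depth at most $|V(G)|$, so the whole procedure runs in polynomial time; alternatively one can avoid recursion entirely by building $(\decT,\decf)$ directly as a ``spider'' of edge-gadgets.

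\textbf{Main obstacle.} There is essentially no deep obstacle here — the corollary is an immediate consequence of Lemma~\ref{lem:fes:bridge} once one observes every forest edge is a bridge. The only points requiring a little care are: (i) correctly handling components of $G-e$ that are edgeless (so that the ``$\max\{1,\ldots\}$'' in Lemma~\ref{lem:fes:bridge} is what delivers the bound, not the recursive hypothesis), and (ii) noting explicitly the matching lower bound $\Fsbw(G)\ge 1$, which uses that every \primal \phsymb family is nonempty and consists of graphs with minimum degree at least one. Both are routine.
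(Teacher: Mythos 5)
Your overall route---every edge of a forest is a bridge, so build the decomposition by induction, gluing single-edge base cases with Lemma~\ref{lem:fes:bridge}---is exactly how the paper obtains this corollary (it is stated there as an immediate consequence of that lemma, with no separate proof), and your upper-bound argument $\Fsbw(G)\le 1$ goes through, including the careful handling of edgeless components via the ``$\max\{1,\ldots\}$'' in the lemma.

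There is, however, one concrete slip in the justification of the lower bound and of the base-case values. By the paper's definition $E_{\iantimatch}=\{a_ib_j \mid i\neq j\in[n]\}$, the graph $H^1_{\iantimatch}$ is \emph{edgeless} on two vertices, so it is not true that every \primal\ \phsymb\ family ``contains only graphs with positive minimum degree'', and a single crossing edge realizes $H^1_{\imatch}$ and $H^1_{\ichain}$ but \emph{not} $H^1_{\iantimatch}$ (the paper itself treats $H^1_{\iantimatch}$ as an exception, e.g.\ in the proof of Corollary~\ref{cor:FisTWBounded}). Hence, as written, your claim that any cut separating the endpoints of an edge has value at least $1$ fails when $\Fstar=\Fantimatch$; there the lower bound needs a different (equally easy) observation, namely that once $|V(G)|\ge 3$ every branch decomposition has a cut with two vertices on the same side, and such a pair induces a copy of $H^1_{\iantimatch}$ in $G[X,Y]$. (For the same reason your base-case assertion that cuts with no crossing edge have $\Fstarval$ equal to $0$ is false for $\Fantimatch$, though this does not hurt the upper bound.) In the residual corner case $G=K_2$ with $\Fstar=\Fantimatch$ the width is in fact $0$, a degeneracy that the corollary's statement---and hence your claimed equality---silently glosses over; whenever $\Fstar$ contains $\Fmatch$ or $\Fchain$ your lower-bound argument is fine as is.
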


\fi
\iflong
We are now in shape to give the kernelization algorithm.
The first reduction rule `cleans up' the graph by removing bridges 
and isolated vertices.
It is clear that isolated vertices do not contribute to the \Fstar-branchwidth
of any branch decomposition when \Fstar\xspace is the union of primal \phsymb families.
\fi
\FesBridgeIsolatedReduction

\iflong
By Lemma~\ref{lem:fes:bridge} and triviality, we have the following observation.
\begin{observation}\label{obs:fes:bridge:cor}
	Reduction Rule~\ref{reduction:fes:bridge:isolated} is safe
	and can be executed in polynomial time.
\end{observation}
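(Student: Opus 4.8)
The plan is to treat the two parts of the observation—efficiency and soundness—separately, with all of the real content of the soundness part coming from Lemma~\ref{lem:fes:bridge} together with the elementary remark that isolated vertices barely affect $\Fsbw$: for a union $\cF^*$ of \primal\ \phsymb\ families, an isolated vertex can occur in an induced copy (across a cut) only of $H^1_{\iantimatch}$, the edgeless bipartite pair, and hence contributes at most $1$ to the value of any cut. Efficiency is immediate: the bridges of $G$ can be found in time $\calO(\card{V(G)}+\card{E(G)})$ by a standard depth-first search, and the vertices that become isolated after their deletion are then read off directly, so Reduction Rule~\ref{reduction:fes:bridge:isolated} runs in linear time. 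Moreover, since no bridge of $G$ lies on a cycle, a minimum feedback edge set of $G$ can be chosen to avoid the bridge set $B$; deleting $B$ and the resulting isolated vertices therefore does not increase the feedback edge set number, so $(G',k)$ is a legitimate instance of \Fsbwprob.

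For soundness write $G' = G - B - I$, where $I$ is the set of vertices isolated in $G-B$; the claim to prove is $\Fsbw(G') \le \Fsbw(G) \le \max\{1,\Fsbw(G')\}$, together with a polynomial-time procedure turning an optimal $\cF^*$-branch decomposition of $G'$ into one of $G$. The left inequality is the easy direction: deleting a single bridge $e=uv$ splits $G$ into the components of $G-e$, of which the one containing $u$ and the one containing $v$—and also every other component—are induced subgraphs of $G$ (the only $G$-edge between the first two is $e$), so by Observation~\ref{obs:closedInducedSubgraphs} each has $\Fsbw$ at most $\Fsbw(G)$, and by Lemma~\ref{lem:component} their maximum, namely $\Fsbw(G-e)$, is at most $\Fsbw(G)$. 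Every bridge of $G$ stays a bridge after other bridges are deleted, so iterating gives $\Fsbw(G-B) \le \Fsbw(G)$, and deleting the vertices of $I$—each of which only lowers cut values by the remark above—yields $\Fsbw(G') \le \Fsbw(G)$.

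For the right inequality I would reconstruct $G$ from $G'$ by reversing the two deletions while tracking a branch decomposition. Starting from an optimal $(\decT',\decf')$ of $G'$, first re-insert each vertex of $I$ as a new leaf hung at a subdivision of an arbitrary edge; as these vertices are still isolated at this stage, no cut value changes, giving a branch decomposition of $G-B$ of width $\Fsbw(G')$. Then re-insert the bridges of $B$ one at a time; when re-inserting $e=uv$, it is a bridge of the current graph (it lies on no cycle of its supergraph $G$), so $u$ and $v$ lie in distinct components $C_u,C_v$ of the current graph. Restricting the current decomposition to $C_u$ and to $C_v$ does not increase its width (these are induced subgraphs), Lemma~\ref{lem:fes:bridge} glues the two restrictions into a branch decomposition of $C_u + e + C_v$ of width at most $\max\{1,(\text{width of the current decomposition})\}$, and this glued decomposition is spliced back in place of the parts for $C_u$ and $C_v$ via the disjoint-union construction from the proof of Lemma~\ref{lem:component}. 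After all bridges are restored, the result is a branch decomposition of $G$ of width at most $\max\{1,\Fsbw(G')\}$. Combining the two inequalities, $\Fsbw(G)=\Fsbw(G')$ whenever $\Fsbw(G')\ge 1$, in which case the reconstructed decomposition is optimal; the only remaining possibility, $\Fsbw(G')=0$ with $G'$ bridgeless and free of isolated vertices, is degenerate and handled directly—most notably, if $G$ is a forest then $\Fsbw(G)=1$ and Corollary~\ref{cor:forest} supplies an optimal decomposition. In every case an optimal $\cF^*$-branch decomposition of $G$ is obtained in polynomial time, which is exactly what soundness of the rule asks for.

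I expect the only step needing genuine care to be this reconstruction: verifying that the purely local gluing of Lemma~\ref{lem:fes:bridge} composes cleanly across the many bridges and components of $G$, and that the glued piece can indeed be substituted into the branch decomposition of the disconnected intermediate graph without harming any other cut. This is routine given Lemma~\ref{lem:fes:bridge} and the explicit disjoint-union construction in the proof of Lemma~\ref{lem:component}; the one honest subtlety—the $\max\{1,\cdot\}$ slack, which lets $\Fsbw$ drop from $1$ to $0$ only in degenerate cases such as $G$ being a forest—is absorbed by Corollary~\ref{cor:forest}.
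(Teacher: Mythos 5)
Your proposal is correct and follows essentially the same route as the paper, which dispatches this observation in one line as a consequence of Lemma~\ref{lem:fes:bridge} (gluing decompositions across a bridge) plus the trivial treatment of isolated vertices; you simply spell out the two directions explicitly, using induced-subgraph monotonicity and Lemma~\ref{lem:component} for the easy inequality and the gluing construction for the converse. The extra care you take with the $\max\{1,\cdot\}$ slack and the degenerate low-width cases is consistent with (indeed slightly more careful than) the paper's own treatment.
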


The safeness of the next rule is based on the technical lemmas of this section.
\fi
\FesUnimportantReduction

\iflong
We justify the safeness of Reduction Rule~\ref{reduction:fes:unimportant} in the proof below,
where we describe how to apply it.

\FesMainTheorem
\begin{proof}
	Let $(G, k)$ be an instance of \Fsbwprob where $G$ has a feedback edge set of size $k$.
	Observe that a feedback edge set of minimum size can be computed in polynomial time,
	therefore we may assume that we have a feedback edge set of size $k$ at hand.
	The algorithm proceeds as follows.
	
	\begin{enumerate}
		\item\label{kernel:fes:step1}
		Apply Reduction Rule~\ref{reduction:fes:bridge:isolated}.
		\item\label{kernel:fes:step2} 
		While $\card{V(G)} > 18k - 8$:
		\begin{enumerate}
			\item\label{kernel:fes:step2a} 
			Find an unimportant path $P \subseteq G$ of length at least $8$.
			\item\label{kernel:fes:step2b} 
			Apply Reduction Rule~\ref{reduction:fes:unimportant},
			i.e.\ contract an edge of $P$ in $G$.
			We denote the resulting instance again by $(G, k)$.
		\end{enumerate}
		\item Return $(G, k)$.
	\end{enumerate}
	
	We now argue that the above algorithm meets the requirements of a kernelization algorithm.
	For Step~\ref{kernel:fes:step1}, it follows from Observation~\ref{obs:fes:bridge:cor}.
	(Note also that Reduction Rule~\ref{reduction:fes:bridge:isolated}
	does not change the fact that $G$
	has a feedback edge set of size $k$.)
	Let us consider Step~\ref{kernel:fes:step2}.
	Observe that thanks to Step~\ref{kernel:fes:step1}, 
	we may assume that $G$ has neither bridges nor isolated vertices.
	Now, if $\card{V(G)} > 18k - 8$, then $G$ must have an unimportant path of length $8$:
	Each vertex of $G$ is either significant, 
	or it lies on a path connecting two significant vertices;
	in other words, it is on an unimportant path.
	Since there are at most $2k$ significant vertices,
	there are more than $16k - 8$ vertices that are spread out among at most $2k-1$ unimportant paths.
	By the pigeonhole principle, 
	one of these unimportant paths must have at least $9$ vertices,
	and therefore length at least $8$.
	
	Now, let $P$ be such an unimportant path as chosen in Step~\ref{kernel:fes:step2a} and 
	let $e \in E(P)$ be the edge that is getting contracted in
	Step~\ref{kernel:fes:step2b}.
	By Lemma~\ref{lem:fes:main}
	and since $\card{V(G)} > 18k - 8 > 8k - 3$ 
	(we may assume\footnote{For if $k = 0$, then $G$ is a forest, 
		in which case Corollary~\ref{cor:forest} implies that 
		$\Fsbw(G) = 1$ if $G$ has an edge and $\Fsbw(G) = 0$ otherwise.} 
	that $k \ge 1$),
	we can conclude that $\Fsbw(G) = \Fsbw(G/e)$.
	A long unimportant path can be found in polynomial time,
	and contracting an edge whose endpoints have degree two takes constant time.
	Therefore, Step~\ref{kernel:fes:step2} also meets the requirements of a kernelization algorithm.
	
	Since in each iteration, the number of vertices in $G$ decreases by one,
	we know that eventually, $\card{V(G)} \le 18k - 8$,
	upon which the algorithm returns $(G, k)$.
	The size bound on the kernelized instance follows since $G$ has $\calO(k)$ vertices,
	and at most $\card{V(G)} - 1 + k = \calO(k)$ edges.
\end{proof}
\fi

\FesMainCorollary
	
	\section{Concluding Remarks}
While our introduction focused predominantly on $\mathcal{F}$-branchwidth acting as a unifying framework for treewidth, clique-width and mim-width, the concept is in fact significantly more powerful than that; for instance, it is easy to see that the treewidth of the complement of the graph~\cite{GomesLS19} is also captured by $\mathcal{F}$-branchwidth (one can simply take the complements of the classes used in Corollary~\ref{cor:FisTWBounded}). On the other hand, one immediate question arising from this work concerns the newly identified combinations of \texttt{\emph{si ph}} classes: could, e.g., $\ichainFamily$- and $\iantimatchFamily$-branchwidth be natural counterparts to mim-width?

On the algorithmic side, one fundamental limitation one needs to keep in mind is that computing $\mathcal{F}$-branchwidth parameterized by the measure itself is $\W[1]$-hard~\cite{SaetherV16}. Hence we cannot hope for a fixed-parameter framework that would first compute an optimal $\mathcal{F}$-branch decomposition, and then proceed it to solve a problem of interest. In this sense, the question tackled by our algorithmic contribution is primarily a conceptual one: when can we compute \emph{all} of the captured width parameters via a unified framework? The most obvious question that remains open in this regard is whether Theorem~\ref{thm:tw_deg} can be generalized to the parameterization by treewidth alone.

A separate question that is no less important is whether we can compute approximately-optimal decompositions for $\mathcal{F}$-branchwidth in polynomial time when the width is bounded by a constant; solving this problem already for mim-width alone would be considered a major breakthrough in the field.

\bibliographystyle{plainurl}
\bibliography{literature}
	
\end{document}